\PassOptionsToPackage{unicode}{hyperref}
\PassOptionsToPackage{hyphens}{url}
\PassOptionsToPackage{dvipsnames,svgnames,x11names}{xcolor}
\documentclass[
  12pt]{article}

\usepackage{amsmath, amsthm, amsfonts, amssymb}
\usepackage{graphicx,psfrag,epsf}
\usepackage{enumerate}
\usepackage{enumitem}
\usepackage{natbib}
\usepackage{bbm}
\usepackage{hyperref}
\usepackage{cleveref}
\usepackage{bm}
\usepackage{xcolor}
\usepackage{multirow}
\usepackage{booktabs}
\usepackage{rotating}
\usepackage{url} % not crucial - just used below for the URL 
\usepackage{setspace}
\usepackage{ragged2e}
\usepackage{amsbsy}
\usepackage{mathtools}
\newcommand{\blind}{1}

\def\bX{\bm{X}}

\def\bV{\bm{V}}

\def\bZ{\bm{Z}}

\renewcommand{\vec}[1]{\mathbf{1}}

\newcommand{\boldeta}{\boldsymbol{\eta}}

\providecommand{\rev}[1]{#1}
\providecommand{\revb}[1]{#1}

\providecommand{\bx}{\boldsymbol{x}}

\providecommand{\bc}{\boldsymbol{c}}
\providecommand{\bd}{\boldsymbol{d}}
\providecommand{\bb}{\boldsymbol{b}}
\providecommand{\bnu}{\boldsymbol{\nu}}
\providecommand{\bdeta}{\boldsymbol{\eta}}
\providecommand{\btheta}{\boldsymbol{\theta}}
\providecommand{\bbeta}{\boldsymbol{\beta}}

\providecommand{\bzero}{\boldsymbol{0}}
\providecommand{\bD}{\boldsymbol{D}}
\providecommand{\bO}{\boldsymbol{O}}
\providecommand{\bB}{\boldsymbol{B}}
\providecommand{\bSigma}{\boldsymbol{\Sigma}}
\providecommand{\bbG}{\mathbb{G}}
\providecommand{\bbE}{\mathbb{E}}
\providecommand{\bbP}{\mathbb{P}}

\providecommand{\calH}{\mathcal{H}}
\providecommand{\trans}{\mathsf{T}}

\providecommand{\argmin}{\operatorname*{arg\,min}}

\providecommand{\circled}[1]{\textcircled{\scriptsize #1}}

\providecommand{\bnuhat}{\widehat{\boldsymbol{\nu}}}
\providecommand{\bdetahat}{\widehat{\boldsymbol{\eta}}}

\providecommand{\bthetatilde}{\widetilde{\boldsymbol{\theta}}}
\providecommand{\bSigmatilde}{\widetilde{\boldsymbol{\Sigma}}}
\providecommand{\bM}{\boldsymbol{M}}
\providecommand{\calT}{\mathcal{T}}

\newtheorem{theorem}{Theorem}
\newtheorem{proposition}{Proposition}
\newtheorem{lemma}{Lemma}
\newtheorem{corollary}{Corollary}
\newtheorem{remark}{Remark}

\newtheorem{assumption}{Assumption}[]
\begin{document}

\def\spacingset#1{\renewcommand{\baselinestretch}%
{#1}\small\normalsize} \spacingset{1}

%%%%%%%%%%%%%%%%%%%%%%%%%%%%%%%%%%%%%%%%%%%%%%%%%%%%%%%%%%%%%%%%%%%%%%%%%%%%%%

\if1\blind
{
  \title{\bf Heterogeneous Effects of Continuous Treatments via \rev{Conditional Modified Treatment Policies}}
  \author{Samhita Pal\hspace{.2cm}\\
    Department of Biostatistics, Vanderbilt University Medical Center\\
    and \\
    Jared D. Huling\thanks{Corresponding Author: \href{mailto:huling@umn.edu}{huling@umn.edu}} \\
    Division of Biostatistics and Health Data Science, University of Minnesota}
    \date{}
  \maketitle
} \fi

\if0\blind
{
  \bigskip
  \bigskip
  \bigskip
  \begin{center}
    {\LARGE\bf Heterogeneous Effects of Continuous Treatments via Conditional Modified Treatment Policies}
\end{center}
  \medskip
} \fi

\bigskip
\begin{abstract}
\rev{For continuous treatments such as drug dose or ventilator intensity, a key clinically actionable question is \revb{whether a modest, patient-specific adjustment to the current dose would help or harm, rather than whether to treat at all}. Standard estimands such as average or conditional dose-response functions require positivity across a wide range of doses, an assumption that routinely fails in observational clinical data where protocols tie dosing to patient characteristics. We develop a framework for characterizing heterogeneity in the effects of small shifts (``nudges'') of a continuous treatment. We study two estimands: the conditional nudge effect, the expected outcome change if an individual at dose $a$ with covariates $\bx$ had their dose shifted by $\delta$, and the conditional modified treatment policy (CMTP) effect, which averages the nudge over the observed dose distribution given covariates. Both are identified under weak, shift-specific local positivity and exchangeability conditions. Recasting the $\delta$-shift as a two-arm comparison through a duplicated-data construction, we develop weighting and A-learning estimators based on squared-error and negative log-likelihood losses, introduce augmented versions that reduce variance without changing the target, and establish asymptotic normality of proposed estimators. Simulations support the theory, and an analysis of mechanical ventilation data from MIMIC-III identifies patient profiles predicted to benefit from a modest reduction in mechanical power.}
\end{abstract}

\noindent%
{\it Keywords:} \rev{Causal inference; effect modification; stochastic interventions; positivity; treatment effect heterogeneity; observational data}
\vfill

\newpage
\spacingset{1.8} % DON'T change the spacing!
\section{Introduction}\label{sec:intro}
In many clinical and policy settings, the key questions are how much a specific amount of a continuous treatment, such as a drug dose or a ventilator setting, affects patient outcomes and whether the same amount benefits all patients equally, not just whether the treatment works at all. In patients with Acute Respiratory Distress Syndrome (ARDS), for example, the mechanical power (MP) of a ventilator has effects on mortality that vary along a continuum as it is changed. Further, the effect of MP on mortality is unlikely to be constant across patients. For example, a patient with severely reduced lung compliance or high vasopressor requirements may respond differently to the same ventilator settings compared with a patient with milder illness. Thus, the population-average dose-response function (ADRF) does not characterize key variation that clinicians need to make individualized decisions. Understanding which patients benefit from what degree of change to MP is thus helpful for making personalized treatment recommendations that improve outcomes without exposing low-risk patients to unnecessary ventilator changes.

Such continuous treatments introduce challenges for heterogeneous treatment effect estimation that go beyond those encountered with binary exposures.  With a binary treatment, the positivity assumption requires only that every patient has a positive probability of receiving either treatment; with a continuous treatment, positivity requires that every patient has positive density at every dose level in the support, a condition that is often implausible when clinical protocols constrain treatment values based on patient characteristics. In ARDS patients, ventilation guidelines explicitly restrict MP settings based on measures of the patient's condition. For example, the PaO$_2$/FiO$_2$ ratio, a pre-treatment covariate characterizing the degree of acute hypoxemia, induces strong confounding by determining both the severity of disease and the range of MP values clinicians will administer \citep{huling2024independence, papazian2019formal}. As a result, some combinations of disease severity and ventilator power are rarely or never observed in practice, and global positivity over the full treatment range fails. Heterogeneity exacerbates this problem, e.g., the same MP may be relatively safe in a compliant lung but concentrates tidal energy in a reduced volume of recruitable lung units in poorly compliant lungs, increasing the risk of ventilator-induced injury \citep{siegel2019ventilator}. %More broadly, for continuous interventions such as drug dosage, exposure intensity, or time-on-treatment, causal inference must characterize not just \rev{the} population ADRF but \rev{also} how effects vary across individuals and doses, while respecting the overlap structure actually present in the data.

Beyond average dose-response curves, individualized care benefits from an understanding of how dose-response relationships and marginal improvements from small shifts vary with patient characteristics $\bX$ and current dose $A$. {To date}, little work has been done at the intersection of heterogeneous treatment effect estimation and continuous treatments. \citet{zhu2024contrastive} study heterogeneous effects of multivariate continuous treatments with high-dimensional covariates through a representation-learning approach. Recently, \citet{shin2024treatment} define estimands such as the contrast surface $\tau_{a_1,a_0}(\bx)=\mathbb{E}\{Y(a_1)-Y(a_0)\mid \bX=\bx\}$ and a multivariate treatment-effect variable-importance measure (MTE-VIM), where $Y(a)$ is the potential outcome under treatment $A = a$. However, these procedures require strong positivity over high-dimensional covariates and a wide treatment range, which is often implausible in realistic settings. Due to failure of positivity in continuous treatments, these contrast surfaces are simply not identified over the full treatment range.

{In this article}, we focus on a novel local-shift estimand designed to {mitigate} the positivity limitations discussed above. Specifically, we ask: for patients with covariates \(\bX=\boldsymbol{x}\), what is the expected change in outcome if treatment is shifted by a clinically modest increment \(\delta\) from the value actually received? Because this policy changes treatment only around its values realized in practice, it requires substantially weaker positivity than the full ADRF or global contrast surfaces in that identification needs positive density only near the observed treatment level. The analysis therefore remains anchored to routinely used, clinically feasible doses. Consequently, recommendations based on our
local-shift estimands involve only modest departures from standard practice, potentially making uptake of recommendations more palatable for practitioners. Related work on incremental propensity-score interventions and other stochastic treatment distributions tied to the observed treatment process has developed causal effects that avoid global positivity, including recent nonparametric work on heterogeneous, positivity-robust contrasts \citep{wen2023intervention, mcclean2024nonparametric}{; our methods instead focus on continuous-treatments and shift the observed dose by a fixed amount $\delta$.} Our approach is also motivated by modified treatment policies for continuous exposures \citep{jiang2025modified, diaz2023nonparametric}, which interpret treatment shifts as targeting a hypothetical post-modification population.
 
This work focuses on two novel estimands: the conditional nudge effect, which characterizes how an individual at dose $a$ with covariates $\bx$ would change if nudged by $\delta$, and the conditional modified treatment policy (CMTP) effect, which averages the nudge effect over the observed treatment distribution given covariates. The resulting estimands support individualized recommendations of the form ``for patients with profile $\bX=\boldsymbol{x}$, a small decrease in MP is expected to improve outcome.'' The CMTP is identified under a weak, local positivity condition, enabling its valid estimation in data where treatment values are driven by clinical protocols. To estimate the conditional nudge effect and CMTP, we use the connection between MTPs and binary treatments shown in \citet{jiang25ventilator} that uses a duplicated-data construction comparing $A$ with $A+\delta$. With this reparameterization we develop loss function based estimators in both a weighting and A-learning setup, with both squared-error and negative-log-likelihood losses \citep{chen2017general} for different outcome types. We introduce augmented versions that leave the target unchanged but reduce variance. We provide large-sample theory for the nudge estimators and obtain CMTP inference either by direct estimation or by plug-in from the estimated nudge surface.

A further practical advantage of our approach is that it does not require estimating the generalized propensity score (GPS), a conditional density that is notoriously difficult to model reliably with high-dimensional covariates and whose inverse can produce extreme, unstable weights \citep{naimi2014constructing, colangelo2020double, huling2024independence}. Instead, the duplicated-data construction requires only a propensity model for the binary arm indicator $\Lambda$, {a standard binary regression problem that avoids the conditional density estimation the GPS requires}. In contrast, most existing work on continuous treatments is built around the GPS $r(a,\bx)=f_{A\mid \bX}(a\mid \bx)$, {which plays the balancing role of the binary propensity score in identifying and estimating the average dose-response function $\mu(a)=\mathbb{E}\{Y(a)\}$ under unconfoundedness and positivity} \citep{robins2000marginal, laan2003unified, diaz2013targeted, hirano2004propensity, imai2004causal, austin2018assessing, zhao2020propensity}{, with subsequent contributions developing doubly robust nonparametric and weighting estimators} \citep{kennedy2017non, ai2021unified}. Remedies for GPS instability include direct inverse-GPS estimation \citep{colangelo2020double}, covariate balancing weights \citep{yiu2018covariate, vegetabile2020nonparametric, fong2018covariate}, and kernel-based independence weights \citep{kallus2019kernel, martinet2020balancing, huling2024independence}, though none circumvent the more fundamental requirement that global positivity over the full treatment support must hold for the ADRF to be identified.

{The remainder of this article is organized as follows.} \Cref{sec:methodology} introduces the estimands and identification assumptions, formulates the nudge and CMTP effects through optimization problems, develops both averaged and direct CMTP estimators, extends the framework to exponential-family outcomes, and \Cref{sec:emp_est} defines the empirical estimators. \Cref{sec:theory} presents asymptotic normality results, \Cref{sec:simu} reports simulations, and \Cref{sec:real data} applies the method to MIMIC-III mechanical ventilation data. Finally, we present our concluding remarks in \Cref{sec:conclusion}.
\section{Methodology}\label{sec:methodology}

% \subsection{Notation}\label{subsec:notation}
% The observable quantities we consider consist of the random triplet $(\bX,A,Y)$, where $\bX\in\calX\subseteq\bbR^p$ is a vector of pre-treatment covariates, $A\in\calA\subseteq\bbR$ is a continuous-valued treatment variable indicating the assigned dose for a unit, and $Y\in\calY\subseteq\bbR$ is an outcome of interest. The variate $(\bX,A,Y)$ has a joint distribution $F_{\bX,A,Y}$. We denote the marginal density of the treatment and covariates as $f_A(a)$ and $f_\bX(\bx)$, respectively,  the conditional density of the treatment given $\bX$ as $f_{A|\bX}(a|\bx)$, and their joint density as $f_{\bX,A}(a,\bx)$. Similarly, corresponding distribution functions are denoted $F_A(a) = \bbP(A\leq a)$, $F_\bX(\bx) = \bbP(\bX\leq \bx)$, $F_{A|\bX}(a|\bx) = \bbP(A\leq a \:|\: \bX = \bx)$, and $F_{\bX,A}(\bx,a) = \bbP(\bX\leq\bx,A\leq a)$. Our observed data consists of $n$ i.i.d. samples $(\bX_i,A_i,Y_i)_{i=1}^n$ from $(\bX,A,Y)$. Note that we drop the subscripts on the density and cumulative distribution functions when there is no ambiguity. 

Let \((\bX,A,Y)\sim F_{\bX,A,Y}\), where
\(\bX\in\mathcal X\subseteq\mathbb R^p\) denotes pre-treatment covariates,
\(A\in\mathcal A\subseteq\mathbb R\) a continuous treatment or dose, and
\(Y\in\mathcal Y\subseteq\mathbb R\) the outcome. We write \(f_A\) and
\(f_{\bX}\) for the marginal densities of \(A\) and \(\bX\),
\(f_{A\mid\bX}(a\mid\bx)\) for the conditional density of \(A\) given \(\bX\),
and \(f_{\bX,A}(\bx,a)\) for their joint density. The corresponding distribution
functions are
\(F_A(a)=\mathbb P(A\le a)\),
\(F_{\bX}(\bx)=\mathbb P(\bX\le\bx)\),
\(F_{A\mid\bX}(a\mid\bx)=\mathbb P(A\le a\mid\bX=\bx)\), and
\(F_{\bX,A}(\bx,a)=\mathbb P(\bX\le\bx,A\le a)\).
The observed data are \(n\) i.i.d. copies
\(\{(\bX_i,A_i,Y_i)\}_{i=1}^n\). We suppress subscripts on densities and
distribution functions when no ambiguity arises.

\subsection{Estimands}\label{subsec:estimands}
We work in the potential-outcomes framework where for each dose level $a\in\mathcal A$, $Y(a)$ denotes the outcome that would be observed were the treatment set to $a$. The causal conditional average dose-response function (CADRF),
$\mu(a,\bx)\equiv \mathbb E\!\left[\,Y(a)\mid \bX=\bx\,\right]\text{ for } a\in\mathcal A,$ has been the main approach to characterizing heterogeneity of effect for continuous-valued treatments in the literature.
The CADRF describes how the conditional mean potential outcome varies with $a$ given $\bX = \bx$. Estimating $\mu(a,\bx)$ nonparametrically is challenging in practice because $\bX$ may be high-dimensional and because identification requires (i) no unmeasured confounding $Y(a)\perp A\mid \bX$ and (ii) local overlap/positivity, informally, that all doses $a$ of interest occur with nonzero probability at covariate value $\bx$. Motivated by these challenges and by policy questions about the effects of small changes to practice, instead of {targeting} the entire CADRF, we focus on two local shift estimands that {are identified under assumptions of differing strength and offer distinct, but related, interpretations}. We define the conditional nudge effect at dose $a$ and covariates $\bx$, which captures the expected change in outcome from nudging the dose $a$ at $\bx$ by a small increment $\delta$ (with $a,a+\delta\in\mathcal A$), as
$\tau_\delta(a,\bx)\;\equiv\;\mathbb E\!\left[\,Y(a+\delta)-Y(a)\mid \bX=\bx\,\right].$
Compared to the CADRF $\mu(a,\bx)$, the local nudge surface $\tau_\delta(a,\bx)$ is often a more tractable estimand. First, it typically has lower effective complexity because it is a {local contrast}{, in which terms depending on $\bx$ alone cancel in the outcome model}. Second, identification only requires overlap locally between $a$ and $a+\delta$ rather than across the entire range of $a$, a condition that is \textit{far} more plausible when $\delta$ is small. Third, we show that the contrast can be estimated directly, without specifying and fitting a full outcome regression model for $\mu(a,\bx)$. Finally, our nudge estimators can admit {doubly robust} strategies so that consistency still holds if either {of the two nuisance components (the arm propensity model or the outcome regression)} is correctly specified.

The policy analog of the conditional nudge effect, the {CMTP effect}, averages that nudge over the natural (observed) distribution of treatment at $\bx$:
{\small\begin{align}\label{CMTP}
    \tau_\delta(\bx)\;\equiv\;\mathbb E\!\left[\,Y(A+\delta)-Y(A)\mid \bX=\bx\,\right]
\;=\;\mathbb E\!\left[\,\tau_\delta(A,\bx)\mid \bX=\bx\,\right].
\end{align}}
Thus, CMTP is the conditional expectation of the nudge surface with respect to the conditional law of $A\mid \bX=\bx$, providing a covariate-specific measure of the consequence of uniformly shifting everyone’s dose by $\delta$ in that stratum. {The first expression in \eqref{CMTP} is the definition; the second equality, which represents the CMTP as the average of the nudge surface, holds under the mean nudge exchangeability condition (Assumption~\ref{ass:exchange}(A) below) but is not needed for identification of $\tau_\delta(\bx)$ itself.} The CMTP effect $\tau_\delta(\bx)$ enjoys several advantages over the conditional nudge surface. First, it is causally identifiable under weaker, shift-specific assumptions, making it more robust to confounding and positivity violations. Second, it {can be obtained} by marginalizing the estimated nudge over \(A\mid\bX\), {even though} its identification requires strictly weaker conditions than the nudge itself. Finally, because $\tau_\delta(\bx)$ is a function of $\bx$ alone, its modeling burden is lower: estimation involves either a direct one-step CMTP loss in $\bx$ or a plug-in average over $A\mid \bX=\bx$. While marginal modified treatment policies allow for shifts that depend on $A$ and $\bX$, they still require the user to specify the form of the shift \textit{a priori}; the CMTP more directly assesses effect heterogeneity in response to a shift and thus {yields a clearer sense} of which covariates may drive differential response.

\subsection{Causal assumptions and identification}\label{subsec:assumptions}
{Having defined the conditional nudge and CMTP effects, we now state the causal assumptions under which they can be identified from the observed data.}

% Notation: ;
% f(a,\bx) is the conditional density/mass of A given \bX=\bx.

\begin{assumption}[Consistency]\label{ass:consistency}
For all $(a,\bx)$ with ${f_{A\mid\bX}(a\mid\bx)}>0$, if $A=a$ then $Y=Y(a)$.
\end{assumption}

\begin{assumption}[Positivity]\label{ass:positivity}
{The shifted dose is feasible: for all $(a,\bx)$,
if $f_{A\mid\bX}(a\mid\bx)>0$ then $f_{A\mid\bX}\big(q(a,\bx)\mid\bx\big)>0$.}
\end{assumption}

\begin{assumption}[Exchangeability]\label{ass:exchange}
For all $(a,\bx)$ with ${f_{A\mid\bX}(a\mid\bx)}>0$ and $q(a,\bx)$ as the shift/policy map (e.g., $q(a,\bx)=a+\delta$), the following hold:
\begin{enumerate}[label=\textnormal{(\Alph*)}, leftmargin=2.25em]
\item \textit{Mean nudge effect exchangeability}:\label{ass:nudge_exchange}  \\$\mathbb{E}\!\left[\,Y\!\big(q(a,\bx)\big)-Y(a)\,\middle|\,\bX=\bx,\,A=a\right]
=
\mathbb{E}\!\left[\,Y\!\big(q(a,\bx)\big)-Y(a)\,\middle|\,\bX=\bx\right].$
% {\small\[

% \]}
\item[\textnormal{(A$'$)}] \textit{Conditional mean policy exchangeability}:\label{ass:cmtp_exchange}  \\$\mathbb{E}\!\left[\,Y\!\big(q(a,\bx)\big)\,\middle|\,\bX=\bx,\,A=a\right]
=
\mathbb{E}\!\left[\,Y\!\big(q(a,\bx)\big)\,\middle|\,\bX=\bx,\,A=q(a,\bx)\right].$
\end{enumerate}
\end{assumption}

Assumption \ref{ass:positivity} is substantially weaker than the positivity assumption required for estimating the causal average dose response function and can be made to hold by design, since $q(a, \bx)$ is a user-specified function. {For the fixed shift $q(a,\bx)=a+\delta$ the assumption cannot hold exactly at the boundary of a bounded dose support; throughout, we therefore restrict attention to the region $\{a : f_{A\mid\bX}(a\mid\bx)>0 \text{ and } f_{A\mid\bX}(a+\delta\mid\bx)>0\}$, although a boundary-respecting shift in the spirit of \citet{haneuse2013estimation} could be used instead.}
We note that Assumption {\ref{ass:exchange}} is substantially weaker than the mean exchangeability assumption required for estimating the causal average dose response function. The typical mean exchangeability assumption immediately implies {Assumption \ref{ass:exchange}} holds; further, if the conditional mean potential outcome function is smooth in $a$, then this assumption becomes more plausible {as} $q(a,\bx)$ becomes closer to $a$. 

\begin{proposition}\label{prop:nudge_outcome_identifiability}
    Under Assumptions \ref{ass:consistency}, \ref{ass:positivity}, {and \ref{ass:exchange}(A) and (A$'$), for every $(a,\bx)$ with $f_{A\mid\bX}(a\mid\bx)>0$,} the conditional nudge effect is identified in terms of the observed data as {\small$\bbE[Y \vert \bX = \bx, A = a + \delta] - \bbE[Y \vert \bX = \bx, A = a] = \tau_\delta(a,\bx)$}.
\end{proposition}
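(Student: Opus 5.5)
The plan is to start from the observed-data contrast and rewrite each regression in terms of potential outcomes, with $q(a,\bx)=a+\delta$ throughout. Fix $(a,\bx)$ with $f_{A\mid\bX}(a\mid\bx)>0$. By Assumption~\ref{ass:positivity}, $f_{A\mid\bX}(a+\delta\mid\bx)>0$, so both regressions $\bbE[Y\mid\bX=\bx,A=a]$ and $\bbE[Y\mid\bX=\bx,A=a+\delta]$ condition on events of positive density and are well defined. This is the only role positivity plays, but it has to come first, because Assumption~\ref{ass:consistency} is stated only on the support.

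\textbf{Step 1 (consistency).} On $\{A=a+\delta\}$ we have $Y=Y(a+\delta)$, and on $\{A=a\}$ we have $Y=Y(a)$. Hence
\[
\bbE[Y\mid\bX=\bx,A=a+\delta]-\bbE[Y\mid\bX=\bx,A=a]
=\bbE[Y(a+\delta)\mid\bX=\bx,A=a+\delta]-\bbE[Y(a)\mid\bX=\bx,A=a].
\]

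\textbf{Step 2 (policy exchangeability).} Apply Assumption~\ref{ass:exchange}(A$'$) with $q(a,\bx)=a+\delta$. It replaces $\bbE[Y(a+\delta)\mid\bX=\bx,A=a+\delta]$ by $\bbE[Y(a+\delta)\mid\bX=\bx,A=a]$. Both terms now condition on the same event $\{\bX=\bx,A=a\}$, so by linearity of conditional expectation the difference equals $\bbE[Y(a+\delta)-Y(a)\mid\bX=\bx,A=a]$. I assume the potential outcomes are integrable so that splitting and recombining expectations is legitimate.

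\textbf{Step 3 (nudge exchangeability).} Assumption~\ref{ass:exchange}(A) removes the conditioning on $A=a$. This yields $\bbE[Y(a+\delta)-Y(a)\mid\bX=\bx]=\tau_\delta(a,\bx)$, which is the claim.

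The main subtlety is Step~2. Assumption (A$'$) only transports the mean of the single shifted potential outcome $Y(a+\delta)$ from the stratum $A=a+\delta$ to the stratum $A=a$. One must check that it is applied at the original dose $a$ and not at $a+\delta$; otherwise it would require positivity at $a+2\delta$. Conditioning-on-density-zero issues are handled by the positivity step above, so no further obstacles are expected.
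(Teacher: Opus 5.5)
Your proposal is correct and follows essentially the same route as the paper's proof. Like the paper, you use positivity so that both regressions are defined, consistency at $a$ and $a+\delta$, then (A$'$) to move $Y(a+\delta)$ from the stratum $A=a+\delta$ to $A=a$, and finally (A) to drop the conditioning on $A=a$.
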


\begin{proposition}\label{prop:cmtp_outcome_identifiability}
	{Let $m_0(a,\bx) := \bbE[Y \mid \bX = \bx, A = a]$ denote the observed outcome regression.} Under Assumptions \ref{ass:consistency}, \ref{ass:positivity} and \ref{ass:exchange}(A$'$), {for $f_{\bX}$-almost every $\bx$,} the CMTP effect is identified in terms of the observed data as
	{\small$\bbE[{m_0}(A+\delta, \bx) \vert \bX = \bx] - \bbE\left[Y|\bX = \bx\right] = \tau_\delta(\bx)$}.
\end{proposition}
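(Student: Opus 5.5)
We split the CMTP contrast into its two pieces and identify each separately, conditioning throughout on $\bX=\bx$ for $f_{\bX}$-almost every $\bx$. By definition,
\[
\tau_\delta(\bx)=\bbE[Y(A+\delta)\mid\bX=\bx]-\bbE[Y(A)\mid\bX=\bx].
\]

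\textbf{Second term.} This is immediate. By Assumption~\ref{ass:consistency}, $Y(A)=Y$ almost surely, since the realized $A$ lies in the support of $f_{A\mid\bX}(\cdot\mid\bX)$ with probability one. Hence $\bbE[Y(A)\mid\bX=\bx]=\bbE[Y\mid\bX=\bx]$.

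\textbf{First term.} Here we use iterated expectations over the conditional law of $A$ given $\bX=\bx$:
\[
\bbE[Y(A+\delta)\mid\bX=\bx]=\int \bbE[Y(a+\delta)\mid\bX=\bx,A=a]\,f_{A\mid\bX}(a\mid\bx)\,da.
\]
We work on the event $A=a$, so $Y(A+\delta)=Y(a+\delta)$ there. This substitution is valid for the random index $A$: one defines $Y(A+\delta)$ through the potential-outcome process, and a regular conditional distribution of $(Y(\cdot),A)$ given $\bX$ exists because $\mathcal A\subseteq\mathbb R$.

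For each $a$ with $f_{A\mid\bX}(a\mid\bx)>0$, Assumption~\ref{ass:exchange}(A$'$) with $q(a,\bx)=a+\delta$ replaces the integrand by $\bbE[Y(a+\delta)\mid\bX=\bx,A=a+\delta]$. This conditioning event has positive density by Assumption~\ref{ass:positivity}, so the conditional expectation is well defined. Consistency then turns it into $\bbE[Y\mid\bX=\bx,A=a+\delta]=m_0(a+\delta,\bx)$. Integrating back against $f_{A\mid\bX}(\cdot\mid\bx)$ gives $\bbE[m_0(A+\delta,\bx)\mid\bX=\bx]$. Subtracting the second term yields the claim.

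No appeal to Assumption~\ref{ass:exchange}(A) is needed, consistent with the remark following \eqref{CMTP}.

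\textbf{Main obstacle.} The difficulty is measure-theoretic care rather than any real computation. Conditional expectations given $A=a+\delta$ are only defined almost everywhere with respect to the law of $A$ given $\bX=\bx$. We must therefore make sure that the set of $a$ where $m_0(a+\delta,\bx)$ is used is covered by positivity, so that null-set ambiguities in $m_0$ cannot propagate. Our plan is to fix a version of $m_0$ and use Assumption~\ref{ass:positivity} to show that the pushforward of $f_{A\mid\bX}(\cdot\mid\bx)$ under $a\mapsto a+\delta$ is absolutely continuous with respect to the law of $A$ given $\bX=\bx$, restricting to the region stated after Assumption~\ref{ass:exchange} as needed. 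This makes the composition $m_0(A+\delta,\bx)$ well defined up to null sets.

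We also need integrability, namely $\bbE|Y(A+\delta)|<\infty$, to justify the iterated-expectation step. We will assume this implicitly, along with the existence of the relevant conditional means.
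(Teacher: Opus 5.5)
Your proposal is correct and follows the paper's proof step for step. Consistency handles $\bbE[Y(A)\mid\bX=\bx]$, and the tower property over $A\mid\bX=\bx$, combined with (A$'$), positivity and consistency, turns $\bbE\{Y(a+\delta)\mid\bX=\bx,A=a\}$ into $m_0(a+\delta,\bx)$, with Assumption~\ref{ass:exchange}(A) never used. Your absolute-continuity remark is a correct piece of rigor the paper omits: by Assumption~\ref{ass:positivity}, the $\delta$-shift of the law of $A\mid\bX=\bx$ is dominated by that law. One caveat: the substitution $Y(A+\delta)=Y(a+\delta)$ on $\{A=a\}$ rests on joint measurability (or path regularity) of $a\mapsto Y(a)$, not merely on $\mathcal A\subseteq\mathbb R$.
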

{We emphasize that the identification formula in \Cref{prop:cmtp_outcome_identifiability} involves the observed outcome regression $m_0(a,\bx)$ rather than the counterfactual CADRF, which is not identified under our assumptions. Notably, the CMTP effect requires only condition (A$'$) while the nudge surface requires both (A) and (A$'$), making precise the sense in which the CMTP is identified under strictly weaker conditions.} These propositions provide the identification results that justify our estimation procedures{, which we develop in the remainder of this section}. The first shows that the conditional nudge effect can be recovered from observed conditional mean differences, whereas the second establishes that the CMTP effect is identified by {contrasting the average of the outcome regression evaluated at the shifted dose with the average observed outcome}.

\subsection{Approach 1: Estimating the CMTP via the conditional nudge}\label{subsec:method}
In this section, we introduce a data duplication technique \citep{diaz2023nonparametric} that allows us to directly estimate both the conditional nudge effect and {the} CMTP without the need {to model} the entire outcome regression surface. We create a duplicated treatment $A_\Lambda$ that is, with marginal probability $1/2$, either the observed dose $A$ or an inverse-shifted dose $A-\delta$ (the inverse of the policy shift we ultimately care about). This requires the shift map $a\mapsto q(a,\bx)$ to be invertible (e.g., $q(a,\bx)=a+\delta$). We then define the arm label
{\small\[
\Lambda=\tfrac12 \ \text{if } A_\Lambda=A-\delta \ (\text{shifted arm}), 
\qquad
\Lambda=-\tfrac12 \ \text{if } A_\Lambda=A \ (\text{original arm}){,}
\]}
so that, conditional on $(\bX=\bx,A_\Lambda=a)$, the event $\Lambda=1/2$ corresponds to units whose original dose was $a+\delta$, while $\Lambda=-1/2$ corresponds to units with original dose $a$. Let $\pi_\Lambda(\bx,a):=\mathbb P\big(\Lambda=\tfrac12\mid \bX=\bx,A_\Lambda=a\big)$ denote the conditional probability {that an observation in the duplicated data belongs to the shifted arm, given covariates and the duplicated treatment value}. Then, a direct calculation using $\,\mathbb P(\Lambda=1/2)=\mathbb P(\Lambda=-1/2)=1/2\,$ gives $\pi_\Lambda(\bx,a)={g_{A\mid \bX}(a+\delta\mid \bx)}/\big\{{g_{A\mid \bX}(a+\delta\mid \bx)+g_{A\mid \bX}(a\mid \bx)}\big\},$
% {\small\begin{equation}\label{eq:piLambda}
% \end{equation}}
where $g_{A\mid \bX}$ is the conditional density (or mass) of $A$ given $\bX$. 

\subsubsection{Weighting estimator for the nudge effect}
We target $\tau_\delta(a,\bx)$ by minimizing a {balanced} squared-loss on the duplicated sample. For any candidate contrast $f(a,\bx)$ define the conditional population risk
{\small\begin{equation}\label{eq:Wloss}
\ell_W(f;a,\bx)=\mathbb E\!\left[w(a,\bx,\Lambda)\,\big\{Y-\Lambda\,f(a,\bx)\big\}^2\;\middle|\;\bX=\bx,\ A_\Lambda=a\right],
\end{equation}}
where expectations are with respect to the joint law of $(Y,A_\Lambda,\bX,\Lambda)$. Identification requires only that the weights satisfy the arm-balancing condition
{\small\begin{equation}\label{eqn:weight_function_condition}
w(a,\bx,\tfrac12)\,\mathbb P\!\left(\Lambda=\tfrac12\mid \bX=\bx,A_\Lambda=a\right)
=
w(a,\bx,-\tfrac12)\,\mathbb P\!\left(\Lambda=-\tfrac12\mid \bX=\bx,A_\Lambda=a\right),
\end{equation}}
in which case{, for strictly positive weights,} the unique minimizer is $f^*(a,\bx)=\tau_\delta(a,\bx)$ as shown later in \Cref{prop:nudge_weighting_loss_solution}. Several convenient choices satisfy \eqref{eqn:weight_function_condition}. First, \emph{IPW:} $w(a,\bx,1/2)=\pi_\Lambda(\bx,a)^{-1},$ and $\ w(a,\bx,-1/2)=\{1-\pi_\Lambda(\bx,a)\}^{-1}$. Second, \emph{Density-ratio:} either $w(a,\bx,1/2)=\{1-\pi_\Lambda(\bx,a)\}/\pi_\Lambda(\bx,a),\ w(a,\bx,-1/2)=1$, or the symmetric alternative with the roles reversed. These can be computed directly from $g_{A\mid \bX}$ using $({1-\pi_\Lambda(\bx,a)})/{\pi_\Lambda(\bx,a)}={g_{A\mid \bX}(a\mid \bx)}/{g_{A\mid \bX}(a+\delta\mid \bx)}.$ Third, \emph{Overlap:} $w(a,\bx,\tfrac12)=1-\pi_\Lambda(\bx,a),\ \ w(a,\bx,-\tfrac12)=\pi_\Lambda(\bx,a)$. A advantage of the IPW and overlap weights is that they {require modeling only} a conditional probability rather than a conditional density, which is often substantially more challenging to estimate reliably \citep{huling2024independence}. To reduce variance without changing the target, we use an augmented version that subtracts any function $m(a,\bx)$ of $(A_\Lambda,\bX)$:
{\small\begin{equation}\label{eq:Wloss-aug}
\ell_{W,\mathrm{aug}}(f;a,\bx)
=\mathbb E\!\left[w(a,\bx,\Lambda)\,\big\{Y-m(a,\bx)-\Lambda\,f(a,\bx)\big\}^2\;\middle|\;\bX=\bx,\ A_\Lambda=a\right].
\end{equation}}
The conditional losses above characterize the pointwise population target at a
fixed value of \((A_\Lambda,\bX)=(a,\bx)\); the empirical estimator minimizes the
corresponding marginal risk $\ell_{W,\mathrm{aug}}(f)
:=
\mathbb E[
\ell_{W,\mathrm{aug}}(f;A_\Lambda,\bX)]$
{(and analogously $\ell_W(f)$)} over a working class
\(f(a,\bx)=\btheta^\top \bb(a,\bx)\){, made precise in \Cref{sec:emp_est}, and thus estimates the projection of the pointwise nudge contrast onto the basis class}. This direct weighted-learning
formulation is motivated by the contrast-learning/subgroup-identification
framework of \citet{chen2017general}. 
% In practice, the empirical versions of \eqref{eq:Wloss}--\eqref{eq:Wloss-aug} are minimized over a basis $f(a,\bx)=\btheta^\top b(a,\bx)$, with $\pi_\Lambda$ (or $g_{A\mid \bX}$) and $m$ estimated flexibly; any weights satisfying \eqref{eqn:weight_function_condition} yield the same population minimizer. 
The following proposition formalizes the key identification step behind our weighting approach. Under the duplicated-arm construction \((A_\Lambda,\Lambda)\) and any choice of weights \(w(a,\bx,\Lambda)\) that satisfy the arm-balancing condition \eqref{eqn:weight_function_condition}, minimizing the conditional squared loss {\eqref{eq:Wloss} (equivalently, its marginal average $\ell_W(f)$)} recovers the unique contrast that equates the two arm-specific conditional means at \((\bx,a)\). Consequently, the population minimizer is the conditional nudge effect \(\tau_\delta(a,\bx)\).

\begin{proposition}\label{prop:nudge_weighting_loss_solution}
	{Suppose the weights are strictly positive and satisfy \eqref{eqn:weight_function_condition}, and that $f_{A\mid\bX}(a\mid\bx)>0$, so that $0<\pi_\Lambda(\bx,a)<1$ by Assumption~\ref{ass:positivity}. Then the minimizers $f_W^*$ of $\ell_W(f)$ and $f_{W,\mathrm{aug}}^*$ of $\ell_{W,\mathrm{aug}}(f)$ over measurable functions are almost everywhere unique and satisfy}
	{\small$f^*_W(a,\bx) = f_{W,{\mathrm{aug}}}^*(a,\bx) = \bbE[Y \vert \bX = \bx, A = a + \delta] - \bbE[Y \vert \bX = \bx, A = a]{,}$} {which equals $\tau_\delta(a,\bx)$ under the conditions of \Cref{prop:nudge_outcome_identifiability}.}
\end{proposition}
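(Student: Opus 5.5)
Because the marginal risk $\ell_W(f)=\mathbb E[\ell_W(f;A_\Lambda,\bX)]$ is an average of the conditional risks, and $f$ ranges over all measurable functions, I will minimize pointwise. For each fixed $(a,\bx)$ I minimize $\ell_W(c;a,\bx)$ over the scalar $c=f(a,\bx)$. If the pointwise minimizer is unique, then any measurable minimizer of the marginal risk agrees with it for almost every $(a,\bx)$ under the law of $(A_\Lambda,\bX)$. This gives the almost-everywhere uniqueness claim.

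Fix $(a,\bx)$ and condition on $\Lambda$. Write $\pi=\pi_\Lambda(\bx,a)$, $w_+=w(a,\bx,\tfrac12)$, $w_-=w(a,\bx,-\tfrac12)$. Let $\mu_\pm=\mathbb E[Y\mid \bX=\bx,A_\Lambda=a,\Lambda=\pm\tfrac12]$ be the arm-specific conditional means. By the duplication construction, the arm $\Lambda=\tfrac12$ at $A_\Lambda=a$ consists of units with original dose $A=a+\delta$, and $Y$ is the unit's original outcome. Hence $\mu_+=\mathbb E[Y\mid\bX=\bx,A=a+\delta]$, and similarly $\mu_-=\mathbb E[Y\mid\bX=\bx,A=a]$. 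This uses that $\Lambda$ is drawn independently of $(\bX,A,Y)$. Verifying this carefully, including that $Y$ is not shifted in the duplicated arm, is the step I expect needs the most care.

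Then
\[
\ell_W(c;a,\bx)=w_+\pi\,\mathbb E[(Y-c/2)^2\mid\cdot,+]+w_-(1-\pi)\,\mathbb E[(Y+c/2)^2\mid\cdot,-].
\]
This is a quadratic in $c$ with leading coefficient $(w_+\pi+w_-(1-\pi))/4$. That coefficient is strictly positive because the weights are positive and $0<\pi<1$, so the quadratic has a unique minimizer. The first-order condition is
\[
-w_+\pi(\mu_+-c/2)+w_-(1-\pi)(\mu_-+c/2)=0.
\]
Imposing the balancing condition \eqref{eqn:weight_function_condition}, $w_+\pi=w_-(1-\pi)=:\kappa>0$, and dividing by $\kappa$ gives $-\mu_++c/2+\mu_-+c/2=0$. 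Hence $c=\mu_+-\mu_-$, which is the stated observed contrast.

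For the augmented loss I repeat the argument with $Y$ replaced by $Y-m(a,\bx)$. Since $m(a,\bx)$ is fixed given $(A_\Lambda,\bX)=(a,\bx)$, it shifts both arm means by the same amount, so it cancels in $\mu_+-\mu_-$. Thus $f^*_{W,\mathrm{aug}}=f^*_W$, with uniqueness by the same strictly positive leading coefficient. Finally, \Cref{prop:nudge_outcome_identifiability} identifies this contrast with $\tau_\delta(a,\bx)$. Integrability of $Y^2$ is assumed throughout so that the risks are finite.
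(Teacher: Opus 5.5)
Your proposal is correct and follows essentially the same route as the paper's proof. You minimize the conditional quadratic risk pointwise at fixed $(a,\bx)$, identify the arm-specific means $\mu_\pm$ with $\mathbb E[Y\mid \bX=\bx,A=a+\delta]$ and $\mathbb E[Y\mid \bX=\bx,A=a]$ via the duplication construction, use the balancing condition to collapse the first-order condition to $f^*=\mu_+-\mu_-$, and note that the common offset $m(a,\bx)$ cancels in the augmented loss. The one detail the paper states explicitly and you leave implicit is that the pointwise minimizer $(a,\bx)\mapsto\mu_+(a,\bx)-\mu_-(a,\bx)$ is measurable, which is what justifies passing from pointwise to marginal minimization over measurable $f$.
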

{\Cref{prop:nudge_weighting_loss_solution} demonstrates that} the conditional loss identifies the pointwise nudge effect. When the loss
is averaged over the marginal distribution of \((A_\Lambda,\bX)\) and minimized
over a finite-dimensional working model, the resulting estimator targets the
corresponding projection of this pointwise effect. Moreover, the augmentation function \(m(a,\bx)\) is included for precision rather than
identification. As \(m\) depends only on \((A_\Lambda,\bX)\), it is
subtracted equally from the two duplicated arms at a fixed value of
\((A_\Lambda,\bX)\){; since the balancing condition \eqref{eqn:weight_function_condition} implies $\mathbb E[w(a,\bx,\Lambda)\Lambda \mid \bX=\bx, A_\Lambda=a]=0$, the subtraction} does not alter the population contrast identified
by the loss. A natural choice is a working outcome regression,
$m(a,\bx)\approx \mathbb E[Y\mid A=a,\bX=\bx],$
or equivalently in the duplicated-arm notation, $m(a,\bx)\approx \mathbb E[Y\mid A_\Lambda=a,\bX=\bx].$
This removes outcome variation explained by the treatment level and covariates
before estimating the nudge/CMTP contrast, and can therefore improve efficiency. In
applications, \(m\) may be estimated using a parametric regression, splines, or
flexible machine-learning methods. Correct specification of \(m\) is not needed
for identification of the nudge/CMTP effects; rather, the quality of \(m\) affects the
variance of the resulting estimator. The additional first-order variability due
to estimating \(m\) is incorporated in the augmented A-learning asymptotic result
in Section~\ref{sec:theory}.

\subsubsection{A-learning estimator for the nudge effect}
{An alternative to the weighting construction, in the spirit of A-learning, removes main-effect variation by centering the arm indicator rather than by reweighting the loss. To this end, we} define the {recoded} arm indicator $\widetilde\Lambda:=\Lambda+\tfrac12\in\{0,1\}$ and let
{\small\begin{align}\label{A-learn_nudge}
    \ell_A(f;a,\bx)
=\mathbb E\!\left[\big\{Y-(\widetilde\Lambda-\pi_\Lambda(\bx,a))\,f(a,\bx)\big\}^2\;\middle|\;\bX=\bx,\ A_\Lambda=a\right].
\end{align}}
Minimizing $\ell_A$ over $f$ yields the same target $f^*(a,\bx)=\tau_\delta(a,\bx)$ (the centering removes main-effect contamination). This is formalized in \Cref{prop:nudge_alearning_loss_solution}. As with weighting, we introduce an {augmented} A-learning loss
{\small\begin{align}\label{aug-A-learn_nudge}
    \ell_{A,\mathrm{aug}}(f;a,\bx)
=\mathbb E\!\left[\big\{Y-m(a,\bx)-(\widetilde\Lambda-\pi_\Lambda(\bx,a))\,f(a,\bx)\big\}^2\;\middle|\;\bX=\bx,\ A_\Lambda=a\right],
\end{align}}
which preserves the target but improves efficiency by removing prognostic variation through $m(a,\bx)$. {The corresponding marginal losses are $\ell_A(f):=\mathbb E[\ell_A(f;A_\Lambda,\bX)]$ and $\ell_{A,\mathrm{aug}}(f):=\mathbb E[\ell_{A,\mathrm{aug}}(f;A_\Lambda,\bX)]$.}
As in Proposition~\ref{prop:nudge_weighting_loss_solution}, the A-learning objective regresses the outcome on the centered treatment indicator
\(\widetilde{\Lambda}-\pi_\Lambda(\bx,a)\), interacted with a basis in \((a,\bx)\), and identifies the same contrast. Centering removes main-effect variation, so the population first-order condition isolates the difference between the two treatment levels at \((a,\bx)\) so that the minimizer of \(\ell_A(f)\) is \(\tau_\delta(a,\bx)\). Subtracting \textit{any} outcome regression \(m(a,\bx)\) in the augmented loss \(\ell_{A,\mathrm{aug}}(f)\) has no effect on bias, as augmentation only partials out outcome variation as a function of covariates and thus reduces variance without affecting identification.

\begin{proposition}\label{prop:nudge_alearning_loss_solution}
	{Suppose $f_{A\mid\bX}(a\mid\bx)>0$, so that $0<\pi_\Lambda(\bx,a)<1$ by Assumption~\ref{ass:positivity}. Then the minimizers $f_A^*$ of $\ell_A(f)$ and $f_{A,\mathrm{aug}}^*$ of $\ell_{A,\mathrm{aug}}(f)$ over measurable functions are almost everywhere unique and satisfy}
	{\small$f^*_A(a,\bx) = f^*_{A,{\mathrm{aug}}}(a,\bx) = \bbE[Y \vert \bX = \bx, A = a + \delta] - \bbE[Y \vert \bX = \bx, A = a]{,}$} {which equals $\tau_\delta(a,\bx)$ under the conditions of \Cref{prop:nudge_outcome_identifiability}.}
\end{proposition}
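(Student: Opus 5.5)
The argument reduces the marginal minimization to a pointwise one and then solves a one-dimensional quadratic at each $(a,\bx)$. By the tower property, $\ell_A(f)=\mathbb E[\ell_A(f;A_\Lambda,\bX)]$. Each conditional risk depends on $f$ only through the scalar $f(a,\bx)$. Hence a measurable $f$ minimizes the marginal risk if and only if $f(a,\bx)$ minimizes $\ell_A(\cdot\,;a,\bx)$ for almost every $(a,\bx)$ under the law of $(A_\Lambda,\bX)$. Uniqueness of the pointwise minimizer then gives almost-everywhere uniqueness of $f_A^*$. The same reduction applies to $\ell_{A,\mathrm{aug}}$.

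Fix $(a,\bx)$ with $f_{A\mid\bX}(a\mid\bx)>0$. Write $\pi=\pi_\Lambda(\bx,a)$, $D=\widetilde\Lambda-\pi$, and $t=f(a,\bx)$. Since $\pi$ and $m(a,\bx)$ are fixed given $(A_\Lambda,\bX)=(a,\bx)$, expanding gives
\[
\ell_{A,\mathrm{aug}}(t;a,\bx)=\mathbb E[(Y-m)^2\mid\cdot\,]-2t\,\mathbb E[D(Y-m)\mid\cdot\,]+t^2\,\mathbb E[D^2\mid\cdot\,].
\]
The conditional law of $\widetilde\Lambda$ is Bernoulli$(\pi)$. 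So $\mathbb E[D\mid\cdot\,]=0$ and $\mathbb E[D^2\mid\cdot\,]=\pi(1-\pi)$, which is strictly positive because $0<\pi<1$ by Assumption~\ref{ass:positivity}. The quadratic is therefore strictly convex in $t$, and its unique minimizer is
\[
t^*=\frac{\mathbb E[D(Y-m)\mid\cdot\,]}{\pi(1-\pi)}=\frac{\mathbb E[DY\mid\cdot\,]}{\pi(1-\pi)},
\]
where the second equality uses $\mathbb E[D\mid\cdot\,]=0$ and the fact that $m$ is constant given $(a,\bx)$. This already shows that $f_A^*=f_{A,\mathrm{aug}}^*$: the augmentation term drops out.

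It remains to evaluate $\mathbb E[DY\mid\cdot\,]$. Condition on the arm:
\[
\mathbb E[DY\mid\cdot\,]=\pi(1-\pi)\,\mu_1-(1-\pi)\pi\,\mu_0,\qquad \mu_\lambda=\mathbb E[Y\mid\bX=\bx,A_\Lambda=a,\widetilde\Lambda=\lambda],
\]
so $t^*=\mu_1-\mu_0$. Next I identify the two arm means from the duplication construction. In the shifted arm $A_\Lambda=A-\delta$, so the event $A_\Lambda=a$ is the event $A=a+\delta$. In the original arm $A_\Lambda=A$. The arm label is generated independently of $(\bX,A,Y)$, so conditioning on $\widetilde\Lambda$ does not alter the law of $Y$ given $(\bX,A)$. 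Therefore $\mu_1=\mathbb E[Y\mid\bX=\bx,A=a+\delta]$ and $\mu_0=\mathbb E[Y\mid\bX=\bx,A=a]$, which gives the displayed formula. Under the conditions of \Cref{prop:nudge_outcome_identifiability} this difference equals $\tau_\delta(a,\bx)$.

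The main point requiring care is this last step. One must make rigorous that the duplicated-data conditional mean in each arm coincides with the observed regression $m_0$ at the appropriately shifted dose. This is the same step that yields the stated formula for $\pi_\Lambda$. I would handle it by writing the joint density of $(\bX,A_\Lambda,\widetilde\Lambda,Y)$ as $\tfrac12 f(\bx)\,g_{A\mid\bX}(a+\widetilde\Lambda\delta\mid\bx)\,f_{Y\mid\bX,A}(y\mid\bx,a+\widetilde\Lambda\delta)$. I would then read off both $\pi$ and the arm-specific conditional laws directly from this density. Positivity guarantees that both arm densities are nonzero at the points in the restricted region, so the conditional expectations are well defined there.
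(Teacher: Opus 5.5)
Your proof is correct and follows essentially the same route as the paper's: you reduce the marginal risk to the conditional risk at each $(a,\bx)$, solve the scalar first-order condition to obtain $\mathbb E[DY\mid A_\Lambda=a,\bX=\bx]/\{\pi(1-\pi)\}=\mu_1-\mu_0$, and identify the two arm means with the observed regression at $a+\delta$ and at $a$ through the duplication construction. The differences are cosmetic: you remove the augmentation term in one line via $\mathbb E[D\mid A_\Lambda,\bX]=0$, where the paper writes out the arm-by-arm cancellation, and your explicit joint density for $(\bX,A_\Lambda,\widetilde\Lambda,Y)$ makes rigorous the arm-mean identification that the paper simply asserts from the construction.
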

Overall, this duplicated-arm formulation is simple to implement, accommodates a wide choice of balancing weights, and supports variance-reducing augmentation, while directly targeting the localized causal contrast $\tau_\delta(a,\bx)$ that underpins the CMTP via $\tau_\delta(\bx)$.

\subsubsection{CMTP estimation by averaging over the conditional nudge}
Both the weighting and A-learning estimators target the local contrast{, that is, the conditional nudge effect $\tau_\delta(a,\bx)$ at dose $a$ and covariates $\bx$ under a shift of size $\delta$}. The CMTP at covariate
level \(\bx\) averages this effect over the conditional distribution of the
observed dose given \(\bX=\bx\). If \(g_{A\mid\bX}(a\mid\bx)\) denotes the
corresponding conditional density or mass function, then
{\small
\(
\tau_\delta(\bx)
=
\mathbb{E}\!\left[Y(A+\delta)-Y(A)\mid\bX=\bx\right]
=
\int \tau_\delta(a,\bx)\,
g_{A\mid\bX}(a\mid\bx)\,\mathrm{d}a,
\)
}
with the integral replaced by a sum for discrete \(A\). Thus, once
\(\tau_\delta(a,\bx)\) is identified and estimated, the CMTP follows by
averaging the conditional nudge over the exposure distribution at \(\bx\). In practice, we represent the conditional nudge parametrically, for example
\(f(a,\bx)=\btheta^\top\bb(a,\bx)\), and obtain a plug-in CMTP by projecting the
fitted values \(\hat f(A_i,\bX_i)\) onto a lower-dimensional basis \(c(\bx)\),
writing \(\tau_\delta(\bx)\approx\bbeta^\top c(\bx)\){; the details are given in \Cref{sec:emp_est}}. {While this plug-in approach is straightforward, it requires an initial fit of the entire nudge surface; in the next subsection we develop estimators that target the CMTP directly.}

% \subsection{Exponential family setup and logistic example}

% \subsubsection*{Weighted loss with NLL (logistic case)}

% \subsubsection*{Logistic A-learning loss with NLL}

\subsection{Approach 2: Direct estimation of the CMTP}\label{sec:direct_nudge}
Here, we propose a direct estimation method for the CMTP that does not require
first estimating the local nudge effect \(\tau_\delta(a,\bx)\) and then
averaging over the conditional distribution of \(A\mid \bX=\bx\). Instead, the
method uses the duplicated-arm construction to form an A-learning loss whose
contrast depends only on \(\bx\), so that the resulting estimator targets
\(\tau_\delta(\bx)\) directly. {A disadvantage of obtaining the CMTP from an initial nudge fit is the need to model the conditional nudge effect itself, a statistical task of higher complexity than modeling the CMTP alone.} {The construction exploits only \Cref{ass:exchange}(A$'$), mirroring the weaker identification requirements of the CMTP established in \Cref{prop:cmtp_outcome_identifiability}.}

We work with a squared-error A-learning loss for directly estimating the CMTP in the main {text} and defer a weighting counterpart to Supplement A.1. The direct A-learning loss is similar to \eqref{A-learn_nudge}, except that we no longer model a local contrast at a fixed dose \(a\), and instead let the contrast depend only on the covariates \(\bx\), integrating over the mixture distribution of \(A_\Lambda\) given \(\bX=\bx\). Concretely, recall that
\eqref{A-learn_nudge} targets the local nudge \(\tau_\delta(a,\bx)\). For the CMTP, we instead consider functions \(f:\mathcal X\to\mathbb R\) and define the direct A-learning CMTP loss $\ell_A^{\mathrm{CMTP}}(f;\bX)
=
\mathbb E\!\left[
  \big\{Y-g(\bX,A_\Lambda)\,f(\bX)\big\}^2
  \;\middle|\;{\bX}
\right],$
where \(g(\bX,A_\Lambda):=\widetilde\Lambda-\pi_\Lambda(\bX,A_\Lambda)\) has conditional mean zero given \((\bX,A_\Lambda)\). The corresponding marginal loss is simply $\ell_A^{\mathrm{CMTP}}(f) = \mathbb E[\ell_A^{\mathrm{CMTP}}(f;\bX)].$ This construction preserves the same centering idea as in \eqref{A-learn_nudge}, but now targets the aggregate effect \(\tau_\delta(\bx)=\mathbb E\{Y(A+\delta)-Y(A)\mid \bX=\bx\}\) rather than the local contrast at a fixed \(a\). We define $h(\bx,a) := \pi_\Lambda(\bx,a)\big\{1-\pi_\Lambda(\bx,a)\big\},$
so that $g(\bx,A_\Lambda)$ has conditional mean zero and conditional variance $h(\bx,A_\Lambda)$ given $(\bX,A_\Lambda)$. Under this unstabilized direct CMTP loss for measurable $f:\mathcal X\to\mathbb R$, the {first-order condition (FOC)} is $\mathbb{E}\Big[g(\bX,A_\Lambda)\,\big\{Y - g(\bX,A_\Lambda) f(\bX)\big\} \,\Big|\, {\bX}\Big] = 0,$
% {\small\[
% \]}
which yields the pointwise minimizer $f^*(\bx)
=
{\mathbb{E}\big[g(\bx,A_\Lambda)Y \mid \bX = \bx\big]}/
     {\mathbb{E}\big[g^2(\bx,A_\Lambda) \mid \bX = \bx\big]}
\equiv
{N(\bx)}/{D(\bx)}$.

\begin{lemma}\label{lem:NxDx}
{Let $m_0(a,\bx)$ be the outcome regression in \Cref{prop:cmtp_outcome_identifiability}.} For every \(\bx\),  $N(\bx)
=
\mathbb{E}\big[
h(\bx,A_\Lambda)\,\{{m_0}(A_\Lambda+\delta,\bx)-{m_0}(A_\Lambda,\bx)\}
\mid \bX=\bx
\big]$, and $D(\bx)
=
\mathbb{E}\big[h(\bx,A_\Lambda)\mid \bX=\bx\big].$
\end{lemma}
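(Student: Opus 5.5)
The plan is to compute both conditional expectations by iterated expectations, first conditioning on $(\bX,A_\Lambda)$ and then averaging over $A_\Lambda\mid\bX=\bx$. Both identities then reduce to pointwise statements at a fixed $(\bx,a)$ about the two-point law of $\widetilde\Lambda$ given $(\bX,A_\Lambda)=(\bx,a)$.

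\emph{Step 1: the denominator.} Given $(\bX,A_\Lambda)=(\bx,a)$, $\widetilde\Lambda$ is Bernoulli with success probability $\pi_\Lambda(\bx,a)$, by the definition of $\pi_\Lambda$. Since $g(\bx,a)=\widetilde\Lambda-\pi_\Lambda(\bx,a)$ is its centered version, $\mathbb E[g^2(\bx,A_\Lambda)\mid\bX=\bx,A_\Lambda=a]=\pi_\Lambda(\bx,a)\{1-\pi_\Lambda(\bx,a)\}=h(\bx,a)$. Applying the tower property over $A_\Lambda\mid\bX=\bx$ gives $D(\bx)=\mathbb E[h(\bx,A_\Lambda)\mid\bX=\bx]$. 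This is exactly the variance fact already noted in the text.

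\emph{Step 2: the numerator, inner conditional expectation.} Write $\mu_j(\bx,a):=\mathbb E[Y\mid\bX=\bx,A_\Lambda=a,\widetilde\Lambda=j]$ for $j\in\{0,1\}$, and abbreviate $\pi=\pi_\Lambda(\bx,a)$. Then
\[
\mathbb E[\widetilde\Lambda Y\mid \bx,a]=\pi\mu_1
\quad\text{and}\quad
\mathbb E[Y\mid\bx,a]=\pi\mu_1+(1-\pi)\mu_0 .
\]
Hence
\[
\mathbb E[gY\mid \bx,a]=\pi\mu_1-\pi\{\pi\mu_1+(1-\pi)\mu_0\}=\pi(1-\pi)(\mu_1-\mu_0)=h(\bx,a)\{\mu_1-\mu_0\}.
\]

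\emph{Step 3: identifying the arm-specific means.} It remains to show $\mu_1(\bx,a)=m_0(a+\delta,\bx)$ and $\mu_0(\bx,a)=m_0(a,\bx)$. The duplicated construction draws $\Lambda$ with probability $1/2$, independently of the original triple $(\bX,A,Y)$. On the shifted arm we have $A_\Lambda=A-\delta$, and the recorded outcome is the original $Y$. So conditioning on $\{\bX=\bx,A_\Lambda=a,\widetilde\Lambda=1\}$ is the same as conditioning on $\{\bX=\bx,A=a+\delta\}$ in the original law, which gives $\mu_1=m_0(a+\delta,\bx)$. On the original arm, $A_\Lambda=A$ and $\mu_0=m_0(a,\bx)$. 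Substituting into Step 2 and taking the outer expectation over $A_\Lambda\mid\bX=\bx$ yields the stated formula for $N(\bx)$.

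\emph{Main obstacle.} I expect Step 3 to need the most care, because $A$ is continuous and these are conditionings on null events. I would make it rigorous through densities. The joint density of $(\bX,A_\Lambda,\widetilde\Lambda,Y)$ is
\[
\tfrac12\, f_{\bX}(\bx)\, g_{A\mid\bX}(a+\delta\mid\bx)\, f_{Y\mid A,\bX}(y\mid a+\delta,\bx)
\]
on the arm $\widetilde\Lambda=1$, and the analogous expression with $a$ in place of $a+\delta$ on the arm $\widetilde\Lambda=0$. This representation reproduces the formula for $\pi_\Lambda$ and gives the arm-specific conditional laws of $Y$ directly.

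A second issue is where $\mu_1$ or $\mu_0$ is undefined, namely where one of $g_{A\mid\bX}(a\mid\bx)$ or $g_{A\mid\bX}(a+\delta\mid\bx)$ vanishes. There $\pi\in\{0,1\}$, so $h(\bx,a)=0$, and in Step 2 $\mathbb E[gY\mid\bx,a]=0$ because $g=0$ almost surely. Both sides of the identity therefore vanish regardless of how $m_0$ is defined there. This means the lemma holds for every $\bx$ without invoking Assumption~\ref{ass:positivity}, up to the usual version-of-conditional-expectation caveat.
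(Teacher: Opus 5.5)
Your proposal is correct and follows essentially the same route as the paper, whose proof is Step~1 of the supplementary proof of Proposition~\ref{direct_CMTP_AL}. There, too, one conditions on $(\bX,A_\Lambda)=(\bx,a)$, uses the two-point law of $\widetilde\Lambda$ to obtain $h(\bx,a)\{\mu_+-\mu_-\}$ and $h(\bx,a)$, identifies the arm means with $m_0(a+\delta,\bx)$ and $m_0(a,\bx)$ through the duplication construction, and averages over $A_\Lambda\mid\bX=\bx$. Your density-based justification of Step~3 and your handling of the degenerate case $\pi_\Lambda\in\{0,1\}$ without positivity add care the paper omits, but they do not change the argument.
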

{\begin{proof}
See the first part of the proof of Proposition~\ref{direct_CMTP_AL} in the Supplementary Material%, where $N(\bx)$ and $D(\bx)$ are evaluated by conditioning on $(A_\Lambda,\bX)$.
\end{proof}}
\noindent From \Cref{lem:NxDx}, we see that the basic direct A-learning loss recovers a \emph{weighted} CMTP, with weights $h(\bx,A_\Lambda)$, rather than the unweighted CMTP $\tau_\delta(\bx)$. {It is important to recognize that the expectations in \Cref{lem:NxDx} are taken with respect to the duplicated-arm mixture law of $A_\Lambda$ given $\bX=\bx$, whose conditional density is $f_{A_\Lambda\mid\bX}(a\mid\bx)=\tfrac12\{g_{A\mid\bX}(a\mid\bx)+g_{A\mid\bX}(a+\delta\mid\bx)\}$, whereas the CMTP averages over the natural law of $A$ given $\bX=\bx$. Expressed against the natural law (the measure defining the CMTP), the implicit weight is $\pi_\Lambda(\bx,a)$.
%, owing to the identity $h(\bx,a)\,f_{A_\Lambda\mid\bX}(a\mid\bx)=\tfrac12\,\pi_\Lambda(\bx,a)\,g_{A\mid\bX}(a\mid\bx)$. 
To eliminate the weighting entirely and directly target $\tau_\delta(\bx)$, we therefore stabilize the conditional loss by dividing by $\pi_\Lambda(\bX,A_\Lambda)$:}
{\small\begin{align}\label{dir-A-learning}
    \ell_A^{\mathrm{CMTP}}(f;\bX)
=
\mathbb E\Bigg[
\frac{\big\{Y - g(\bX,A_\Lambda)\,f(\bX)\big\}^2}
     {{\pi_\Lambda(\bX,A_\Lambda)}}
\,\Bigg|\, \bX\Bigg],
\end{align}}
and, analogously, define the augmented version
{\small\begin{align}\label{dir-A-learning_aug}
    \ell_{A,\mathrm{aug}}^{\mathrm{CMTP}}(f;\bX)
=
\mathbb E\Bigg[
\frac{\big\{Y - m(A_\Lambda,\bX) - g(\bX,A_\Lambda)\,f(\bX)\big\}^2}
     {{\pi_\Lambda(\bX,A_\Lambda)}}
\,\Bigg|\, \bX\Bigg],
\end{align}}
for an arbitrary outcome regression $m(A_\Lambda,\bX)$. The following result shows that both stabilized losses identify the CMTP $\tau_\delta(\bx)$, and that augmentation by $m(A_\Lambda,\bX)$ does not change the target. In particular, the unique pointwise minimizers of both stabilized A-learning CMTP losses recover the CMTP $\tau_\delta(\bx)$. Because the marginal risk is the expectation of the conditional risk over \(\bX\), namely
\(\ell(f)=\mathbb E\{\ell(f;\bX)\}\), any measurable function that minimizes the conditional loss pointwise also minimizes the marginal loss; conversely, any marginal minimizer agrees with the pointwise minimizer almost surely.

\begin{proposition}\label{direct_CMTP_AL}
{Suppose Assumptions~\ref{ass:consistency}, \ref{ass:positivity}, and \ref{ass:exchange}(A$'$) hold. Under the stabilized direct A-learning CMTP loss \eqref{dir-A-learning} and its augmented counterpart \eqref{dir-A-learning_aug} for an arbitrary function $m(A_\Lambda,\bX)$, let $f_A^*$ and $f_{A,\mathrm{aug}}^*$ denote any minimizers of $\ell_{A}^{\mathrm{CMTP}}(f)$ and $\ell_{A,\mathrm{aug}}^{\mathrm{CMTP}}(f)$, respectively. Then } $f_A^*(\bx) = f_{A,\mathrm{aug}}^*(\bx)
=
{\bbE\big[
m_0(A+\delta,\bx)-m_0(A,\bx)
\mid \bX=\bx
\big]
=
\tau_\delta(\bx),}$
{where the expectation is over the natural conditional law of $A$ given $\bX=\bx$ and the final equality follows from \Cref{prop:cmtp_outcome_identifiability}.}
\end{proposition}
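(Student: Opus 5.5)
The plan is to compute the pointwise minimizer of the stabilized conditional loss \eqref{dir-A-learning} in closed form. We then evaluate its numerator and denominator by conditioning on $(A_\Lambda,\bX)$ and converting from the mixture law of $A_\Lambda$ given $\bX$ back to the natural law of $A$ given $\bX$.

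\textbf{Pointwise minimizer.} For fixed $\bx$ the stabilized loss is a convex quadratic in the scalar $f(\bx)$. Its first-order condition gives
\[
f^*(\bx)=\frac{\mathbb E[g\,Y/\pi_\Lambda\mid\bX=\bx]}{\mathbb E[g^2/\pi_\Lambda\mid\bX=\bx]}=:\frac{\widetilde N(\bx)}{\widetilde D(\bx)},
\]
with $g=g(\bX,A_\Lambda)$ and $\pi_\Lambda=\pi_\Lambda(\bX,A_\Lambda)$. Uniqueness needs $\widetilde D(\bx)>0$. This holds because $0<\pi_\Lambda<1$ on the positivity region guaranteed by Assumption~\ref{ass:positivity}, so $g^2>0$ there. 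Since the marginal risk is $\mathbb E\{\ell(f;\bX)\}$, any marginal minimizer agrees with $f^*$ almost surely, as already remarked before the proposition.

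\textbf{Numerator and denominator.} Condition on $(A_\Lambda,\bX)=(a,\bx)$. By construction, $\mathbb E[Y\mid \Lambda=\tfrac12,A_\Lambda=a,\bX=\bx]=m_0(a+\delta,\bx)$, since the original dose was $a+\delta$, and $\mathbb E[Y\mid \Lambda=-\tfrac12,A_\Lambda=a,\bX=\bx]=m_0(a,\bx)$. Write $\widetilde\Lambda=\Lambda+\tfrac12$, so $g=\widetilde\Lambda-\pi_\Lambda$ takes the value $1-\pi_\Lambda$ with probability $\pi_\Lambda$ and $-\pi_\Lambda$ otherwise. Then
\[
\mathbb E[gY\mid a,\bx]=h(\bx,a)\{m_0(a+\delta,\bx)-m_0(a,\bx)\},\qquad \mathbb E[g^2\mid a,\bx]=h(\bx,a).
\]
This is exactly \Cref{lem:NxDx}. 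Dividing by $\pi_\Lambda$ leaves the weight $1-\pi_\Lambda(\bx,a)$. Next I integrate against the mixture density $f_{A_\Lambda\mid\bX}(a\mid\bx)=\tfrac12\{g_{A\mid\bX}(a\mid\bx)+g_{A\mid\bX}(a+\delta\mid\bx)\}$. Using the closed form of $\pi_\Lambda$, we have $\{1-\pi_\Lambda(\bx,a)\}f_{A_\Lambda\mid\bX}(a\mid\bx)=\tfrac12 g_{A\mid\bX}(a\mid\bx)$. Therefore
\[
\widetilde N(\bx)=\tfrac12\,\mathbb E[m_0(A+\delta,\bx)-m_0(A,\bx)\mid\bX=\bx],\qquad \widetilde D(\bx)=\tfrac12,
\]
where both expectations are now over the natural law of $A$. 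The ratio is the claimed expression.

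\textbf{Augmentation and identification.} For the augmented loss \eqref{dir-A-learning_aug}, the numerator gains the term $-\mathbb E[g\,m(A_\Lambda,\bX)/\pi_\Lambda\mid\bX]$. This term vanishes because $m/\pi_\Lambda$ is a function of $(A_\Lambda,\bX)$ and $\mathbb E[g\mid A_\Lambda,\bX]=0$. The denominator is unchanged, so $f^*_{A,\mathrm{aug}}=f^*_A$. The final equality with $\tau_\delta(\bx)$ is then \Cref{prop:cmtp_outcome_identifiability}. To match its form, note $\mathbb E[m_0(A,\bx)\mid\bX=\bx]=\mathbb E[Y\mid\bX=\bx]$.

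\textbf{Main obstacle.} The delicate step is the change of measure. It requires the identity $(1-\pi_\Lambda)f_{A_\Lambda\mid\bX}=\tfrac12 g_{A\mid\bX}$ and care on the support. Points with $g_{A\mid\bX}(a+\delta\mid\bx)=0$ but $g_{A\mid\bX}(a\mid\bx)>0$ are excluded by positivity. Points in the mixture support with $g_{A\mid\bX}(a\mid\bx)=0$ contribute zero weight, so dividing by $\pi_\Lambda$ is well defined exactly where it matters. I would first verify this identity directly from the density formula for $\pi_\Lambda$, and restrict to the region stated after Assumption~\ref{ass:positivity}.
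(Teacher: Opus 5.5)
Your proposal is correct and follows essentially the same route as the paper's proof. Both use the same chain of steps: the first-order ratio; conditioning on $(A_\Lambda,\bX)$ to recover the $h$-weighted quantities of Lemma~\ref{lem:NxDx}; division by $\pi_\Lambda$, which leaves the weight $1-\pi_\Lambda$; the mixture identity $\{1-\pi_\Lambda\}f_{A_\Lambda\mid\bX}=\tfrac12 g_{A\mid\bX}$, which gives $\widetilde D(\bx)=\tfrac12$; and augmentation invariance, since $m$ is common to both arms at fixed $(A_\Lambda,\bX)$. One small tightening: justify $\widetilde D(\bx)>0$ by that computed value $\tfrac12$ rather than by $0<\pi_\Lambda<1$. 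On the part of the mixture support where $g_{A\mid\bX}(a\mid\bx)=0$, $\pi_\Lambda=1$ and hence $g=0$, so the $0<\pi_\Lambda<1$ argument does not cover those points.
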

\noindent \Cref{direct_CMTP_AL} shows that stabilizing the direct A-learning loss {by $\pi_\Lambda$} removes the implicit {mixture-law} weighting from \Cref{lem:NxDx}, so that the resulting population minimizer targets the unweighted CMTP itself, with or without outcome-regression augmentation. At the population level, if \(f^*(\bx)=\bbeta^\top c(\bx)\), the stabilized
direct A-learning coefficient targets the {unweighted} \(L_2(P_{\bX})\) projection of
the CMTP onto the span of \(c\): $\bbeta^{\mathrm{dir},0}_{A}
    =
    \arg\min_{\bbeta}
    E\left[
    \{\tau_\delta(\bX)-\bbeta^\top \bc(\bX)\}^2
    \right],$
{since the stabilized score has constant conditional second moment, $\bbE[g^2(\bX,A_\Lambda)/\pi_\Lambda(\bX,A_\Lambda)\mid \bX]=1/2$.}
If the working model is correctly specified, so that
\(\tau_\delta(\bx)=\bbeta_0^\top c(\bx)\), then
\(\bbeta^{\mathrm{dir},0}_{A}=\bbeta_0\). {Our development thus far has relied on squared-error losses. The duplicated-arm construction extends naturally to canonical exponential-family losses for binary and other non-continuous outcomes. We next develop the link-scale A-learning nudge loss and its identification result, and use the resulting logistic-loss estimators in Sections~\ref{sec:simu} and \ref{sec:real data}.}

\subsection{Exponential family setup and logistic example}\label{subsec:expfam-logistic}

We extend the duplicated-arm construction and the weighting/A-learning losses to canonical exponential-family models, focusing on logistic loss for binary outcomes.
 Throughout this subsection we employ the \textit{working model} {in which}, conditional on $(A,\bX)$, the outcome $Y$ follows a one-parameter exponential family with natural parameter ${\eta}$ and negative log-likelihood (NLL) $M(y,{\eta}) = -y\,{\eta} + b({\eta}),$
so that
\(
\mu(a,\bx) := \mathbb{E}[Y \mid A=a,\bX=\bx],
\;
{\eta} = {k}\{\mu(a,\bx)\},
\;
b'({\eta}) = \mu({\eta}),
\)
where ${k}$ is the canonical link and $\mu({\eta})$ denotes the mean as a function of the natural parameter. {Note that in this subsection $\mu(a,\bx)$ denotes the observed outcome regression, written $m_0(a,\bx)$ elsewhere in the paper, rather than the counterfactual CADRF of Section~\ref{subsec:estimands}; the two coincide under the conditions of \Cref{prop:nudge_outcome_identifiability}.} In the logistic example with $Y\in\{0,1\}$, we have ${\eta} = \log\left({\mu}/({1-\mu})\right),$ $b({\eta}) = \log\{1+\exp({\eta})\},$ $\mu({\eta}) = {e^{\eta}}/({1+e^{\eta}}).$
Here ${k}$ is the logit link and $\mu(\cdot)$ is the expit function. In what follows we work directly with the NLL $M$ instead of squared loss, and interpret the resulting targets on the \emph{link scale}, mapping back to the mean scale via $\mu(\cdot)$ when needed. In this setting the {mean-scale nudge effect} at a given dose-covariate pair \((a,\bx)\) remains {$\tau_\delta(a,\bx)
=
\mu(a+\delta,\bx)-\mu(a,\bx)$ under the conditions of \Cref{prop:nudge_outcome_identifiability},}
so that, under a canonical link, the nudge is fully determined by the change in the natural parameter \({\eta}\) induced by shifting the dose from \(a\) to \(a+\delta\). {For} example, in the logistic case this evaluates to $\tau_\delta(a,\bx)
= \mu\bigl(\eta(a+\delta,\bx)\bigr)-\mu\bigl(\eta(a,\bx)\bigr)
= \operatorname{expit}\!\bigl(\eta(a+\delta,\bx)\bigr)
-\operatorname{expit}\!\bigl(\eta(a,\bx)\bigr),
\text{ where }
\operatorname{expit}(u)\equiv {e^u}/{1+e^u},$
 a nonlinear function of both \(a\) and \(\bx\), even when the working model for \({\eta}(a,\bx)\) is linear in \((a,\bx)\). In parallel, the {mean-scale CMTP} at covariate level \(\bx\) {satisfies, by \Cref{prop:cmtp_outcome_identifiability},} $\tau_\delta(\bx)
=\mathbb{E}\bigl[\mu(A+\delta,\bx)-\mu(A,{\bx})\mid \bX=\bx\bigr],$
so that \(\tau_\delta(\bx)\) aggregates the local nudge effects \(\tau_\delta(a,\bx)\) over the conditional distribution of \(A\mid \bX=\bx\). In the logistic example this yields $\tau_\delta(\bx)
=
\mathbb{E}[
\operatorname{expit}\!\bigl(\eta(A+\delta,\bx)\bigr)
-
\operatorname{expit}\!\bigl(\eta(A,\bx)\bigr)
\,\mid\, \bX=\bx]$. 
Here, we construct an A-learning analog based on the NLL, and defer the weighting version to the {Supplementary Section~A.2}. Recall the {recoded} arm indicator $\widetilde\Lambda := \Lambda + \tfrac12 \in \{0,1\}$ and the arm propensity $\pi_\Lambda(\bx,a) := \mathbb{P}\bigl(\widetilde\Lambda = 1 \mid \bX=\bx,A_\Lambda=a\bigr).$
For a candidate link-scale contrast $f(a,\bx)$, we define the A-learning NLL loss
{\small\begin{equation}\label{eq:A-NLL-loss}
\ell_A^{\mathrm{NLL}}(f;a,\bx)
=
\mathbb{E}\Big[
  -Y\bigl(\widetilde\Lambda - \pi_\Lambda(\bx,a)\bigr)\, f(a,\bx)
  + b\bigl( (\widetilde\Lambda - \pi_\Lambda(\bx,a))\, f(a,\bx) \bigr)
  \,\Big|\,
 \bX = \bx,\ A_\Lambda = a
\Big],
\end{equation}}
where $b$ is the log-partition function. This loss regresses $Y$ on the centered arm indicator $\widetilde\Lambda - \pi_\Lambda(\bx,a)$ scaled by $f(a,\bx)$, mimicking the usual A-learning construction in a generalized-linear setting. The FOC with respect to $f$ leads to a contrast equation analogous to Supplementary equation (S3). Now, defining $\ell_A^{\mathrm{NLL}}(f)$ as the marginal loss {$\mathbb E[\ell_A^{\mathrm{NLL}}(f;A_\Lambda,\bX)]$}, we state the following proposition.

\begin{proposition}\label{prop:nudge_alearning_nll}
 {Suppose the conditions of \Cref{prop:nudge_outcome_identifiability} hold and $f_{A\mid\bX}(a\mid\bx)>0$, so that $\pi_\Lambda(\bx,a)\in(0,1)$.} Assume $\ell_A^{\mathrm{NLL}}(f)${, the marginal loss corresponding to} \eqref{eq:A-NLL-loss}{,} admits a unique minimizer $f_A^*(a,\bx)$. Then $f_A^*(a,\bx)$ satisfies $\mu\bigl( (1-\pi_\Lambda(\bx,a)) f_A^*(a,\bx) \bigr)
-
\mu\bigl( -\pi_\Lambda(\bx,a)\, f_A^*(a,\bx) \bigr)
=
\tau_\delta(a,\bx).$
\end{proposition}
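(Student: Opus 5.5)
The plan is to reduce the marginal minimization to a pointwise problem and then read off the first-order condition. Because $\ell_A^{\mathrm{NLL}}(f)=\mathbb E[\ell_A^{\mathrm{NLL}}(f;A_\Lambda,\bX)]$ and the integrand at $(a,\bx)$ depends on $f$ only through the scalar $t=f(a,\bx)$, a measurable minimizer over all $f$ must minimize the conditional loss for almost every $(a,\bx)$. Conversely, a pointwise minimizer yields a marginal minimizer, so by the assumed uniqueness the two agree almost everywhere. Fix such $(a,\bx)$ with $f_{A\mid\bX}(a\mid\bx)>0$, write $\pi=\pi_\Lambda(\bx,a)\in(0,1)$, and regard the conditional loss as a function $\phi(t)$ of a real scalar $t$.

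Next I would make $\phi$ explicit by conditioning on $\widetilde\Lambda$ given $(\bX,A_\Lambda)=(\bx,a)$. With probability $\pi$ we have $\widetilde\Lambda=1$. By the duplicated-arm construction this arm consists of units whose original dose was $a+\delta$, so $\mathbb E[Y\mid \bX=\bx,A_\Lambda=a,\widetilde\Lambda=1]=\mu(a+\delta,\bx)$. With probability $1-\pi$ we have $\widetilde\Lambda=0$, and the corresponding arm mean is $\mu(a,\bx)$. Verifying these two arm-specific means is the same computation used for Propositions~\ref{prop:nudge_weighting_loss_solution} and \ref{prop:nudge_alearning_loss_solution}, and I take it as given from there. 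Since the loss is linear in $Y$, it follows that
\[
\phi(t)=\pi\bigl\{-\mu(a+\delta,\bx)(1-\pi)t+b((1-\pi)t)\bigr\}+(1-\pi)\bigl\{\mu(a,\bx)\pi t+b(-\pi t)\bigr\}.
\]

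The main step is the first-order condition. Because $b$ is convex and differentiable with $b'=\mu(\cdot)$, $\phi$ is convex and differentiable, so its minimizer $t^*$ satisfies $\phi'(t^*)=0$:
\[
\pi(1-\pi)\bigl\{-\mu(a+\delta,\bx)+\mu((1-\pi)t^*)\bigr\}+(1-\pi)\pi\bigl\{\mu(a,\bx)-\mu(-\pi t^*)\bigr\}=0.
\]
Dividing by $\pi(1-\pi)>0$ gives $\mu((1-\pi)t^*)-\mu(-\pi t^*)=\mu(a+\delta,\bx)-\mu(a,\bx)$. By Proposition~\ref{prop:nudge_outcome_identifiability} the right-hand side equals $\tau_\delta(a,\bx)$, which is the claim with $t^*=f_A^*(a,\bx)$.

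The step I expect to need care is justifying that the pointwise minimizer exists and that differentiating under the expectation is legitimate. For existence, $\phi$ may be minimized only at infinity; for example, with logistic $b$ this happens when the required mean difference is not attainable. However, the proposition assumes a unique minimizer exists, so at that minimizer the derivative of the convex $\phi$ must vanish, and no separate existence argument is needed. Differentiation is unproblematic: $\widetilde\Lambda$ takes only two values, so $\phi$ is a finite combination of smooth functions and no interchange of derivative and integral is actually required.
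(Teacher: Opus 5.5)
Your proposal is correct and follows essentially the same route as the paper's proof. Both condition on $(A_\Lambda,\bX)=(a,\bx)$ to write the loss as an explicit two-arm scalar function, differentiate using $b'=\mu$, divide the first-order condition by $\pi(1-\pi)>0$, invoke Proposition~\ref{prop:nudge_outcome_identifiability} to identify the arm-mean difference with $\tau_\delta(a,\bx)$, and pass from the pointwise to the marginal minimizer. The paper additionally records strict convexity via $b''>0$, so that the pointwise solution is unique whenever it exists; your argument does not need this because uniqueness of the minimizer is assumed.
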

\noindent In particular, in the logistic case this identity again links the minimizer $f_A^*(a,\bx)$ to the arm-specific difference in success probabilities at $(\bx,a)$, now expressed through the centered arm indicator $\widetilde\Lambda - \pi_\Lambda(\bx,a)$. As in the squared-loss A-learning construction, the centering removes main-effect variation and isolates the arm contrast at $(a,\bx)$ on the link scale.
As before, one may subtract an offset $m(a,\bx)$ from $Y$ inside the linear term in \eqref{eq:A-NLL-loss} to reduce variance without affecting the population target. In the simulations presented in the Supplement, we focus on the non-augmented form for simplicity. Moreover, a discussion {of} a direct NLL loss for CMTP estimation can be found in {Supplementary Section~A.3}.

\section{Empirical estimators}\label{sec:emp_est}

The preceding population losses define weighting and A-learning through conditional risks under the duplicated-arm construction. In finite samples, we restrict \(\tau_\delta(a,\bx)\) to a semiparametric class and estimate it by minimizing empirical analogs of these losses. We next give the corresponding objectives, show that linear-basis representations yield {(possibly weighted) least-squares estimators}, and characterize their population limits. These empirical losses are optimized in our
implementation and form the basis for the large-sample results and simulation studies in
{Sections~\ref{sec:theory} and~\ref{sec:simu}}. For compactness, we write $\bZ_i^\lambda := (A_i-\lambda\delta,\bX_i), 
    \bb_i^\lambda := \bb(\bZ_i^\lambda) \text{ and }
    Y_i^\lambda := Y_i \text{ for }
    \lambda\in\{0,1\}.$
Thus \(\bZ_i^0=(A_i,\bX_i)\) denotes the observed arm and
\(\bZ_i^1=(A_i-\delta,\bX_i)\) denotes the shifted arm. 
Define the duplicated empirical average $\bbP_n^\Lambda f_i^\lambda
    := n^{-1}\sum_{i=1}^n\sum_{\lambda=0}^1 f_i^\lambda .$
For A-learning, define the transformed basis $\widetilde \bb_i^\lambda
    :=
    \left\{
        \widetilde\Lambda_i^\lambda
        -
        \widehat\pi_\Lambda({\bZ}_i^\lambda)
    \right\}
    {\bb}_i^\lambda,$
where {$\widetilde\Lambda_i^\lambda=\lambda$ is the arm indicator of the duplicated row and $\widehat\pi_\Lambda(\bZ_i^\lambda)=\widehat\pi_\Lambda(A_i-\lambda\delta,\bX_i)$ is the estimated arm propensity evaluated at the row's dose value}. Then the A-learning empirical loss is
{\small$$\widehat\btheta_A=
    \argmin_{\btheta}
    \bbP_n^\Lambda
    \left\{
        Y_i-{\btheta}^\top \widetilde {\bb}_i^\lambda
    \right\}^2 = \left\{
        \bbP_n^\Lambda
        \widetilde {\bb}_i^\lambda\widetilde {\bb}_i^{\lambda\top}
    \right\}^{-1}
    \left\{
        \bbP_n^\Lambda
        \widetilde {\bb}_i^\lambda Y_i
    \right\}.$$}

\begin{lemma}[Population limits of linear-basis nudge estimators]
\label{lem:nudge-linear-projection}
Let \(\tau_\delta(a,\bx)\) denote the conditional nudge effect and suppose we approximate it
in a linear basis, $\tau_\delta(a,\bx)\approx \btheta^\top \bb(a,\bx),$
where \(\bb(a,\bx)\in\mathbb R^p\) is a fixed vector of basis functions and
\(\btheta\in\mathbb R^p\) is unknown. {Suppose $\widehat\pi_\Lambda$ is uniformly consistent for $\pi_\Lambda$, the matrix $\mathbb E^\Lambda[h_A(A_\Lambda,\bX)\bb(A_\Lambda,\bX)\bb(A_\Lambda,\bX)^\top]$ is nonsingular, and the relevant second moments are finite.} Then
\(\widehat\btheta_A\to\btheta_A^*\) in probability, where
$\btheta_A^*
    =
    \argmin_{\btheta\in\mathbb R^p}
    \mathbb E{^\Lambda}
    \left[
        h_A({A_\Lambda},\bX)
        \left\{
            \tau_\delta({A_\Lambda},\bX)-\btheta^\top \bb({A_\Lambda},\bX)
        \right\}^2
    \right],$
with $h_A(a,\bx)
    :=
    \pi_\Lambda(a,\bx)\{1-\pi_\Lambda(a,\bx)\}{,}$
{where $\mathbb E^\Lambda$ denotes expectation under the duplicated-arm law of $(A_\Lambda,\bX)$, whose conditional density given $\bX=\bx$ is the mixture $\tfrac12\{g_{A\mid\bX}(a\mid\bx)+g_{A\mid\bX}(a+\delta\mid\bx)\}$; each observation contributes to the projection through both of its duplicated rows.}
\end{lemma}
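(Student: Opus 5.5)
The plan is a standard M-estimation / least-squares argument in two stages: first compute the probability limit of $\widehat\btheta_A$ as a ratio of population moments, then identify that limit with the weighted projection displayed in \Cref{lem:nudge-linear-projection}.

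\emph{Step 1 (probability limit).} Write $\widehat\btheta_A=\widehat{\bM}^{-1}\widehat{\bm{v}}$ with $\widehat{\bM}=\bbP_n^\Lambda\widetilde\bb_i^\lambda\widetilde\bb_i^{\lambda\top}$ and $\widehat{\bm{v}}=\bbP_n^\Lambda\widetilde\bb_i^\lambda Y_i$. Let $\bM,\bm{v}$ be the same averages with $\pi_\Lambda$ in place of $\widehat\pi_\Lambda$.

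The differences $\widehat{\bM}-\bM$ and $\widehat{\bm{v}}-\bm{v}$ are controlled in the same way. Each is bounded by $\sup|\widehat\pi_\Lambda-\pi_\Lambda|$ times empirical averages of $\|\bb\|^2$ or $\|\bb\||Y|$. Those averages are $O_p(1)$ by the finite-second-moment assumption, so uniform consistency makes both differences $o_p(1)$.

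The law of large numbers, applied to the i.i.d. summands $\sum_{\lambda=0}^1(\cdot)$, gives $\bM\to 2\,\mathbb E^\Lambda[\{\widetilde\Lambda-\pi_\Lambda\}^2\bb\bb^\top]$ and $\bm{v}\to 2\,\mathbb E^\Lambda[\{\widetilde\Lambda-\pi_\Lambda\}\bb Y]$. The factor $2$ arises because each observation contributes two rows with marginal mass $1/2$ each, matching the mixture law of $(A_\Lambda,\bX,\widetilde\Lambda,Y)$; it cancels in the ratio. Nonsingularity and the continuous mapping theorem then give $\widehat\btheta_A\to_p \btheta_A^*:=\bM_\infty^{-1}\bm{v}_\infty$.

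\emph{Step 2 (identifying the limit).} Condition on $(A_\Lambda,\bX)=(a,\bx)$. Since $\widetilde\Lambda\mid(a,\bx)$ is Bernoulli$(\pi_\Lambda)$, we have $\mathbb E[(\widetilde\Lambda-\pi_\Lambda)^2\mid a,\bx]=h_A(a,\bx)$, and the denominator becomes $\mathbb E^\Lambda[h_A\bb\bb^\top]$.

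For the numerator, $\mathbb E[(\widetilde\Lambda-\pi_\Lambda)Y\mid a,\bx]=\pi_\Lambda(1-\pi_\Lambda)\{\mathbb E[Y\mid\widetilde\Lambda=1,a,\bx]-\mathbb E[Y\mid\widetilde\Lambda=0,a,\bx]\}$. By construction, the shifted arm at $A_\Lambda=a$ consists of units with original dose $a+\delta$, so the first conditional mean is $m_0(a+\delta,\bx)$. The original arm consists of units with dose $a$, so the second is $m_0(a,\bx)$. This is exactly the calculation behind \Cref{prop:nudge_alearning_loss_solution}, which I would cite. Under the conditions of \Cref{prop:nudge_outcome_identifiability}, the difference equals $\tau_\delta(a,\bx)$, so the numerator is $\mathbb E^\Lambda[h_A\,\tau_\delta\,\bb]$.

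Hence $\btheta_A^*=\{\mathbb E^\Lambda[h_A\bb\bb^\top]\}^{-1}\mathbb E^\Lambda[h_A\bb\tau_\delta]$. This is the normal equation of the weighted least-squares problem in the statement, and it is the unique minimizer because the Hessian is nonsingular.

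\emph{Main obstacle.} The main technical point is the plug-in step. The estimate $\widehat\pi_\Lambda$ is fitted on the same data, so I rely on uniform convergence plus moment bounds rather than on independence; a Cauchy–Schwarz bound handles the cross terms. A minor subtlety is the boundary region where $\pi_\Lambda\in\{0,1\}$. There $h_A=0$, so those rows contribute nothing to the limit, consistent with restricting attention to the region where positivity holds.
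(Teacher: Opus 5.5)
Your proposal is correct and follows the paper's proof essentially step for step. It uses the law of large numbers plus uniform consistency of $\widehat\pi_\Lambda$ for the Gram matrix and cross-moment limits, the conditional Bernoulli computation giving $h_A$ and $h_A\{m_0(a+\delta,\bx)-m_0(a,\bx)\}=h_A\tau_\delta$ via \Cref{prop:nudge_alearning_loss_solution} and \Cref{prop:nudge_outcome_identifiability}, and then continuous mapping with the normal-equation characterization. Your explicit plug-in error bound, your tracking of the factor $2$ from summing both duplicated rows (which cancels in the ratio), and your remark on rows with $h_A=0$ are somewhat more careful than the paper's sketch but do not change the argument.
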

% {\begin{proof}
% See Section~\ref{sec:supp_proofs} of the Supplementary Material.
% \end{proof}}

{In other words, under correct specification the A-learning estimator recovers the conditional nudge effect, while under misspecification it converges to the best $h_A$-weighted approximation of $\tau_\delta(a,\bx)$ in the span of the basis. Because $h_A=\pi_\Lambda(1-\pi_\Lambda)$, the projection emphasizes regions where the two duplicated arms overlap most, precisely where the nudge contrast is best identified.}
{A similar result for the weighting loss is presented in Section~A.4 of the Supplementary Material.}
For the direct CMTP estimators, we model the policy effect directly as
$\tau_\delta(\bx) \approx f_{\bbeta}(\bx) = {\bbeta}^\top \bc(\bx),$
where \(\bc(\bx)\) is a lower-dimensional basis in \(\bx\). Let
\(c_i=c(\bX_i)\), \(\hat g_i^\lambda=\lambda-\hat\pi_\Lambda(\bX_i,A_i-\lambda\delta)\),
and {\(\hat\pi_i^\lambda=\hat\pi_\Lambda(\bX_i,A_i-\lambda\delta)\)}, with
\(\lambda\in\{0,1\}\) denoting the duplicated arms. The stabilized direct
A-learning CMTP estimator is
{\small\[
    \hat\bbeta^{\mathrm{dir}}_{A}
    =
    \arg\min_{\bbeta}
    \bbP_n^\Lambda
    \frac{
    \{Y_i-\hat g_i^\lambda \bbeta^\top c_i\}^2
    }{{\hat\pi_i^\lambda}} =
    \left[
    \bbP_n^\Lambda
    \frac{(\hat g_i^\lambda)^2}{{\hat\pi_i^\lambda}}
    c_i c_i^\top
    \right]^{-1}
    \left[
    \bbP_n^\Lambda
    \frac{\hat g_i^\lambda}{{\hat\pi_i^\lambda}}
    c_i Y_i
    \right].
\]}
With augmentation by an outcome regression \(\hat m(A_i-\lambda\delta,\bX_i)\),
{\small\[
    \hat{\bbeta}^{\mathrm{dir}}_{A,\mathrm{aug}}
    =
    \left[
    \bbP_n^\Lambda
    \frac{(\hat g_i^\lambda)^2}{{\hat\pi_i^\lambda}}
    c_i c_i^\top
    \right]^{-1}
    \left[
    \bbP_n^\Lambda
    \frac{\hat g_i^\lambda}{{\hat\pi_i^\lambda}}
    c_i \{Y_i-\hat m(A_i-\lambda\delta,\bX_i)\}
    \right].
\]}
The estimators developed in this section provide the practical implementation of the weighting and A-learning procedures introduced in {Section~\ref{sec:methodology}}. By restricting the conditional nudge effect and the direct CMTP effect to finite-dimensional working models, the corresponding empirical objectives can be formulated as standard (weighted) least-squares problems, making estimation computationally straightforward. Lemma~\ref{lem:nudge-linear-projection} further demonstrates that, under model misspecification, these estimators will converge to weighted $L_2$ projections of the underlying causal effects onto the chosen basis rather than the true effects themselves, a standard property for semiparametric working models. These empirical estimating equations form the foundation for our inferential results. In the next section, we establish the large-sample behavior of our proposed estimators and derive asymptotic normality while accounting for the estimation of nuisance functions such as the duplicated-arm propensity model and, for the augmented estimators, the outcome regression model.

\section{Asymptotic Theory}
\label{sec:theory}

In this section we study the large-sample properties of our estimators of Section~\ref{sec:emp_est}, providing asymptotic normality that can be used to derive inferential procedures. We begin by fixing notation and stating regularity conditions. Let $\bO_i := (Y_i,A_i,\bX_i),
    \bZ_i^0 := (A_i,\bX_i),
    \bZ_i^1 := (A_i-\delta,\bX_i),$
and write ${\bb}_i^0 := \bb(\bZ_i^0),
    {\bb}_i^1 := {\bb}(\bZ_i^1).$
For a generic parameter value \(\boldeta\), define
$\pi_i^0(\boldeta) := \pi_\Lambda(\bZ_i^0;\boldeta),
    \pi_i^1(\boldeta) := \pi_\Lambda(\bZ_i^1;\boldeta).$
When no ambiguity arises, we suppress the dependence on \(\boldeta^*\) and write
$\pi_i^0 := \pi_i^0(\boldeta^*),
    \pi_i^1 := \pi_i^1(\boldeta^*).$
Let \(\pi_\Lambda(a,\bx;\boldeta)\) be a correctly specified model for
\(\pi_\Lambda(a,\bx)\), with true value \(\boldeta^*\). {In this and subsequent sections we write the dose argument first, $\pi_\Lambda(a,\bx)=\pi_\Lambda(\bx,a)$, matching the duplicated-row tuples $\bZ_i^\lambda=(A_i-\lambda\delta,\bX_i)$; the two orderings denote the same function.} {Throughout this section we assume that the matrix $\bD$ defined below and the matrices $J_{\boldeta}$ and $J_{\bnu}$ of conditions (C1) and (C4) are nonsingular, and that
$\mathbb E\big[(1+Y^2)\{\|\bb(A,\bX)\|^2+\|\bb(A-\delta,\bX)\|^2\}\big]<\infty$ and
$\mathbb E\big[\|\bb(A,\bX)\|^4+\|\bb(A-\delta,\bX)\|^4\big]<\infty$.} {Let \(U_{\boldeta}(\bO_i;\boldeta)\) denote the estimating function for the propensity-model parameter \(\boldeta\), so that \(\widehat{\boldeta}_n\) is defined by $\mathbb P_n U_{\boldeta}(\bO;\widehat{\boldeta}_n)=0.$ We assume the following conditions.}

\begin{enumerate}[label={(\bfseries C\arabic*)}]
    \item The estimator admits the asymptotic linear expansion
    $\sqrt n(\widehat{\boldeta}_n-{\boldeta}^*)
        =
        {-J_{\boldeta}^{-1}
        \sqrt n\,\bbP_n U_{\boldeta}(\bO_i;{\boldeta}^*)}
        +
        o_p(1),$
    where $J_{\boldeta}
        :=
        \mathbb E\{\dot U_{\boldeta}(\bO_i;\boldeta^*)\},
        \dot U_{\boldeta}(\bO_i;{\boldeta}^*)
        :=
        \frac{\partial}{\partial{\boldeta}}
        U_{\boldeta}(\bO_i;\boldeta)|_{{\boldeta}={\boldeta}^*}.$
    The asymptotic variance matrix of \(\widehat{\boldeta}_n\) is denoted by
    \(V_{\boldeta}\) and has finite entries.

    \item {The classes $\{\pi_\Lambda(\cdot;\boldeta):\boldeta\in\mathcal H_{\boldeta}^*\}$ and $\{\dot\pi_\Lambda(\cdot;\boldeta):\boldeta\in\mathcal H_{\boldeta}^*\}$ are $P$-Donsker with square-integrable envelopes, where $\mathcal H_{\boldeta}^*$ is the neighborhood of $\boldeta^*$ in (C3).}

    \item The map \(\pi_\Lambda(a,\bx;{\boldeta})\) is twice differentiable with respect to
    \({\boldeta}\) in \(\mathcal H_{\boldeta}^*\times\mathcal A\times\mathcal X\), where
    \(\mathcal H_{\boldeta}^*\) is a neighborhood of \({\boldeta}^*\), and has bounded,
    continuous derivatives.
\end{enumerate}

Let \(m(a,\bx;\bnu)\) be {a working model for some function of \(a\) and \(\bx\) only,
with limiting value \(m(a,\bx;\bnu^*)\)}, not necessarily equal to
\(\mathbb E[Y\mid A=a,\bX=\bx]\). For compactness, define
$m_i^0(\bnu):=m(\bZ_i^0;\bnu),
    \text{ and }
    m_i^1({\bnu}):=m(\bZ_i^1;\bnu).$
Assume the following additional conditions.

\begin{enumerate}[label={(\bfseries C\arabic*)}]
    \setcounter{enumi}{3}
    \item The estimator \(\widehat\bnu_n\) satisfies $\bbP_n U_{\bnu}(\bO_i;\widehat\bnu_n)=0$
    and admits the asymptotic linear expansion
    $\sqrt n(\widehat\bnu_n-\bnu^*)
        =
        {-J_{\bnu}^{-1}
        \sqrt n\,\bbP_n U_{\bnu}(\bO_i;\bnu^*)}
        +
        o_p(1),$
    where $J_{\bnu}
        :=
        \mathbb E\{\dot U_{\bnu}(\bO_i;\bnu^*)\},$ and 
    $\dot U_{\bnu}(\bO_i;\bnu^*)
        :=
        \left.
        \frac{\partial}{\partial\bnu}
        U_{\bnu}(\bO_i;\bnu)
        \right|_{\bnu=\bnu^*}.$
    The asymptotic variance matrix of \(\widehat\bnu_n\) is denoted by
    \(V_{\bnu}\) and has finite entries.
    \item {The classes $\{m(\cdot;\bnu):\bnu\in\mathcal H_{\bnu}^*\}$ and $\{\dot m(\cdot;\bnu):\bnu\in\mathcal H_{\bnu}^*\}$ are $P$-Donsker with square-integrable envelopes, where $\mathcal H_{\bnu}^*$ is the neighborhood of $\bnu^*$ in (C6).}
    \item The map \(m(a,\bx;\bnu)\) is twice differentiable with respect to \({\bnu}\) in
    \(\mathcal H_{\bnu}^*\times\mathcal A\times\mathcal X\), where
    \(\mathcal H_{\bnu}^*\) is a neighborhood of \(\bnu^*\), and has bounded,
    continuous derivatives.
\end{enumerate}
Conditions (C1)-(C3) are {standard regularity assumptions that account for}
the estimation of the propensity component \(\pi_\Lambda(a,\bx)\): (C1) requires
\(\widehat{\boldeta}_n\) to be asymptotically linear, so that the first-order
effect of estimating \(\boldeta^*\) is captured by the influence function
\({-J_{\boldeta}^{-1}U_{\boldeta}(\bO_i;\boldeta^*)}\), while (C2) and (C3)
provide the empirical process (Donsker) control and smoothness needed for
uniform Taylor expansion arguments in a neighborhood of \(\boldeta^*\).
Conditions (C4)-(C6) play the analogous role for the augmentation model
\(m(a,\bx;\bnu)\), which is an arbitrary working outcome regression and need
not be correctly specified as \(\mathbb E[Y\mid A=a,\bX=\bx]\). Together, these
assumptions yield an asymptotically linear representation consisting of the
main A-learning estimating function plus correction terms for nuisance
estimation.
For a generic function \(m\), define $Y_i^{0,m}:=Y_i-m_i^0,
\text{ and }
Y_i^{1,m}:=Y_i-m_i^1,$
where \(m_i^j=m(\bZ_i^j)\), \(j=0,1\). The corresponding A-learning
estimating function is
{\small\[
\begin{aligned}
\psi_A^{(m)}(\bO_i;\btheta,\boldeta)
:=
{\{1-\pi_i^1(\boldeta)\}{\bb}_i^1
\left[
Y_i^{1,m}
-\{1-\pi_i^1(\boldeta)\}{\bb}_i^{1\top}\btheta
\right]-
\pi_i^0(\boldeta){\bb}_i^0
\left[
Y_i^{0,m}
+\pi_i^0(\boldeta){\bb}_i^{0\top}\btheta
\right].}
\end{aligned}
\]}
The unaugmented estimator corresponds to \(m\equiv0\), whereas the augmented
estimator uses \(m(\cdot)=m(\cdot;\bnu)\). Accordingly, write
$\psi_A(\bO_i;\btheta,\boldeta)
=
\psi_A^{(0)}(\bO_i;\btheta,\boldeta),
\text{ and }
\psi_{A,\mathrm{aug}}(\bO_i;\btheta,\boldeta,\bnu)
=
\psi_A^{(m(\cdot;\bnu))}(\bO_i;\btheta,\boldeta).$
Let $\bD
:=
-
\tfrac{\partial}{\partial\btheta^\top}
\mathbb E\{\psi_A(\bO_i;\btheta,\boldeta^*)\}
\mid_{\btheta=\btheta_A^*},$
and let \(B_{\boldeta}\) denote the derivative of
\(\mathbb E\{\psi_A(\bO_i;\btheta_A^*,\boldeta)\}\) with respect to
\(\boldeta^\top\), evaluated at \(\boldeta=\boldeta^*\). Similarly, let
\(B_{\boldeta,\mathrm{aug}}\) and \(B_{\bnu,\mathrm{aug}}\) denote the
derivatives of
\(\mathbb E\{\psi_{A,\mathrm{aug}}
(\bO_i;\btheta_A^*,\boldeta,\bnu)\}\)
with respect to \(\boldeta^\top\) and \(\bnu^\top\), respectively, evaluated at
\((\boldeta^*,\bnu^*)\).

\begin{theorem}[{Asymptotic normality and variance reduction of augmented A-learning}]
\label{thm:a_learning_joint_asymptotic}
Let \(\widehat\btheta_A\) and
\(\widehat\btheta_{A,\mathrm{aug}}\) solve the empirical estimating equations
based on \(\psi_A\) and \(\psi_{A,\mathrm{aug}}\), respectively, and define $\phi_A(\bO) =
\psi_A(\bO;\btheta_A^*,\boldeta^*) -
B_{\boldeta}J_{\boldeta}^{-1}
U_{\boldeta}(\bO;\boldeta^*)$ and $\phi_{A,\mathrm{aug}}(\bO)=
\psi_{A,\mathrm{aug}}
(\bO;\btheta_A^*,\boldeta^*,\bnu^*)-
B_{\boldeta,\mathrm{aug}}J_{\boldeta}^{-1}
U_{\boldeta}(\bO;\boldeta^*)-
B_{\bnu,\mathrm{aug}}J_{\bnu}^{-1}
U_{\bnu}(\bO;\bnu^*).$
Under conditions {\normalfont (C1)-(C3)} for the unaugmented estimator and
{\normalfont (C1)-(C6)} for the augmented estimator,
{\small\[
\sqrt n(\widehat\btheta_A-\btheta_A^*)
\rightsquigarrow
N(\bzero,\bD^{-1}\bSigma_A\bD^{-1}),
\qquad
\sqrt n(\widehat\btheta_{A,\mathrm{aug}}-\btheta_A^*)
\rightsquigarrow
N(\bzero,\bD^{-1}\bSigma_{A,\mathrm{aug}}\bD^{-1}),
\]}
{where}
\(\bSigma_A=\mathbb E\{\phi_A(\bO)^{\otimes2}\}\) and
\(\bSigma_{A,\mathrm{aug}}
=\mathbb E\{\phi_{A,\mathrm{aug}}(\bO)^{\otimes2}\}\).
If, in addition,
\(m(a,\bx;\bnu^*)=\mathbb E(Y\mid A=a,\bX=\bx)\), the nuisance-score
coefficients are their population \(L_2\)-projection coefficients, {and $\bbE[\phi_{A,\mathrm{aug}}(\bO)\,D(A,\bX)^\top]=\bzero$, where $D(A,\bX):=\{1-\pi_\Lambda(A-\delta,\bX)\}\bb(A-\delta,\bX)m(A-\delta,\bX;\bnu^*)-\pi_\Lambda(A,\bX)\bb(A,\bX)m(A,\bX;\bnu^*)$ is the difference between the two scores,} then $\bSigma_{A,\mathrm{aug}}\preceq\bSigma_A,
$
and hence $\bD^{-1}\bSigma_{A,\mathrm{aug}}\bD^{-1}
\preceq
\bD^{-1}\bSigma_A\bD^{-1}.$
Equality holds if and only if
\(\phi_{A,\mathrm{aug}}(\bO)=\phi_A(\bO)\) almost surely.
\end{theorem}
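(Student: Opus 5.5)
The plan is a standard Z-estimator argument with estimated nuisance parameters, followed by a projection argument for the variance comparison.

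\emph{Step 1: expansion of the estimating equation.} Write $\widehat\btheta_A$ as the solution of $\bbP_n\psi_A(\bO;\btheta,\widehat\boldeta_n)=0$. Because $\psi_A$ is linear in $\btheta$, we have
\[
\widehat\btheta_A=\Big\{\bbP_n\partial_{\btheta}\psi_A(\bO;\cdot,\widehat\boldeta_n)\Big\}^{-1}\times\{\text{intercept term}\},
\]
so no separate consistency argument is needed beyond a law of large numbers. The required LLN follows from the fourth-moment bounds on $\bb$ and the boundedness of $\pi_\Lambda$ and its derivatives in (C3). Consistency of $\widehat\btheta_A$ for $\btheta_A^*$ then follows as in \Cref{lem:nudge-linear-projection}. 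Next, decompose
\[
0=\sqrt n\,\bbP_n\psi_A(\widehat\btheta_A,\widehat\boldeta_n)
=\sqrt n\,\bbP_n\psi_A(\btheta_A^*,\boldeta^*)
+\bbG_n\{\psi_A(\btheta_A^*,\widehat\boldeta_n)-\psi_A(\btheta_A^*,\boldeta^*)\}
+\sqrt n\,\{P\psi_A(\btheta_A^*,\widehat\boldeta_n)-P\psi_A(\btheta_A^*,\boldeta^*)\}
-\bD_n\sqrt n(\widehat\btheta_A-\btheta_A^*).
\]
The empirical-process term is $o_p(1)$ by (C2): $\psi_A$ is a fixed polynomial combination of $\pi_\Lambda(\cdot;\boldeta)$, $\bb$ and $Y$, so Donsker preservation under Lipschitz maps with square-integrable envelopes applies, together with $L_2$-continuity in $\boldeta$ from (C3). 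The drift term is handled by a Taylor expansion under (C3) and equals $B_{\boldeta}\sqrt n(\widehat\boldeta_n-\boldeta^*)+o_p(1)$. By (C1) this is $-B_{\boldeta}J_{\boldeta}^{-1}\sqrt n\,\bbP_n U_{\boldeta}+o_p(1)$. Finally, $\bD_n\to\bD$ in probability. Hence
\[
\sqrt n(\widehat\btheta_A-\btheta_A^*)=\bD^{-1}\sqrt n\,\bbP_n\phi_A+o_p(1),
\]
and the CLT gives the first limit. The mean-zero property $P\psi_A(\btheta_A^*,\boldeta^*)=0$ is exactly the first-order condition defining $\btheta_A^*$.

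\emph{Step 2: the augmented estimator.} Repeat Step 1 with the nuisance pair $(\boldeta,\bnu)$, using (C4)--(C6) for the additional drift term $B_{\bnu,\mathrm{aug}}\sqrt n(\widehat\bnu_n-\bnu^*)$. We must check that the centering and the $\btheta$-derivative are unchanged, so that the augmented estimator has the same limit $\btheta_A^*$ and the same $\bD$.
\begin{itemize}
\item The augmentation changes $\psi_A$ by $-D(A,\bX)$, which does not involve $\btheta$; hence the $\btheta$-derivative, and therefore $\bD$, is unchanged.
\item Its mean is zero. Change variables in the shifted arm, $a\mapsto a-\delta$, and use the mixture representation of $\pi_\Lambda$. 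This gives $\bbE[\{1-\pi_\Lambda(A-\delta,\bX)\}k(A-\delta,\bX)]=\bbE[\pi_\Lambda(A,\bX)k(A,\bX)]$ for any $k$, because $\{1-\pi_\Lambda(a,\bx)\}g(a+\delta\mid\bx)=\pi_\Lambda(a,\bx)g(a\mid\bx)\cdot g(a+\delta)/g(a)$, which simplifies through the definition of $\pi_\Lambda$.
\end{itemize}
This is the same balancing identity that underlies \Cref{prop:nudge_alearning_loss_solution}, so $\bbE D=0$ holds for arbitrary $m$ and $\btheta_A^*$ is unchanged.

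\emph{Step 3: variance comparison.} We have $\phi_A=\phi_{A,\mathrm{aug}}+\widetilde D$, where $\widetilde D$ collects $D(A,\bX)$ together with the differences in the nuisance-correction terms. Under the stated orthogonality $\bbE[\phi_{A,\mathrm{aug}}D^\top]=\bzero$, together with the projection-coefficient hypothesis, the cross terms vanish. This gives $\bSigma_A=\bSigma_{A,\mathrm{aug}}+\bbE[\widetilde D^{\otimes2}]\succeq\bSigma_{A,\mathrm{aug}}$. The motivating fact is that with $m=\bbE(Y\mid A,\bX)$ the augmented residuals are conditionally mean zero given $(A,\bX)$, so that $\phi_{A,\mathrm{aug}}$ is essentially a residual after projection onto functions of $(A,\bX)$. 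The ordering is preserved under congruence by $\bD^{-1}$. Equality holds iff $\bbE[\widetilde D^{\otimes 2}]=0$, that is, iff $\widetilde D=0$ almost surely, that is, iff $\phi_{A,\mathrm{aug}}=\phi_A$ almost surely.

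The main obstacle is Step 3: showing that the nuisance-correction terms for $\boldeta$ differ between the two influence functions only in a way that remains orthogonal to $\phi_{A,\mathrm{aug}}$. I would first try to show $B_{\boldeta,\mathrm{aug}}-B_{\boldeta}=\partial_{\boldeta}\bbE[-D]$, and then lean on the projection-coefficient assumption to absorb it into $\widetilde D$.
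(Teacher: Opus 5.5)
Your proposal is correct and follows the paper's proof essentially step for step. It uses the same linear Z-estimator expansion with a Taylor term for each estimated nuisance; your equicontinuity-plus-drift split is a standard variant of the paper's direct expansion of $\Psi_n(\widehat\boldeta_n,\btheta_A^*)-\Psi_n(\boldeta^*,\btheta_A^*)$. It uses the same change-of-variables check that $\bbE\{D(A,\bX)\}=\bzero$, so that $\btheta_A^*$ and $\bD$ are unchanged. The identity you need there is simply $\{1-\pi_\Lambda(a,\bx)\}g_{A\mid\bX}(a+\delta\mid\bx)=\pi_\Lambda(a,\bx)\,g_{A\mid\bX}(a\mid\bx)$, without the extra ratio factor you wrote. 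Finally, it uses the same decomposition $\phi_A=\phi_{A,\mathrm{aug}}+R$, with the cross terms removed by the orthogonality hypotheses.

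Your $\widetilde D$ is in fact more careful than the paper's $R$. The paper's $R$ omits the generally nonzero term $(B_{\boldeta,\mathrm{aug}}-B_{\boldeta})J_{\boldeta}^{-1}U_{\boldeta}(\bO;\boldeta^*)$, which is harmless under the same hypothesis. The ``main obstacle'' you flag is not one. The projection-coefficient hypothesis is exactly what gives $\bbE[\phi_{A,\mathrm{aug}}U_{\boldeta}^\top]=\bbE[\phi_{A,\mathrm{aug}}U_{\bnu}^\top]=\bzero$. So any constant matrix times $U_{\boldeta}$ or $U_{\bnu}$ is orthogonal to $\phi_{A,\mathrm{aug}}$, and you never need $B_{\boldeta,\mathrm{aug}}-B_{\boldeta}$ explicitly. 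Separately, your conditional-mean-zero heuristic is off for the shifted row, since $m(A-\delta,\bX;\bnu^*)\neq\bbE(Y\mid A,\bX)$, but nothing in the proof rests on it.
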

{Theorem~\ref{thm:a_learning_joint_asymptotic} establishes that both A-learning nudge estimators are root-$n$ consistent and asymptotically normal, so that Wald-type inference can proceed from plug-in estimates of $\bD$ and $\bSigma_A$ or $\bSigma_{A,\mathrm{aug}}$. Its second part indicates that, when the augmentation function is correctly specified and the stated orthogonality condition holds, augmentation can only reduce the asymptotic variance, supporting its routine use.}
{A remark in the Supplementary Material records the usual projection simplification of the sandwich variance when $\widehat{\boldeta}_n$ is an efficient estimator of $\boldeta^*$ \citep{pierce1982asymptotic}, for both the unaugmented and augmented estimators.}
{The nudge asymptotics also yield inference for the plug-in CMTP estimator obtained by averaging the fitted nudge over the conditional dose distribution; the formal statement, whose guarantees depend on how well that distribution can be estimated, is given as Corollary~S1 in the Supplementary Material together with a remark on the same-sample projection implementation used in our numerical work. We therefore focus here on the direct CMTP estimator, which requires no such auxiliary estimate.}

Unlike the plug-in approach above,
{the direct CMTP estimator of Section~\ref{sec:emp_est}} targets the projection of \(\tau_\delta(\bX)\) directly through a
stabilized A-learning score, avoiding the intermediate estimation of the full
conditional nudge surface \(\tau_\delta(a,\bX)\). For the direct CMTP working model,
let \({\bb}_i:=b(\bX_i)\) denote the vector of basis functions in \(\bX_i\), and, as in
Section~\ref{sec:emp_est}, let $g_i^\lambda(\boldeta):=\lambda-\pi_i^\lambda(\boldeta)$
denote the centered arm score of the duplicated row $\lambda\in\{0,1\}$, where
$\pi_i^\lambda(\boldeta):=\pi_\Lambda(\bZ_i^\lambda;\boldeta)$. The
$\pi_\Lambda$-stabilized direct CMTP estimating function, $\psi_{C,\mathrm{aug}}(\bO_i;\btheta,\boldeta,\bnu) = \sum_{\lambda\in\{0,1\}}
({g_i^\lambda(\boldeta)}/{\pi_i^\lambda(\boldeta)})\,
{\bb}_i
\left[
Y_i-m_i^\lambda(\bnu)-g_i^\lambda(\boldeta)\,{\bb}_i^\top\btheta
\right]$, evaluates to
{\small \[
\frac{1-\pi_i^1(\boldeta)}{\pi_i^1(\boldeta)}\,
{\bb}_i
\left[
Y_i-m_i^1(\bnu)-\{1-\pi_i^1(\boldeta)\}{\bb}_i^\top\btheta
\right]
-
{\bb}_i
\left[
Y_i-m_i^0(\bnu)+\pi_i^0(\boldeta)\,{\bb}_i^\top\btheta
\right],
\]}
where the observed-arm row enters with the constant stabilized weight
$g_i^0(\boldeta)/\pi_i^0(\boldeta)=-1$.
Let \(\widehat\btheta_{C,\mathrm{aug}}\) solve
$\bbP_n\psi_{C,\mathrm{aug}}(\bO_i;\btheta,\widehat{\boldeta}_n,\widehat{\bnu}_n)=\bzero$
(equivalently, $\widehat\btheta_{C,\mathrm{aug}}=\hat\bbeta^{\mathrm{dir}}_{A,\mathrm{aug}}$
of Section~\ref{sec:emp_est} with $c=b$), and let \(\btheta_C^*\) solve
$\mathbb E\{\psi_{C,\mathrm{aug}}(\bO_i;\btheta_C^*,{\boldeta}^*,\bnu^*)\}=\bzero$.
Because the $\pi_\Lambda$-stabilization equalizes the conditional second moment of the
centered arm scores, the population Jacobian and moment vector take the forms
{\small\[
D_C
:=
-\left.\frac{\partial}{\partial\btheta}
\mathbb E\{\psi_{C,\mathrm{aug}}(\bO_i;\btheta,{\boldeta}^*,{\bnu}^*)\}\right|_{\btheta=\btheta_C^*}
=
\mathbb E\!\bigg[\sum_{\lambda\in\{0,1\}}
\frac{\{g_i^\lambda(\boldeta^*)\}^2}{\pi_i^\lambda(\boldeta^*)}\,\bb_i\bb_i^\top\bigg]
=
\mathbb E\{b(\bX)b(\bX)^\top\},
\]}
{\small\[
\mathbb E\!\bigg[\sum_{\lambda\in\{0,1\}}
\frac{g_i^\lambda(\boldeta^*)}{\pi_i^\lambda(\boldeta^*)}\,\bb_i\{Y_i-m_i^\lambda(\bnu)\}\bigg]
=
\mathbb E\{b(\bX)\tau_\delta(\bX)\}
\quad\text{for every }\bnu,
\]}
so that, under the identification assumptions for the CMTP, \(\btheta_C^*\) is 
the unweighted $L_2(P_{\bX})$ projection coefficient
$\btheta_C^*
=
\arg\min_{\btheta}
\mathbb E\left[
\{\tau_\delta(\bX)-b(\bX)^\top\btheta\}^2
\right],$
regardless of the working outcome model $m$
(Lemma~S3 of the Supplementary Material). Let
\(B_{C,{\boldeta}}\) and \(B_{C,{\bnu}}\) denote the population
derivatives of
\(\mathbb E\{\psi_{C,\mathrm{aug}}(\bO_i;\btheta_C^*,{\boldeta},{\bnu})\}\)
with respect to \({\boldeta}^\top\) and \({\bnu}^\top\), evaluated at \(({\boldeta}^*,{\bnu}^*)\).
Since $\mathbb E\{\psi_{C,\mathrm{aug}}(\bO_i;\btheta,\boldeta^*,\bnu)\}$ is constant in
$\bnu$, we have $B_{C,\bnu}=\bzero$: the direct CMTP estimating function is orthogonal
to the outcome-regression nuisance whenever the propensity model is correctly specified.

\begin{theorem}[Asymptotic normality of augmented direct CMTP A-learning]
\label{thm:direct_cmtp_aug_alearning_asymptotic}
Assume conditions {\normalfont (C1)-(C6)}, that $\mathbb E\{b(\bX)b(\bX)^\top\}$ is
nonsingular, that $\mathbb E\|b(\bX)\|^2<\infty$ and $\mathbb E(Y^2)<\infty$, and that
$\pi_\Lambda(a,\bx;\boldeta)\ge c_0>0$ for all $(a,\bx)$ and all $\boldeta$ in a
neighborhood of $\boldeta^*$. Then,
as \(n\to\infty\), $\sqrt n(\widehat\btheta_{C,\mathrm{aug}}-\btheta_C^*)
\xrightarrow{d}
N\left(
0,
D_C^{-1}
\Sigma_{C,\mathrm{aug}}(\btheta_C^*;{\boldeta}^*,{\bnu}^*)
D_C^{-1}
\right),$
where $D_C=\mathbb E\{b(\bX)b(\bX)^\top\}$ and
$\Sigma_{C,\mathrm{aug}}(\btheta_C^*;{\boldeta}^*,{\bnu}^*) = $ {\small$\mathbb E
\left[
\left\{
\psi_{C,\mathrm{aug}}(\bO_i;\btheta_C^*,{\boldeta}^*,{\bnu}^*)
-
B_{C,{\boldeta}}J_{\boldeta}^{-1}U_{\boldeta}(\bO_i;{\boldeta}^*)
\right\}^{\otimes 2}
\right];$}
no correction for the estimation of $\bnu$ appears because $B_{C,\bnu}=\bzero$.
If, in addition, $\widehat\boldeta_n$ is an efficient estimator of $\boldeta^*$, then
$\Sigma_{C,\mathrm{aug}}
=\widetilde\Sigma_{C,\mathrm{aug}}-B_{C,\boldeta}V_{\boldeta}B_{C,\boldeta}^\top$,
where $\widetilde\Sigma_{C,\mathrm{aug}}
=\mathbb E\{\psi_{C,\mathrm{aug}}(\bO_i;\btheta_C^*,\boldeta^*,\bnu^*)^{\otimes 2}\}$.
\end{theorem}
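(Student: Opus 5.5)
The proof follows the standard Z-estimation route with estimated nuisance parameters. Write $\Psi_n(\btheta,\boldeta,\bnu):=\bbP_n\psi_{C,\mathrm{aug}}(\bO_i;\btheta,\boldeta,\bnu)$ and $\Psi(\btheta,\boldeta,\bnu)$ for its population counterpart. Because $\psi_{C,\mathrm{aug}}$ is linear in $\btheta$, the estimator has the closed form
\[
\widehat\btheta_{C,\mathrm{aug}}=\widehat D_C^{-1}\,\bbP_n\textstyle\sum_\lambda (\hat g_i^\lambda/\hat\pi_i^\lambda)\bb_i(Y_i-\hat m_i^\lambda),
\qquad
\widehat D_C=\bbP_n\textstyle\sum_\lambda \{(\hat g_i^\lambda)^2/\hat\pi_i^\lambda\}\bb_i\bb_i^\top .
\]
This form avoids any separate consistency argument for a general M-estimator.

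\textbf{Step 1 (Jacobian).} Show $\widehat D_C\to D_C=\mathbb E\{b(\bX)b(\bX)^\top\}$ in probability.
\begin{itemize}
\item The summand is continuous in $\boldeta$ by (C3).
\item It is dominated by $c_0^{-1}\|\bb_i\|^2$ by the lower bound $\pi_\Lambda\ge c_0$.
\item Consistency of $\widehat\boldeta_n$ from (C1), together with a uniform law of large numbers (a Glivenko--Cantelli consequence of (C2)), gives the limit at $\boldeta^*$.
\item The limit itself is identified by the stabilization identity $\bbE[g^2/\pi_\Lambda\mid\bX]=1/2$ per duplicated row, i.e.\ total weight $1$ after the duplicated sum, as recorded before the theorem.
\end{itemize}
Nonsingularity of $D_C$ then gives $\widehat\btheta_{C,\mathrm{aug}}\to\btheta_C^*$ in probability.

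\textbf{Step 2 (expansion).} Starting from $0=\Psi_n(\widehat\btheta,\widehat\boldeta,\widehat\bnu)$, expand as
\[
\sqrt n\,\Psi_n(\btheta_C^*,\widehat\boldeta,\widehat\bnu)-\widehat D_C\sqrt n(\widehat\btheta-\btheta_C^*)=0 .
\]
Then decompose
\[
\sqrt n\,\Psi_n(\btheta_C^*,\widehat\boldeta,\widehat\bnu)=\sqrt n\,\Psi_n(\btheta_C^*,\boldeta^*,\bnu^*)+\sqrt n\{(\Psi_n-\Psi)(\widehat\boldeta,\widehat\bnu)-(\Psi_n-\Psi)(\boldeta^*,\bnu^*)\}+\sqrt n\{\Psi(\widehat\boldeta,\widehat\bnu)-\Psi(\boldeta^*,\bnu^*)\},
\]
where $\Psi(\boldeta^*,\bnu^*)=0$ by definition of $\btheta_C^*$.

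\begin{itemize}
\item \textbf{Middle term.} This is the stochastic equicontinuity term and is $o_p(1)$. The map $(\boldeta,\bnu)\mapsto\psi_{C,\mathrm{aug}}$ is a Lipschitz combination of the Donsker classes in (C2) and (C5), multiplied by $\bb_i$ and $Y_i$. Since $\pi_\Lambda\ge c_0$, the map $1/\pi$ is Lipschitz, so the product class remains Donsker with a square-integrable envelope. Here the moment conditions $\mathbb E\|b\|^2<\infty$ and $\mathbb E Y^2<\infty$ are used. Continuity in $L_2(P)$ at the truth then gives $o_p(1)$.
\item \textbf{Last term.} Use a Taylor expansion justified by (C3) and (C6) with dominated differentiation:
\[
B_{C,\boldeta}\sqrt n(\widehat\boldeta-\boldeta^*)+B_{C,\bnu}\sqrt n(\widehat\bnu-\bnu^*)+o_p(1).
\]
Since $\Psi(\btheta,\boldeta^*,\bnu)$ does not depend on $\bnu$, we have $B_{C,\bnu}=\bzero$. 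This holds because $\mathbb E[g/\pi\mid\bX]$-type centering kills $m$: conditionally on $(A_\Lambda,\bX)$, $\mathbb E[g]=0$ and $m$ is a function of $(A_\Lambda,\bX)$, as in the displayed identity preceding the theorem.
\item \textbf{Substitution.} Insert (C1) for the $\boldeta$ term. The $\bnu$ term vanishes at first order, with (C4) giving $\sqrt n(\widehat\bnu-\bnu^*)=O_p(1)$.
\end{itemize}
This yields
\[
\sqrt n(\widehat\btheta-\btheta_C^*)=D_C^{-1}\sqrt n\,\bbP_n\{\psi_{C,\mathrm{aug}}(\bO;\btheta_C^*,\boldeta^*,\bnu^*)-B_{C,\boldeta}J_{\boldeta}^{-1}U_{\boldeta}(\bO;\boldeta^*)\}+o_p(1).
\]
The multivariate CLT, with finite second moments guaranteed by the bounds above and by (C1), gives the stated sandwich limit.

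\textbf{Step 3 (efficient $\widehat\boldeta_n$).} Suppose $U_{\boldeta}$ is the score of a correctly specified likelihood for the arm model. The generalized information equality gives $B_{C,\boldeta}=-\mathbb E\{\psi_{C,\mathrm{aug}}U_{\boldeta}^\top\}$, and efficiency gives $J_{\boldeta}=-\mathbb E\{U_{\boldeta}U_{\boldeta}^\top\}=-V_{\boldeta}^{-1}$. Expanding the square, the cross term equals $-2B_{C,\boldeta}V_{\boldeta}B_{C,\boldeta}^\top$ and the quadratic term equals $+B_{C,\boldeta}V_{\boldeta}B_{C,\boldeta}^\top$, which yields $\widetilde\Sigma_{C,\mathrm{aug}}-B_{C,\boldeta}V_{\boldeta}B_{C,\boldeta}^\top$, as in \citet{pierce1982asymptotic}.

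\textbf{Main obstacle.} The delicate step is the stochastic equicontinuity term in Step 2. The duplicated-row function evaluates nuisances at the shifted point $(A-\delta,\bX)$ and involves $1/\pi$. I would handle this by verifying that composition with $a\mapsto a-\delta$ preserves the Donsker property (it is a fixed measurable map, and the classes in (C2) and (C5) are stated for $P$ on both rows through their envelopes). The lower bound $c_0$ then controls the reciprocal.
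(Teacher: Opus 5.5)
Your argument for the sandwich limit is essentially the paper's: the closed form $\widehat D_C^{-1}\widehat d$, a split into $\mathbb G_n\psi_{C,\mathrm{aug}}$ at the true nuisances plus a nuisance-perturbation term, a first-order expansion in $(\boldeta,\bnu)$, and $B_{C,\bnu}=\bzero$. The differences are only technical, and both versions work:
\begin{itemize}
\item You control the perturbation term by stochastic equicontinuity plus a Taylor expansion of the population map. The paper instead Taylor-expands $\Psi_{C,n}$ and lets the empirical derivatives converge.
\item You obtain $B_{C,\bnu}=\bzero$ from mixture-law centering: $\mathbb E[\widetilde\Lambda-\pi_\Lambda\mid A_\Lambda,\bX]=0$, and $m/\pi_\Lambda$ is a function of $(A_\Lambda,\bX)$. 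The paper uses the transfer identity \eqref{eq:row1_transfer}.
\end{itemize}
Two small repairs are needed. First, Donsker-ness on the shifted row follows most cleanly from (C3) and (C6): these make each class Lipschitz in a finite-dimensional parameter, so it is Donsker under the law of $(A-\delta,\bX)$ as well as that of $(A,\bX)$. Second, $\psi_{C,\mathrm{aug}}$ contains $b(\bX)Y$ and $b(\bX)b(\bX)^\top\btheta$. A square-integrable envelope therefore needs $\mathbb E\{\|b(\bX)\|^2Y^2\}<\infty$ and $\mathbb E\|b(\bX)\|^4<\infty$, which come from the section-wide moment conditions. The stated $\mathbb E\|b(\bX)\|^2<\infty$ and $\mathbb E(Y^2)<\infty$ are not enough; the paper's Step~1 makes the same slip.

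The genuine gap is your Step~3. The generalized information equality $B_{C,\boldeta}=-\mathbb E\{\psi_{C,\mathrm{aug}}U_{\boldeta}^\top\}$ requires $\psi_{C,\mathrm{aug}}(\cdot;\btheta_C^*,\boldeta,\bnu^*)$ to have mean zero under $P_{\boldeta}$ for every $\boldeta$ in the likelihood model. In other words, the target must not move along the propensity submodel; this is exactly Pierce's hypothesis. Here the target does move:
\begin{itemize}
\item Since $\pi_\Lambda/(1-\pi_\Lambda)=g_{A\mid\bX}(a+\delta\mid\bx)/g_{A\mid\bX}(a\mid\bx)$, any submodel that changes $\pi_\Lambda$ changes the dose law.
\item Applying Lemma~\ref{lem:cmtp_projection} under $P_{\boldeta}$ gives $\mathbb E_{P_{\boldeta}}\psi_{C,\mathrm{aug}}(\bO;\btheta,\boldeta,\bnu)=\mathbb E[b(\bX)\{\tau_\delta^{(\boldeta)}(\bX)-b(\bX)^\top\btheta\}]$, where $\tau_\delta^{(\boldeta)}(\bx)=\int\{m_0(a+\delta,\bx)-m_0(a,\bx)\}g_{A\mid\bX}(a\mid\bx;\boldeta)\,da$.
\item Differentiating at $\boldeta^*$ with $(\btheta,\bnu)=(\btheta_C^*,\bnu^*)$ fixed gives $B_{C,\boldeta}+\mathbb E\{\psi_{C,\mathrm{aug}}U_{\boldeta}^\top\}=C$, where $C:=\mathbb E[b(\bX)\,\partial\tau_\delta^{(\boldeta)}(\bX)/\partial\boldeta^\top]$.
\item Your expansion of the square then gives $\widetilde\Sigma_{C,\mathrm{aug}}-B_{C,\boldeta}V_{\boldeta}B_{C,\boldeta}^\top+C V_{\boldeta}B_{C,\boldeta}^\top+B_{C,\boldeta}V_{\boldeta}C^\top$.
\end{itemize}
The extra terms do not vanish in general. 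For example, take $b\equiv1$ and $m_0(a,\bx)=a^2$. Then $\tau_\delta(\bx)=2\delta\,\mathbb E[A\mid\bX=\bx]+\delta^2$, so $C\neq\bzero$ whenever the dose model moves $\mathbb E[A\mid\bX]$.

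To finish Step~3 you need an extra hypothesis such as $C=\bzero$, or you must keep the cross terms in the variance. The paper's proof hides the same premise in its bare citation of Pierce. So this is a weakness of the efficient-estimator clause itself rather than of your route, and the main sandwich result is unaffected.
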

{Theorem~\ref{thm:direct_cmtp_aug_alearning_asymptotic} demonstrates that the direct CMTP estimator is root-$n$ consistent and asymptotically normal for the $L_2(P_{\bX})$ projection of the CMTP effect, with a sandwich variance accounting for the estimated propensity model. Because $B_{C,\bnu}=\bzero$, the outcome-regression augmentation contributes no first-order correction, so valid inference requires only a well-behaved propensity model while the working outcome model can be chosen freely to improve precision.}

\noindent
{Finally, the same M-estimation strategy yields an asymptotic normality result for the \emph{logistic} direct CMTP estimator, which uses the $\pi_\Lambda$-stabilized A-learning score with the logistic mean function and targets the direct CMTP projection on the link scale. The estimating function, the regularity conditions, and the formal statement are presented in Supplementary Theorem~S2 in the Supplementary Material, together with its proof. 

Together, these results establish asymptotically valid inference for the nudge and CMTP estimators under squared and logistic losses, while also showing that appropriate outcome-regression augmentation can improve efficiency without altering the target estimand. {In Section~\ref{sec:simu} we examine how well these large-sample guarantees describe finite-sample behavior across a range of designs.} {A parallel asymptotic normality result for the logistic A-learning \emph{nudge} estimator, which underpins the logistic nudge and average-CMTP analyses of Supplementary Section B.1 and \Cref{sec:real data}, is stated and proved as Theorem~S1 in the Supplementary Material.}

\section{Simulation Experiments}\label{sec:simu}
% \subsection{Setup 1: Linear}\label{subsec:setup1}
We evaluate finite-sample performance under a linear data-generating mechanism, where effects are defined on the mean scale{; the experiment assesses bias under correct specification, compares the weighting schemes, and quantifies the precision gains from augmentation}. We consider a continuous treatment \(A\) and a low-dimensional covariate vector \(\bX = (X_1, X_2)^\top\). {We generate the covariates i.i.d.\ with} $X_1\sim\mathrm{Unif}(-1,1)$ and $X_2\sim\mathrm{Bernoulli}(0.7)$. Conditional on {$\bX$}, the continuous dose $A$ follows an exponential distribution with mean $\exp\{2+0.5X_1+X_2\}$ (equivalently, rate $1/\exp\{2+0.5X_1+X_2\}$). Outcomes are drawn as $Y\mid (A,\bX)\sim\mathcal N\{\mu(A,\bX),1\}$ with $\mu(A,\bX)\;=\;1+X_1+X_2+\beta_AA+\alpha A^2+X_1\{\beta_{AX}A+\beta A^2\},$
where we fix $(\alpha,\beta,\beta_{AX},\beta_A)=(-0.15,\,0.2,\,0.25,\,-0.5)$ and take a shift size $\delta=0.5$. This specification induces a nonlinear and covariate-modulated relationship between the treatment and the outcome, while retaining a simple closed-form expression for the true causal effects under a shift intervention.
 % The two causal targets are the conditional nudge effect $\tau_\delta(a,x)=\mu(a+\delta,\bx)-\mu(a,x)$ and the CMTP effect, which  averages this conditional nudge effect over the distribution of the observed treatment given covariates, $\tau_\delta(\bx)=\mathbb E\{\mu(A+\delta,\bx)-\mu(A,x)\mid \bX=\bx\}$.
Under the {outcome model}, our first causal target is the nudge { effect}, which is given by:
$\tau_\delta(a,\bx) = \mu(a+\delta,\bx)-\mu(a,\bx) 
= \beta_A \delta + 2\alpha a \delta + \alpha \delta^2
+ x_1\bigl(\beta_{AX} \delta + 2\beta a \delta + \beta \delta^2\bigr).$
This is a quadratic function of \(a\) with coefficients depending on \(x_1\). Also, using \(\mathbb{E}[A \mid \bX = \bx] = \exp(2 + 0.5 x_1 + x_2)\), we obtain {an} explicit expression for our second causal target, the CMTP effect $\tau_\delta(\bx) = \mathbb E\{\mu(A+\delta,\bx)-\mu(A,{\bx})\mid \bX=\bx\}$, which averages {the} conditional nudge effect over the distribution of the observed treatment given covariates { and} in becomes $\beta_A \delta + \alpha \delta^2 + 2\alpha \delta\,\mathbb{E}[A\mid \bX=\bx] + x_1\bigl(\beta_{AX} \delta + \beta \delta^2 + 2\beta \delta\,\mathbb{E}[A\mid \bX=\bx]\bigr).$
  This serves as the ground truth when evaluating CMTP estimators. To evaluate estimators we project each target onto low-dimensional working bases. For the nudge effect we use the linear span of $\{1,\ A,\ X_1,\ X_2,\ A\!:\!X_1,\ A\!:\!X_2\}$; the corresponding population coefficients have closed forms
$(\text{Intercept},A,X_1,X_2,A\!:\!X_1,A\!:\!X_2)
=(\beta_A\delta+\alpha\delta^2,\ 2\alpha\delta,\ \beta_{AX}\delta+\beta\delta^2,\ 0,\ 2\beta\delta,\ 0).$
Similarly, for the CMTP effect we project onto $\{1,X_1,X_2\}$. {Here,} the true function \(\tau_\delta(\bx)\) is nonlinear in \(\bx\) through
\(\mathbb E[A\mid \bX=\bx]=\exp(2+0.5x_1+x_2)\). Therefore, when we fit a linear working
model in the basis \(c(\bx)=(1,x_1,x_2)^\top\), the target coefficient is the population
least-squares projection
${\bbeta}_{\mathrm{CMTP}}^*
=
\arg\min_{{\bbeta}\in\mathbb R^3}
\mathbb E\left[
\left\{
\tau_\delta(\bX)-{\bbeta}^\top \bc(\bX)
\right\}^2
\right]
=
\left\{\mathbb E[\bc(\bX)\bc(\bX)^\top]\right\}^{-1}
\mathbb E[\bc(\bX)\tau_\delta(\bX)].$
This projected coefficient vector serves as the ground truth for evaluating the CMTP
coefficient estimators.
{We compare three base methods (Density, Overlap, A-learning) and their augmented versions under squared-error loss, all with a common linear working model; for the CMTP we consider both the average and direct constructions, giving twelve CMTP estimators. For each $n\in\{500,1000,5000\}$ we generate independent datasets, fit all estimators, and record empirical root mean squared error (RMSE) for each coefficient, measuring how well they recover the true linear projections of $\tau_\delta(a,\bx)$ and $\tau_\delta(\bx)$.}

\begin{figure}[htbp]
    \centering
    \includegraphics[width=0.8\linewidth]{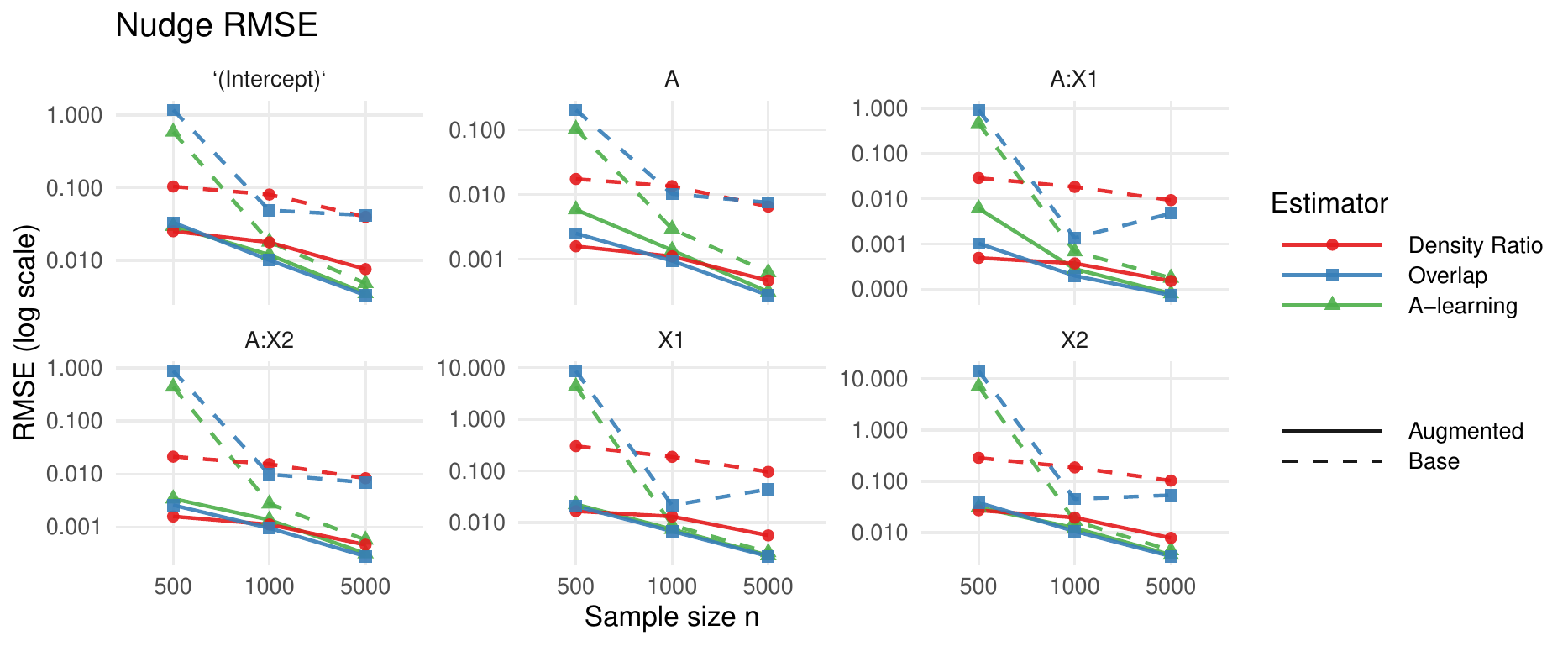}
    \caption{{Coefficient-wise RMSE (log scale) of nudge coefficients for Setup 1.}}
    \label{fig:Nudge_CMTP_rmse}
\end{figure}

\begin{table}[!htbp]
\centering
\footnotesize
\caption{Linear outcome, CMTP coefficients: empirical bias, RMSE, and 95\% Wald coverage. Bold = minimum $|\text{bias}|$ or minimum RMSE within each coefficient and $n$.}
\label{tab:linear_cmtp}
\resizebox{\textwidth}{!}{%
\begin{tabular}{lllllrrrrrrrrr}
\toprule
 & & & & & \multicolumn{3}{c}{$n=500$} & \multicolumn{3}{c}{$n=1{,}000$} & \multicolumn{3}{c}{$n=5{,}000$} \\
\cmidrule(lr){6-8} \cmidrule(lr){9-11} \cmidrule(lr){12-14}
Coef. & Truth & Constr. & Weighting & Aug. & Bias & RMSE & Cov. & Bias & RMSE & Cov. & Bias & RMSE & Cov. \\
\midrule
\multirow{12}{*}{$X_1$} & \multirow{12}{*}{0.9594} & \multirow{6}{*}{Average} & \multirow{2}{*}{A-learning} & No  & -0.0013 & 0.1348 & 0.9700 & -0.0026 & 0.1004 & 0.9800 & \textbf{0.0003} & \textbf{0.0285} & 0.9850 \\
 & & & & Yes & 0.0069 & \textbf{0.0851} & 0.9550 & -0.0024 & 0.1004 & 0.9500 & -0.0013 & \textbf{0.0285} & 0.9500 \\
\cmidrule(lr){4-14}
 & & & \multirow{2}{*}{Density Ratio} & No  & -0.0014 & 0.1349 & 0.9500 & -0.0027 & 0.1006 & 0.9250 & \textbf{0.0002} & \textbf{0.0285} & 0.9400 \\
 & & & & Yes & 0.0069 & \textbf{0.0851} & 0.9550 & -0.0024 & 0.1006 & 0.9400 & -0.0013 & \textbf{0.0285} & 0.9500 \\
\cmidrule(lr){4-14}
 & & & \multirow{2}{*}{Overlap} & No  & -0.0013 & 0.1348 & 0.9500 & -0.0026 & 0.1004 & 0.9300 & \textbf{0.0003} & \textbf{0.0285} & 0.9400 \\
 & & & & Yes & 0.0069 & \textbf{0.0851} & 0.9500 & -0.0024 & 0.1004 & 0.9400 & -0.0013 & \textbf{0.0285} & 0.9500 \\
\cmidrule(lr){3-14}
 & & \multirow{6}{*}{Direct} & \multirow{2}{*}{A-learning} & No  & -0.0016 & 0.1361 & 0.9700 & -0.0028 & \textbf{0.0624} & 0.9850 & \textbf{0.0002} & \textbf{0.0285} & 0.9850 \\
 & & & & Yes & 0.0067 & 0.0857 & 0.9500 & -0.0025 & \textbf{0.0624} & 0.9450 & -0.0014 & \textbf{0.0285} & 0.9500 \\
\cmidrule(lr){4-14}
 & & & \multirow{2}{*}{Density Ratio} & No  & -0.0016 & 0.1360 & 0.9700 & -0.0028 & \textbf{0.0624} & 0.9850 & \textbf{0.0002} & \textbf{0.0285} & 0.9850 \\
 & & & & Yes & 0.0067 & 0.0857 & 0.9500 & -0.0025 & \textbf{0.0624} & 0.9450 & -0.0014 & \textbf{0.0285} & 0.9500 \\
\cmidrule(lr){4-14}
 & & & \multirow{2}{*}{Overlap} & No  & \textbf{0.0002} & 0.1362 & 0.9700 & {-0.0010} & \textbf{0.0624} & 0.9850 & 0.0020 & \textbf{0.0285} & 0.9850 \\
 & & & & Yes & 0.0085 & 0.0860 & 0.9450 & \textbf{-0.0008} & \textbf{0.0624} & 0.9500 & 0.0005 & \textbf{0.0285} & 0.9500 \\
\midrule
\multirow{12}{*}{$X_2$} & \multirow{12}{*}{-0.6207} & \multirow{6}{*}{Average} & \multirow{2}{*}{A-learning} & No  & -0.0089 & 0.1236 & 1.0000 & -0.0018 & \textbf{0.0588} & 0.9900 & \textbf{0.0000} & 0.0269 & 1.0000 \\
 & & & & Yes & -0.0070 & 0.0821 & 0.9400 & 0.0037 & \textbf{0.0588} & 0.9500 & 0.0016 & 0.0269 & 0.9300 \\
\cmidrule(lr){4-14}
 & & & \multirow{2}{*}{Density Ratio} & No  & -0.0090 & 0.1236 & 0.9950 & -0.0019 & \textbf{0.0588} & 0.9650 & \textbf{0.0000} & 0.0269 & 0.9600 \\
 & & & & Yes & -0.0070 & 0.0821 & 0.9350 & 0.0038 & \textbf{0.0588} & 0.9500 & 0.0016 & 0.0269 & 0.9300 \\
\cmidrule(lr){4-14}
 & & & \multirow{2}{*}{Overlap} & No  & -0.0089 & 0.1236 & 0.9950 & -0.0018 & \textbf{0.0588} & 0.9650 & \textbf{0.0000} & 0.0269 & 0.9600 \\
 & & & & Yes & -0.0070 & 0.0821 & 0.9350 & 0.0037 & \textbf{0.0588} & 0.9500 & 0.0016 & 0.0269 & 0.9300 \\
\cmidrule(lr){3-14}
 & & \multirow{6}{*}{Direct} & \multirow{2}{*}{A-learning} & No  & -0.0090 & 0.1238 & 1.0000 & -0.0016 & 0.0589 & 0.9900 & 0.0001 & \textbf{0.0267} & 1.0000 \\
 & & & & Yes & -0.0071 & 0.0821 & 0.9400 & 0.0041 & 0.0589 & 0.9400 & 0.0017 & \textbf{0.0267} & 0.9300 \\
\cmidrule(lr){4-14}
 & & & \multirow{2}{*}{Density Ratio} & No  & -0.0090 & 0.1238 & 1.0000 & -0.0016 & 0.0589 & 0.9900 & 0.0001 & \textbf{0.0267} & 1.0000 \\
 & & & & Yes & -0.0071 & 0.0821 & 0.9400 & 0.0041 & 0.0589 & 0.9400 & 0.0017 & \textbf{0.0267} & 0.9300 \\
\cmidrule(lr){4-14}
 & & & \multirow{2}{*}{Overlap} & No  & -0.0077 & \textbf{0.1235} & 1.0000 & \textbf{-0.0003} & 0.0589 & 0.9900 & 0.0014 & \textbf{0.0267} & 1.0000 \\
 & & & & Yes & \textbf{-0.0058} & \textbf{0.0818} & 0.9200 & 0.0053 & 0.0589 & 0.9350 & 0.0030 & \textbf{0.0267} & 0.9250 \\
\midrule
\multirow{12}{*}{Int.} & \multirow{12}{*}{-0.4667} & \multirow{6}{*}{Average} & \multirow{2}{*}{A-learning} & No  & 0.0049 & 0.0645 & 0.9800 & 0.0025 & \textbf{0.0308} & 0.9750 & 0.0010 & \textbf{0.0144} & 0.9850 \\
 & & & & Yes & \textbf{0.0001} & \textbf{0.0459} & 0.9600 & \textbf{-0.0006} & \textbf{0.0308} & 0.9600 & \textbf{0.0001} & \textbf{0.0144} & 0.9200 \\
\cmidrule(lr){4-14}
 & & & \multirow{2}{*}{Density Ratio} & No  & 0.0049 & 0.0647 & 0.9500 & 0.0025 & 0.0309 & 0.9500 & 0.0011 & \textbf{0.0144} & 0.9700 \\
 & & & & Yes & \textbf{0.0001} & \textbf{0.0460} & 0.9600 & \textbf{-0.0006} & 0.0309 & 0.9600 & \textbf{0.0001} & \textbf{0.0144} & 0.9200 \\
\cmidrule(lr){4-14}
 & & & \multirow{2}{*}{Overlap} & No  & 0.0049 & 0.0645 & 0.9500 & 0.0025 & \textbf{0.0308} & 0.9500 & 0.0010 & \textbf{0.0144} & 0.9700 \\
 & & & & Yes & \textbf{0.0001} & \textbf{0.0459} & 0.9600 & \textbf{-0.0006} & \textbf{0.0308} & 0.9600 & \textbf{0.0001} & \textbf{0.0144} & 0.9200 \\
\cmidrule(lr){3-14}
 & & \multirow{6}{*}{Direct} & \multirow{2}{*}{A-learning} & No  & 0.0050 & 0.0646 & 0.9850 & 0.0022 & 0.0309 & 0.9750 & 0.0010 & \textbf{0.0144} & 0.9850 \\
 & & & & Yes & \textbf{0.0001} & \textbf{0.0459} & 0.9600 & -0.0009 & 0.0309 & 0.9600 & \textbf{0.0001} & \textbf{0.0144} & 0.9200 \\
\cmidrule(lr){4-14}
 & & & \multirow{2}{*}{Density Ratio} & No  & 0.0050 & 0.0647 & 0.9850 & 0.0022 & 0.0309 & 0.9750 & 0.0010 & \textbf{0.0144} & 0.9850 \\
 & & & & Yes & \textbf{0.0001} & \textbf{0.0459} & 0.9600 & -0.0008 & 0.0309 & 0.9600 & \textbf{0.0001} & \textbf{0.0144} & 0.9200 \\
\cmidrule(lr){4-14}
 & & & \multirow{2}{*}{Overlap} & No  & 0.0039 & \textbf{0.0644} & 0.9850 & 0.0011 & 0.0309 & 0.9750 & -0.0002 & \textbf{0.0144} & 0.9900 \\
 & & & & Yes & -0.0010 & \textbf{0.0459} & 0.9700 & -0.0020 & 0.0309 & 0.9650 & -0.0012 & \textbf{0.0144} & 0.9200 \\
\bottomrule
\end{tabular}}
\end{table}

Figure~\ref{fig:Nudge_CMTP_rmse} reports coefficient-wise log-RMSE for \(n\in{\{}500,1000,5000{\}}\){, for the nudge coefficients}. RMSE decreases with $n$ throughout{;} Density Ratio performs best at $n=500$, whereas Overlap and A-learning decline more steeply and outperform it by $n=5000$. {The} nearly identical {Overlap and A-learning} curves agree with the theoretical expectation that, under correct nuisance specification, both {target the same population projection}. Augmentation lowers RMSE across the board. In the CMTP results shown in \Cref{tab:linear_cmtp}, we display coefficient-level bias, RMSE, and 95\% Wald coverage for the A-learning, Density Ratio, and Overlap estimators under the squared error loss with and without augmentation. The three weighting approaches are nearly very similar in performance, with Overlap and A-learning typically matching or slightly outperforming Density Ratio; augmentation  improves precision consistently, and direct CMTP estimation is competitive with, and at larger sample sizes occasionally more accurate than, averaging the estimated nudge, particularly with Overlap weighting. Coverage is close to nominal throughout, though the augmented intercept dips to roughly 0.92 at 
$n=5,000$. Corresponding results for a logistic-outcome design are reported in Section B of the Supplement.

\section{Analysis of Mechanical Power Data}\label{sec:real data}
We analyze data from a cohort of ICU patients who received invasive mechanical ventilation for at least 48 hours, originally assembled and studied by \cite{serpa2018mechanical}. The dataset was extracted from the
\textsc{mimic-iii} clinical database. For each patient, we define the continuous treatment \(A\) as the mean of the minimum and maximum MP of ventilation (in Joules per minute) during the second 24-hour period in the ICU, and restrict attention to patients with mechanically plausible, moderate MP levels. The primary outcome \(Y\) is an indicator of in-hospital mortality. We include a rich set of baseline and early clinical variables as potential effect modifiers and confounders capturing patient severity and ventilator management, such as demographics, SAPS II score, smoking status, ARDS severity, vasopressor use, oxygenation and gas-exchange measures (e.g., PaO\(_2\), PaCO\(_2\), PaO\(_2\)/FiO\(_2\)), blood pressure, and ventilator settings from the first ICU day. After complete-case selection the analytical sample
comprised $n = 5{,}011$ patients. In this analysis, our goal is to characterize how a 1 J/min decrease or increase in MP would change the risk of in-hospital mortality as a function of \(\bX\), via the nudge effect \(\tau_\delta(a,\bx)\) and the corresponding CMTP effect \(\tau_\delta(\bx)\). This identifies clinically-relevant effect modifiers that drive heterogeneous benefits or harms of modest changes in MP. We present results for
shifts $\delta \in \{-1,\, +1\}$ J/min; the negative shift
corresponds to a modest \emph{reduction} and the positive shift to a
modest \emph{increase} in MP.

\begin{figure}[htbp]
    \centering
    \includegraphics[width=0.62\linewidth]{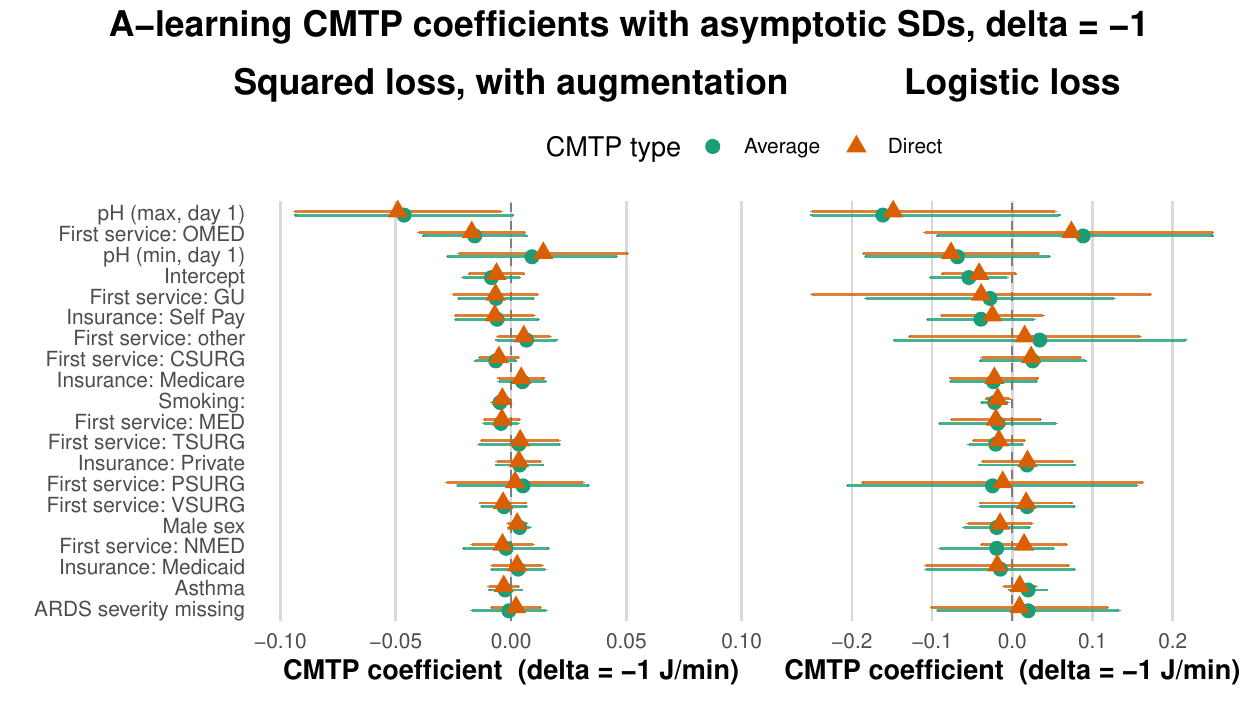}
    \includegraphics[width=0.62\linewidth]{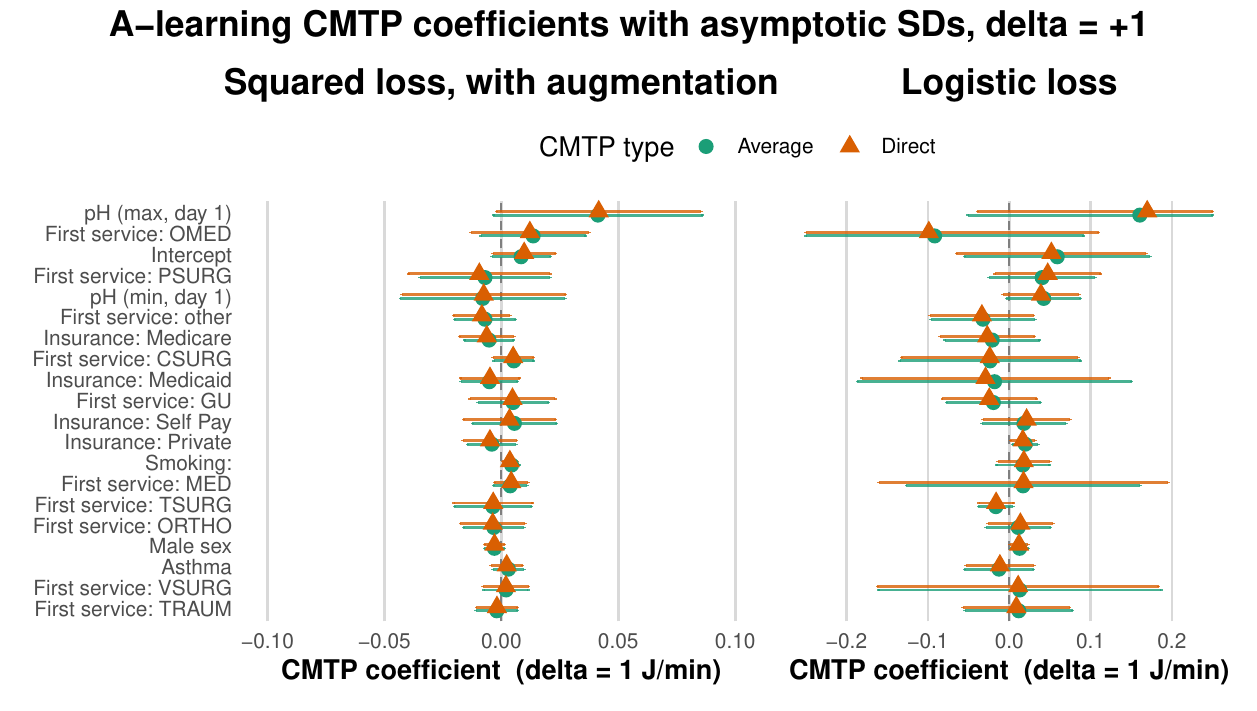}
    \caption[A-learning CMTP coefficient estimates]{{A-learning CMTP coefficient estimates with $\pm 1.96$ asymptotic standard deviation intervals, under $\delta=-1$ (top) and $\delta=+1$~J/min (bottom); squared-error loss with augmentation (left) and logistic loss (right). Colors distinguish the average and direct CMTP estimators.}}
    \label{fig:coef_est}
\end{figure}

Across all estimators and loss specifications, Figure~\ref{fig:coef_est} shows that the {intercept, which is interpretable as the predicted effect of the shift at the population-mean covariate profile, is} small in absolute magnitude and negative {under $\delta=-1$, with the sign reversing under $\delta=+1$}, indicating that a modest 1~J/min reduction in MP is associated with a slight decrease in the probability of in-hospital mortality at the population level. {The figure displays} the top 20 covariates ranked by their mean absolute CMTP coefficient across all six estimator combinations (three {loss specifications} $\times$ two {CMTP constructions}). The ranking and direction of the leading effect modifiers are broadly consistent across methods, {suggesting that} the substantive conclusions {are robust to the choice of estimator}. {Full coefficient estimates and standard errors for the A-learning CMTP estimators under all three loss specifications are reported in Supplementary Tables~S4 and~S5.} b{A-learning and overlap weighting (Supplentary Figure~S2) yield closely aligned estimates in direction and magnitude, consistent with targeting the same population projection under correct nuisance specification, whereas the density-ratio estimator (Supplementary Figure~S3) shows substantially wider spread, reflecting its greater sensitivity to density-model misspecification and the stabilization correction.}

The maximum arterial pH on ICU day 1 (\texttt{ph\_max\_day1})
emerges as the single strongest and most consistent effect modifier across all estimation strategies and both shift directions.  Under $\delta = -1$ J/min, its CMTP coefficient is negative across all methods: a higher maximum pH is associated with a \emph{greater} predicted reduction in mortality probability from lowering MP.  Under the symmetric positive shift $\delta = +1$ J/min, the coefficient reverses sign. {Clinically, a} high blood pH on day 1 often {indicates} respiratory alkalosis {from} the ventilator giving more support than needed{, so} slightly reducing the ventilator’s power lowers unnecessary strain without harming breathing. {In contrast,} patients with low pH usually need the extra ventilator support (for example, due to acid buildup or poor carbon dioxide removal){, and} reducing the power {does not} help and may even be harmful. Overall, reducing ventilator intensity may work best for patients who do not rely on it to correct serious breathing or metabolic problems.

{Supplementary} Figure~S4 displays the mean nudge effect
$\hat\tau_\delta(A_i, \bX_i) = \delta \cdot (\hat\beta_A +
\bX_i^\top \hat\beta_{AX})$ averaged within deciles of the observed MP distribution, for the A-learning squared-loss model with augmentation.  For $\delta = -1$ J/min, all deciles save the highest show negative mean nudge effects, indicating a predicted benefit from power reduction{, and} for $\delta = +1$ J/min the mirror image obtains. Because the nudge model is linear in $A$, variation across power deciles reflects systematic differences in the covariate profiles $\bX_i$ of patients who receive different power levels, rather than a {dose-response} gradient in the nudge itself{: the lowest deciles are enriched for higher pH and elective service types (most negative nudge), while the highest decile is enriched for severely ill patients with low pH and vasopressor dependence (near-zero nudge)}.

\section{Discussion}\label{sec:conclusion}

Continuous treatments arise in many clinical and policy settings, where the scientific question is often not simply whether treatment works, but how much treatment should be delivered{, and for whom}. Most existing methods for continuous treatments focus on ADRFs or CADRFs{, which} require positivity over the entire treatment range. In clinical observational studies{,} treatment values are often chosen based on patient severity, contraindications, protocols, and physician judgment, so many doses are rarely observed for certain patient profiles, and estimating the ADRF or CADRF surfaces will require extrapolation to covariate regions with little or no support.

Our framework targets local incremental effects through the conditional nudge effect and CMTP rather than global dose-response curves. Although related ideas appear in work on stochastic interventions and modified treatment policies, they have not been developed systematically for heterogeneous effects of continuous treatments. Because the estimands are anchored to observed treatment values, they require only local, shift-specific overlap rather than positivity across the full dose range. We use a duplicated-arm construction that translates the continuous shift problem to a two-arm contrast between the observed and shifted doses, enabling weighting and A-learning losses for both estimands. Under unrestricted function classes for the effects, we show that the population losses identify the causal effects. Under finite-dimensional bases for the effects, they identify corresponding projections of the nudge or CMTP onto the working model. Notably, the direct CMTP estimator requires only the shift-specific exchangeability condition (\Cref{ass:exchange}(A$'$)), which is strictly weaker than the exchangeability condition needed to identify the full nudge surface, so the direct approach is valid even when the conditional nudge effect itself is not identified. We show augmentation {can improve} precision by subtracting an outcome regression without changing the population target.

Our simulations {demonstrate} that the proposed estimators recover the relevant projected targets as sample size increases, and that augmentation generally improves precision. The mechanical ventilation analysis illustrates the practical value of
the framework. Rather than focusing on the entire CADRF, our approach can assess which patient characteristics yield benefit or harm from a modest change in MP. This type of local treatment effect question is relevant for individualized clinical decision-making.
Our asymptotic theory uses Donsker and smoothness conditions for nuisance estimators. Extensions using cross-fitting and double machine learning are straightforward and would allow more general machine-learning nuisance fits. Use of finite-dimensional bases yields projection parameters unless the working model of the effect is correctly specified, so growing spline or sieve bases are a natural approach for reducing approximation error. Finally, the shift magnitude \(\delta\) should be chosen to balance clinical relevance with empirical overlap. Future work could develop diagnostics for feasible shift selection as in \citet{jiang2025modified}, extend our framework to longitudinal
continuous treatments, and provide simultaneous inference for richer effect-modification summaries.

% \vspace*{-5 mm}
% \setlength{\bibsep}{0pt plus 0.3ex}
%\bibliographystyle{Chicago}
{\Large{\textbf{Supplementary Material}}}
\author{}
\date{}
\newline
This Supplementary Material contains additional methodological details, additional simulation and data-analysis results, and proofs of all theoretical results in the main paper. Sections, results, equations, tables, and figures in this document are numbered with the prefix ``S'' (results) or by letter (sections) to distinguish them from those of the main paper.
\def\thesection{\Alph{section}}
\setcounter{section}{0}
\section{Additional Methodological Concepts}

\subsection{Direct CMTP Estimation via Weighting}\label{direct_weighting}
Here, we work with the weighting loss for directly estimating the CMTP. This loss is analogous to the weighting loss in (2) in Section 2.3.1, integrated over the distribution of \(A_\Lambda\) given \(\bX=\bx\). For a measurable function \(f:\mathcal X\to\mathbb R\), we define the direct weighting CMTP loss
{\small\[
\ell_W^{\mathrm{CMTP}}(f)
=
\mathbb{E}\!\left[
  w(\bX,A_\Lambda,\Lambda)\,\big\{Y-\Lambda f(\bX)\big\}^2
  \;\middle|\; X
\right],
\]}
where \(w(\bX,A_\Lambda,\Lambda)\) is any nonnegative arm weight as described earlier. For fixed \(\bx\), differentiating \(\ell_W^{\mathrm{CMTP}}(f)\) with respect to \(f(\bx)\) yields the first-order condition $\mathbb{E}\big[
  w(\bX,A_\Lambda,\Lambda)\,\Lambda\{Y-\Lambda f(\bX)\}
  \,\big|\,
  \bX=\bx
\big] = 0,$
so that the pointwise minimizer takes the ratio form
{\small\[
f^*(\bx)
=
\frac{
  \mathbb{E}\big[w(\bX,A_\Lambda,\Lambda)\,\Lambda Y \mid \bX=\bx\big]
}{
  \mathbb{E}\big[w(\bX,A_\Lambda,\Lambda)\,\Lambda^2 \mid \bX=\bx\big]
}
=
\frac{N_W(\bx)}{D_W(\bx)},
\]}
where we write
\(
N_W(\bx)
:=
\mathbb{E}\big[w(\bX,A_\Lambda,\Lambda)\,\Lambda Y \mid \bX=\bx\big],
\text{ and }
D_W(\bx)
:=
\mathbb{E}\big[w(\bX,A_\Lambda,\Lambda)\,\Lambda^2 \mid \bX=\bx\big].
\)
Subsequent calculations below make these terms explicit for the overlap and density-ratio weighting schemes.

\begin{lemma}\label{lem:weighting_NxDx}
We define the arm contrast $\mu_\Delta(a,\bx) := \mu(a+\delta,\bx) - \mu(a,\bx).$ Then:
\begin{enumerate}
\item \textnormal{(Overlap weighting).}
Setting $h_{\mathrm{ovl}}(\bx,a) := \pi_\Lambda(\bx,a)\{1-\pi_\Lambda(\bx,a)\},$
we get $N_W(\bx)
=
\tfrac12\,
\bbE\bigl[h_{\mathrm{ovl}}(\bx,A_\Lambda)\,\mu_\Delta(A_\Lambda,\bx)\mid \bX=\bx\bigr],
\,
D_W(\bx)
=
\tfrac12\,
\bbE\bigl[h_{\mathrm{ovl}}(\bx,A_\Lambda)\mid \bX=\bx\bigr].$
\item \textnormal{(Density-ratio weighting).}
Setting $h_{\mathrm{DR}}(\bx,a) := 1-\pi_\Lambda(\bx,a),$ we obtain $N_W(\bx)
=
\tfrac12\,
\bbE\bigl[h_{\mathrm{DR}}(\bx,A_\Lambda)\,\mu_\Delta(A_\Lambda,\bx)\mid \bX=\bx\bigr],
\,
D_W(\bx)
=
\tfrac12\,
\bbE\bigl[h_{\mathrm{DR}}(\bx,A_\Lambda)\mid \bX=\bx\bigr].$
\end{enumerate}
In particular, in both cases the direct weighting CMTP estimator
\(
f^*(\bx) = N_W(\bx)/D_W(\bx)
\)
can be written as a weighted CMTP,
{\small\[
f^*(\bx)
=
\frac{\bbE\bigl[h_M(\bx,A_\Lambda)\,\mu_\Delta(A_\Lambda,\bx)\mid \bX=\bx\bigr]}
     {\bbE\bigl[h_M(\bx,A_\Lambda)\mid \bX=\bx\bigr]},
\]}
with $h_M = h_{\mathrm{ovl}}$ for overlap and $h_M = h_{\mathrm{DR}}$ for density-ratio weighting.
\end{lemma}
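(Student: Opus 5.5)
The plan is to evaluate $N_W(\bx)$ and $D_W(\bx)$ by iterated expectations. We first condition on $(\bX,A_\Lambda)=(\bx,a)$, where the only remaining randomness is the arm label $\Lambda$ and then $Y$. Within that inner conditioning, $\Lambda=1/2$ has probability $\pi_\Lambda(\bx,a)$. Given $\Lambda=1/2$ and $A_\Lambda=a$, the original dose is $a+\delta$, so $\mathbb E[Y\mid \bX=\bx,A_\Lambda=a,\Lambda=\tfrac12]=\mu(a+\delta,\bx)$. Given $\Lambda=-1/2$ the original dose is $a$, so the corresponding conditional mean is $\mu(a,\bx)$. This is the same duplicated-arm bookkeeping used in \Cref{prop:nudge_weighting_loss_solution} and \Cref{lem:NxDx}, and I take it as given.

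With this in hand, and since $\Lambda^2=1/4$, the inner expectations become
\[
\mathbb E[w\Lambda Y\mid \bx,a]=\tfrac12\bigl\{w(\bx,a,\tfrac12)\pi_\Lambda\,\mu(a+\delta,\bx)-w(\bx,a,-\tfrac12)(1-\pi_\Lambda)\mu(a,\bx)\bigr\},
\]
\[
\mathbb E[w\Lambda^2\mid \bx,a]=\tfrac14\bigl\{w(\bx,a,\tfrac12)\pi_\Lambda+w(\bx,a,-\tfrac12)(1-\pi_\Lambda)\bigr\}.
\]
For any weights satisfying the balancing condition \eqref{eqn:weight_function_condition}, write $c(\bx,a):=w(\bx,a,\tfrac12)\pi_\Lambda=w(\bx,a,-\tfrac12)(1-\pi_\Lambda)$. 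The two displays then collapse to $\tfrac12 c\,\mu_\Delta(a,\bx)$ and $\tfrac12 c$.

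The cases now follow by substitution. For overlap weights, $w(\bx,a,\tfrac12)=1-\pi_\Lambda$ and $w(\bx,a,-\tfrac12)=\pi_\Lambda$, so $c=h_{\mathrm{ovl}}$. For density-ratio weights, $w(\bx,a,\tfrac12)=(1-\pi_\Lambda)/\pi_\Lambda$ and $w(\bx,a,-\tfrac12)=1$, so $c=1-\pi_\Lambda=h_{\mathrm{DR}}$. Taking the outer expectation over $A_\Lambda\mid \bX=\bx$, which is the mixture law, gives the stated forms of $N_W$ and $D_W$, including the factor $\tfrac12$. The ratio form of $f^*$ follows by cancelling the common $\tfrac12$, provided $D_W(\bx)>0$. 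This holds because $\pi_\Lambda\in(0,1)$ on the restricted support by Assumption~\ref{ass:positivity}.

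The main obstacle is bookkeeping rather than substance. One must keep the $\tfrac12$ factors from $\Lambda$ and $\Lambda^2$ straight, and be clear that the outer expectation is under the mixture law of $A_\Lambda$ and not the natural law of $A$. On the $D_W$ side, we should check whether the lemma's $\tfrac12$ matches $\Lambda^2=\tfrac14$ together with $c$ counted on both arms. That computation gives $\tfrac14\cdot 2c=\tfrac12 c$, which is consistent with the statement.
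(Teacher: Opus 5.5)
Your proposal is correct and follows essentially the same route as the paper. Both condition on $(A_\Lambda,\bX)=(a,\bx)$ and split over the two arms, obtaining $\tfrac12\{w_+\pi_\Lambda m_0(a+\delta,\bx)-w_-(1-\pi_\Lambda)m_0(a,\bx)\}$ and $\tfrac14\{w_+\pi_\Lambda+w_-(1-\pi_\Lambda)\}$, then substitute the overlap or density-ratio weights before averaging over the mixture law of $A_\Lambda$. Your intermediate step of collapsing both arms to the common balancing value $c=w_+\pi_\Lambda$ is a harmless repackaging. It also shows that the same weighted-CMTP form holds for any balancing weights, with $h_M=\pi_\Lambda(\bx,a)\,w(a,\bx,\tfrac12)$.
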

{\begin{proof}
Write $\mu_\Delta(a,\bx)=m_0(a+\delta,\bx)-m_0(a,\bx)$ with $m_0$ the observed outcome regression. Conditioning on $(A_\Lambda,\bX)=(a,\bx)$ and decomposing over the two arms as in the proof of Proposition~3, $N_W(\bx)=\bbE[\tfrac12\{w_+\pi_\Lambda\,m_0(A_\Lambda+\delta,\bX)-w_-(1-\pi_\Lambda)\,m_0(A_\Lambda,\bX)\}\mid\bX=\bx]$ and $D_W(\bx)=\bbE[\tfrac14\{w_+\pi_\Lambda+w_-(1-\pi_\Lambda)\}\mid\bX=\bx]$. Substituting the overlap weights ($w_+=1-\pi_\Lambda$, $w_-=\pi_\Lambda$) or the density-ratio weights ($w_+=(1-\pi_\Lambda)/\pi_\Lambda$, $w_-=1$) and simplifying yields the displayed expressions.
\end{proof}}
For the direct weighting estimator with method-specific implicit weight
\(h_M\), the population coefficient is the projection of the weighted CMTP
{\(\tau_{\delta,h}^{W}(\bx):=\bbE[h_M(\bx,A_\Lambda)\mu_\Delta(A_\Lambda,\bx)\mid\bX=\bx]/\bbE[h_M(\bx,A_\Lambda)\mid\bX=\bx]\)}, rather than necessarily the projection of
\(\tau_\delta(\bx)\):
$\bbeta^{\mathrm{dir},0}_{W}
    =
    \arg\min_{\bbeta}
    E\left[
    \{\tau_{\delta,h}^{W}(\bX)-\bbeta^\top c(\bX)\}^2
    \right].$
Thus, unless \(h_M(\bX,A_\Lambda)\) is constant in \(A_\Lambda\) conditional on
\(\bX\), or unless the loss is stabilized to remove the implicit weighting, the
direct weighting coefficient should be interpreted as a projection of a
weighted CMTP.

\subsection{Weighted NLL loss for Nudge Effect}\label{NLL_weighting}
We are given that $Y\mid(A,\bX)$ follow a canonical exponential family with NLL $M$ and mean function $\mu({\boldeta})=b'({\boldeta})$, and consider the duplicated-arm construction $(A_\Lambda,\Lambda)$ with $\Lambda\in\{\tfrac12,-\tfrac12\}$ as in Section 2.3. Consider a candidate link-scale contrast $f(a,\bx)$ evaluated on the duplicated-arm sample $(A_\Lambda,\Lambda)$. For any nonnegative weight function $w(a,\bx,\Lambda)$, we define the weighted NLL loss at $(a,\bx)$ by
{\small\begin{equation}\label{eq:W-NLL-loss}
\ell_W^{\mathrm{NLL}}(f;a,\bx)
=
\mathbb{E}\Big[
  w(a,\bx,\Lambda)\,
  M\bigl(Y,\Lambda f(a,\bx)\bigr)
  \,\Big|\,
 \bX = \bx,\ A_\Lambda = a
\Big].
\end{equation}}
The population FOC for $f$ can be written as
{\small\begin{equation}\label{eq:W-NLL-FOC}
\mathbb{E}\Big[
  \omega(a,\bx,\Lambda)\,\Lambda\,
  \bigl\{\mu\bigl(\Lambda f(a,\bx)\bigr) - Y\bigr\}
  \,\Big|\,
 \bX = \bx,\ A_\Lambda = a
\Big] = 0,
\end{equation}}
{where, for the canonical NLL, $\omega(a,\bx,\Lambda)=w(a,\bx,\Lambda)$ exactly, since $\partial M(y,\eta)/\partial\eta=\mu(\eta)-y$.} The next result shows that, under the duplicated-arm construction, the solution to \eqref{eq:W-NLL-FOC} again corresponds to an arm contrast at $(a,\bx)$, now on the link scale.

\begin{proposition}\label{prop:nudge_weighting_nll}
 {Suppose $\pi_\Lambda(\bx,a)\in(0,1)$, the weights $w>0$ satisfy the arm-balancing condition (3), and $\ell_W^{\mathrm{NLL}}(f;a,\bx)$ admits a minimizer $f_W^*(a,\bx)$. Then, under the conditions of Proposition~1,} for each $(a,\bx)$, we have $\tau_\delta(a,\bx) = \mathbb{E}\bigl[
  Y \mid \bX=\bx,\ A_\Lambda=a,\ \Lambda=\tfrac{1}{2}
\bigr]
-
\mathbb{E}\bigl[
  Y \mid \bX=\bx,\ A_\Lambda=a,\ \Lambda=-\tfrac{1}{2}
\bigr]$ and the minimizer $f_W^*(a,\bx)$ solves
{\small\begin{equation}\label{eq:mu-contrast-generic}
\mu\!\left(\frac{f_W^*(a,\bx)}{2}\right)
-
\mu\!\left(-\frac{f_W^*(a,\bx)}{2}\right)
= \tau_\delta(a,\bx)
.
\end{equation}}
In particular, in the logistic case $\mu({\boldeta})=\exp({\boldeta})/(1+\exp({\boldeta}))$, the left-hand side reduces to $\mu({f}/{2})-\mu(-{f}/{2})
=
{({e^{f/2}-1})/({e^{f/2}+1}),}$
and hence $f_W^*(a,\bx)$ uniquely encodes the arm-specific difference in success probabilities at $(a,\bx)$ via
\[
{\frac{e^{f_W^*(a,\bx)/2} - 1}{e^{f_W^*(a,\bx)/2} + 1}
=
\tau_\delta(a,\bx),
\qquad\text{equivalently}\qquad
f_W^*(a,\bx) = 2\log\frac{1+\tau_\delta(a,\bx)}{1-\tau_\delta(a,\bx)}.}
\]
We therefore interpret $f_W^*(a,\bx)$ as a {link-scale nudge effect} whose induced change in the mean outcome is given by the right-hand side of \eqref{eq:mu-contrast-generic}.
\end{proposition}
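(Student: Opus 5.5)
The plan is to reduce the weighted NLL first-order condition to a statement about the two arm-specific conditional means. I would do this with the same arm decomposition used for \Cref{prop:nudge_weighting_loss_solution}, and then use the arm-balancing condition \eqref{eqn:weight_function_condition} to cancel the weights.

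\emph{Step 1 (arm-specific means).} The first step identifies the arm-specific means. In the duplicated construction, $\Lambda$ is generated by a fair coin independent of $(\bX,A,Y)$. On the event $\{A_\Lambda=a,\Lambda=\tfrac12\}$ the original dose is $A=a+\delta$, and on $\{A_\Lambda=a,\Lambda=-\tfrac12\}$ it is $A=a$. Since $Y$ depends on the coin only through $(A,\bX)$, we get
\[
\bbE[Y\mid \bX=\bx,A_\Lambda=a,\Lambda=\tfrac12]=m_0(a+\delta,\bx),\qquad \bbE[Y\mid \bX=\bx,A_\Lambda=a,\Lambda=-\tfrac12]=m_0(a,\bx).
\]
Both are well defined because $\pi_\Lambda(\bx,a)\in(0,1)$. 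Their difference equals $\tau_\delta(a,\bx)$ by \Cref{prop:nudge_outcome_identifiability}, which gives the first display of the statement. This conditioning step is the main place where care is needed: the claim $Y\perp\Lambda\mid(A,\bX)$ must be justified from the construction, and the only point to check is that it is the original dose $A$, not $A_\Lambda$, that carries the outcome information.

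\emph{Step 2 (FOC and cancellation).} Write $w_\pm=w(a,\bx,\pm\tfrac12)$ and $\pi=\pi_\Lambda(\bx,a)$. The map $\eta\mapsto M(y,\eta)$ is convex because $b''>0$, so $f\mapsto \ell_W^{\mathrm{NLL}}(f;a,\bx)$ is convex. Its minimizer is therefore characterized by \eqref{eq:W-NLL-FOC}. Conditioning further on $\Lambda$ and using Step 1, the FOC becomes
\[
\tfrac12 w_+\pi\{\mu(f/2)-m_0(a+\delta,\bx)\}-\tfrac12 w_-(1-\pi)\{\mu(-f/2)-m_0(a,\bx)\}=0 .
\]
By \eqref{eqn:weight_function_condition}, $w_+\pi=w_-(1-\pi)$, and this common value is strictly positive. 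Dividing by it yields $\mu(f/2)-\mu(-f/2)=m_0(a+\delta,\bx)-m_0(a,\bx)=\tau_\delta(a,\bx)$, which is \eqref{eq:mu-contrast-generic}.

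\emph{Step 3 (logistic case).} Take $\mu=\operatorname{expit}$. Using $\operatorname{expit}(-u)=1-\operatorname{expit}(u)$, the left-hand side is $2\operatorname{expit}(f/2)-1=(e^{f/2}-1)/(e^{f/2}+1)$. This is a strictly increasing bijection from $\mathbb R$ onto $(-1,1)$, so $f_W^*$ is uniquely determined. The value $\tau_\delta$ lies in $(-1,1)$ whenever both success probabilities are in $(0,1)$, which is implicit in the existence of a finite minimizer. Solving $e^{f/2}=(1+\tau_\delta)/(1-\tau_\delta)$ gives $f_W^*=2\log\{(1+\tau_\delta)/(1-\tau_\delta)\}$.
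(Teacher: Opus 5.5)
Your proof is correct and follows the paper's argument. Both identify the arm means via the duplication construction and Proposition~1, write the conditional first-order condition as $\tfrac12 w_+\pi\{\mu(f/2)-\mu_+\}-\tfrac12 w_-(1-\pi)\{\mu(-f/2)-\mu_-\}=0$, and cancel the common positive value $w_+\pi=w_-(1-\pi)$. The only differences are that the paper gets uniqueness from strict convexity of the conditional loss, while you get it in the logistic case from strict monotonicity of $f\mapsto(e^{f/2}-1)/(e^{f/2}+1)$, and that you carry out the logistic inversion the paper leaves implicit.
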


\noindent
As in the squared-loss case, one can subtract any function $m(a,\bx)$ of treatment and covariates only (e.g., an outcome regression) inside $M\bigl(Y,\Lambda f(a,\bx)\bigr)$ to reduce variance without changing the target of estimation, provided $m$ is treated as fixed when differentiating with respect to $f$. This merely recenters the first order conditions \eqref{eq:W-NLL-FOC} and does not impact the minimizer $f_W^*$ at the population level.

{\noindent The proof of Proposition~\ref{prop:nudge_weighting_nll} is as follows.}
\begin{proof}[Proof of Proposition~\ref{prop:nudge_weighting_nll}]
Fix $(a,\bx)$, write $\pi=\pi_\Lambda(\bx,a)$, $w_\pm=w(a,\bx,\pm\tfrac12)$, and
let $\mu_\pm(a,\bx)$ be the arm-specific means as in the proof of
Proposition~3, so that
$\mu_+(a,\bx)-\mu_-(a,\bx)=\tau_\delta(a,\bx)$ by
Proposition~1; this establishes the first claim. Since
$\partial M(y,\eta)/\partial\eta=\mu(\eta)-y$ and the natural parameter is
$\Lambda f$ with $\Lambda\in\{\pm\tfrac12\}$,
\[
\frac{\partial}{\partial f}\ell_W^{\mathrm{NLL}}(f;a,\bx)
=\tfrac12 w_+\pi\bigl\{\mu(f/2)-\mu_+(a,\bx)\bigr\}
-\tfrac12 w_-(1-\pi)\bigl\{\mu(-f/2)-\mu_-(a,\bx)\bigr\}.
\]
Under the balancing condition (3),
$w_+\pi=w_-(1-\pi)=:c(a,\bx)>0$, so the first-order condition reduces to
$\mu(f/2)-\mu(-f/2)=\mu_+(a,\bx)-\mu_-(a,\bx)=\tau_\delta(a,\bx)$, which is
\eqref{eq:mu-contrast-generic}. Moreover,
$\frac{\partial^2}{\partial f^2}\ell_W^{\mathrm{NLL}}(f;a,\bx)
=\tfrac{c(a,\bx)}{4}\bigl\{b''(f/2)+b''(-f/2)\bigr\}>0$
by strict convexity of $b$, so the conditional loss is strictly convex and the
minimizer $f_W^*(a,\bx)$, when it exists, is unique.
\end{proof}

\subsection{Direct CMTP Estimation via NLL Loss}\label{direct_NLL_CMTP}

In the exponential-family framework of Section 2.5, the direct CMTP estimators from Section 2.4 can be extended by replacing the squared-error criteria with the corresponding canonical NLL. Rather than modeling a local nudge effect at a fixed dose \(a\), we now posit a link-scale CMTP contrast \(f:\mathcal X \to \mathbb{R}\) that depends only on the covariates \(\bX\) and is evaluated under the duplicated-arm mixture for \(A_\Lambda\) given \(\bX\). In both the weighting and A-learning formulations, a natural first step is to take the corresponding nudge NLL loss (from \eqref{eq:W-NLL-loss} and (9)), simply replacing the local contrast $f(a,\bx)$ by a function $f(\bx)$ and conditioning only on $X$. However, exactly as in the squared-error case in Lemma 1 and \Cref{lem:weighting_NxDx}, the resulting first-order conditions show that the population minimizer $f^*(\bx)$ does {not} recover the CMTP $\tau_\delta(\bx)$ directly. Instead, the numerator of $f^*(\bx)$ involves an average of the arm-specific mean shift $\mu(a+\delta,\bx)-\mu(a,\bx)$ weighted by a method-specific factor $h_M(\bx,a)$ that depends on the arm propensity $\pi_\Lambda(\bx,a)$ and the curvature of the NLL, while the denominator averages the same $h_M(\bx,a)$ alone. Thus, the unstabilized NLL losses identify a {weighted} CMTP of the form
{\small\[
f^*(\bx)
=
\frac{\mathbb{E}\!\big[h_M(\bX,A_\Lambda)\{\mu(A_\Lambda+\delta,\bX)-\mu(A_\Lambda,\bX)\}\mid \bX=\bx\big]}
     {\mathbb{E}\!\big[h_M(\bX,A_\Lambda)\mid \bX=\bx\big]},
\]}
with $h_M$ taking different but explicit forms for overlap, density-ratio, and A-learning variants, rather than the unweighted CMTP $\tau_\delta(\bx) = \mathbb{E}[\mu(A+\delta,\bx)-\mu(A,x)\mid \bX=\bx]$. {To target $\tau_\delta(\bx)$ itself, we therefore stabilize the direct NLL losses by dividing each duplicated observation's contribution by $h_M(\bX,A_\Lambda)/\{1-\pi_\Lambda(\bX,A_\Lambda)\}$. The additional factor $1-\pi_\Lambda$ is essential: as in the squared-error case (see Proposition~5 of the main paper), the expectations above are taken under the duplicated-arm mixture law of $A_\Lambda$ given $\bX$, and the identity $\{1-\pi_\Lambda(\bx,a)\}f_{A_\Lambda\mid\bX}(a\mid\bx)=\tfrac12 g_{A\mid\bX}(a\mid\bx)$ converts the mixture average into an average over the natural law of $A$ given $\bX$. With this stabilization, the implicit weighting cancels in both the numerator and denominator of $f^*(\bx)$, so that the unique minimizer of the stabilized NLL loss coincides with the CMTP $\tau_\delta(\bx)$ on the appropriate scale (mean scale for squared error, link scale for NLL). For the A-learning estimator, $h_M=\pi_\Lambda(1-\pi_\Lambda)$ and the stabilizer reduces to $\pi_\Lambda(\bX,A_\Lambda)$.} As in the squared-loss setting, one can also subtract an arbitrary function $m(A,\bX)$ inside the NLL to form augmented versions of these stabilized losses. At the population level this only recenters the first order conditions and does not change the fact that $\tau_\delta(\bx)$ is the target of estimation, while yielding variance-reduction benefits in finite and large samples. {Detailed derivations of the direct CMTP losses for the logistic case may be found in the proof of Theorem~\ref{thm:logistic_direct_cmtp_asymptotic} in Section~\ref{sec:supp_proofs}.}

\subsection{Additional Empirical Loss Functions}\label{weighting_empirical}
Let $\widehat w_i^\lambda
    :=
    \widehat w(A_i-\lambda\delta,\bX_i,\lambda)${, where the third argument $\lambda\in\{0,1\}$ indicates the duplicated arm (corresponding to $\Lambda=\lambda-\tfrac12$ in the population weight of (2)).} The estimated weighting functions \(\widehat w_i^\lambda\) may be obtained from the
original sample \(\{A_i,\bX_i\}_{i=1}^n\) when density-ratio weights are used, or from the
duplicated sample when overlap or IPW weights are used. The conditional nudge effect function is obtained by minimizing
\[
    \widehat t_{\delta,W}
    =
    \argmin_{t_\delta\in\mathcal T}
    \bbP_n^\Lambda
    \widehat w_i^\lambda
    \left\{
        Y_i - {\bigl(\lambda-\tfrac12\bigr)\,}t_\delta(\bZ_i^\lambda)
    \right\}^2 .
\]
{The factor $\lambda-\tfrac12$ is the arm coding $\Lambda$ of the population loss (2); omitting it would yield an unweighted regression of $Y$ on the basis, which targets the average of the two arm means rather than their contrast.}
If \(t_\delta\) is represented in a linear basis,
$t_\delta(a,\bx)=\btheta^\top \bb(a,\bx),$
then \(\widehat t_{\delta,W}(a,\bx)=\widehat\btheta_W^\top \bb(a,\bx)\), where
\begin{align}
    \widehat{\btheta}_W
    =
    \argmin_{\btheta}
    \bbP_n^\Lambda
    \widehat w_i^\lambda
    \left\{
        Y_i-{\bigl(\lambda-\tfrac12\bigr)}\btheta^\top {\bb}_i^\lambda
    \right\}^2
    \label{eqn:weighted_sq_loss_basis_empirical}
    =
    \left\{
        {\tfrac14}\,\bbP_n^{\Lambda}
        \widehat w_i^\lambda {\bb}_i^\lambda {\bb}_i^{\lambda\top}
    \right\}^{-1}
    \left\{
        {\tfrac12}\,\bbP_n^{\Lambda}
        {(2\lambda-1)}\,\widehat w_i^\lambda {\bb}_i^\lambda Y_i
    \right\}.
    \nonumber
\end{align}
Thus the weighting estimator can be implemented by standard weighted least squares on
the duplicated sample.
\begin{lemma}[Population limits of linear-basis nudge estimators for weighting]
\label{lem:nudge-linear-projection_weighting}
Let \(\tau_\delta(a,\bx)\) denote the conditional nudge effect and suppose we approximate it
in a linear basis, $\tau_\delta(a,\bx)\approx \btheta^\top \bb(a,\bx),$
where \(\bb(a,\bx)\in\mathbb R^p\) is a fixed vector of basis functions and
\(\btheta\in\mathbb R^p\) is unknown. {Suppose the weights are strictly positive, satisfy the balancing condition (3), and $\widehat w$ is uniformly consistent for $w$; suppose also the limiting Gram matrix below is nonsingular.} Then
 \(\widehat\btheta_W\to\btheta_W^*\) in probability, where{, with $h_W(a,\bx):=\pi_\Lambda(\bx,a)\,w(a,\bx,\tfrac12)$,}
\[
    \btheta_W^*
    =
    \argmin_{\btheta\in\mathbb R^p}
    \mathbb E{^\Lambda}
    \left[
        h_W({A_\Lambda},\bX)
        \left\{
            \tau_\delta({A_\Lambda},\bX)-\btheta^\top \bb({A_\Lambda},\bX)
        \right\}^2
    \right]{,}
\]
{where $\mathbb E^\Lambda$ denotes expectation under the duplicated-arm law of $(A_\Lambda,\bX)$, as in Lemma~2 of the main paper.}
\end{lemma}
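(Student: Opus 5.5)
The plan is to treat $\widehat\btheta_W$ as a weighted least-squares M-estimator on the duplicated sample. First I pass to its population version, then I identify the population first-order condition with a weighted normal equation for $\tau_\delta$.

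\emph{Step 1 (consistency of the sample moments).} Write $\widehat\btheta_W=\widehat G^{-1}\widehat r$, where
\[
\widehat G=\tfrac14\bbP_n^\Lambda\widehat w_i^\lambda\bb_i^\lambda\bb_i^{\lambda\top},\qquad
\widehat r=\tfrac12\bbP_n^\Lambda(2\lambda-1)\widehat w_i^\lambda\bb_i^\lambda Y_i .
\]
Replace $\widehat w$ by $w$. The error is bounded by $\sup|\widehat w-w|$ times $\bbP_n^\Lambda\|\bb_i^\lambda\|^2$ (respectively $\bbP_n^\Lambda\|\bb_i^\lambda\||Y_i|$). These averages are $O_p(1)$ under the finite second moments, so by uniform consistency the error is $o_p(1)$. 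The weak law of large numbers then gives convergence to the population moments $G$ and $r$. Nonsingularity of $G$ and the continuous mapping theorem yield $\widehat\btheta_W\to G^{-1}r$.

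\emph{Step 2 (rewriting $G$ and $r$ under the duplicated-arm law).} The duplicated empirical average $\bbP_n^\Lambda$ sums both rows of each unit, so it estimates $2\,\mathbb E^\Lambda$ over $(A_\Lambda,\bX,\Lambda)$, with $\Lambda$ uniform on $\{\pm\tfrac12\}$. Condition on $(A_\Lambda,\bX)=(a,\bx)$ and split by arm, using $\pi=\pi_\Lambda(\bx,a)$, $w_\pm=w(a,\bx,\pm\tfrac12)$, and the arm means $\mu_\pm$ from the proof of Proposition~3. Under balancing condition (3), $w_+\pi=w_-(1-\pi)=h_W(a,\bx)$. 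Then the conditional mean of $w\bb\bb^\top$ is $2h_W\bb\bb^\top$. The conditional mean of $(2\lambda-1)wY\bb$ is $h_W(\mu_+-\mu_-)\bb$, which equals $h_W\tau_\delta\bb$ by Proposition~1 (as in Proposition~3). Hence, up to a common positive constant that cancels in $G^{-1}r$,
\[
G\propto\mathbb E^\Lambda[h_W\bb\bb^\top],\qquad r\propto\mathbb E^\Lambda[h_W\tau_\delta\bb].
\]
The constants must be checked: the $\tfrac14$ and $\tfrac12$ prefactors together with the factor $2$ from $\Lambda^2=\tfrac14$ versus $|2\lambda-1|=1$ should produce the same multiple on both sides.

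\emph{Step 3 (projection).} $G^{-1}r$ solves the normal equations $\mathbb E^\Lambda[h_W\bb(\tau_\delta-\btheta^\top\bb)]=\bzero$. The objective $\mathbb E^\Lambda[h_W(\tau_\delta-\btheta^\top\bb)^2]$ is strictly convex because $h_W>0$ and $G$ is nonsingular, so $G^{-1}r$ is its unique minimizer, namely $\btheta_W^*$.

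The main obstacle I expect is the bookkeeping in Step 2. That step requires a careful conversion between the duplicated-row empirical measure and the mixture law $\mathbb E^\Lambda$, and tracking the constants to confirm that the $\tfrac14$ and $\tfrac12$ normalizations give exactly $\mathbb E^\Lambda[h_W\tau_\delta\bb]$ rather than a scaled version.
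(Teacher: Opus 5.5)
Your proposal is correct and follows essentially the same route as the paper's proof: compute the conditional moments of the weighted, arm-coded terms given $(A_\Lambda,\bX)$ using the balancing condition and Proposition~1, then conclude by the law of large numbers, uniform consistency of $\widehat w$, and the continuous mapping theorem. The constant bookkeeping you flagged does close exactly, since $\tfrac14\cdot 2\cdot 2=\tfrac12\cdot 2\cdot 1=1$, so the two limits are precisely $\mathbb E^\Lambda[h_W\bb\bb^\top]$ and $\mathbb E^\Lambda[h_W\tau_\delta\bb]$.
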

{\begin{proof}
The argument parallels the proof of Lemma~2 in the main paper: with the arm coding included, $\mathbb E[w\,(\widetilde\Lambda-\tfrac12)^2\mid A_\Lambda,\bX]=\tfrac14\{w_+\pi_\Lambda+w_-(1-\pi_\Lambda)\}=\tfrac12\,\pi_\Lambda w_+$ and $\mathbb E[w\,(\widetilde\Lambda-\tfrac12)Y\mid A_\Lambda,\bX]=\tfrac12\,\pi_\Lambda w_+\,\tau_\delta(A_\Lambda,\bX)$ under the balancing condition and the conditions of Proposition~1; the limit then follows from the law of large numbers, uniform consistency of $\widehat w$, and the continuous mapping theorem.
\end{proof}}
\noindent In particular, for overlap weights \(w\), one has
$h_W(a,\bx)=\pi_\Lambda({\bx,a})\{1-\pi_\Lambda({\bx,a})\},$
so the A-learning and overlap-weighting estimators project \(\tau_\delta\) onto the same
basis with the same weight function{; for IPW weights $h_W\equiv 1$, and for density-ratio weights $h_W=1-\pi_\Lambda$}.

\section{\texorpdfstring{{Additional Simulation Results}}{Additional Simulation Results}}\label{sec:supp_sim}
First, we report the inference results of nudge estimation below in Table~\ref{tab:linear_nudge}. Next, in \Cref{subsec:setup2} we introduce a new simulation experiment that studies the performance of the proposed methods under the logistic loss. 

\begin{table}[!htbp]
\centering
\footnotesize
\caption{Linear outcome, Nudge coefficients: empirical bias, RMSE, and 95\% Wald coverage.}
\label{tab:linear_nudge}
\resizebox{\linewidth}{!}{%
\begin{tabular}{llllrrrrrrrrr}
\toprule
 & & & & \multicolumn{3}{c}{$n=500$} & \multicolumn{3}{c}{$n=1{,}000$} & \multicolumn{3}{c}{$n=5{,}000$} \\
\cmidrule(lr){5-7} \cmidrule(lr){8-10} \cmidrule(lr){11-13}
Coef. & Truth & Weighting & Aug. & Bias & RMSE & Cov. & Bias & RMSE & Cov. & Bias & RMSE & Cov. \\
\midrule
\multirow{6}{*}{$A$} & \multirow{6}{*}{-0.0600} & \multirow{2}{*}{A-learning} & No  & 0.0049 & 0.0173 & 0.8450 & 0.0022 & 0.0135 & 0.8850 & 0.0004 & 0.0065 & 0.9100 \\
 & & & Yes & -0.0001 & 0.0016 & 0.9950 & -0.0000 & 0.0011 & 0.9750 & -0.0000 & 0.0005 & 0.9700 \\
\cmidrule(lr){3-13}
 & & \multirow{2}{*}{Density Ratio} & No  & 0.0049 & 0.0173 & 0.8450 & 0.0022 & 0.0136 & 0.8850 & 0.0004 & 0.0065 & 0.9100 \\
 & & & Yes & -0.0001 & 0.0016 & 0.9900 & -0.0000 & 0.0012 & 0.9750 & -0.0000 & 0.0005 & 0.9700 \\
\cmidrule(lr){3-13}
 & & \multirow{2}{*}{Overlap} & No  & 0.0049 & 0.0173 & 0.8450 & 0.0022 & 0.0135 & 0.8850 & 0.0004 & 0.0065 & 0.9100 \\
 & & & Yes & -0.0001 & 0.0016 & 0.9900 & -0.0000 & 0.0011 & 0.9750 & -0.0000 & 0.0005 & 0.9700 \\
\midrule
\multirow{6}{*}{$A{:}X_1$} & \multirow{6}{*}{0.0800} & \multirow{2}{*}{A-learning} & No  & -0.0048 & 0.0288 & 0.7750 & -0.0022 & 0.0183 & 0.8350 & -0.0004 & 0.0093 & 0.9100 \\
 & & & Yes & -0.0001 & 0.0005 & 0.9700 & 0.0000 & 0.0004 & 0.9200 & 0.0000 & 0.0001 & 0.9650 \\
\cmidrule(lr){3-13}
 & & \multirow{2}{*}{Density Ratio} & No  & -0.0048 & 0.0288 & 0.7750 & -0.0022 & 0.0183 & 0.8350 & -0.0004 & 0.0093 & 0.9100 \\
 & & & Yes & -0.0001 & 0.0005 & 0.9700 & 0.0000 & 0.0004 & 0.9250 & 0.0000 & 0.0001 & 0.9650 \\
\cmidrule(lr){3-13}
 & & \multirow{2}{*}{Overlap} & No  & -0.0048 & 0.0288 & 0.7750 & -0.0022 & 0.0183 & 0.8350 & -0.0004 & 0.0093 & 0.9100 \\
 & & & Yes & -0.0001 & 0.0005 & 0.9700 & 0.0000 & 0.0004 & 0.9200 & 0.0000 & 0.0001 & 0.9650 \\
\midrule
\multirow{6}{*}{$A{:}X_2$} & \multirow{6}{*}{0.0000} & \multirow{2}{*}{A-learning} & No  & -0.0015 & 0.0211 & 0.9400 & -0.0010 & 0.0153 & 0.9600 & -0.0001 & 0.0083 & 0.9250 \\
 & & & Yes & 0.0002 & 0.0016 & 0.9850 & 0.0001 & 0.0011 & 0.9700 & 0.0000 & 0.0005 & 0.9650 \\
\cmidrule(lr){3-13}
 & & \multirow{2}{*}{Density Ratio} & No  & -0.0015 & 0.0212 & 0.9450 & -0.0010 & 0.0154 & 0.9600 & -0.0001 & 0.0083 & 0.9250 \\
 & & & Yes & 0.0002 & 0.0016 & 0.9850 & 0.0001 & 0.0012 & 0.9700 & 0.0000 & 0.0005 & 0.9650 \\
\cmidrule(lr){3-13}
 & & \multirow{2}{*}{Overlap} & No  & -0.0015 & 0.0211 & 0.9400 & -0.0010 & 0.0153 & 0.9600 & -0.0001 & 0.0083 & 0.9250 \\
 & & & Yes & 0.0002 & 0.0016 & 0.9850 & 0.0001 & 0.0011 & 0.9700 & 0.0000 & 0.0005 & 0.9650 \\
\midrule
\multirow{6}{*}{$X_1$} & \multirow{6}{*}{0.0580} & \multirow{2}{*}{A-learning} & No  & 0.0487 & 0.3016 & 0.8000 & 0.0252 & 0.1874 & 0.8400 & 0.0045 & 0.0958 & 0.8950 \\
 & & & Yes & 0.0018 & 0.0165 & 0.9700 & -0.0017 & 0.0136 & 0.9300 & -0.0005 & 0.0055 & 0.9350 \\
\cmidrule(lr){3-13}
 & & \multirow{2}{*}{Density Ratio} & No  & 0.0486 & 0.3012 & 0.8000 & 0.0251 & 0.1872 & 0.8400 & 0.0044 & 0.0957 & 0.8950 \\
 & & & Yes & 0.0018 & 0.0167 & 0.9650 & -0.0017 & 0.0136 & 0.9350 & -0.0005 & 0.0056 & 0.9350 \\
\cmidrule(lr){3-13}
 & & \multirow{2}{*}{Overlap} & No  & 0.0486 & 0.3016 & 0.8000 & 0.0252 & 0.1874 & 0.8400 & 0.0045 & 0.0958 & 0.8950 \\
 & & & Yes & 0.0018 & 0.0165 & 0.9700 & -0.0017 & 0.0136 & 0.9300 & -0.0005 & 0.0055 & 0.9350 \\
\midrule
\multirow{6}{*}{$X_2$} & \multirow{6}{*}{0.0000} & \multirow{2}{*}{A-learning} & No  & -0.0267 & 0.2856 & 0.8350 & -0.0062 & 0.1858 & 0.9250 & -0.0052 & 0.1026 & 0.9100 \\
 & & & Yes & -0.0034 & 0.0279 & 0.9800 & -0.0018 & 0.0201 & 0.9600 & -0.0002 & 0.0079 & 0.9650 \\
\cmidrule(lr){3-13}
 & & \multirow{2}{*}{Density Ratio} & No  & -0.0268 & 0.2865 & 0.8350 & -0.0062 & 0.1860 & 0.9250 & -0.0052 & 0.1028 & 0.9050 \\
 & & & Yes & -0.0034 & 0.0277 & 0.9850 & -0.0017 & 0.0203 & 0.9600 & -0.0002 & 0.0079 & 0.9650 \\
\cmidrule(lr){3-13}
 & & \multirow{2}{*}{Overlap} & No  & -0.0267 & 0.2857 & 0.8350 & -0.0062 & 0.1858 & 0.9250 & -0.0052 & 0.1026 & 0.9100 \\
 & & & Yes & -0.0034 & 0.0279 & 0.9800 & -0.0018 & 0.0201 & 0.9600 & -0.0002 & 0.0079 & 0.9650 \\
\midrule
\multirow{6}{*}{Int.} & \multirow{6}{*}{-0.1060} & \multirow{2}{*}{A-learning} & No  & -0.0253 & 0.1042 & 0.8950 & -0.0116 & 0.0813 & 0.9250 & -0.0018 & 0.0397 & 0.8900 \\
 & & & Yes & 0.0019 & 0.0257 & 0.9900 & 0.0007 & 0.0182 & 0.9750 & 0.0005 & 0.0076 & 0.9700 \\
\cmidrule(lr){3-13}
 & & \multirow{2}{*}{Density Ratio} & No  & -0.0253 & 0.1043 & 0.8950 & -0.0117 & 0.0815 & 0.9250 & -0.0018 & 0.0398 & 0.8850 \\
 & & & Yes & 0.0019 & 0.0256 & 0.9900 & 0.0007 & 0.0184 & 0.9700 & 0.0005 & 0.0076 & 0.9700 \\
\cmidrule(lr){3-13}
 & & \multirow{2}{*}{Overlap} & No  & -0.0253 & 0.1043 & 0.8950 & -0.0116 & 0.0813 & 0.9250 & -0.0018 & 0.0397 & 0.8900 \\
 & & & Yes & 0.0019 & 0.0257 & 0.9900 & 0.0007 & 0.0182 & 0.9750 & 0.0005 & 0.0076 & 0.9700 \\
\bottomrule
\end{tabular}}
\end{table}
\subsection{Logistic Loss}\label{subsec:setup2}
In a second simulation, we consider a binary outcome under a correctly specified logistic model, so effects are defined on the log-odds scale. We evaluate whether the proposed weighting and A-learning estimators of the conditional nudge and CMTP effects remain approximately unbiased under a nonlinear link, and compare direct CMTP estimation with averaging the estimated nudge under logistic loss. {Unlike in} the linear setting, we omit outcome-regression augmentation so that differences reflect only the loss and CMTP construction.
To facilitate comparison with Setup 1, we retain the same covariate and
treatment distributions. We generate i.i.d. covariates
\(\bX=(X_1,X_2)^\top\), where
\(X_1\sim\mathrm{Unif}(-1,1)\) and
\(X_2\sim\mathrm{Bernoulli}(0.7)\), independently. Conditional on \(\bX\),
\(A\mid\bX\sim\mathrm{Exp}\{\lambda(\bX)\}\), where
\(\lambda(\bX)=\exp\{-(2+0.5X_1+X_2)\}\), so that
\(\mathbb E[A\mid\bX=\bx]=\exp(2+0.5x_1+x_2)\). Given \((A,\bX)\), we generate
\(Y\mid(A,\bX)\sim\mathrm{Bernoulli}\{\mu(A,\bX)\}\), with
\(\mu(A,\bX)=\operatorname{expit}\{\eta(A,\bX)\}\), where
\(\eta(A,\bX)\) has the form as in Section 5, with
\(\delta=0.5\), \(\alpha=-1.5\), \(\beta=2.0\),
\(\beta_A=-0.50\), and \(\beta_{AX}=0.25\). This model specification induces a nonlinear, covariate-modified treatment effect on the log-odds scale while a parametric structure that allows for closed-form nudge and CMTP effects.

The conditional nudge effect is defined on the log-odds scale as
\(\tau_\delta(a,\bx)=\eta(a+\delta,\bx)-\eta(a,\bx)\), corresponding to an odds ratio \(\exp\{\tau_\delta(a,\bx)\}\) under the shift
\(a\mapsto a+\delta\). 
Under a polynomial working model, this contrast has the same form as that in Section 5 and with coefficients interpreted on the log-odds scale. The CMTP effect averages this contrast over the observed conditional treatment distribution, namely 
\(\tau_\delta(\bx)=\mathbb E\{\tau_\delta(A,\bX)\mid\bX=\bx\}\), where \(\mathbb E[A\mid\bX=\bx]=\exp(2+0.5x_1+x_2)\). As in the linear setting, we
summarize \(\tau_\delta(\bx)\) by its least-squares projection onto the span of
\(\{1,X_1,X_2\}\). The resulting coefficients then define the  CMTP parameter on the log-odds scale.

The estimators here parallel those in Setup 1 but use logistic rather than squared error loss and for simplicity regression augmentation. Density, Overlap, and A-learning approaches are each fit under a linear working model for the log-odds contrast{, and for the CMTP we again compare the ``average'' and ``direct'' constructions, yielding six distinct CMTP estimators}. As in Setup 1, for \(n\in\{500,1000,5000\}\) we generate \(R=200\) independent datasets and compute empirical bias and root mean squared error relative to the known polynomial nudge coefficients and projected CMTP coefficients.

{Figure~\ref{fig:logistic_nudge_cmtp} displays the root integrated mean squared error of the estimated nudge (left) and CMTP (right) functions: error decreases with $n$ for all methods, and the direct CMTP estimators attain uniformly lower error than their averaging counterparts in this design. Supplementary Tables~\ref{tab:logistic_nudge} and \ref{tab:logistic_cmtp} report coefficient-level bias, RMSE and 95\% Wald coverage for the logistic estimators under the nudge and CMTP models.}

\begin{figure}
    \centering
    \includegraphics[width=0.8\linewidth]{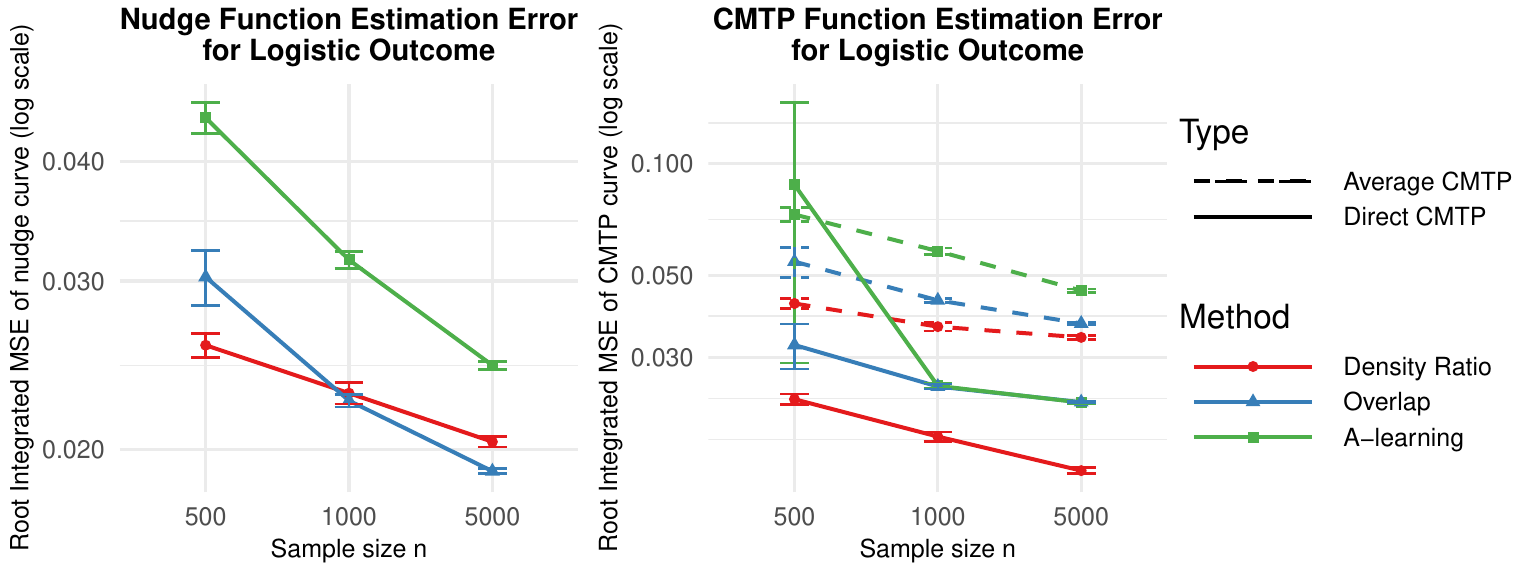}
    \caption{{Root integrated MSE (log scale) of the estimated nudge (left) and CMTP (right) functions for Setup 2. Colors distinguish the three methods; line type separates the average and direct CMTP estimators.}}
    \label{fig:logistic_nudge_cmtp}
\end{figure}

\begin{table}[!htbp]
\centering
\footnotesize
\caption{Logistic outcome, Nudge coefficients: empirical bias, RMSE, and 95\% Wald coverage.}
\label{tab:logistic_nudge}
\resizebox{\textwidth}{!}{%
\begin{tabular}{lllrrrrrrrrr}
\toprule
 & & & \multicolumn{3}{c}{$n=500$} & \multicolumn{3}{c}{$n=1{,}000$} & \multicolumn{3}{c}{$n=5{,}000$} \\
\cmidrule(lr){4-6} \cmidrule(lr){7-9} \cmidrule(lr){10-12}
Coef. & Truth & Weighting & Bias & RMSE & Cov. & Bias & RMSE & Cov. & Bias & RMSE & Cov. \\
\midrule
\multirow{3}{*}{$A$} & \multirow{3}{*}{-0.0060} & A-learning & 0.0004 & 0.0119 & 0.9550 & 0.0002 & 0.0079 & 0.9500 & 0.0001 & 0.0035 & 0.9750 \\
 &  & Density Ratio & 0.0005 & 0.0121 & 0.9650 & 0.0003 & 0.0080 & 0.9550 & 0.0001 & 0.0036 & 0.9750 \\
 &  & Overlap & 0.0005 & 0.0121 & 0.9650 & 0.0003 & 0.0080 & 0.9550 & 0.0001 & 0.0036 & 0.9750 \\
\midrule
\multirow{3}{*}{$A{:}X_1$} & \multirow{3}{*}{0.0080} & A-learning & 0.0007 & 0.0124 & 0.9650 & 0.0001 & 0.0083 & 0.9650 & 0.0001 & 0.0039 & 0.9800 \\
 &  & Density Ratio & 0.0006 & 0.0128 & 0.9650 & 0.0000 & 0.0084 & 0.9800 & 0.0001 & 0.0039 & 0.9800 \\
 &  & Overlap & 0.0006 & 0.0127 & 0.9700 & 0.0000 & 0.0084 & 0.9800 & 0.0001 & 0.0039 & 0.9800 \\
\midrule
\multirow{3}{*}{$A{:}X_2$} & \multirow{3}{*}{0.0000} & A-learning & -0.0009 & 0.0081 & 0.9650 & -0.0003 & 0.0053 & 0.9500 & -0.0001 & 0.0022 & 0.9500 \\
 &  & Density Ratio & -0.0010 & 0.0082 & 0.9650 & -0.0003 & 0.0054 & 0.9500 & -0.0001 & 0.0022 & 0.9500 \\
 &  & Overlap & -0.0010 & 0.0082 & 0.9650 & -0.0003 & 0.0053 & 0.9500 & -0.0001 & 0.0022 & 0.9500 \\
\midrule
\multirow{3}{*}{$X_1$} & \multirow{3}{*}{0.0058} & A-learning & -0.0019 & 0.0828 & 0.9250 & -0.0002 & 0.0548 & 0.9550 & -0.0010 & 0.0253 & 0.9550 \\
 &  & Density Ratio & -0.0004 & 0.0848 & 0.9300 & 0.0006 & 0.0558 & 0.9550 & -0.0009 & 0.0255 & 0.9650 \\
 &  & Overlap & -0.0004 & 0.0847 & 0.9300 & 0.0006 & 0.0557 & 0.9550 & -0.0009 & 0.0254 & 0.9650 \\
\midrule
\multirow{3}{*}{$X_2$} & \multirow{3}{*}{0.0000} & A-learning & 0.0099 & 0.0904 & 0.9450 & 0.0019 & 0.0657 & 0.9250 & 0.0015 & 0.0269 & 0.9500 \\
 &  & Density Ratio & 0.0112 & 0.0923 & 0.9500 & 0.0024 & 0.0664 & 0.9300 & 0.0016 & 0.0270 & 0.9500 \\
 &  & Overlap & 0.0112 & 0.0923 & 0.9500 & 0.0024 & 0.0664 & 0.9300 & 0.0016 & 0.0270 & 0.9500 \\
\midrule
\multirow{3}{*}{Int.} & \multirow{3}{*}{-0.0106} & A-learning & -0.0074 & 0.0885 & 0.9500 & -0.0020 & 0.0609 & 0.9250 & -0.0011 & 0.0262 & 0.9550 \\
 &  & Density Ratio & -0.0085 & 0.0900 & 0.9500 & -0.0025 & 0.0616 & 0.9250 & -0.0012 & 0.0264 & 0.9550 \\
 &  & Overlap & -0.0085 & 0.0899 & 0.9500 & -0.0025 & 0.0615 & 0.9250 & -0.0012 & 0.0263 & 0.9550 \\
\bottomrule
\end{tabular}%
}
\end{table}

\begin{table}[!htbp]
\centering
\footnotesize
\caption{Logistic outcome, CMTP coefficients: empirical bias, RMSE, and 95\% Wald coverage.}
\label{tab:logistic_cmtp}
\resizebox{\textwidth}{!}{%
\begin{tabular}{llllrrrrrrrrr}
\toprule
 & & & & \multicolumn{3}{c}{$n=500$} & \multicolumn{3}{c}{$n=1{,}000$} & \multicolumn{3}{c}{$n=5{,}000$} \\
\cmidrule(lr){5-7} \cmidrule(lr){8-10} \cmidrule(lr){11-13}
Coef. & Truth & Constr. & Weighting & Bias & RMSE & Cov. & Bias & RMSE & Cov. & Bias & RMSE & Cov. \\
\midrule
\multirow{6}{*}{$X_1$} & \multirow{6}{*}{0.0959} & \multirow{3}{*}{Average} & A-learning & 0.0061 & 0.0745 & 0.9800 & 0.0016 & 0.0506 & 0.9650 & 0.0000 & 0.0234 & 0.9800 \\
 &  &  & Density Ratio & 0.0062 & 0.0769 & 0.9800 & 0.0016 & 0.0517 & 0.9700 & 0.0001 & 0.0238 & 0.9850 \\
 &  &  & Overlap & 0.0062 & 0.0767 & 0.9800 & 0.0016 & 0.0516 & 0.9700 & 0.0001 & 0.0238 & 0.9850 \\
\cmidrule(lr){3-13}
 &  & \multirow{3}{*}{Direct} & A-learning & 0.0061 & 0.0745 & 0.9800 & 0.0016 & 0.0506 & 0.9650 & 0.0000 & 0.0234 & 0.9800 \\
 &  &  & Density Ratio & 0.0062 & 0.0769 & 0.9800 & 0.0016 & 0.0517 & 0.9700 & 0.0001 & 0.0238 & 0.9850 \\
 &  &  & Overlap & 0.0062 & 0.0767 & 0.9800 & 0.0016 & 0.0516 & 0.9700 & 0.0001 & 0.0238 & 0.9850 \\
\midrule
\multirow{6}{*}{$X_2$} & \multirow{6}{*}{-0.0621} & \multirow{3}{*}{Average} & A-learning & -0.0036 & 0.0932 & 0.9700 & -0.0005 & 0.0598 & 0.9800 & 0.0000 & 0.0278 & 0.9700 \\
 &  &  & Density Ratio & -0.0030 & 0.0954 & 0.9700 & -0.0001 & 0.0609 & 0.9800 & 0.0000 & 0.0282 & 0.9700 \\
 &  &  & Overlap & -0.0031 & 0.0952 & 0.9700 & -0.0001 & 0.0608 & 0.9800 & 0.0000 & 0.0281 & 0.9700 \\
\cmidrule(lr){3-13}
 &  & \multirow{3}{*}{Direct} & A-learning & -0.0036 & 0.0932 & 0.9700 & -0.0005 & 0.0598 & 0.9800 & 0.0000 & 0.0278 & 0.9700 \\
 &  &  & Density Ratio & -0.0030 & 0.0954 & 0.9700 & -0.0001 & 0.0609 & 0.9800 & 0.0000 & 0.0282 & 0.9700 \\
 &  &  & Overlap & -0.0031 & 0.0952 & 0.9700 & -0.0001 & 0.0608 & 0.9800 & 0.0000 & 0.0281 & 0.9700 \\
\midrule
\multirow{6}{*}{Int.} & \multirow{6}{*}{-0.0466} & \multirow{3}{*}{Average} & A-learning & -0.0037 & 0.0228 & 0.9750 & -0.0007 & 0.0176 & 0.9600 & -0.0006 & 0.0073 & 0.9650 \\
 &  &  & Density Ratio & -0.0042 & 0.0234 & 0.9800 & -0.0009 & 0.0178 & 0.9600 & -0.0007 & 0.0074 & 0.9650 \\
 &  &  & Overlap & -0.0042 & 0.0235 & 0.9800 & -0.0009 & 0.0179 & 0.9600 & -0.0007 & 0.0075 & 0.9650 \\
\cmidrule(lr){3-13}
 &  & \multirow{3}{*}{Direct} & A-learning & -0.0037 & 0.0228 & 0.9750 & -0.0007 & 0.0176 & 0.9600 & -0.0006 & 0.0073 & 0.9650 \\
 &  &  & Density Ratio & -0.0042 & 0.0234 & 0.9800 & -0.0009 & 0.0178 & 0.9600 & -0.0007 & 0.0074 & 0.9650 \\
 &  &  & Overlap & -0.0042 & 0.0235 & 0.9800 & -0.0009 & 0.0179 & 0.9600 & -0.0007 & 0.0075 & 0.9650 \\
\bottomrule
\end{tabular}%
}
\end{table}

\section{\texorpdfstring{{Additional Details and Results for the Mechanical Power Data Analysis}}{Additional Details and Results for the Mechanical Power Data Analysis}}\label{sec:supp_rda}

{This section provides the full list of candidate effect modifiers used in Section~6 of the main paper, coefficient estimates under the overlap and density-ratio weighting schemes (Figures~\ref{fig:overlap} and \ref{fig:dr}), and full coefficient tables with standard errors for the A-learning CMTP estimators (Tables~\ref{tab:al_cmtp_delta-1} and \ref{tab:al_cmtp_delta1}).}

\paragraph{Effect modifiers.}
We consider a rich set of $p = 24$ pre-treatment covariates
$\bX_i \in \mathcal{X} \subset \mathbb{R}^p$ measured on ICU day 1,
comprising:
\begin{itemize}
  \item \emph{Demographics and severity:} sex, Simplified Acute
    Physiology Score II (SAPS~II), smoking history.
  \item \emph{Respiratory mechanics:} minimum and maximum respiratory
    rate, minimum and maximum peak inspiratory pressure, plateau
    pressure, fraction of inspired oxygen (FiO$_2$).
  \item \emph{Gas exchange:} minimum and maximum partial pressure of
    arterial oxygen (PaO$_2$), PaO$_2$/FiO$_2$ ratio, minimum and
    maximum partial pressure of arterial carbon dioxide (PaCO$_2$),
    minimum and maximum arterial pH.
  \item \emph{Haemodynamics and organ support:} minimum mean arterial
    blood pressure, vasopressor use on day 1.
  \item \emph{Admission characteristics:} primary ICU service
    (cardiac surgery, cardiac medical, general medical, neurological
    medical, neurological surgical, oncological medical, orthopaedic,
    plastic surgery, thoracic surgery, trauma, vascular surgery, and
    other), insurance category (Medicaid, Medicare, private,
    self-pay), ARDS severity classification and its missingness
    indicator.
  \item \emph{Comorbidities:} asthma.
\end{itemize}
Categorical variables were expanded into $K-1$ dummy variable columns and constant columns were removed. Continuous covariates were
mean-centered prior to analysis so that all CMTP coefficients represent deviations from the population-mean covariate profile and the model intercept is interpretable as the predicted nudge effect at the population-mean.

\begin{figure}[htbp]
    \centering
    \includegraphics[width=\linewidth]{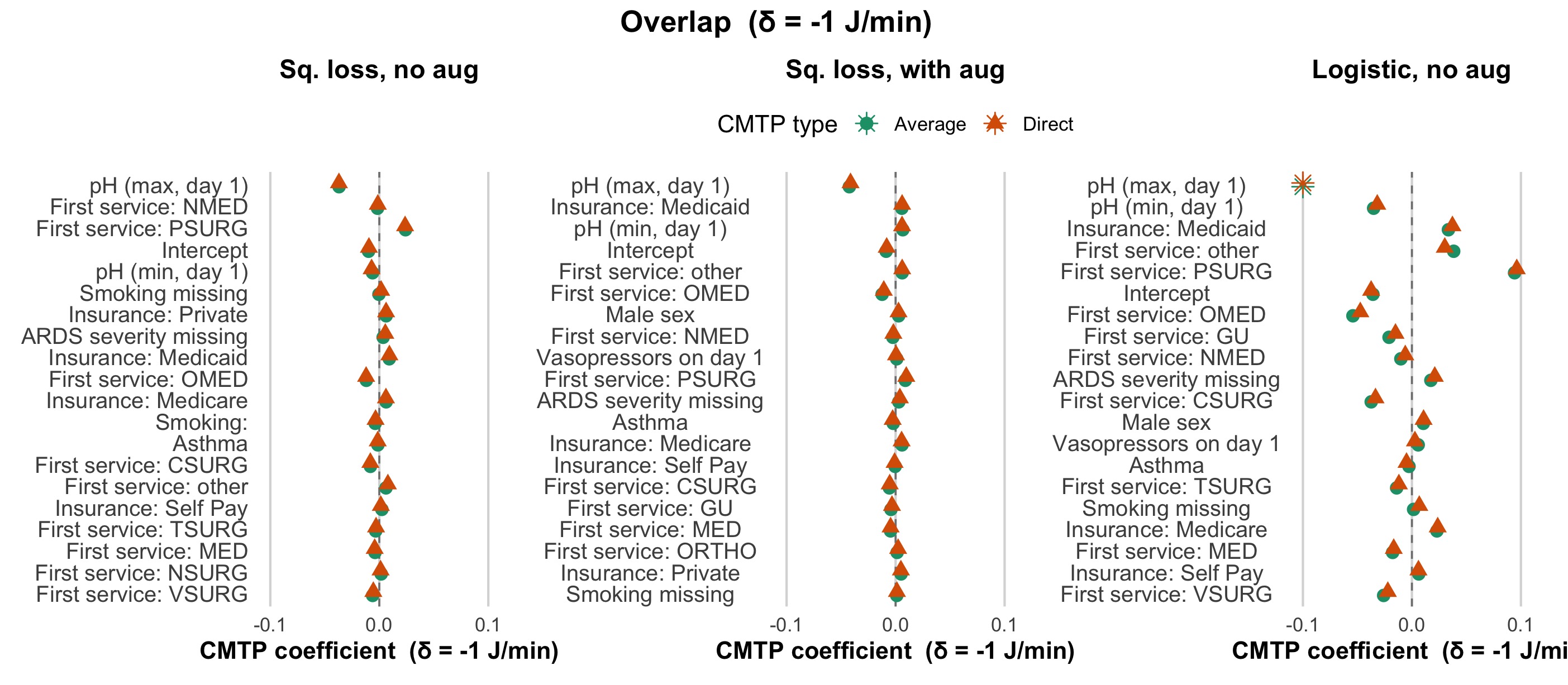}
    \includegraphics[width=\linewidth]{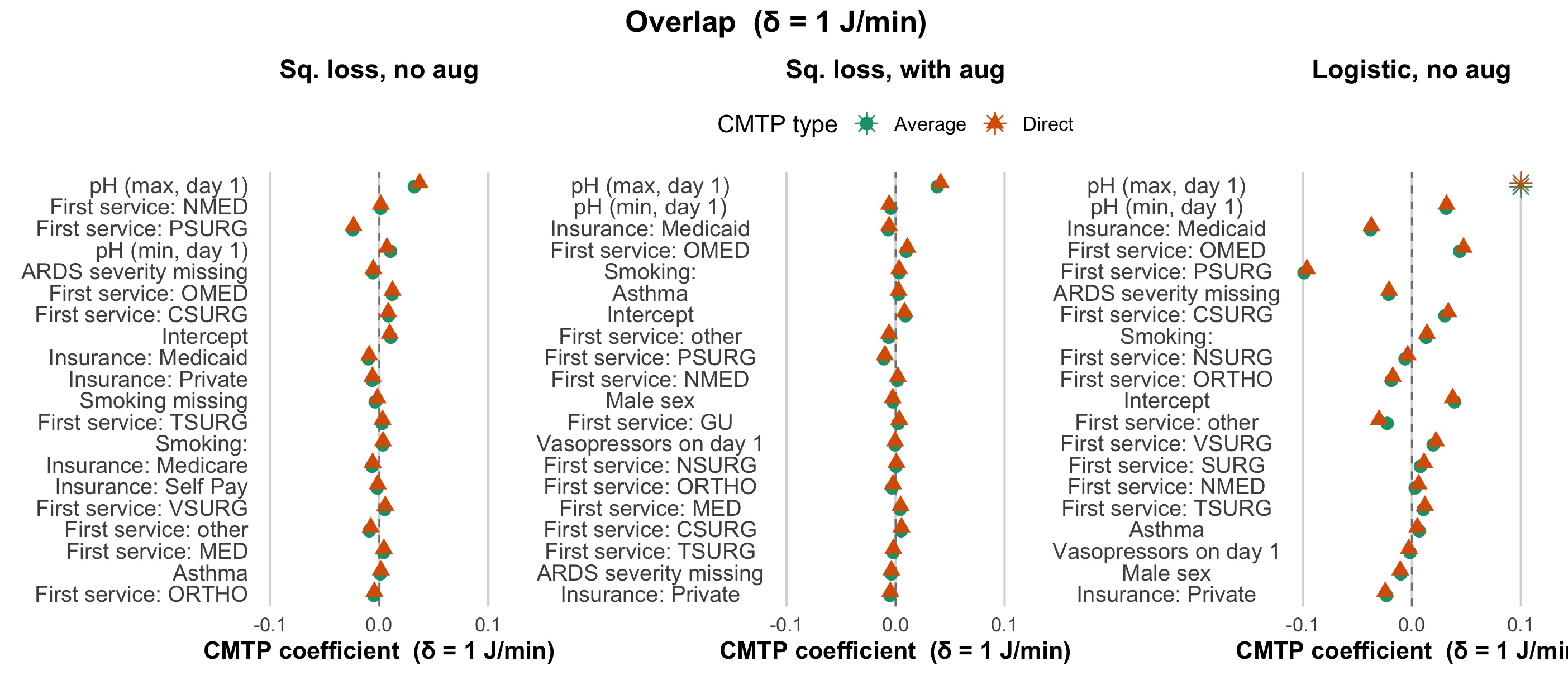}
    \caption[CMTP coefficient estimates under overlap weighting]{{CMTP coefficient estimates for the mechanical power analysis under overlap weighting, for $\delta=-1$~J/min (top) and $\delta=+1$~J/min (bottom). Panels correspond to the squared-error loss without and with augmentation and the logistic loss; colors distinguish the average (plug-in) and direct CMTP estimators.}}
    \label{fig:overlap}
\end{figure}

\begin{figure}[htbp]
    \centering
    \includegraphics[width=\linewidth]{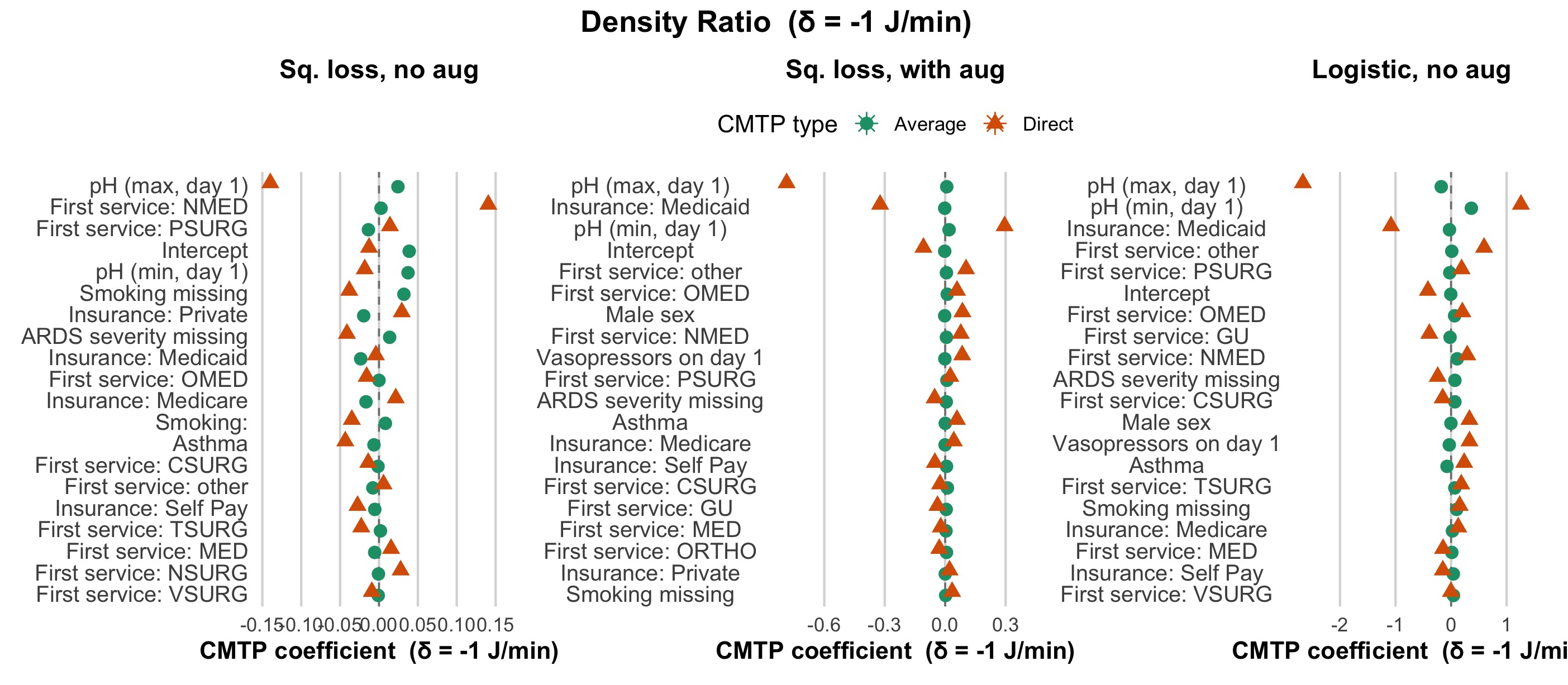}
    \includegraphics[width=\linewidth]{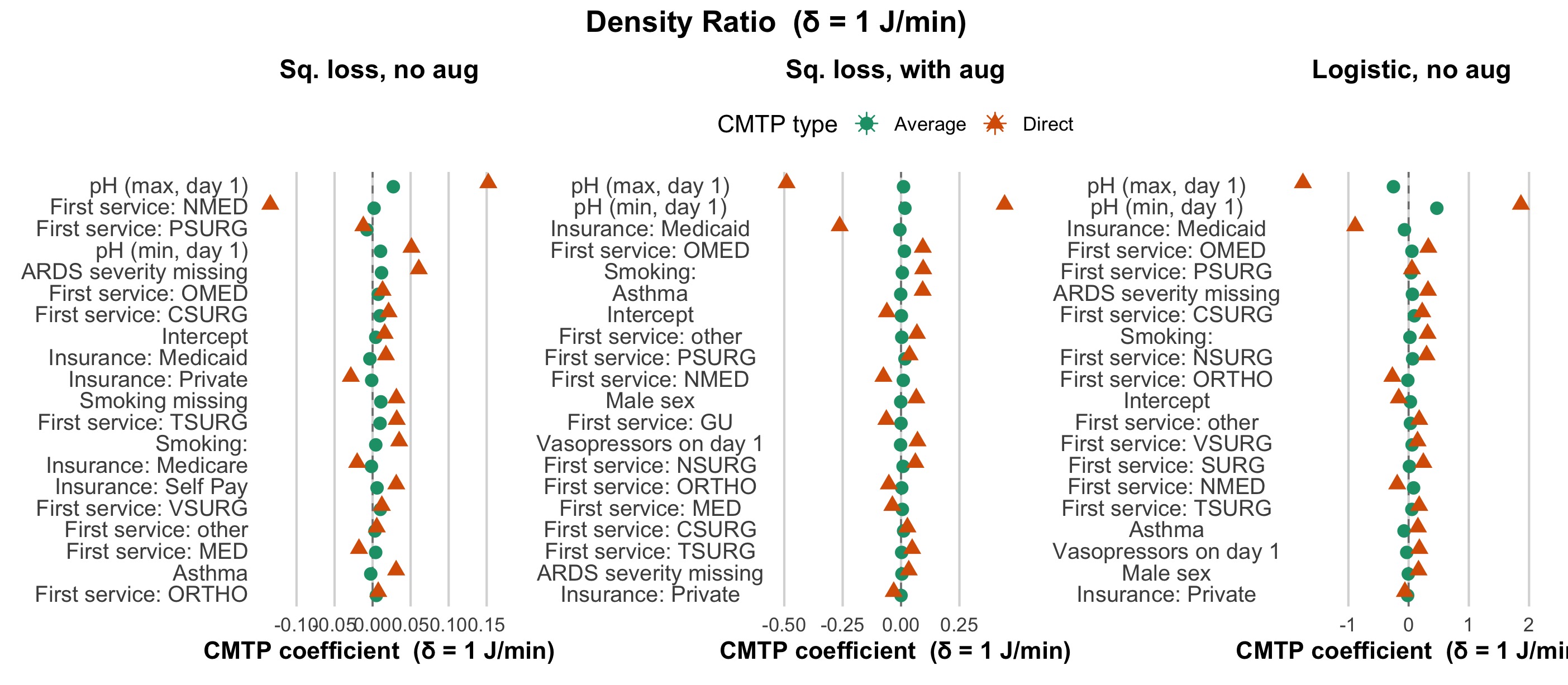}
    \caption[CMTP coefficient estimates under density-ratio weighting]{{CMTP coefficient estimates for the mechanical power analysis under density-ratio weighting, for $\delta=-1$~J/min (top) and $\delta=+1$~J/min (bottom). Panels correspond to the squared-error loss without and with augmentation and the logistic loss; colors distinguish the average (plug-in) and direct CMTP estimators.}}
    \label{fig:dr}
\end{figure}

\begin{figure}[htbp]
    \centering
    \includegraphics[width=\linewidth]{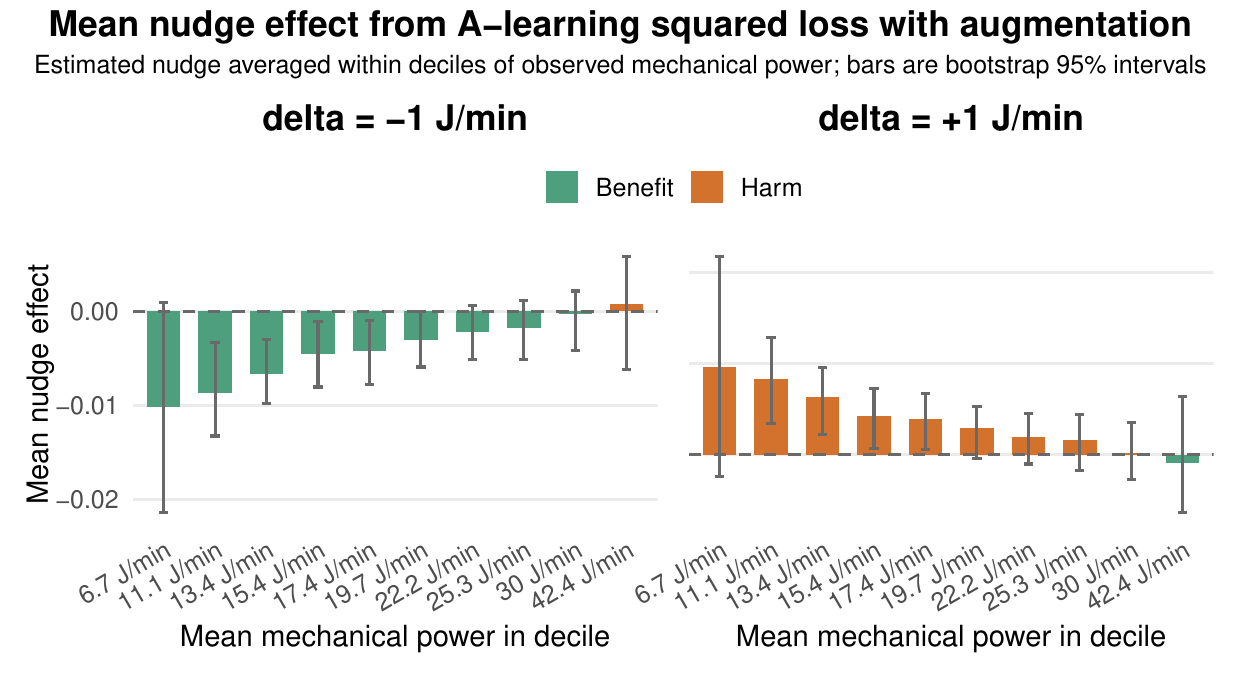}
    \caption{{Mean estimated nudge effect within deciles of observed mechanical power (A-learning, squared loss with augmentation), under $\delta=-1$ (left) and $\delta=+1$~J/min (right); negative values indicate predicted mortality reduction.}}
    \label{fig:benefit_vs_harm}
\end{figure}

\begin{sidewaystable}[htbp]
  \centering
  \small
  \caption{A-learning CMTP coefficients, $\delta = -1$ J/min. Format: estimate (SE). SEs are plug-in cluster-sandwich asymptotic standard deviations from the estimating-equation implementation used for the coefficient plots.}
  \label{tab:al_cmtp_delta-1}

  \resizebox{0.8\textheight}{!}{%

\begin{tabular}{lrrrrrr}
\toprule
\multicolumn{1}{c}{ } & \multicolumn{2}{c}{Squared Loss (no aug)} & \multicolumn{2}{c}{Squared Loss (with aug)} & \multicolumn{2}{c}{Logistic Loss} \\
\cmidrule(l{3pt}r{3pt}){2-3} \cmidrule(l{3pt}r{3pt}){4-5} \cmidrule(l{3pt}r{3pt}){6-7}
Covariate & Average & Direct & Average & Direct & Average & Direct\\
\midrule
pH (max, day 1) & -0.0372 (0.0299) & -0.0455 (0.0282) & -0.0464 (0.0241) & -0.0491 (0.0229) & -0.1614 (0.1121) & -0.1485 (0.1030)\\
First service: OMED & -0.0157 (0.0181) & -0.0173 (0.0170) & -0.0158 (0.0114) & -0.0170 (0.0117) & -0.0684 (0.0587) & -0.0764 (0.0560)\\
First service: PSURG & 0.0205 (0.0209) & 0.0175 (0.0173) & 0.0051 (0.0146) & 0.0017 (0.0151) & 0.0883 (0.0929) & 0.0737 (0.0934)\\
First service: CSURG & -0.0086 (0.0051) & -0.0082 (0.0050) & -0.0066 (0.0044) & -0.0053 (0.0044) & -0.0542 (0.0245) & -0.0414 (0.0235)\\
First service: GU & -0.0087 (0.0082) & -0.0072 (0.0084) & -0.0065 (0.0084) & -0.0067 (0.0094) & -0.0278 (0.0791) & -0.0388 (0.1079)\\
\addlinespace
Intercept & -0.0085 (0.0080) & -0.0075 (0.0080) & -0.0085 (0.0063) & -0.0063 (0.0061) & -0.0392 (0.0340) & -0.0249 (0.0325)\\
Insurance: Medicaid & 0.0068 (0.0072) & 0.0049 (0.0069) & 0.0031 (0.0060) & 0.0026 (0.0056) & 0.0254 (0.0333) & 0.0235 (0.0316)\\
First service: other & 0.0005 (0.0092) & 0.0034 (0.0069) & 0.0067 (0.0066) & 0.0055 (0.0058) & 0.0342 (0.0927) & 0.0154 (0.0735)\\
First service: VSURG & -0.0060 (0.0066) & -0.0059 (0.0063) & -0.0031 (0.0050) & -0.0035 (0.0050) & -0.0235 (0.0276) & -0.0224 (0.0277)\\
pH (min, day 1) & -0.0021 (0.0244) & -0.0002 (0.0246) & 0.0091 (0.0188) & 0.0140 (0.0187) & -0.0247 (0.0922) & -0.0119 (0.0892)\\
\addlinespace
Smoking: & -0.0049 (0.0023) & -0.0042 (0.0020) & -0.0048 (0.0019) & -0.0038 (0.0015) & -0.0219 (0.0086) & -0.0182 (0.0074)\\
Insurance: Medicare & 0.0064 (0.0065) & 0.0048 (0.0066) & 0.0049 (0.0052) & 0.0044 (0.0050) & 0.0185 (0.0303) & 0.0173 (0.0290)\\
Insurance: Self Pay & -0.0032 (0.0115) & -0.0056 (0.0117) & -0.0061 (0.0092) & -0.0071 (0.0088) & -0.0149 (0.0474) & -0.0187 (0.0454)\\
Insurance: Private & 0.0058 (0.0063) & 0.0051 (0.0064) & 0.0037 (0.0051) & 0.0034 (0.0049) & 0.0184 (0.0301) & 0.0189 (0.0289)\\
First service: MED & -0.0040 (0.0049) & -0.0038 (0.0049) & -0.0044 (0.0039) & -0.0040 (0.0039) & -0.0207 (0.0173) & -0.0166 (0.0165)\\
\addlinespace
First service: NMED & 0.0059 (0.0112) & -0.0031 (0.0088) & -0.0022 (0.0095) & -0.0037 (0.0068) & -0.0181 (0.0373) & -0.0204 (0.0286)\\
First service: SURG & -0.0039 (0.0054) & -0.0028 (0.0055) & -0.0014 (0.0044) & -0.0009 (0.0044) & -0.0195 (0.0210) & -0.0152 (0.0205)\\
ARDS severity missing & 0.0020 (0.0088) & 0.0028 (0.0064) & -0.0009 (0.0082) & 0.0021 (0.0055) & -0.0195 (0.0363) & 0.0146 (0.0271)\\
Male sex & 0.0022 (0.0029) & 0.0033 (0.0027) & 0.0037 (0.0024) & 0.0027 (0.0021) & 0.0197 (0.0116) & 0.0093 (0.0104)\\
First service: ORTHO & 0.0012 (0.0081) & 0.0022 (0.0076) & 0.0014 (0.0068) & 0.0001 (0.0070) & 0.0198 (0.0579) & 0.0090 (0.0561)\\
\bottomrule
\end{tabular}
  }
\end{sidewaystable}

\begin{sidewaystable}[htbp]
  \centering
  \small
  \caption{A-learning CMTP coefficients, $\delta = 1$ J/min. Format: estimate (SE). SEs are plug-in cluster-sandwich asymptotic standard deviations from the estimating-equation implementation used for the coefficient plots.}
  \label{tab:al_cmtp_delta1}

  \resizebox{0.8\textheight}{!}{%

\begin{tabular}{lrrrrrr}
\toprule
\multicolumn{1}{c}{ } & \multicolumn{2}{c}{Squared Loss (no aug)} & \multicolumn{2}{c}{Squared Loss (with aug)} & \multicolumn{2}{c}{Logistic Loss} \\
\cmidrule(l{3pt}r{3pt}){2-3} \cmidrule(l{3pt}r{3pt}){4-5} \cmidrule(l{3pt}r{3pt}){6-7}
Covariate & Average & Direct & Average & Direct & Average & Direct\\
\midrule
pH (max, day 1) & 0.0346 (0.0292) & 0.0337 (0.0310) & 0.0413 (0.0228) & 0.0416 (0.0223) & 0.1605 (0.1080) & 0.1696 (0.1065)\\
First service: PSURG & -0.0207 (0.0208) & -0.0230 (0.0187) & -0.0070 (0.0143) & -0.0094 (0.0155) & -0.0917 (0.0935) & -0.0988 (0.1066)\\
First service: OMED & 0.0154 (0.0176) & 0.0150 (0.0172) & 0.0136 (0.0114) & 0.0122 (0.0130) & 0.0590 (0.0584) & 0.0518 (0.0595)\\
Intercept & 0.0091 (0.0081) & 0.0101 (0.0095) & 0.0085 (0.0064) & 0.0098 (0.0069) & 0.0405 (0.0334) & 0.0473 (0.0333)\\
First service: CSURG & 0.0088 (0.0051) & 0.0091 (0.0053) & 0.0053 (0.0043) & 0.0051 (0.0044) & 0.0424 (0.0233) & 0.0390 (0.0241)\\
\addlinespace
Insurance: Medicaid & -0.0076 (0.0073) & -0.0093 (0.0088) & -0.0051 (0.0062) & -0.0049 (0.0065) & -0.0322 (0.0329) & -0.0336 (0.0327)\\
Insurance: Medicare & -0.0066 (0.0067) & -0.0079 (0.0084) & -0.0053 (0.0054) & -0.0062 (0.0060) & -0.0209 (0.0298) & -0.0270 (0.0298)\\
First service: other & -0.0043 (0.0100) & -0.0045 (0.0077) & -0.0070 (0.0067) & -0.0083 (0.0062) & -0.0180 (0.0861) & -0.0292 (0.0777)\\
Insurance: Private & -0.0059 (0.0066) & -0.0069 (0.0082) & -0.0041 (0.0053) & -0.0049 (0.0059) & -0.0195 (0.0297) & -0.0246 (0.0296)\\
First service: ORTHO & -0.0022 (0.0083) & -0.0034 (0.0088) & -0.0032 (0.0067) & -0.0037 (0.0071) & -0.0235 (0.0567) & -0.0239 (0.0555)\\
\addlinespace
First service: GU & 0.0078 (0.0077) & 0.0072 (0.0075) & 0.0050 (0.0077) & 0.0048 (0.0093) & 0.0170 (0.0734) & 0.0176 (0.0908)\\
First service: VSURG & 0.0048 (0.0065) & 0.0052 (0.0065) & 0.0020 (0.0049) & 0.0020 (0.0050) & 0.0181 (0.0266) & 0.0214 (0.0275)\\
Smoking: & 0.0042 (0.0021) & 0.0040 (0.0022) & 0.0044 (0.0017) & 0.0037 (0.0017) & 0.0197 (0.0081) & 0.0167 (0.0079)\\
First service: MED & 0.0038 (0.0048) & 0.0042 (0.0050) & 0.0038 (0.0037) & 0.0042 (0.0037) & 0.0167 (0.0167) & 0.0180 (0.0165)\\
pH (min, day 1) & 0.0044 (0.0243) & 0.0049 (0.0261) & -0.0080 (0.0180) & -0.0075 (0.0179) & 0.0129 (0.0888) & 0.0110 (0.0883)\\
\addlinespace
Male sex & -0.0019 (0.0028) & -0.0012 (0.0029) & -0.0030 (0.0023) & -0.0029 (0.0023) & -0.0164 (0.0110) & -0.0161 (0.0110)\\
First service: SURG & 0.0034 (0.0054) & 0.0042 (0.0055) & 0.0005 (0.0043) & 0.0009 (0.0043) & 0.0111 (0.0203) & 0.0137 (0.0205)\\
ARDS severity: & 0.0027 (0.0016) & 0.0028 (0.0016) & 0.0014 (0.0013) & 0.0013 (0.0013) & 0.0127 (0.0057) & 0.0120 (0.0059)\\
First service: TRAUM & -0.0019 (0.0058) & -0.0022 (0.0059) & -0.0020 (0.0046) & -0.0019 (0.0046) & -0.0126 (0.0217) & -0.0115 (0.0214)\\
Insurance: Self Pay & 0.0035 (0.0118) & -0.0000 (0.0139) & 0.0056 (0.0092) & 0.0035 (0.0100) & 0.0135 (0.0463) & 0.0034 (0.0473)\\
\bottomrule
\end{tabular}
  }
\end{sidewaystable}

\section{Proofs}\label{sec:supp_proofs}

\subsection{Proofs of Propositions}

\begin{proof}[Proof of Proposition~1]
{Fix \(\bx\in\mathcal X\) and \(a\in\mathcal A\) with \(f_{A\mid\bX}(a\mid\bx)>0\); Assumption~2 then guarantees \(f_{A\mid\bX}(a+\delta\mid\bx)>0\), so both conditional means below are well defined. By consistency (Assumption~1),
\begin{align*}
    &\mathbb E[Y\mid \bX=\bx,A=a+\delta]
=
\mathbb E\{Y(a+\delta)\mid \bX=\bx,A=a+\delta\},\\
&\mathbb E[Y\mid \bX=\bx,A=a]
=
\mathbb E\{Y(a)\mid \bX=\bx,A=a\}.
\end{align*}
By Assumption~3(A$'$) with \(q(a,\bx)=a+\delta\),
\[
\mathbb E\{Y(a+\delta)\mid \bX=\bx,A=a+\delta\}
=
\mathbb E\{Y(a+\delta)\mid \bX=\bx,A=a\}.
\]
Subtracting the two consistency identities and applying the display above,
\begin{align*}
    \mathbb E[Y\mid \bX=\bx,A=a+\delta]-\mathbb E[Y\mid \bX=\bx,A=a]
&=\mathbb E\{Y(a+\delta)-Y(a)\mid \bX=\bx,A=a\}\\
&=\mathbb E\{Y(a+\delta)-Y(a)\mid \bX=\bx\}\\
&=\tau_\delta(a,\bx),
\end{align*}
where the second equality is Assumption~3(A). Note that the proof makes it clear where each condition is used: (A$'$) moves the shifted-arm potential outcome across conditioning arms, while (A) removes the dose from the conditioning set in the contrast.}
\end{proof}

\begin{proof}[Proof of Proposition~2]
{Fix \(\bx\) with \(f_{\bX}(\bx)>0\) and write \(m_0(a,\bx)=\mathbb E[Y\mid\bX=\bx,A=a]\) for the observed outcome regression. By consistency (Assumption~1), \(Y=Y(A)\) almost surely, so
\[
\mathbb E\{Y(A)\mid\bX=\bx\}=\mathbb E[Y\mid\bX=\bx].
\]
Next, by the tower property,
\[
\mathbb E\{Y(A+\delta)\mid \bX=\bx\}
=\mathbb E\big[\,\mathbb E\{Y(A+\delta)\mid \bX=\bx,A\}\,\big|\,\bX=\bx\big].
\]
For \(a\) in the conditional support of \(A\mid\bX=\bx\), Assumption~2 gives \(f_{A\mid\bX}(a+\delta\mid\bx)>0\), so Assumption~3(A$'$) (with \(q(a,\bx)=a+\delta\)) followed by consistency yields
\begin{align*}
    \mathbb E\{Y(a+\delta)\mid \bX=\bx,A=a\}
&=\mathbb E\{Y(a+\delta)\mid \bX=\bx,A=a+\delta\}\\
&=\mathbb E[Y\mid \bX=\bx,A=a+\delta]
=m_0(a+\delta,\bx).
\end{align*}
Hence \(\mathbb E\{Y(A+\delta)\mid\bX=\bx\}=\mathbb E[m_0(A+\delta,\bx)\mid\bX=\bx]\), and subtracting the two identities gives
\[
\tau_\delta(\bx)
=\mathbb E\{Y(A+\delta)-Y(A)\mid\bX=\bx\}
=\mathbb E[m_0(A+\delta,\bx)\mid\bX=\bx]-\mathbb E[Y\mid\bX=\bx].
\]
Note that only Assumption~3(A$'$) is used: the CMTP is identified without the mean nudge exchangeability condition (A), and the identification formula involves only the observed outcome regression \(m_0\), never the counterfactual CADRF.}
\end{proof}

\begin{proof}[Proof of Proposition~3]
{Fix \((a,\bx)\) with \(f_{A\mid\bX}(a\mid\bx)>0\), so that \(f_{A\mid\bX}(a+\delta\mid\bx)>0\) by Assumption~2 and hence \(\pi_\Lambda(\bx,a)\in(0,1)\). Throughout this proof we work with the recoded arm indicator \(\widetilde\Lambda=\Lambda+\tfrac12\in\{0,1\}\) of the main text, and for notational simplicity write}
\[
\pi(a,\bx)
=
\Pr({\widetilde\Lambda}=1\mid A_\Lambda=a,\bX=\bx){=\pi_\Lambda(\bx,a)},
\]
and let \(w_+(a,\bx)\) and \(w_-(a,\bx)\) denote the weights assigned to the
\({\widetilde\Lambda}=1\) and \({\widetilde\Lambda}=0\) duplicated arms, respectively. Also define
\[
\mu_+(a,\bx)
=
\mathbb E[Y\mid A_\Lambda=a,\bX=\bx,{\widetilde\Lambda}=1],
\qquad
\mu_-(a,\bx)
=
\mathbb E[Y\mid A_\Lambda=a,\bX=\bx,{\widetilde\Lambda}=0].
\]
By construction of the duplicated data distribution, any row with \({\widetilde\Lambda}=1\) with
\(A_\Lambda=a\) corresponds to the observed treatment value \(A=a+\delta\),
whereas any row with \({\widetilde\Lambda}=0\)  corresponds to the observed treatment value
\(A=a\). Because the data-duplication label is assigned independently of \(Y\) given \((A,\bX)\), conditioning on \((\bX=\bx, A_\Lambda=a, \widetilde\Lambda=1)\) is equivalent to conditioning on \((\bX=\bx, A=a+\delta)\). Hence
\[
\mu_+(a,\bx)=\mathbb E[Y\mid \bX=\bx,A=a+\delta],
\qquad
\mu_-(a,\bx)=\mathbb E[Y\mid \bX=\bx,A=a].
\]
Now condition on \(A_\Lambda=a\) and \(\bX=\bx\). Writing
\(\Gamma={\widetilde\Lambda}-1/2\) {(which coincides with the main text's \(\Lambda\))}, the population weighted squared loss is
\[
\ell_W(f\mid a,\bx)
=
\mathbb E\left[
w(A_\Lambda,\bX,\Lambda)
\{Y-\Gamma f(a,\bx)\}^2
\mid A_\Lambda=a,\bX=\bx
\right].
\]
Since \(f(a,\bx)\) is scalar at the fixed point \((a,\bx)\), the first-order
condition is
\[
0
=
\frac{\partial}{\partial f}
\ell_W(f\mid a,\bx)
=
-2\mathbb E\left[
w\Gamma\{Y-\Gamma f(a,\bx)\}
\mid A_\Lambda=a,\bX=\bx
\right].
\]
{Since \(w_+,w_->0\) and \(0<\pi(a,\bx)<1\), we have \(\mathbb E[w\Gamma^2\mid A_\Lambda=a,\bX=\bx]>0\), so \(f\mapsto\ell_W(f\mid a,\bx)\) is a strictly convex quadratic and the first-order condition characterizes its unique minimizer.} Therefore {the} minimizer satisfies
\[
f_W^*(a,\bx)
=
\frac{
\mathbb E\left[
w\Gamma Y\mid A_\Lambda=a,\bX=\bx
\right]
}{
\mathbb E\left[
w\Gamma^2\mid A_\Lambda=a,\bX=\bx
\right]
}.
\]
Because \(\Gamma=1/2\) when \(\Lambda=1\) and \(\Gamma=-1/2\) when
\(\Lambda=0\), the numerator is
\[
\mathbb E\left[
w\Gamma Y\mid A_\Lambda=a,\bX=\bx
\right]
=
\frac12 w_+(a,\bx)\pi(a,\bx)\mu_+(a,\bx)
-
\frac12 w_-(a,\bx)\{1-\pi(a,\bx)\}\mu_-(a,\bx),
\]
and the denominator is
\[
\mathbb E\left[
w\Gamma^2\mid A_\Lambda=a,\bX=\bx
\right]
=
\frac14 w_+(a,\bx)\pi(a,\bx)
+
\frac14 w_-(a,\bx)\{1-\pi(a,\bx)\}.
\]
By the balancing condition (3),
\[
w_+(a,\bx)\pi(a,\bx)
=
w_-(a,\bx)\{1-\pi(a,\bx)\}.
\]
Let the common value be \(c(a,\bx)\). Then
\[
\mathbb E\left[
w\Gamma Y\mid A_\Lambda=a,\bX=\bx
\right]
=
\frac12 c(a,\bx)\{\mu_+(a,\bx)-\mu_-(a,\bx)\},
\]
whereas
\[
\mathbb E\left[
w\Gamma^2\mid A_\Lambda=a,\bX=\bx
\right]
=
\frac12 c(a,\bx).
\]
Thus
\[
f_W^*(a,\bx)
=
\mu_+(a,\bx)-\mu_-(a,\bx).
\]
Using the duplicated-data identities above,
\[
f_W^*(a,\bx)
=
\mathbb E[Y\mid \bX=\bx,A=a+\delta]
-
\mathbb E[Y\mid \bX=\bx,A=a].
\]
By Proposition~1, this observed data
contrast is equal to \(\tau_\delta(a,\bx)\). Therefore
\[
f_W^*(a,\bx)=\tau_\delta(a,\bx).
\]
It remains to show that augmentation does not change the population minimizer.
For the augmented loss, again conditioning on \(A_\Lambda=a\) and
\(\bX=\bx\), the first-order condition gives
\[
f_{W,\mathrm{aug}}^*(a,\bx)
=
\frac{
\mathbb E\left[
w\Gamma\{Y-m(A_\Lambda,\bX)\}
\mid A_\Lambda=a,\bX=\bx
\right]
}{
\mathbb E\left[
w\Gamma^2
\mid A_\Lambda=a,\bX=\bx
\right]
}.
\]
At the fixed point \((A_\Lambda,\bX)=(a,\bx)\), the augmentation term
\(m(A_\Lambda,\bX)=m(a,\bx)\) is common to both duplicated arms. Hence, using
the same balancing condition as above,
\begin{align*}
&\mathbb E\left[
w\Gamma\{Y-m(A_\Lambda,\bX)\}
\mid A_\Lambda=a,\bX=\bx
\right]\\
&=
\frac12 c(a,\bx)\{\mu_+(a,\bx)-m(a,\bx)\}  -
\frac12 c(a,\bx)\{\mu_-(a,\bx)-m(a,\bx)\} \\
&=
\frac12 c(a,\bx)\{\mu_+(a,\bx)-\mu_-(a,\bx)\}.
\end{align*}
The denominator remains \(\frac12 c(a,\bx)\). Therefore
\[
f_{W,\mathrm{aug}}^*(a,\bx)
=
\mu_+(a,\bx)-\mu_-(a,\bx)
=
\mathbb E[Y\mid \bX=\bx,A=a+\delta]
-
\mathbb E[Y\mid \bX=\bx,A=a].
\]
Applying Proposition~1 once more gives
\[
f_{W,\mathrm{aug}}^*(a,\bx)=\tau_\delta(a,\bx).
\]
Combining the two parts,
\[
f_W^*(a,\bx)
=
f_{W,\mathrm{aug}}^*(a,\bx)
=
\tau_\delta(a,\bx),
\]
as claimed. {Finally, since the marginal risk satisfies \(\ell_W(f)=\mathbb E\{\ell_W(f\mid A_\Lambda,\bX)\}\) and the pointwise minimizer \((a,\bx)\mapsto\mu_+(a,\bx)-\mu_-(a,\bx)\) is measurable, minimizing the marginal risk over all measurable \(f\) is achieved by minimizing the conditional risk at each \((a,\bx)\); hence \(f_W^*\) and \(f_{W,\mathrm{aug}}^*\) are the \(P_{(A_\Lambda,\bX)}\)-almost-everywhere unique minimizers of the marginal losses.}
\end{proof}

\begin{proof}[Proof of Proposition~4]
{Fix \((a,\bx)\) with \(f_{A\mid\bX}(a\mid\bx)>0\), so that \(\pi_\Lambda(\bx,a)\in(0,1)\) by Assumption~2. As in the previous proof we work with the recoded arm indicator \(\widetilde\Lambda=\Lambda+\tfrac12\in\{0,1\}\), and write}
\[
\pi(a,\bx)
=
\Pr({\widetilde\Lambda}=1\mid A_\Lambda=a,\bX=\bx){=\pi_\Lambda(\bx,a)},
\qquad
g(a,\bx,{\widetilde\Lambda})
=
{\widetilde\Lambda}-\pi(a,\bx).
\]
Also define the two conditional outcome means in the duplicated sample by
\[
\mu_+(a,\bx)
=
\mathbb E[Y\mid A_\Lambda=a,\bX=\bx,\widetilde\Lambda=1],
\qquad
\mu_-(a,\bx)
=
\mathbb E[Y\mid A_\Lambda=a,\bX=\bx,\widetilde\Lambda=0].
\]
By construction of the duplicated data, the \(\widetilde\Lambda=1\) row at
\(A_\Lambda=a\) corresponds to the observed treatment value \(A=a+\delta\),
whereas the \(\widetilde\Lambda=0\) row corresponds to the observed treatment value
\(A=a\). Therefore,
\[
\mu_+(a,\bx)=\mathbb E[Y\mid \bX=\bx,A=a+\delta],
\qquad
\mu_-(a,\bx)=\mathbb E[Y\mid \bX=\bx,A=a].
\]

Conditioning on \(A_\Lambda=a\) and \(\bX=\bx\), the population A-learning
loss is
\[
\ell_A(f\mid a,\bx)
=
\mathbb E\left[
\{Y-g(a,\bx,\widetilde\Lambda)f(a,\bx)\}^2
\mid A_\Lambda=a,\bX=\bx
\right].
\]
Since \(f(a,\bx)\) is scalar at the fixed point \((a,\bx)\), the first-order
condition is
\[
0
=
\frac{\partial}{\partial f}\ell_A(f\mid a,\bx)
=
-2\mathbb E\left[
g(a,\bx,\widetilde\Lambda)
\{Y-g(a,\bx,\widetilde\Lambda)f(a,\bx)\}
\mid A_\Lambda=a,\bX=\bx
\right].
\]
{Since \(0<\pi(a,\bx)<1\), we have \(\mathbb E[g^2\mid A_\Lambda=a,\bX=\bx]=\pi(a,\bx)\{1-\pi(a,\bx)\}>0\), so the conditional loss is strictly convex in \(f(a,\bx)\) and the first-order condition characterizes its unique minimizer.} Thus {the} minimizer satisfies
\[
f_A^*(a,\bx)
=
\frac{
\mathbb E\left[
g(a,\bx,\widetilde\Lambda)Y
\mid A_\Lambda=a,\bX=\bx
\right]
}{
\mathbb E\left[
g(a,\bx,\widetilde\Lambda)^2
\mid A_\Lambda=a,\bX=\bx
\right]
}.
\]
Now, conditional on \((A_\Lambda,\bX)=(a,\bx)\), we have
\(g=1-\pi(a,\bx)\) when \(\widetilde\Lambda=1\), and
\(g=-\pi(a,\bx)\) when \(\widetilde\Lambda=0\). Hence
\begin{align*}
&\mathbb E\left[
g(a,\bx,\widetilde\Lambda)Y
\mid A_\Lambda=a,\bX=\bx
\right]\\
&=
\pi(a,\bx)\{1-\pi(a,\bx)\}\mu_+(a,\bx) 
-\{1-\pi(a,\bx)\}\pi(a,\bx)\mu_-(a,\bx) \\
&=
\pi(a,\bx)\{1-\pi(a,\bx)\}
\{\mu_+(a,\bx)-\mu_-(a,\bx)\}.
\end{align*}
Similarly,
\begin{align*}
\mathbb E\left[
g(a,\bx,\widetilde\Lambda)^2
\mid A_\Lambda=a,\bX=\bx
\right]
&=
\pi(a,\bx)\{1-\pi(a,\bx)\}^2
+
\{1-\pi(a,\bx)\}\pi(a,\bx)^2 \\
&=
\pi(a,\bx)\{1-\pi(a,\bx)\}.
\end{align*}
Therefore,
\[
f_A^*(a,\bx)
=
\mu_+(a,\bx)-\mu_-(a,\bx).
\]
Using the duplicated-data identities above,
\[
f_A^*(a,\bx)
=
\mathbb E[Y\mid \bX=\bx,A=a+\delta]
-
\mathbb E[Y\mid \bX=\bx,A=a].
\]
By Proposition~1, this contrast equals
\(\tau_\delta(a,\bx)\). Hence
\[
f_A^*(a,\bx)=\tau_\delta(a,\bx).
\]

It remains to verify that augmentation does not alter the target. Conditioning
again on \(A_\Lambda=a\) and \(\bX=\bx\), the augmented A-learning loss gives
the first-order condition
\[
f_{A,\mathrm{aug}}^*(a,\bx)
=
\frac{
\mathbb E\left[
g(a,\bx,\widetilde\Lambda)\{Y-m(A_\Lambda,\bX)\}
\mid A_\Lambda=a,\bX=\bx
\right]
}{
\mathbb E\left[
g(a,\bx,\widetilde\Lambda)^2
\mid A_\Lambda=a,\bX=\bx
\right]
}.
\]
At the fixed point \((A_\Lambda,\bX)=(a,\bx)\), the augmentation term
\(m(A_\Lambda,\bX)=m(a,\bx)\) is common to both duplicated arms. Therefore,
\begin{align*}
&\mathbb E\left[
g(a,\bx,\widetilde\Lambda)\{Y-m(A_\Lambda,\bX)\}
\mid A_\Lambda=a,\bX=\bx
\right] \\
&\quad =
\pi(a,\bx)\{1-\pi(a,\bx)\}\{\mu_+(a,\bx)-m(a,\bx)\}
-
\{1-\pi(a,\bx)\}\pi(a,\bx)\{\mu_-(a,\bx)-m(a,\bx)\} \\
&\quad =
\pi(a,\bx)\{1-\pi(a,\bx)\}
\{\mu_+(a,\bx)-\mu_-(a,\bx)\}.
\end{align*}
The denominator is again
\[
\mathbb E\left[
g(a,\bx,\widetilde\Lambda)^2
\mid A_\Lambda=a,\bX=\bx
\right]
=
\pi(a,\bx)\{1-\pi(a,\bx)\}.
\]
Thus
\[
f_{A,\mathrm{aug}}^*(a,\bx)
=
\mu_+(a,\bx)-\mu_-(a,\bx)
=
\mathbb E[Y\mid \bX=\bx,A=a+\delta]
-
\mathbb E[Y\mid \bX=\bx,A=a].
\]
Applying Proposition~1 once more gives
\[
f_{A,\mathrm{aug}}^*(a,\bx)=\tau_\delta(a,\bx).
\]
Combining the two parts,
\[
f_A^*(a,\bx)
=
f_{A,\mathrm{aug}}^*(a,\bx)
=
\tau_\delta(a,\bx),
\]
as claimed. {As in the proof of Proposition~3, minimizing the marginal risks \(\ell_A(f)\) and \(\ell_{A,\mathrm{aug}}(f)\) over measurable \(f\) is equivalent to this pointwise minimization for \(P_{(A_\Lambda,\bX)}\)-almost every \((a,\bx)\).}
\end{proof}

\begin{proof}[Proof of Proposition~5]
Fix \(\bx\) with \(f_{\bX}(\bx)>0\). As in the preceding proofs, write \(\widetilde\Lambda=\Lambda+\tfrac12\in\{0,1\}\),
\begin{align*}
    &\pi(a,\bx)
=
\Pr(\widetilde\Lambda=1\mid A_\Lambda=a,\bX=\bx)=\pi_\Lambda(\bx,a),\\
&g(a,\bx,\widetilde\Lambda)
=
\widetilde\Lambda-\pi(a,\bx),
\\
&h(a,\bx)=\pi(a,\bx)\{1-\pi(a,\bx)\},
\end{align*}
and let \(m_0(a,\bx)=\mathbb E[Y\mid \bX=\bx, A=a]\) denote the observed outcome regression. Define the arm-specific means \(\mu_+(a,\bx)=\mathbb E[Y\mid A_\Lambda=a,\bX=\bx,\widetilde\Lambda=1]\) and \(\mu_-(a,\bx)=\mathbb E[Y\mid A_\Lambda=a,\bX=\bx,\widetilde\Lambda=0]\); by construction of the duplicated data, \(\mu_+(a,\bx)=m_0(a+\delta,\bx)\) and \(\mu_-(a,\bx)=m_0(a,\bx)\).

\emph{Step 1 (evaluation of the unstabilized loss; proof of Lemma~1 in the main paper).}
Conditioning on \((A_\Lambda,\bX)=(a,\bx)\), the computations in the proof of Proposition~3 give
\begin{align*}
    &\mathbb E[g(A_\Lambda,\bx,\widetilde\Lambda)\,Y\mid A_\Lambda=a,\bX=\bx]
=
h(a,\bx)\{\mu_+(a,\bx)-\mu_-(a,\bx)\},
\\
&\mathbb E[g(A_\Lambda,\bx,\widetilde\Lambda)^2\mid A_\Lambda=a,\bX=\bx]
=
h(a,\bx).
\end{align*}
Averaging over \(A_\Lambda\mid\bX=\bx\) yields
\(N(\bx)=\mathbb E[h(A_\Lambda,\bx)\{m_0(A_\Lambda+\delta,\bx)-m_0(A_\Lambda,\bx)\}\mid\bX=\bx]\) and
\(D(\bx)=\mathbb E[h(A_\Lambda,\bx)\mid\bX=\bx]\), which is the claim of Lemma~1.

\emph{Step 2 (the \(\pi_\Lambda\)-stabilized loss).}
The stabilized loss (7), conditional on \(\bX=\bx\), is
\[
\ell_A^{\mathrm{CMTP}}(f\mid \bx)
=
\mathbb E\left[
\frac{\{Y-g(A_\Lambda,\bx,\widetilde\Lambda)f(\bx)\}^2}
{\pi(A_\Lambda,\bx)}
\Bigm| \bX=\bx
\right].
\]
Since \(f(\bx)\) is scalar after conditioning on \(\bX=\bx\), the first-order condition gives
\[
f_A^*(\bx)
=
\frac{
\mathbb E\left[
\pi(A_\Lambda,\bx)^{-1}g(A_\Lambda,\bx,\widetilde\Lambda)\,Y
\mid \bX=\bx
\right]
}{
\mathbb E\left[
\pi(A_\Lambda,\bx)^{-1}g(A_\Lambda,\bx,\widetilde\Lambda)^2
\mid \bX=\bx
\right]
},
\]
and by Step 1, after division by \(\pi(a,\bx)\),
\[
f_A^*(\bx)
=
\frac{
\mathbb E\big[\{1-\pi(A_\Lambda,\bx)\}\{\mu_+(A_\Lambda,\bx)-\mu_-(A_\Lambda,\bx)\}\mid \bX=\bx\big]
}{
\mathbb E\big[1-\pi(A_\Lambda,\bx)\mid \bX=\bx\big]
}.
\]
The uniqueness of the pointwise minimizer follows from the fact that the denominator is strictly positive, as we show next.

\emph{Step 3 (removing the mixture tilt).}
The conditional density of \(A_\Lambda\) given \(\bX=\bx\) can be expressed as the mixture
\(f_{A_\Lambda\mid\bX}(a\mid\bx)=\tfrac12\{g_{A\mid\bX}(a\mid\bx)+g_{A\mid\bX}(a+\delta\mid\bx)\}\). The definition of \(\pi(a,\bx)\) yields 
\[
\{1-\pi(a,\bx)\}\,f_{A_\Lambda\mid\bX}(a\mid\bx)
=
\tfrac12\, g_{A\mid\bX}(a\mid\bx).
\]
Consequently, the denominator is
\(\int \{1-\pi(a,\bx)\}f_{A_\Lambda\mid\bX}(a\mid\bx)\,\mathrm da
=\tfrac12\int g_{A\mid\bX}(a\mid\bx)\,\mathrm da=\tfrac12\),
and the numerator equals
\begin{align*}
    &\int \{1-\pi(a,\bx)\}\{m_0(a+\delta,\bx)-m_0(a,\bx)\}f_{A_\Lambda\mid\bX}(a\mid\bx)\,\mathrm da\\
&=
\tfrac12\int \{m_0(a+\delta,\bx)-m_0(a,\bx)\}\,g_{A\mid\bX}(a\mid\bx)\,\mathrm da.
\end{align*}
Therefore
\[
f_A^*(\bx)
=
\mathbb E\big[m_0(A+\delta,\bx)-m_0(A,\bx)\mid \bX=\bx\big],
\]
where the expectation is now taken over the \textit{natural} conditional distribution of \(A\) given \(\bX=\bx\). 
Since by the tower property, \(\mathbb E[m_0(A,\bx)\mid\bX=\bx]=\mathbb E[Y\mid\bX=\bx]\), Proposition~2 results in \(f_A^*(\bx)=\tau_\delta(\bx)\) under Assumptions~1, 2, and 3(A$'$).

\emph{Step 4 (augmentation invariance).}
For the augmented stabilized loss (8), the first-order condition involves \(Y-m(A_\Lambda,\bX)\) instead of \(Y\) in the numerator. Then by conditioning on \((A_\Lambda,\bX)=(a,\bx)\), the augmentation term \(m(a,\bx)\) can be seen as the same in each of the duplicated arms, so
\[
\mathbb E\big[g(A_\Lambda,\bx,\widetilde\Lambda)\{Y-m(A_\Lambda,\bX)\}\mid A_\Lambda=a,\bX=\bx\big]
=
h(a,\bx)\{\mu_+(a,\bx)-\mu_-(a,\bx)\},
\]
as in Step 1; the denominator is unchanged. Thus \(f_{A,\mathrm{aug}}^*(\bx)=f_A^*(\bx)=\tau_\delta(\bx)\) for any measurable \(m\).

Finally, since the marginal risks can be constructed by averaging the conditional risks over the distribution of \(\bX\), the marginal minimizers thus agree with the pointwise minimizers, which completes the proof.
\end{proof}

\begin{proof}[Proof of Proposition~6]
Fix \((a,\bx)\) with \(\pi:=\pi_\Lambda(\bx,a)\in(0,1)\), and let
\(\mu_+(a,\bx)\) and \(\mu_-(a,\bx)\) denote the two arm-specific means, as in the proof of Proposition~3. Conditional on \((A_\Lambda,\bX)=(a,\bx)\),
\(\widetilde\Lambda-\pi\) equals \(1-\pi\) with probability \(\pi\) and \(-\pi\) with
probability \(1-\pi\), so
\[
\ell_A^{\mathrm{NLL}}(f;a,\bx)
=\pi\bigl\{-\mu_+(a,\bx)(1-\pi)f+b\bigl((1-\pi)f\bigr)\bigr\}
+(1-\pi)\bigl\{\mu_-(a,\bx)\pi f+b(-\pi f)\bigr\}.
\]
Differentiating in the scalar \(f\) and using \(b'=\mu\),
\[
\frac{\partial}{\partial f}\ell_A^{\mathrm{NLL}}(f;a,\bx)
=\pi(1-\pi)\Bigl[\mu\bigl((1-\pi)f\bigr)-\mu(-\pi f)
-\{\mu_+(a,\bx)-\mu_-(a,\bx)\}\Bigr],
\]
and
\(\frac{\partial^2}{\partial f^2}\ell_A^{\mathrm{NLL}}(f;a,\bx)
=\pi(1-\pi)^2 b''\bigl((1-\pi)f\bigr)+(1-\pi)\pi^2 b''(-\pi f)>0\),
since \(b\) is strictly convex. Hence the conditional loss is strictly convex in
\(f\), and any minimizer satisfies
\(\mu\bigl((1-\pi)f\bigr)-\mu(-\pi f)=\mu_+(a,\bx)-\mu_-(a,\bx)\);
because the left-hand side is strictly increasing in \(f\), this solution is
unique when it exists. By Proposition~1,
\(\mu_+(a,\bx)-\mu_-(a,\bx)=\tau_\delta(a,\bx)\). Finally, since
\(\ell_A^{\mathrm{NLL}}(f)=\mathbb E\{\ell_A^{\mathrm{NLL}}(f;A_\Lambda,\bX)\}\),
the assumed unique marginal minimizer \(f_A^*\) coincides with the pointwise minimizer and therefore satisfies the
displayed identity for almost every \((a,\bx)\).
\end{proof}

\begin{proof}[Proof of Lemma~2]
Write \(g_i^\lambda=\widetilde\Lambda_i^\lambda-\pi_\Lambda(\bZ_i^\lambda)\) for the centered arm score with the true propensity, and \(\widetilde\bb_i^\lambda=g_i^\lambda\bb_i^\lambda\). By the law of large numbers on the duplicated sample and uniform consistency of \(\widehat\pi_\Lambda\),
\[
\bbP_n^\Lambda\,\widetilde\bb_i^\lambda\widetilde\bb_i^{\lambda\top}
\;\overset{p}{\to}\;
\mathbb E^\Lambda\big[(\widetilde\Lambda-\pi_\Lambda)^2\,\bb\bb^\top\big]
=
\mathbb E^\Lambda\big[h_A(A_\Lambda,\bX)\,\bb(A_\Lambda,\bX)\bb(A_\Lambda,\bX)^\top\big],
\]
using \(\mathbb E[(\widetilde\Lambda-\pi_\Lambda)^2\mid A_\Lambda,\bX]=h_A(A_\Lambda,\bX)\). Similarly, since
$$\mathbb E[(\widetilde\Lambda-\pi_\Lambda)Y\mid A_\Lambda,\bX]
=h_A(A_\Lambda,\bX)\{\mu_+(A_\Lambda,\bX)-\mu_-(A_\Lambda,\bX)\}
=h_A(A_\Lambda,\bX)\,\tau_\delta(A_\Lambda,\bX)$$
under the conditions of Proposition~1,
\[
\bbP_n^\Lambda\,\widetilde\bb_i^\lambda Y_i
\;\overset{p}{\to}\;
\mathbb E^\Lambda\big[h_A(A_\Lambda,\bX)\,\bb(A_\Lambda,\bX)\,\tau_\delta(A_\Lambda,\bX)\big].
\]
By nonsingularity of the limiting Gram matrix and the continuous mapping theorem,
\(\widehat\btheta_A\overset{p}{\to}\btheta_A^*\) with
\[
\btheta_A^*
=
\mathbb E^\Lambda\big[h_A\,\bb\bb^\top\big]^{-1}
\mathbb E^\Lambda\big[h_A\,\bb\,\tau_\delta\big],
\]
which is exactly the normal-equation characterization of
\(\argmin_{\btheta}\mathbb E^\Lambda[h_A(A_\Lambda,\bX)\{\tau_\delta(A_\Lambda,\bX)-\btheta^\top\bb(A_\Lambda,\bX)\}^2]\).
\end{proof}

\subsection{Proofs of Theorems}

Recall the notations $Z_i^\lambda := (A_i-\lambda\delta,\bX_i), 
    b_i^\lambda := b(Z_i^\lambda) \text{ and }
    Y_i^\lambda := Y_i \text{ for }
    \lambda\in\{0,1\}.$
Thus \(Z_i^0=(A_i,\bX_i)\) denotes the observed arm and
\(Z_i^1=(A_i-\delta,\bX_i)\) denotes the shifted arm. The duplicated empirical average
\[
    \bbP_n^\Lambda f_i^\lambda
    :=
    \frac1n\sum_{i=1}^n\sum_{\lambda=0}^1 f_i^\lambda .
\]
For A-learning, define the transformed basis $\widetilde b_i^\lambda
    :=
    \left\{
        \widetilde\Lambda_i^\lambda
        -
        \widehat\pi_\Lambda(Z_i^\lambda)
    \right\}
    b_i^\lambda,$
where \(\widetilde\Lambda_i^\lambda\) denotes the treatment-arm coding used in the
duplicated-arm construction and $\widehat\pi_\Lambda(Z_i^\lambda) = \widehat{\pi}(A_{i}, \bX_{i}, \Lambda_i)$. Then the A-learning empirical loss is
$$\widehat\theta_A=
    \argmin_\theta
    \bbP_n^\Lambda
    \left\{
        Y_i-\theta^\top \widetilde b_i^\lambda
    \right\}^2 = \left\{
        \bbP_n^\Lambda
        \widetilde b_i^\lambda\widetilde b_i^{\lambda\top}
    \right\}^{-1}
    \left\{
        \bbP_n^\Lambda
        \widetilde b_i^\lambda Y_i
    \right\}.$$
We now prove the theorems.

\begin{proof}[Proof of Theorem~1]
\noindent\textbf{Part (i): asymptotic normality of $\widehat{\btheta}_A$.}
We first note that $\widehat{\btheta}_A = \widehat{\bD}^{-1}_n\widehat{\bd}_n$, where
	\begin{align*}
		\frac{1}{2}\widehat{\bd}_n = {} & \frac{1}{2}\bbP_n \left[ \left\{(1-\widehat{\pi}_\Lambda(A-\delta, \bX)){\bb}(A-\delta, \bX) - \widehat{\pi}_\Lambda(A, \bX){\bb}(A, \bX)\right\}Y \right] \\
		\text{and} & \\
		\frac{1}{2}\widehat{\bD}_n = {} & \frac{1}{2}\bbP_n \left[ (1-\widehat{\pi}_\Lambda(A-\delta, \bX))^2{\bb}(A-\delta, \bX){\bb}(A-\delta, \bX)\trans + \widehat{\pi}_\Lambda(A, \bX)^2{\bb}(A, \bX){\bb}(A, \bX)\trans \right],
	\end{align*}
with the $(1-\pi)$-factor attached to the shifted row $(A-\delta,\bX)$ (arm indicator $\widetilde\Lambda=1$) and the $\pi$-factor to the observed row $(A,\bX)$, consistent with the construction in Section~3 of the main paper. Similarly, $\btheta^*_A = \bD^{-1}\bd$ by Lemma~2, where
	\begin{align*}
	\frac{1}{2}\bd = {} & \bbE^\Lambda\left[(\widetilde{\Lambda}- {\pi}_\Lambda(A_\Lambda, \bX)){\bb}(A_\Lambda, \bX) Y\right] \\
	= {} & \frac{1}{2}\bbE_{\bX, A, Y}\left[\left((1-\pi_\Lambda(A-\delta, \bX)){\bb}(A-\delta, \bX) - \pi_\Lambda(A, \bX){\bb}(A, \bX)\right)Y\right] \\
	\equiv {} & \frac{1}{2}P\left[\left((1-\pi_\Lambda(A-\delta, \bX)){\bb}(A-\delta, \bX) - \pi_\Lambda(A, \bX){\bb}(A, \bX)\right)Y\right] \\
	\text{and} & \\
	\frac{1}{2}\bD = {} & \bbE^\Lambda\left[\{\widetilde{\Lambda}- {\pi}_\Lambda(A_\Lambda, \bX)\}^2{\bb}(A_\Lambda, \bX){\bb}(A_\Lambda, \bX)\trans\right] \\
	= {} &  \frac{1}{2}\bbE_{\bX, A}\left[(1-\pi_\Lambda(A-\delta, \bX))^2{\bb}(A-\delta, \bX){\bb}(A-\delta, \bX)\trans + \pi_\Lambda(A, \bX)^2{\bb}(A, \bX){\bb}(A, \bX)\trans\right] \\
	\equiv {} &  \frac{1}{2}P\left[(1-\pi_\Lambda(A-\delta, \bX))^2{\bb}(A-\delta, \bX){\bb}(A-\delta, \bX)\trans + \pi_\Lambda(A, \bX)^2{\bb}(A, \bX){\bb}(A, \bX)\trans\right].
	\end{align*}
We further define the intermediate quantities
\begin{align*}
	\frac{1}{2}\widehat{\bd}
	\equiv {} & \frac{1}{2}P\left[\left\{(1-\widehat{\pi}_\Lambda(A-\delta, \bX)){\bb}(A-\delta, \bX) - \widehat{\pi}_\Lambda(A, \bX){\bb}(A, \bX)\right\}Y\right], \\
	\frac{1}{2}\widehat{\bD}
	\equiv {} &  \frac{1}{2}P\left[(1-\widehat{\pi}_\Lambda(A-\delta, \bX))^2{\bb}(A-\delta, \bX){\bb}(A-\delta, \bX)\trans + \widehat{\pi}_\Lambda(A, \bX)^2{\bb}(A, \bX){\bb}(A, \bX)\trans\right], \\
	\frac{1}{2}{\bd}_n  \equiv {} & \frac{1}{2}\bbP_n \left[ \left\{(1-{\pi}_\Lambda(A-\delta, \bX)){\bb}(A-\delta, \bX) - {\pi}_\Lambda(A, \bX){\bb}(A, \bX)\right\}Y \right], \\
	\text{and} & \\
	\frac{1}{2}{\bD}_n  \equiv {} & \frac{1}{2}\bbP_n \left[ (1-{\pi}_\Lambda(A-\delta, \bX))^2{\bb}(A-\delta, \bX){\bb}(A-\delta, \bX)\trans + {\pi}_\Lambda(A, \bX)^2{\bb}(A, \bX){\bb}(A, \bX)\trans \right].
\end{align*}

\noindent We decompose
	\begin{align}
		\sqrt{n}(\widehat{\btheta}_A - \btheta^*_A) = {} &  \sqrt{n}\widehat{\bD}^{-1}_n\left( \widehat{\bd}_n - \widehat{\bD}_n\bD^{-1}\bd \right) =  \sqrt{n}\widehat{\bD}^{-1}_n\left(\widehat{\bd}_n - \widehat{\bD}_n\btheta^*\right) \nonumber \\
		= {} & \widehat{\bD}^{-1}_n \sqrt{n} \Big( \underbrace{{\bd}_n - {\bD}_n\bD^{-1}\bd }_{\text{\circled{1}}}\Big) + \widehat{\bD}^{-1}_n\sqrt{n}\Big( \underbrace{\left\{\widehat{\bd}_n - \widehat{\bD}_n\btheta^*\right\} - \left\{{\bd}_n  - {\bD}_n\btheta^*\right\}}_{\text{\circled{2}}} \Big). \label{eqn:a_learning_error_decomp}
	\end{align}
We first focus on \circled{1}. We note that $\bbE[{\bd}_n - {\bD}_n\bD^{-1}\bd] = \bd - \bD\btheta^* = \bzero$, so
\begin{align}
	\sqrt{n}\text{\circled{1}} = {} & \sqrt{n}({\bd}_n - {\bD}_n\bD^{-1}\bd) \nonumber \\
  = {} &\bbG_n \left[ (1-\pi_\Lambda(A-\delta, \bX))\bb(A-\delta, \bX) \left\{Y - (1-\pi_\Lambda(A-\delta, \bX))\bb(A-\delta, \bX)\trans \btheta^* \right\}  \right. \nonumber \\
  &	\left.\quad\quad- \pi_\Lambda(A, \bX)\bb(A, \bX)\left\{Y + \pi_\Lambda(A, \bX)\bb(A, \bX)\trans \btheta^*\right\}\right] \nonumber \\
  = {} & O_p(1) \nonumber
\end{align}
by the central limit theorem, using the moment conditions assumed in Section~4 of the main paper.

\noindent We now focus on \circled{2}. Let $\Psi_n(\bdeta, \btheta)$ be
\begin{align*}
	&  \bbP_n\big[ \left\{(1-{\pi}_\Lambda(A-\delta, \bX; \bdeta)){\bb}(A-\delta, \bX) - {\pi}_\Lambda(A, \bX; \bdeta){\bb}(A, \bX)\right\}Y  \\
	& - \left\{(1-{\pi}_\Lambda(A-\delta, \bX; \bdeta))^2{\bb}(A-\delta, \bX){\bb}(A-\delta, \bX)\trans + {\pi}_\Lambda(A, \bX; \bdeta)^2{\bb}(A, \bX){\bb}(A, \bX)\trans\right\}\btheta \big],
\end{align*}
so that $\Psi_n(\bdeta,\btheta)=\bbP_n\psi_A(\bO;\btheta,\bdeta)$. Then by a Taylor expansion we have
\begin{align}
	\sqrt{n}\text{\circled{2}} = {} & \sqrt{n}\left( \left\{\widehat{\bd}_n - \widehat{\bD}_n\btheta^*\right\} - \left\{{\bd}_n  - {\bD}_n\btheta^*\right\} \right) \nonumber \\
	= {} & \sqrt{n}\left(\Psi_n(\widehat{\bdeta}, \btheta^*) - \Psi_n(\bdeta^*, \btheta^*)\right) \nonumber \\
	= {} &  \dot{\Psi}_n(\bdeta^*, \btheta^*)\sqrt{n}(\widehat{\bdeta} - \bdeta^*) + o_p(1),
\end{align}
where
\begin{align*}
	& \dot{\Psi}_n(\bdeta^*, \btheta^*) \\ & = \bbP_n\left[  -\left\{{\bb}(A-\delta, \bX)\frac{\partial {\pi}_\Lambda(A-\delta, \bX; \bdeta)}{\partial \bdeta\trans}\Bigg\vert_{\bdeta = \bdeta^*}  + {\bb}(A, \bX)\frac{\partial {\pi}_\Lambda(A, \bX; \bdeta)}{\partial \bdeta\trans}\Bigg\vert_{\bdeta = \bdeta^*} \right\}Y  \right. \\
	&\quad\quad  + 2(1-{\pi}_\Lambda(A-\delta, \bX; \bdeta^*)) {\bb}(A-\delta, \bX){\bb}(A-\delta, \bX)\trans\btheta^* \frac{\partial {\pi}_\Lambda(A-\delta, \bX; \bdeta)}{\partial \bdeta\trans}\Bigg\vert_{\bdeta = \bdeta^*}    \\
	& \left. \quad\quad\quad - 2{\pi}_\Lambda(A, \bX; \bdeta^*){\bb}(A, \bX){\bb}(A, \bX)\trans\btheta^*  \frac{\partial {\pi}_\Lambda(A, \bX; \bdeta)}{\partial \bdeta\trans}\Bigg\vert_{\bdeta = \bdeta^*}     \right].
\end{align*}
Because of the assumptions on the continuity and differentiability of ${\pi}_\Lambda(a, \bx; \cdot)$, $\{{\pi}_\Lambda(A, \bX; \bdeta): \bdeta \in \calH^*\}$ is Lipschitz and hence Donsker (see example 3.2.12 in \citet{van1996weak}) and its derivative is also Donsker. Thus, $\{ {\Psi}_n\}$ and $\{ \dot{\Psi}_n\}$ are both Donsker. Thus, as $n\rightarrow\infty$,
\begin{align*}
	 \dot{\Psi}_n(\bdeta^*, \btheta^*)  \rightarrow {} & -\bbE\left[ \left\{\bb(A-\delta, \bX) \dot{\pi}_\Lambda(A-\delta, \bX; \bdeta^*)\trans  + \bb(A, \bX)\dot{\pi}_\Lambda(A, \bX; \bdeta^*)\trans  \right\}Y \right] \\
	 & + \bbE\left[  2(1-{\pi}_\Lambda(A-\delta, \bX; \bdeta^*)) \bb(A-\delta, \bX)\bb(A-\delta, \bX)\trans\btheta^* \dot{\pi}_\Lambda(A-\delta, \bX; \bdeta^*)\trans \right] \\
	 & - \bbE\left[  2{\pi}_\Lambda(A, \bX; \bdeta^*)\bb(A, \bX)\bb(A, \bX)\trans\btheta^* \dot{\pi}_\Lambda(A, \bX; \bdeta^*)\trans \right] \\
	 \equiv {} & \bB_{\bdeta}
\end{align*}
with probability 1,
where $\dot{\pi}_\Lambda(A, \bX; \bdeta^*) = \frac{\partial {\pi}_\Lambda(A, \bX; \bdeta)}{\partial \bdeta}\Big\vert_{\bdeta = \bdeta^*}$; note $\bB_{\bdeta}$ is the population derivative of $\bbE\{\psi_A(\bO;\btheta_A^*,\bdeta)\}$ with respect to $\bdeta\trans$ at $\bdeta^*$, as in the main paper. Thus, by condition (C1) and Slutsky's theorem, we have
\begin{align*}
	\sqrt{n}\text{\circled{2}} = {} & -\bB_{\bdeta} J_{\bdeta}^{-1} \bbG_n U_{\bdeta}(\bO; \bdeta^*) + o_p(1)
	= O_p(1).
\end{align*}
Thus, by Slutsky's theorem since as $n\rightarrow\infty$, $\widehat{\bD}_n \rightarrow \bD$  with probability 1, we have
\begin{align*}
	&\sqrt{n}(\widehat{\btheta}_A - \btheta^*_A)  \\  &= \bD^{-1}\bbG_n\left\{   \left[ (1-\pi_\Lambda(A-\delta, \bX))\bb(A-\delta, \bX) \left\{Y - (1-\pi_\Lambda(A-\delta, \bX))\bb(A-\delta, \bX)\trans \btheta^* \right\}  \right. \right.  \\
	&	\left.\quad\quad\quad\quad- \pi_\Lambda(A, \bX)\bb(A, \bX)\left\{Y + \pi_\Lambda(A, \bX)\bb(A, \bX)\trans \btheta^*\right\}\right] \nonumber \\
	& \left. \quad\quad\quad\quad-\bB_{\bdeta} J_{\bdeta}^{-1} U_{\bdeta}(\bO;\bdeta^*)  \right\} + o_p(1).
\end{align*}
Thus, as $n\rightarrow\infty$,
\begin{align*}
	\sqrt{n}(\widehat{\btheta}_A - \btheta^*_A) \xrightarrow{d} N(\bzero, \bD^{-1}\bSigma_A(\btheta^*; \bdeta^*)\bD^{-1}),
\end{align*}
where
\begin{align*}
	\bSigma_A(\btheta^*; \bdeta^*) = {} &  \bbE\Bigg[   \Big\{\left[ (1-\pi_\Lambda(A-\delta, \bX))\bb(A-\delta, \bX) \left\{Y - (1-\pi_\Lambda(A-\delta, \bX))\bb(A-\delta, \bX)\trans \btheta^* \right\}  \right.   \\
	&	\left.\quad\quad- \pi_\Lambda(A, \bX)\bb(A, \bX)\left\{Y + \pi_\Lambda(A, \bX)\bb(A, \bX)\trans \btheta^*\right\}\right] \nonumber \\
	&  \quad\quad-\bB_{\bdeta} J_{\bdeta}^{-1} U_{\bdeta}(\bO; \bdeta^*) \Big\}^{\bigotimes 2}  \Bigg],
\end{align*}
\noindent where for a vector or matrix $\bV$, $\bV^{\bigotimes 2} = \bV\bV\trans$. This establishes the first display of the theorem.

\medskip
\noindent\textbf{Part (ii): asymptotic normality of $\widehat{\btheta}_{A,\mathrm{aug}}$.}
We note that $\widehat{\btheta}_{A,aug} = \widehat{\bD}^{-1}_n\widehat{\bd}_{n,\widehat{aug}}$, where
	\begin{align*}
		\frac{1}{2}\widehat{\bd}_{n,\widehat{aug}} = {} & \frac{1}{2}\bbP_n \left[ (1-{\pi}_\Lambda(A-\delta, \bX; \widehat{\bdeta})){\bb}(A-\delta, \bX)(Y - {m}(A-\delta, \bX;\widehat{\bnu})) \right. \\
		& \left. \quad\quad - {\pi}_\Lambda(A, \bX; \widehat{\bdeta}){\bb}(A, \bX)(Y - {m}(A, \bX; \widehat{\bnu})) \right].
	\end{align*}
	Further, define
	\begin{align*}
		\frac{1}{2}{\bd}_{n,aug}  = {} & \frac{1}{2} \bbP_n \left[ (1-{\pi}_\Lambda(A-\delta, \bX; \bdeta^*)){\bb}(A-\delta, \bX)(Y - {m}(A-\delta, \bX;\bnu^*)) \right. \\
		&\left. \quad\quad - {\pi}_\Lambda(A, \bX; \bdeta^*){\bb}(A, \bX)(Y - {m}(A, \bX; \bnu^*)) \right], \\
				& \text{and} \\
		\frac{1}{2}{\bd}_{aug} = {} & \frac{1}{2}P \left[ (1-{\pi}_\Lambda(A-\delta, \bX; \bdeta^*)){\bb}(A-\delta, \bX)(Y - {m}(A-\delta, \bX;\bnu^*)) \right. \\
		& \left. \quad\quad - {\pi}_\Lambda(A, \bX; \bdeta^*){\bb}(A, \bX)(Y - {m}(A, \bX;\bnu^*)) \right], \\
				& \text{and} \\
		\frac{1}{2}\widehat{\bd}_{n,aug} = {} & \frac{1}{2}\bbP_n \left[ (1-{\pi}_\Lambda(A-\delta, \bX; \widehat{\bdeta})){\bb}(A-\delta, \bX)(Y - {m}(A-\delta, \bX;\bnu^*)) \right. \\
		& \left. \quad\quad - {\pi}_\Lambda(A, \bX; \widehat{\bdeta}){\bb}(A, \bX)(Y - {m}(A, \bX;\bnu^*)) \right].
	\end{align*}

	\noindent We decompose
	\begin{align}
		\sqrt{n}(\widehat{\btheta}_{A,aug} - \btheta^*_A) = {} & \widehat{\bD}^{-1}_n \sqrt{n} \Big( \underbrace{{\bd}_{n,aug} - {\bD}_n\bD^{-1}\bd }_{\text{\circled{1b}}}\Big) \nonumber\\ &+ \widehat{\bD}^{-1}_n\sqrt{n}\Big( \underbrace{\left\{\widehat{\bd}_{n,aug} - \widehat{\bD}_n\btheta_A^*\right\} - \left\{{\bd}_{n,aug}  - {\bD}_n\btheta_A^*\right\}}_{\text{\circled{2b}}} \Big) \nonumber \\
		&+ \widehat{\bD}^{-1}_n\sqrt{n}\Big( \underbrace{\widehat{\bd}_{n,\widehat{aug}}  - \widehat{\bd}_{n,aug}}_{\text{\circled{3b}}}  \Big). \label{eqn:a_learning_aug_error_decomp}
	\end{align}

	\noindent We first focus on \circled{1b}. We note that $\btheta_A^* = \bD^{-1}\bd = \bD^{-1}\bd_{aug}$, since
$P[(1-\pi_\Lambda(A-\delta,\bX))\bb(A-\delta,\bX)m(A-\delta,\bX;\bnu^*)-\pi_\Lambda(A,\bX)\bb(A,\bX)m(A,\bX;\bnu^*)]=\bzero$ by the change of variables $a\mapsto a-\delta$ and the identity $\{1-\pi_\Lambda(a,\bx)\}g_{A\mid\bX}(a+\delta\mid\bx)=\pi_\Lambda(a,\bx)g_{A\mid\bX}(a\mid\bx)$. Hence $\bbE[{\bd}_{n,aug} - {\bD}_n\bD^{-1}\bd] = \bd_{aug} - \bD\btheta_A^* = \bzero$, so $\sqrt{n}\text{\circled{1b}}$ becomes
	\begin{align}
		&\bbG_n \big[ (1-\pi_\Lambda(A-\delta, \bX))\bb(A-\delta, \bX) \big\{Y - {m}(A-\delta, \bX;\bnu^*) \nonumber \\& \qquad \qquad \qquad \qquad - (1-\pi_\Lambda(A-\delta, \bX))\bb(A-\delta, \bX)\trans \btheta_A^* \big\} \nonumber \\
		&	\left.\quad\quad- \pi_\Lambda(A, \bX)\bb(A, \bX)\left\{Y - {m}(A, \bX;\bnu^*) + \pi_\Lambda(A, \bX)\bb(A, \bX)\trans \btheta_A^*\right\}\right] = O_p(1) \nonumber
	\end{align}
	by the central limit theorem.

    \noindent Next, we focus on \circled{2b}. Let $\Psi_n(\bdeta, \bnu, \btheta)$ be
\begin{align*}
	  &  \bbP_n\big[ \big\{(1-{\pi}_\Lambda(A-\delta, \bX; \bdeta)){\bb}(A-\delta, \bX)(Y - m(A-\delta,\bX; \bnu)) \\& \qquad \qquad \qquad \qquad - {\pi}_\Lambda(A, \bX; \bdeta){\bb}(A, \bX)(Y - m(A,\bX; \bnu)) \big\} \\
	& - \left\{(1-{\pi}_\Lambda(A-\delta, \bX; \bdeta))^2{\bb}(A-\delta, \bX){\bb}(A-\delta, \bX)\trans + {\pi}_\Lambda(A, \bX; \bdeta)^2{\bb}(A, \bX){\bb}(A, \bX)\trans\right\}\btheta \big].
\end{align*}
Then, since \circled{2b} varies only the propensity parameter (the outcome-model parameter is fixed at $\bnu^*$), a Taylor expansion gives
\begin{align}
	\sqrt{n}\text{\circled{2b}} = {} & \sqrt{n}\left( \left\{\widehat{\bd}_{n,aug} - \widehat{\bD}_n\btheta_A^*\right\} - \left\{{\bd}_{n,aug}  - {\bD}_n\btheta_A^*\right\} \right) \nonumber \\
	= {} & \sqrt{n}\left(\Psi_n(\widehat{\bdeta}, \bnu^*, \btheta^*_A) - \Psi_n(\bdeta^*, \bnu^*, \btheta^*_A)\right) \nonumber \\
	= {} &  \dot{\Psi}_{n,\bdeta}(\bdeta^*, \bnu^*, \btheta^*_A)\sqrt{n}(\widehat{\bdeta} - \bdeta^*) + o_p(1),
\end{align}
where $\dot{\Psi}_{n,\bdeta}(\bdeta^*, \bnu^*, \btheta^*_A)$ is
\begin{align*}
	& \bbP_n\Bigg[  -\bigg\{{\bb}(A-\delta, \bX)(Y - m(A-\delta,\bX; \bnu^*))\frac{\partial {\pi}_\Lambda(A-\delta, \bX; \bdeta)}{\partial \bdeta\trans}\bigg\vert_{\bdeta = \bdeta^*}\\ & \qquad \qquad \qquad \qquad \qquad + {\bb}(A, \bX)(Y - m(A,\bX; \bnu^*))\frac{\partial {\pi}_\Lambda(A, \bX; \bdeta)}{\partial \bdeta\trans}\Bigg\vert_{\bdeta = \bdeta^*} \bigg\}   \\
	&\quad\quad  + 2(1-{\pi}_\Lambda(A-\delta, \bX; \bdeta^*)) {\bb}(A-\delta, \bX){\bb}(A-\delta, \bX)\trans\btheta^*_A \frac{\partial {\pi}_\Lambda(A-\delta, \bX; \bdeta)}{\partial \bdeta\trans}\bigg\vert_{\bdeta = \bdeta^*}    \\
	& \left. \quad\quad - 2{\pi}_\Lambda(A, \bX; \bdeta^*){\bb}(A, \bX){\bb}(A, \bX)\trans\btheta^*_A  \frac{\partial {\pi}_\Lambda(A, \bX; \bdeta)}{\partial \bdeta\trans}\bigg\vert_{\bdeta = \bdeta^*}     \right].
\end{align*}
Because of the assumptions on the continuity and differentiability of ${\pi}_\Lambda(a, \bx; \cdot)$, $\{{\pi}_\Lambda(A, \bX; \bdeta): \bdeta \in \calH^*\}$ is Lipschitz and hence Donsker (see example 3.2.12 in \citet{van1996weak}) and its derivative is also Donsker. Thus, as $n\rightarrow\infty$, $\dot{\Psi}_{n,\bdeta}(\bdeta^*,\bnu^*, \btheta^*_A) \rightarrow \bB_{\bdeta,\mathrm{aug}}$ with probability 1, the population derivative of $\bbE\{\psi_{A,\mathrm{aug}}(\bO;\btheta_A^*,\bdeta,\bnu^*)\}$ with respect to $\bdeta\trans$ at $\bdeta^*$. Thus, by condition (C1) and Slutsky's theorem,
\begin{align*}
	\sqrt{n}\text{\circled{2b}} = {} & -\bB_{\bdeta,\mathrm{aug}} J_{\bdeta}^{-1} \bbG_n U_{\bdeta}(\bO; \bdeta^*) + o_p(1) = O_p(1).
\end{align*}

\noindent	Finally, we focus on the term \circled{3b}.  Let
	\begin{align*}
		\Psi_{n,m}(\bnu, \bdeta) ={} & \bbP_n\left[ {\pi}_\Lambda(A, \bX; {\bdeta}){\bb}(A, \bX){m}(A, \bX; {\bnu})  \right. \\
		&\quad\quad\quad - \left.  (1-{\pi}_\Lambda(A-\delta, \bX;{\bdeta})){\bb}(A-\delta, \bX) {m}(A-\delta, \bX; {\bnu}) \right].
	\end{align*}
	Then by a Taylor expansion and an application of Slutsky's Theorem, we have
	\begin{align*}
		\sqrt{n}\circled{3b} = {} & \sqrt{n}\left( \widehat{\bd}_{n,\widehat{aug}}  - \widehat{\bd}_{n,aug} \right) \\
		= {} & \sqrt{n} \bbP_n\left[ {\pi}_\Lambda(A, \bX; \widehat{\bdeta}){\bb}(A, \bX)\left\{{m}(A, \bX; \widehat{\bnu}) - {m}(A, \bX; \bnu^*) \right\} \right. \\
		&\quad\quad\quad - \left.  (1-{\pi}_\Lambda(A-\delta, \bX;\widehat{\bdeta})){\bb}(A-\delta, \bX) \left\{{m}(A-\delta, \bX; \widehat{\bnu}) - {m}(A-\delta, \bX; \bnu^*) \right\} \right] \\
		= {} & \sqrt{n} \left(\Psi_{n,m}(\bnuhat, \bdetahat) - \Psi_{n,m}(\bnu^*, \bdetahat) \right) \\
		= {} & \dot{\Psi}_{n,\bnu}(\bnu^*, \bdeta^*)\sqrt{n}(\bnuhat - \bnu^*) + o_p(1),
	\end{align*}
where
\begin{align}\label{Psi_deriv_wrt_nu}
    & \dot{\Psi}_{n,\bnu}(\bnu^*, \bdeta^*) \nonumber \\ & = \bbP_n\bigg[  {\pi}_\Lambda(A, \bX; \bdeta^*){\bb}(A, \bX)\frac{\partial m(A, \bX; \bnu)}{\partial \bnu\trans}\bigg\vert_{\bnu = \bnu^*} \nonumber \\ & \qquad \qquad- (1-{\pi}_\Lambda(A-\delta, \bX; \bdeta^*)){\bb}(A-\delta, \bX)\dfrac{\partial m(A - \delta, \bX; \bnu)}{\partial \bnu\trans}\bigg\vert_{\bnu = \bnu^*} \bigg].
\end{align}
Because of the assumptions on the continuity and differentiability of $m(a, \bx; \cdot)$, $\{m(A, \bX; \bnu): \bnu \in \calH_m^*\}$ is Lipschitz and hence Donsker, and its derivative is also Donsker. Thus, as $n \rightarrow \infty$, $\dot{\Psi}_{n,\bnu}(\bnu^*, \bdeta^*) \rightarrow \bB_{\bnu,\mathrm{aug}}$ with probability 1, the population derivative of $\bbE\{\psi_{A,\mathrm{aug}}(\bO;\btheta_A^*,\bdeta^*,\bnu)\}$ with respect to $\bnu\trans$ at $\bnu^*$. Thus, by condition (C4) and Slutsky's theorem, we have
\begin{align*}
	\sqrt{n}\text{\circled{3b}} = {} & -\bB_{\bnu,\mathrm{aug}} J_{\bnu}^{-1} \bbG_n U_{\bnu}(\bO; \bnu^*) + o_p(1) = O_p(1).
\end{align*}
Thus, since as $n\rightarrow\infty$, $\widehat{\bD}_n \rightarrow \bD$  with probability 1, we have
\begin{align*}
	&\sqrt{n}(\widehat{\btheta}_{A,aug} - \btheta^*_A) \\ &= \bD^{-1}\bbG_n \Big[ \psi_{A,\mathrm{aug}}(\bO;\btheta_A^*,\bdeta^*,\bnu^*)
	-\bB_{\bdeta,\mathrm{aug}} J_{\bdeta}^{-1} U_{\bdeta}(\bO; \bdeta^*) - \bB_{\bnu,\mathrm{aug}} J_{\bnu}^{-1} U_{\bnu}(\bO; \bnu^*)  \Big] + o_p(1).
\end{align*}
Thus, as $n\rightarrow\infty$,
\begin{align*}
	\sqrt{n}(\widehat{\btheta}_{A,aug} - \btheta^*_A) \xrightarrow{d} N(\bzero, \bD^{-1}\bSigma_{A,\mathrm{aug}}(\btheta_A^*; \bdeta^*,\bnu^*)\bD^{-1}),
\end{align*}
\noindent where $\bSigma_{A,\mathrm{aug}}(\btheta_A^*; \bdeta^*,\bnu^*)$ is
\begin{align*}
	 &  {\small{\bbE\Bigg[   \Big\{\left[ (1-\pi_\Lambda(A-\delta, \bX))\bb(A-\delta, \bX) \left\{Y - {m}(A-\delta, \bX;\bnu^*) - (1-\pi_\Lambda(A-\delta, \bX))\bb(A-\delta, \bX)\trans \btheta_A^* \right\}  \right.}}  \\
	&	\left.\quad\quad- \pi_\Lambda(A, \bX)\bb(A, \bX)\left\{Y - {m}(A, \bX;\bnu^*) + \pi_\Lambda(A, \bX)\bb(A, \bX)\trans \btheta_A^*\right\}\right] \nonumber \\
	&  \quad\quad-\bB_{\bdeta,\mathrm{aug}} J_{\bdeta}^{-1} U_{\bdeta}(\bO; \bdeta^*) - \bB_{\bnu,\mathrm{aug}} J_{\bnu}^{-1} U_{\bnu}(\bO; \bnu^*)   \Big\}^{\bigotimes 2}  \Bigg],
\end{align*}
matching the second display of the theorem.

\medskip
\noindent\textbf{Part (iii): variance ordering.}
Write $\phi_A(\bO)$ and $\phi_{A,\mathrm{aug}}(\bO)$ for the influence functions displayed in the theorem, so that $\bSigma_A=\bbE[\phi_A^{\otimes 2}]$ and $\bSigma_{A,\mathrm{aug}}=\bbE[\phi_{A,\mathrm{aug}}^{\otimes 2}]$, with the kernels
\[
\begin{aligned}
\psi_A(\bO)
&=(1-\pi_\Lambda(A-\delta,\bX))\,\bb(A-\delta,\bX)\Big\{Y-(1-\pi_\Lambda(A-\delta,\bX))\,\bb(A-\delta,\bX)^\top\btheta^*\Big\}\\[-2pt]
&\hspace{1.8em}-\;\pi_\Lambda(A,\bX)\,\bb(A,\bX)\Big\{Y+\pi_\Lambda(A,\bX)\,\bb(A,\bX)^\top\btheta^*\Big\},\\
\psi_{A,aug}(\bO)
&=(1-\pi_\Lambda(A-\delta,\bX))\,\bb(A-\delta,\bX)\Big\{Y-m(A-\delta,\bX;\bnu^*)\\[-2pt]
&\qquad \qquad \qquad \qquad \qquad \qquad \qquad \qquad-(1-\pi_\Lambda(A-\delta,\bX))\,\bb(A-\delta,\bX)^\top\btheta^*_A\Big\}\\[-2pt]
&\hspace{1.8em}-\;\pi_\Lambda(A,\bX)\,\bb(A,\bX)\Big\{Y-m(A,\bX;\bnu^*)+\pi_\Lambda(A,\bX)\,\bb(A,\bX)^\top\btheta^*_A\Big\}.
\end{aligned}
\]
First, observe the exact identity (obtained by replacing $Y$ with $Y-m$ in $\psi_A$)
\begin{align*}
    \psi_A(\bO)&=\psi_{A,aug}(\bO)\\
&+\underbrace{\big[(1-\pi_\Lambda(A-\delta,\bX))\,\bb(A-\delta,\bX)\,m(A-\delta,\bX;\bnu^*)-\pi_\Lambda(A,\bX)\,\bb(A,\bX)\,m(A,\bX;\bnu^*)\big]}_{=:D(A,\bX)}.
\end{align*}
Consequently,
\[
\phi_A=\phi_{A,aug}+ \Big\{D(A,\bX)+\bB_{\bnu,\mathrm{aug}}\,J_{\bnu}^{-1}U_{\bnu}(\bO;\bnu^*)\Big\}
=:\phi_{A,aug}+R.
\]
Using $\bSigma=\bbE[\phi\,\phi^\top]$, we therefore obtain
\[
\bSigma_A(\btheta^*;\bdeta^*)-\bSigma_{A,aug}(\btheta^*_A;\bdeta^*,\bnu^*)
=\bbE\!\Big[\phi_{A,aug}\,R^\top + R\,\phi_{A,aug}^\top + R\,R^\top\Big].
\]
We now show that the two cross terms vanish. First, when the nuisance-score
coefficients are their population $L_2$-projection coefficients,
\[
\bbE\!\Big[\phi_{A,aug}\,U_{\bnu}(\bO;\bnu^*)^\top\Big]=\bzero
\text{ and }
\bbE\!\Big[\phi_{A,aug}\,U_{\bdeta}(\bO;\bdeta^*)^\top\Big]=\bzero.
\]
Second, by the additional orthogonality condition assumed in the theorem,
\[
\bbE\!\Big[\phi_{A,aug}\,D(A,\bX)^\top\Big]=\bzero.
\]
Combining these two orthogonality relations yields $\bbE[\phi_{A,aug}\,R^\top]=\bzero$; by symmetry also $\bbE[R\,\phi_{A,aug}^\top]=\bzero$.
Therefore,
\[
\bSigma_A(\btheta^*;\bdeta^*)-\bSigma_{A,aug}(\btheta^*_A;\bdeta^*,\bnu^*)
=\bbE\!\Big[R\,R^\top\Big],
\]
which is positive semidefinite. This proves $\bSigma_{A,aug}\preceq \bSigma_A$.
Equality holds if and only if $R\equiv 0$ almost surely, that is,
\begin{align*}
    &(1-\pi_\Lambda(A-\delta,\bX))\,\bb(A-\delta,\bX)\,m(A-\delta,\bX;\bnu^*)
-\pi_\Lambda(A,\bX)\,\bb(A,\bX)\,m(A,\bX;\bnu^*)
\\  &\qquad \qquad +\bB_{\bnu,\mathrm{aug}}\,J_{\bnu}^{-1}U_{\bnu}(\bO;\bnu^*)\equiv 0,
\end{align*}
which is equivalent to $\phi_{A,\mathrm{aug}}=\phi_A$ almost surely, the equality condition in the theorem.
\end{proof}

\begin{remark}
{The following remark is restated from the main paper's Section~4 (where it now appears in abbreviated form).} If \(\widehat{\boldeta}_n\) is an efficient estimator of \(\boldeta^*\),
the general sandwich variance admits the usual projection simplification $\bSigma_A
=
\widetilde\bSigma_A
-
B_{\boldeta}V_{\boldeta}B_{\boldeta}^{\top},
\text{ and }
\widetilde\bSigma_A
=
\mathbb E\!\left[
\psi_A(\bO;\btheta_A^*,\boldeta^*)^{\otimes2}
\right].$
An analogous expression holds for the augmented estimator.
\end{remark}

\begin{proof}[Proof of the Remark following Theorem~2]
We first treat the unaugmented estimator.
Define $\bthetatilde_A = {\bD}^{-1}_n\bd_n$ to be the estimator using the true weights. We decompose
\begin{align}
	\sqrt{n}(\widehat{\btheta}_A - \btheta^*_A)
	= {} & \widehat{\bD}^{-1}_n \sqrt{n} \Big( {\bd}_n - {\bD}_n\bD^{-1}\bd\Big) + \widehat{\bD}^{-1}_n\sqrt{n}\Big( \left\{\widehat{\bd}_n - \widehat{\bD}_n\btheta^*\right\} - \left\{{\bd}_n  - {\bD}_n\btheta^*\right\} \Big) \nonumber \\
	= {} & \sqrt{n}(\bthetatilde_A - \btheta^*_A) - {\bD}^{-1}\bB_{\bdeta}\sqrt{n}(\widehat{\bdeta} - \bdeta^*) + o_p(1). \nonumber
\end{align}
Thus, if $\widehat{\bdeta}$ is efficient and since $\sqrt{n}(\bthetatilde_A - \btheta^*_A)$ and $\sqrt{n}(\widehat{\bdeta} - \bdeta^*)$ are jointly asymptotically normal, by the results of \citet{pierce1982asymptotic}, as $n\rightarrow\infty$
\begin{equation*}
	\sqrt{n}(\widehat{\btheta}_A - \btheta^*_A)  \xrightarrow{d} N(\bzero, \bD^{-1}\left(\bSigmatilde_A(\btheta^*; \bdeta^*) - \bB_{\bdeta} \bV_{\bdeta} \bB_{\bdeta}\trans\right)\bD^{-1}),
\end{equation*}
where
\begin{align*}
	\bSigmatilde_A(\btheta^*; \bdeta^*) = {} &  \bbE\Big[   \big\{ (1-\pi_\Lambda(A-\delta, \bX))\bb(A-\delta, \bX) \left\{Y - (1-\pi_\Lambda(A-\delta, \bX))\bb(A-\delta, \bX)\trans \btheta^* \right\}    \\
	&	\quad\quad- \pi_\Lambda(A, \bX)\bb(A, \bX)\left\{Y + \pi_\Lambda(A, \bX)\bb(A, \bX)\trans \btheta^*\right\}\big\}^{\bigotimes 2}  \Big].
\end{align*}
Now we treat the augmented estimator.
Define $\bthetatilde_{A,aug} = {\bD}^{-1}_n{\bd}_{n,aug}$, the estimator using the true propensity $\pi_\Lambda(\cdot;\bdeta^*)$ but the estimated $\widehat{\bnu}$. By the arguments in Part (ii) of the proof of the theorem,
\begin{align*}
\sqrt{n}(\widehat{\btheta}_{A,aug} - \btheta^*_A)
= \sqrt{n}(\bthetatilde_{A,aug} - \btheta^*_A) - {\bD}^{-1}\bB_{\bdeta,\mathrm{aug}}\sqrt{n}(\widehat{\bdeta} - \bdeta^*) + o_p(1),
\end{align*}
and the two terms are jointly asymptotically normal. Thus, if $\widehat{\bdeta}$ is efficient, by the results of \citet{pierce1982asymptotic}, as $n\rightarrow\infty$,
\begin{equation*}
\sqrt{n}(\widehat{\btheta}_{A,aug} - \btheta^*_A)  \xrightarrow{d} N\big(\bzero,\ \bD^{-1}\big(\bSigmatilde_{A,aug}(\btheta_A^*; \bdeta^*,\bnu^*) - \bB_{\bdeta,\mathrm{aug}} \bV_{\bdeta} \bB_{\bdeta,\mathrm{aug}}\trans\big)\bD^{-1}\big),
\end{equation*}
where
\begin{align*}
\bSigmatilde_{A,aug}(\btheta_A^*; \bdeta^*,\bnu^*) = \bbE\Big[ \big\{ \psi_{A,aug}(\bO;\btheta_A^*,\bdeta^*,\bnu^*) - \bB_{\bnu,\mathrm{aug}}J_{\bnu}^{-1}U_{\bnu}(\bO;\bnu^*) \big\}^{\bigotimes 2} \Big].
\end{align*}
\end{proof}

\noindent
We next translate the asymptotic normality of the conditional nudge estimator
into inference for the plug-in CMTP estimator obtained by averaging the fitted
nudge over the conditional distribution of \(A\mid \bX=\bx\).
\begin{corollary}[Asymptotic normality of plug-in CMTP from A-learning nudge estimator]
\label{cor:cmtp_plugin}
Suppose the conditions of Proposition~1
and conditions {\normalfont(C1)-(C3)} hold, and that the nudge working model is
correctly specified, i.e., \(\tau_\delta(a,\bx)=\bb(a,\bx)^\top\btheta_A^*\) for
all \((a,\bx)\) in the support of the duplicated-arm law, so that
Theorem~1 yields
\(\sqrt{n}(\widehat{\btheta}_A-\btheta_A^*)\rightsquigarrow
N(\bzero,\,\bD^{-1}\bSigma_A\bD^{-1})\).
Fix \(\bx\), assume \(\bbE\{\|\bb(A,\bx)\|\mid\bX=\bx\}<\infty\), and let
\(\bar\bb(\bx):=\bbE\{\bb(A,\bx)\mid \bX=\bx\}\). Define the plug-in CMTP
estimator \(\widehat{\tau}_\delta^{\,\mathrm{plug}}(\bx):=
\widehat{\bar\bb}(\bx)^\top\widehat{\btheta}_A\), where
\(\widehat{\bar\bb}(\bx)\) is any estimator satisfying
\(\|\widehat{\bar\bb}(\bx)-\bar\bb(\bx)\|=o_p(n^{-1/2})\); this rate condition
holds, e.g., when the conditional dose distribution \(g_{A\mid\bX}\) is known by
design, or when \(\bar\bb\) is estimated from an independent auxiliary sample
whose size grows faster than \(n\). Then
\[
\sqrt{n}\big(\widehat{\tau}_\delta^{\,\mathrm{plug}}(\bx)-\tau_\delta(\bx)\big)
\;\rightsquigarrow\;
N\big(0,\;\bar\bb(\bx)^\top\bD^{-1}\bSigma_A\bD^{-1}\bar\bb(\bx)\big).
\]
If instead only \(\|\widehat{\bar\bb}(\bx)-\bar\bb(\bx)\|=o_p(1)\), then
\(\widehat{\tau}_\delta^{\,\mathrm{plug}}(\bx)-\tau_\delta(\bx)
=O_p(n^{-1/2})+O_p\big(\|\widehat{\bar\bb}(\bx)-\bar\bb(\bx)\|\big)\), so the
plug-in estimator inherits the (possibly slower) rate of
\(\widehat{\bar\bb}(\bx)\); in particular, generic nonparametric or cross-fitted
estimators of \(\bar\bb(\bx)\) converge more slowly than \(\sqrt n\), and the
display above then no longer describes the limiting distribution.
\end{corollary}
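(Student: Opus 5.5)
The plan is to reduce the corollary to Theorem~1 by a delta-method style decomposition of the plug-in error. The first step is to pin down the target. Under correct specification, $\tau_\delta(a,\bx)=\bb(a,\bx)^\top\btheta_A^*$ on the support of the duplicated-arm law. Taking conditional expectation over the natural law of $A\mid\bX=\bx$, and using the representation $\tau_\delta(\bx)=\bbE[\tau_\delta(A,\bx)\mid\bX=\bx]$ that holds under the conditions of Proposition~1 (which include Assumption~\ref{ass:exchange}(A)), we get
\[
\tau_\delta(\bx)=\bar\bb(\bx)^\top\btheta_A^*.
\]
Here the integrability assumption $\bbE\{\|\bb(A,\bx)\|\mid\bX=\bx\}<\infty$ makes $\bar\bb(\bx)$ well defined and justifies interchanging expectation with the finite linear combination. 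The natural conditional support of $A$ given $\bX=\bx$ is contained in the support of the mixture law of $A_\Lambda$, because the mixture density is $\tfrac12\{g(a\mid\bx)+g(a+\delta\mid\bx)\}\ge \tfrac12 g(a\mid\bx)$. The linear representation of $\tau_\delta(a,\bx)$ therefore holds $g_{A\mid\bX}(\cdot\mid\bx)$-almost everywhere.

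Next I would write the exact algebraic decomposition
\[
\widehat\tau_\delta^{\,\mathrm{plug}}(\bx)-\tau_\delta(\bx)
=\bar\bb(\bx)^\top(\widehat\btheta_A-\btheta_A^*)
+\{\widehat{\bar\bb}(\bx)-\bar\bb(\bx)\}^\top\btheta_A^*
+\{\widehat{\bar\bb}(\bx)-\bar\bb(\bx)\}^\top(\widehat\btheta_A-\btheta_A^*).
\]
After multiplying by $\sqrt n$, the terms are handled as follows.
\begin{itemize}
\item The first term converges in distribution to $N(0,\bar\bb(\bx)^\top\bD^{-1}\bSigma_A\bD^{-1}\bar\bb(\bx))$. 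This follows from Theorem~1 and the continuous mapping theorem applied to the fixed linear functional $\bar\bb(\bx)^\top$.
\item The second term is bounded by $\sqrt n\|\widehat{\bar\bb}(\bx)-\bar\bb(\bx)\|\,\|\btheta_A^*\|=o_p(1)$, using the assumed $o_p(n^{-1/2})$ rate.
\item The third term is $o_p(n^{-1/2})\cdot O_p(1)=o_p(1)$, since $\sqrt n(\widehat\btheta_A-\btheta_A^*)=O_p(1)$ by Theorem~1.
\end{itemize}
Slutsky's theorem then gives the displayed limit.

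For the final statement, I use the same decomposition without the $\sqrt n$ scaling. The first term is $O_p(n^{-1/2})$. The second is $O_p(\|\widehat{\bar\bb}(\bx)-\bar\bb(\bx)\|)$, since $\btheta_A^*$ is fixed. The third is of smaller order, a product of $O_p(n^{-1/2})$ and $o_p(1)$. Together these give the stated rate. When $\sqrt n\|\widehat{\bar\bb}(\bx)-\bar\bb(\bx)\|$ is not $o_p(1)$, the second term is not negligible after scaling, so the normal limit no longer applies.

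The step that needs the most care is the first one: justifying $\tau_\delta(\bx)=\bar\bb(\bx)^\top\btheta_A^*$. Correct specification is only assumed on the duplicated-arm support, so I must confirm that this support covers the natural law of $A\mid\bX=\bx$; the mixture-density bound above is the argument I would use. I must also confirm that the averaging identity for the CMTP is licensed, which requires Assumption~\ref{ass:exchange}(A) and is included in the conditions of Proposition~1. The rest of the argument is routine once these are settled.
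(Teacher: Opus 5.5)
Your proposal is correct and follows the paper's own proof essentially step for step. Both identify $\tau_\delta(\bx)=\bar\bb(\bx)^\top\btheta_A^*$ via Assumption~3(A) and correct specification, then handle the same three-term decomposition using Theorem~1, the $o_p(n^{-1/2})$ rate condition, and Slutsky's theorem. Your additional check, that the natural conditional support of $A$ given $\bX=\bx$ lies inside the duplicated-arm support via the mixture-density lower bound $\tfrac12\{g(a\mid\bx)+g(a+\delta\mid\bx)\}\ge\tfrac12 g(a\mid\bx)$, is a small refinement the paper leaves implicit.
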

{Thus, Corollary~\ref{cor:cmtp_plugin} shows that when $\bar\bb(\bx)$ can be estimated faster than root-$n$ in rate, e.g. when the treatment's distribution is known by design, the plug-in CMTP estimator inherits the root-$n$ asymptotic normality of the estimator of the nudge effect coefficients. When $\bar\bb(\bx)$ is instead estimated nonparametrically, the plug-in estimator converges at the slower rate of $\widehat{\bar\bb}(\bx)$ and the stated limiting normal distribution no longer applies. Remark~\ref{rem:cmtp_plugin_projection} describes the same-sample projection implementation we use in practice, which explicitly accounts for the additional averaging step.}

\begin{proof}[Proof of Corollary~\ref{cor:cmtp_plugin} (Asymptotic normality of plug-in CMTP from A-learning nudge estimator)]
Under the conditions of Proposition~1,
the nudge surface \(\tau_\delta(a,\bx)\) is identified and, by
Assumption~3(A),
\(\tau_\delta(\bx)=\bbE\{\tau_\delta(A,\bx)\mid\bX=\bx\}\). Correct
specification of the working model then gives
\[
\tau_\delta(\bx)
=\bbE\{\bb(A,\bx)^\top\btheta_A^*\mid\bX=\bx\}
=\bar\bb(\bx)^\top\btheta_A^*,
\]
the interchange being justified by
\(\bbE\{\|\bb(A,\bx)\|\mid\bX=\bx\}<\infty\). Decompose
\[
\widehat{\tau}_\delta^{\,\mathrm{plug}}(\bx)-\tau_\delta(\bx)
=\bar\bb(\bx)^\top(\widehat{\btheta}_A-\btheta_A^*)
+\{\widehat{\bar\bb}(\bx)-\bar\bb(\bx)\}^\top\btheta_A^*
+\{\widehat{\bar\bb}(\bx)-\bar\bb(\bx)\}^\top(\widehat{\btheta}_A-\btheta_A^*).
\]
By Theorem~1,
\(\widehat{\btheta}_A-\btheta_A^*=O_p(n^{-1/2})\) and, by the continuous
mapping theorem,
\(\sqrt n\,\bar\bb(\bx)^\top(\widehat{\btheta}_A-\btheta_A^*)
\rightsquigarrow N\big(0,\,\bar\bb(\bx)^\top\bD^{-1}\bSigma_A\bD^{-1}\bar\bb(\bx)\big)\).
Under the rate condition
\(\|\widehat{\bar\bb}(\bx)-\bar\bb(\bx)\|=o_p(n^{-1/2})\), the second term is
\(o_p(n^{-1/2})\,\|\btheta_A^*\|=o_p(n^{-1/2})\) and the third is
\(o_p(n^{-1/2})\,O_p(n^{-1/2})=o_p(n^{-1/2})\); Slutsky's theorem gives the
first display of the corollary. If only
\(\|\widehat{\bar\bb}(\bx)-\bar\bb(\bx)\|=o_p(1)\), the same decomposition
yields
\(\widehat{\tau}_\delta^{\,\mathrm{plug}}(\bx)-\tau_\delta(\bx)
=O_p(n^{-1/2})+O_p(\|\widehat{\bar\bb}(\bx)-\bar\bb(\bx)\|)\).
\end{proof}

\begin{remark}[Same-sample projection implementation]\label{rem:cmtp_plugin_projection}
In practice (Section~2 of the main paper) we project the fitted nudge onto a
basis \(\bc(\bx)\) using the same sample, i.e.,
\(\widehat\bbeta^{\mathrm{plug}}
=\{\bbP_n \bc_i\bc_i^\top\}^{-1}\bbP_n \bc_i\,\bb(A_i,\bX_i)^\top\widehat\btheta_A\),
which targets \(\bbeta^*=\bM\btheta_A^*\) with
\(\bM:=\{\bbE(\bc\bc^\top)\}^{-1}\bbE\{\bc\,\bb(A,\bX)^\top\}\); under the
conditions of Corollary~\ref{cor:cmtp_plugin}, \(\bbeta^*\) is the
\(L_2(P_{\bX})\) projection coefficient of \(\tau_\delta(\bX)\) onto
\(\bc(\bX)\). Averaging the fitted nudge over the same sample contributes an
additional empirical-process term to the influence function: under
{\normalfont(C1)-(C3)}, finite fourth moments of \((\bb,\bc)\), and
nonsingular \(\bbE(\bc\bc^\top)\),
\(\sqrt n(\widehat\bbeta^{\mathrm{plug}}-\bbeta^*)\rightsquigarrow
N\big(\bzero,\ \bbE\{h(\bO)^{\otimes2}\}\big)\), where
\[
h(\bO)=\bM\,\bD^{-1}\phi_A(\bO)
+\{\bbE(\bc\bc^\top)\}^{-1}\bc(\bX)
\big\{\bb(A,\bX)^\top\btheta_A^*-\bc(\bX)^\top\bbeta^*\big\}.
\]
The second term is not asymptotically negligible in general and is correlated
with the first; ignoring it (as a naive application of
Corollary~\ref{cor:cmtp_plugin} would) misstates the asymptotic variance.
\end{remark}

\begin{lemma}[Population characterization of the $\pi_\Lambda$-stabilized direct CMTP
A-learning minimizer]\label{lem:cmtp_projection}
Suppose Assumptions~1, 2, and
3(A$'$) of the main paper hold, $\pi_\Lambda(\bx,a)\in(0,1)$, and the
propensity model is correctly specified, $\pi_\Lambda(\cdot)=\pi_\Lambda(\cdot;\bdeta^*)$.
Write $\pi_i^\lambda=\pi_\Lambda(\bZ_i^\lambda;\bdeta^*)$,
$g_i^\lambda=\lambda-\pi_i^\lambda$, $m_i^\lambda(\bnu)=m(\bZ_i^\lambda;\bnu)$, and
$\bb_i=b(\bX_i)$, where $\bZ_i^\lambda=(A_i-\lambda\delta,\bX_i)$, $\lambda\in\{0,1\}$.
Then, for any measurable working outcome model $m$ and any $\bnu$, almost surely,
\begin{equation}\label{eq:cmtp_proj_identities}
\bbE\Bigl[\sum_{\lambda\in\{0,1\}}\frac{(g_i^\lambda)^2}{\pi_i^\lambda}\Bigm|\bX_i\Bigr]=1,
\qquad
\bbE\Bigl[\sum_{\lambda\in\{0,1\}}\frac{g_i^\lambda}{\pi_i^\lambda}
\bigl\{Y_i-m_i^\lambda(\bnu)\bigr\}\Bigm|\bX_i\Bigr]
=\tau_\delta(\bX_i).
\end{equation}
Consequently, with
$\psi_{C,\mathrm{aug}}(\bO_i;\btheta,\bdeta,\bnu)
=\sum_{\lambda\in\{0,1\}}\{g_i^\lambda(\bdeta)/\pi_i^\lambda(\bdeta)\}\,\bb_i
\{Y_i-m_i^\lambda(\bnu)-g_i^\lambda(\bdeta)\bb_i^\top\btheta\}$ as in the main paper,
\[
\bbE\bigl\{\psi_{C,\mathrm{aug}}(\bO_i;\btheta,\bdeta^*,\bnu)\bigr\}
=
\bbE\bigl[b(\bX)\bigl\{\tau_\delta(\bX)-b(\bX)^\top\btheta\bigr\}\bigr]
\qquad\text{for every }\btheta\text{ and every }\bnu,
\]
so that, provided $\bD_C:=\bbE\{b(\bX)b(\bX)^\top\}$ is nonsingular, the population
estimating equation has the unique solution
\[
\btheta_C^*
=
\bbE\{b(\bX)b(\bX)^\top\}^{-1}\bbE\{b(\bX)\tau_\delta(\bX)\}
=
\argmin_{\btheta}\bbE\bigl[\{\tau_\delta(\bX)-b(\bX)^\top\btheta\}^2\bigr],
\]
which does not depend on $(m,\bnu)$; in particular
$\bB_{C,\bnu}
=\partial\,\bbE\{\psi_{C,\mathrm{aug}}(\bO_i;\btheta_C^*,\bdeta^*,\bnu)\}
/\partial\bnu^\top\vert_{\bnu=\bnu^*}=\bzero$.
\end{lemma}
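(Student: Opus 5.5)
The plan is to evaluate the two conditional expectations in \eqref{eq:cmtp_proj_identities} row by row, using the density-ratio identity
\[
\{1-\pi_\Lambda(\bx,a)\}/\pi_\Lambda(\bx,a)=g_{A\mid\bX}(a\mid\bx)/g_{A\mid\bX}(a+\delta\mid\bx)
\]
from Section~2.3, followed by a change of variables $a\mapsto a-\delta$. The remaining claims of the lemma then follow by the tower property.

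\emph{Observed row ($\lambda=0$).} Here $\bZ_i^0=(A_i,\bX_i)$ and $g_i^0=-\pi_i^0$. Hence $(g_i^0)^2/\pi_i^0=\pi_\Lambda(A_i,\bX_i)$ and $g_i^0/\pi_i^0=-1$, so this row contributes
\[
\bbE[\pi_\Lambda(A,\bx)\mid\bX=\bx]
\quad\text{and}\quad
-\bbE[Y-m(A,\bx;\bnu)\mid\bX=\bx]=-\bbE[m_0(A,\bx)-m(A,\bx;\bnu)\mid\bX=\bx]
\]
to the two identities.

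\emph{Shifted row ($\lambda=1$).} Here $\bZ_i^1=(A_i-\delta,\bX_i)$ and $g_i^1=1-\pi_\Lambda(A_i-\delta,\bX_i)$. The density-ratio identity at $a=A-\delta$ gives
\[
g_i^1/\pi_i^1=g_{A\mid\bX}(A-\delta\mid\bX)/g_{A\mid\bX}(A\mid\bX).
\]
Writing the conditional expectation over $A\mid\bX=\bx$ as an integral, the factor $g_{A\mid\bX}(a\mid\bx)$ cancels. Substituting $u=a-\delta$, the first identity's contribution becomes $\int\{1-\pi_\Lambda(u,\bx)\}g_{A\mid\bX}(u\mid\bx)\,du$. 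The second identity's contribution becomes $\int g_{A\mid\bX}(u\mid\bx)\{m_0(u+\delta,\bx)-m(u,\bx;\bnu)\}\,du$; here I first replace $Y$ by $m_0(A,\bX)$ via the tower property, since the weight depends only on $(A,\bX)$.

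\emph{Adding the rows.} The first identity gives $\int\pi_\Lambda g+\int(1-\pi_\Lambda)g=1$. In the second, the $m(\cdot;\bnu)$ terms cancel exactly, because both are integrated against the natural law of $A\mid\bX$. What remains is $\bbE[m_0(A+\delta,\bx)-m_0(A,\bx)\mid\bX=\bx]$, which equals $\tau_\delta(\bx)$ by Proposition~\ref{prop:cmtp_outcome_identifiability} (this is where Assumptions~1, 2 and 3(A$'$) enter).

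\emph{Consequences.} Since $\bb_i=b(\bX_i)$ is $\bX$-measurable, conditioning on $\bX$ and applying both identities gives
\[
\bbE\{\psi_{C,\mathrm{aug}}(\bO;\btheta,\bdeta^*,\bnu)\}=\bbE[b(\bX)\tau_\delta(\bX)]-\bbE[b(\bX)b(\bX)^\top]\btheta
\]
for every $\btheta$ and every $\bnu$. Nonsingularity of $\bD_C$ then yields the unique root, which is the normal-equation form of the $L_2(P_{\bX})$ projection. Because this expectation does not depend on $\bnu$, its derivative in $\bnu$ vanishes, so $\bB_{C,\bnu}=\bzero$.

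\emph{Main obstacle.} The delicate step is the shifted-row change of variables. It requires $g_{A\mid\bX}(a\mid\bx)>0$ wherever the ratio is formed, which Assumption~2 supplies on the restricted support region of Section~2.2. It also requires integrability to justify splitting the integrals; I would handle this with $\bbE(Y^2)<\infty$, $\bbE\|b(\bX)\|^2<\infty$, integrability of $m(\cdot;\bnu)$, and Fubini. The boundary convention of Section~2.2 guarantees that the shifted support maps onto the natural support.
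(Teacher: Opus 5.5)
Your proposal is correct and follows essentially the same route as the paper. The paper writes your density-ratio identity in the equivalent form $\pi_\Lambda(\bx,u)\,g_{A\mid\bX}(u\mid\bx)=\{1-\pi_\Lambda(\bx,u)\}\,g_{A\mid\bX}(u+\delta\mid\bx)$, turns the shifted-row change of variables into a transfer formula for a generic integrable $\omega$, and applies it with $\omega=1-\pi_\Lambda$ and $\omega(u,\bx)=m_0(u+\delta,\bx)-m(u,\bx;\bnu)$, exactly as you do; one precision on your ``main obstacle'' remark is that the ratio at $a=A$ is automatically well defined since $g_{A\mid\bX}(A\mid\bX)>0$ almost surely, so Assumption~2 is really needed to ensure the substituted integral over $\{u: g_{A\mid\bX}(u+\delta\mid\bx)>0\}$ covers the whole natural support, a point the paper leaves implicit.
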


\begin{proof}
The definition of $\pi_\Lambda$ gives, for all $(u,\bx)$, the natural-law identity
\begin{equation}\label{eq:natural_transfer}
\pi_\Lambda(\bx,u)\,g_{A\mid\bX}(u\mid\bx)
=\{1-\pi_\Lambda(\bx,u)\}\,g_{A\mid\bX}(u+\delta\mid\bx),
\end{equation}
which is the natural-law counterpart of the mixture identity
$\{1-\pi_\Lambda(\bx,a)\}f_{A_\Lambda\mid\bX}(a\mid\bx)=\tfrac12 g_{A\mid\bX}(a\mid\bx)$
used in the proof of Proposition~5. For the $\lambda=0$ row,
$g_i^0/\pi_i^0=-1$ and $(g_i^0)^2/\pi_i^0=\pi_i^0$ identically. For the $\lambda=1$
row, substituting $u=a-\delta$ and applying \eqref{eq:natural_transfer}, for any
integrable $\omega$,
\begin{align}\label{eq:row1_transfer}
& \bbE\Bigl[\frac{1-\pi_i^1}{\pi_i^1}\,\omega(A_i-\delta,\bX_i)\Bigm|\bX_i=\bx\Bigr]\nonumber\\ &
=\int\frac{1-\pi_\Lambda(\bx,u)}{\pi_\Lambda(\bx,u)}\,\omega(u,\bx)\,
g_{A\mid\bX}(u+\delta\mid\bx)\,\mathrm du
=\bbE\bigl[\omega(A_i,\bX_i)\mid\bX_i=\bx\bigr].
\end{align}
Taking $\omega(u,\bx)=1-\pi_\Lambda(\bx,u)$ in \eqref{eq:row1_transfer} yields
$\bbE[(g_i^1)^2/\pi_i^1\mid\bX_i]=\bbE[1-\pi_i^0\mid\bX_i]$; adding
$\bbE[(g_i^0)^2/\pi_i^0\mid\bX_i]=\bbE[\pi_i^0\mid\bX_i]$ gives the first identity in
\eqref{eq:cmtp_proj_identities}. Next, conditioning on $(A_i,\bX_i)$ so that
$\bbE(Y_i\mid A_i,\bX_i)=m_0(A_i,\bX_i)$, and taking
$\omega(u,\bx)=m_0(u+\delta,\bx)-m(u,\bx;\bnu)$ in \eqref{eq:row1_transfer},
\[
\bbE\Bigl[\frac{g_i^1}{\pi_i^1}\{Y_i-m_i^1(\bnu)\}\Bigm|\bX_i\Bigr]
=\bbE\bigl[m_0(A_i+\delta,\bX_i)-m(A_i,\bX_i;\bnu)\mid\bX_i\bigr],
\]
while the $\lambda=0$ row contributes
$-\bbE[m_0(A_i,\bX_i)-m(A_i,\bX_i;\bnu)\mid\bX_i]$. Summing, the terms in $m$ cancel
for every $\bnu$, and
$\bbE[\sum_\lambda (g_i^\lambda/\pi_i^\lambda)\{Y_i-m_i^\lambda(\bnu)\}\mid\bX_i]
=\bbE[m_0(A_i+\delta,\bX_i)-m_0(A_i,\bX_i)\mid\bX_i]=\tau_\delta(\bX_i)$ by
Proposition~2 of the main paper. The remaining
claims follow because $b(\bX_i)$ is $\bX_i$-measurable and
$\bbE\{\psi_{C,\mathrm{aug}}(\bO_i;\btheta,\bdeta^*,\bnu)\}$ is constant in $\bnu$
(differentiation under the integral is justified by (C6)).
\end{proof}

\begin{proof}[Proof of Theorem~2
(Asymptotic normality of augmented direct CMTP A-learning)]
Recall $\bZ_i^\lambda=(A_i-\lambda\delta,\bX_i)$,
$\pi_i^\lambda(\bdeta)=\pi_\Lambda(\bZ_i^\lambda;\bdeta)$,
$g_i^\lambda(\bdeta)=\lambda-\pi_i^\lambda(\bdeta)$,
$m_i^\lambda(\bnu)=m(\bZ_i^\lambda;\bnu)$, and $\bb_i=b(\bX_i)$; write
$\pi_i^\lambda=\pi_i^\lambda(\bdeta^*)$, $g_i^\lambda=g_i^\lambda(\bdeta^*)$, and
$m_i^\lambda=m_i^\lambda(\bnu^*)$. Since
$\psi_{C,\mathrm{aug}}(\bO_i;\btheta,\bdeta,\bnu)$ is linear in $\btheta$,
$\widehat\btheta_{C,\mathrm{aug}}=\widehat\bD_{C,n}^{-1}\widehat\bd_{C,n,\mathrm{aug}}$,
where
\begin{align*}
\widehat\bd_{C,n,\mathrm{aug}}
&:=\bbP_n\Bigl[\sum_{\lambda\in\{0,1\}}
\frac{g_i^\lambda(\widehat\bdeta_n)}{\pi_i^\lambda(\widehat\bdeta_n)}\,
\bb_i\bigl\{Y_i-m_i^\lambda(\widehat\bnu_n)\bigr\}\Bigr],
&
\widehat\bD_{C,n}
&:=\bbP_n\Bigl[\sum_{\lambda\in\{0,1\}}
\frac{g_i^\lambda(\widehat\bdeta_n)^2}{\pi_i^\lambda(\widehat\bdeta_n)}\,
\bb_i\bb_i^\top\Bigr],
\end{align*}
with true-nuisance analogues
\begin{align*}
\bd_{C,n,\mathrm{aug}}
&:=\bbP_n\Bigl[\sum_{\lambda}\frac{g_i^\lambda}{\pi_i^\lambda}\,
\bb_i\{Y_i-m_i^\lambda\}\Bigr],
&
\bD_{C,n}
&:=\bbP_n\Bigl[\sum_{\lambda}\frac{(g_i^\lambda)^2}{\pi_i^\lambda}\,
\bb_i\bb_i^\top\Bigr],
\end{align*}
and population versions $\bd_{C,\mathrm{aug}}$ and $\bD_C$
obtained by replacing $\bbP_n$ with $P$. By Lemma~\ref{lem:cmtp_projection},
\[
\bD_C=\bbE\{b(\bX)b(\bX)^\top\},
\qquad
\bd_{C,\mathrm{aug}}=\bbE\{b(\bX)\tau_\delta(\bX)\},
\qquad
\btheta_C^*=\bD_C^{-1}\bd_{C,\mathrm{aug}},
\]
where $\bd_{C,\mathrm{aug}}$ does not depend on $\bnu^*$. We decompose
\begin{align}
\sqrt{n}(\widehat{\btheta}_{C,\mathrm{aug}} - \btheta^*_C)
={}& \widehat{\bD}^{-1}_{C,n} \sqrt{n}
\Bigl( \underbrace{\bd_{C,n,\mathrm{aug}} - \bD_{C,n}\btheta_C^*}_{\text{\circled{1}}}\Bigr)
\nonumber\\
&+ \widehat{\bD}^{-1}_{C,n}\sqrt{n}
\Bigl( \underbrace{\bigl\{\widehat\bd_{C,n,\mathrm{aug}}
- \widehat\bD_{C,n}\btheta_C^*\bigr\}
- \bigl\{\bd_{C,n,\mathrm{aug}} - \bD_{C,n}\btheta_C^*\bigr\}}_{\text{\circled{2}}}\Bigr).
\label{eqn:cmtp_aug_error_decomp}
\end{align}

\medskip\noindent
{\bf Step 1: limit of $\sqrt{n}\,\text{\circled{1}}$.}
By Lemma~\ref{lem:cmtp_projection},
$\bbE[\bd_{C,n,\mathrm{aug}}-\bD_{C,n}\btheta_C^*]
=\bd_{C,\mathrm{aug}}-\bD_C\btheta_C^*=\bzero$, and
$\bd_{C,n,\mathrm{aug}}-\bD_{C,n}\btheta_C^*
=\bbP_n\psi_{C,\mathrm{aug}}(\bO_i;\btheta_C^*,\bdeta^*,\bnu^*)$. Hence
\[
\sqrt{n}\,\text{\circled{1}}
=\bbG_n\,\psi_{C,\mathrm{aug}}(\bO_i;\btheta_C^*,\bdeta^*,\bnu^*)
=O_p(1)
\]
by the central limit theorem: since $\pi_\Lambda\ge c_0$, the stabilized weights are
bounded by $1/c_0$, and $\bbE\|b(\bX)\|^2<\infty$, $\bbE(Y^2)<\infty$, and the
boundedness of $m(\cdot;\bnu^*)$ implied by (C6) give
$\bbE\|\psi_{C,\mathrm{aug}}\|^2<\infty$.

\medskip\noindent
{\bf Step 2: limit of $\sqrt{n}\,\text{\circled{2}}$ (effect of nuisance estimation).}
Define
\[
\Psi_{C,n}(\bdeta,\bnu,\btheta)
:=\bbP_n\Bigl[\sum_{\lambda\in\{0,1\}}
\frac{g_i^\lambda(\bdeta)}{\pi_i^\lambda(\bdeta)}\,\bb_i
\bigl\{Y_i-m_i^\lambda(\bnu)-g_i^\lambda(\bdeta)\,\bb_i^\top\btheta\bigr\}\Bigr],
\]
so that
$\text{\circled{2}}
=\Psi_{C,n}(\widehat\bdeta_n,\widehat\bnu_n,\btheta_C^*)
-\Psi_{C,n}(\bdeta^*,\bnu^*,\btheta_C^*)$.
By the smoothness conditions (C3) and (C6), the lower bound
$\pi_\Lambda(\cdot;\bdeta)\ge c_0$ in a neighborhood of $\bdeta^*$, and the Donsker
conditions (C2) and (C5) (the maps
$\bdeta\mapsto g_i^\lambda(\bdeta)/\pi_i^\lambda(\bdeta)$ and their $\bdeta$-derivatives
are Lipschitz transformations of $\pi_\Lambda(\cdot;\bdeta)$, hence Donsker; see
Example~3.2.12 of \citet{van1996weak}), a first-order Taylor expansion in
$(\bdeta,\bnu)$ around $(\bdeta^*,\bnu^*)$ gives
\[
\sqrt{n}\,\text{\circled{2}}
=\dot\Psi_{C,n,\bdeta}\,\sqrt{n}(\widehat\bdeta_n-\bdeta^*)
+\dot\Psi_{C,n,\bnu}\,\sqrt{n}(\widehat\bnu_n-\bnu^*)
+o_p(1),
\]
where the empirical derivatives converge almost surely to the corresponding population
derivatives: $\dot\Psi_{C,n,\bdeta}\to\bB_{C,\bdeta}$ and $\dot\Psi_{C,n,\bnu}\to\bB_{C,\bnu}$.
Because $g_i^0(\bdeta)/\pi_i^0(\bdeta)\equiv-1$, only the shifted-arm row contributes
derivative terms in $\bdeta$ beyond the $\btheta$-part, and direct differentiation gives
\[
\bB_{C,\bdeta}
=
-\bbE\Bigl[\frac{1}{\{\pi_i^1\}^{2}}\,\bb_i
\bigl\{Y_i-m_i^1-(1-\pi_i^1)\,\bb_i^\top\btheta_C^*\bigr\}\,\dot\pi_i^{1\top}\Bigr]
+\bbE\Bigl[\frac{1-\pi_i^1}{\pi_i^1}\,\bb_i\bb_i^\top\btheta_C^*\,
\dot\pi_i^{1\top}\Bigr]
-\bbE\bigl[\bb_i\bb_i^\top\btheta_C^*\,\dot\pi_i^{0\top}\bigr],
\]
where
$\dot\pi_i^{\lambda}
=\partial\pi_\Lambda(\bZ_i^\lambda;\bdeta)/\partial\bdeta\vert_{\bdeta=\bdeta^*}$,
while
\[
\bB_{C,\bnu}
=-\bbE\Bigl[\sum_{\lambda\in\{0,1\}}
\frac{g_i^\lambda}{\pi_i^\lambda}\,\bb_i\,\dot m_i^{\lambda\top}\Bigr]
=-\bbE\bigl[\bb_i\bigl\{\dot m(A_i,\bX_i;\bnu^*)
-\dot m(A_i,\bX_i;\bnu^*)\bigr\}^\top\bigr]
=\bzero,
\]
by identity \eqref{eq:row1_transfer} of Lemma~\ref{lem:cmtp_projection} applied to
$\omega=\dot m(\cdot;\bnu^*)$, where
$\dot m_i^\lambda=\partial m(\bZ_i^\lambda;\bnu)/\partial\bnu\vert_{\bnu=\bnu^*}$: the
augmented direct CMTP score is orthogonal to the outcome-regression nuisance when the
propensity model is correctly specified. Using the asymptotic linearity in (C1) and
(C4) and Slutsky's theorem, $\sqrt{n}\,\text{\circled{2}}$ becomes
\[
-\bB_{C,\bdeta}\,J_{\bdeta}^{-1}\,\bbG_n U_{\bdeta}(\bO_i;\bdeta^*)
-\bB_{C,\bnu}\,J_{\bnu}^{-1}\,\bbG_n U_{\bnu}(\bO_i;\bnu^*)+o_p(1)
=-\bB_{C,\bdeta}\,J_{\bdeta}^{-1}\,\bbG_n U_{\bdeta}(\bO_i;\bdeta^*)+o_p(1).
\]

\medskip\noindent
{\bf Step 3: assembling the pieces.}
By the law of large numbers, uniform consistency of
$\pi_\Lambda(\cdot;\widehat\bdeta_n)$, and $\pi_\Lambda\ge c_0$,
$\widehat\bD_{C,n}\to\bD_C=\bbE\{b(\bX)b(\bX)^\top\}$ almost surely, so
\eqref{eqn:cmtp_aug_error_decomp} yields
\[
\sqrt{n}(\widehat{\btheta}_{C,\mathrm{aug}} - \btheta^*_C)
=\bD_C^{-1}\,\bbG_n\Bigl\{
\psi_{C,\mathrm{aug}}(\bO_i;\btheta_C^*,\bdeta^*,\bnu^*)
-\bB_{C,\bdeta}\,J_{\bdeta}^{-1}U_{\bdeta}(\bO_i;\bdeta^*)
\Bigr\}+o_p(1),
\]
and the multivariate central limit theorem gives
$\sqrt{n}(\widehat{\btheta}_{C,\mathrm{aug}} - \btheta^*_C)
\xrightarrow{d}
N(\bzero,\ \bD_C^{-1}\bSigma_{C,\mathrm{aug}}(\btheta_C^*;\bdeta^*,\bnu^*)\bD_C^{-1})$
with
$\bSigma_{C,\mathrm{aug}}
=\bbE[\{\psi_{C,\mathrm{aug}}(\bO_i;\btheta_C^*,\bdeta^*,\bnu^*)
-\bB_{C,\bdeta}J_{\bdeta}^{-1}U_{\bdeta}(\bO_i;\bdeta^*)\}^{\otimes2}]$,
as stated.

\medskip\noindent
{\bf Step 4: efficient propensity estimator.}
Let $\bthetatilde_C=\bD_{C,n}^{-1}\bd_{C,n,\mathrm{aug}}$ denote the estimator using the
true nuisance values. The decomposition above shows
$\sqrt{n}(\widehat\btheta_{C,\mathrm{aug}}-\btheta_C^*)
=\sqrt{n}(\bthetatilde_C-\btheta_C^*)
-\bD_C^{-1}\bB_{C,\bdeta}\sqrt{n}(\widehat\bdeta_n-\bdeta^*)+o_p(1)$.
If $\widehat\bdeta_n$ is efficient and jointly asymptotically normal with
$\sqrt{n}(\bthetatilde_C-\btheta_C^*)$, then by the variance-reduction results of
\citet{pierce1982asymptotic},
\[
\bSigma_{C,\mathrm{aug}}
=\widetilde\bSigma_{C,\mathrm{aug}}-\bB_{C,\bdeta}\bV_{\bdeta}\bB_{C,\bdeta}^\top,
\qquad
\widetilde\bSigma_{C,\mathrm{aug}}
=\bbE\bigl[\psi_{C,\mathrm{aug}}(\bO_i;\btheta_C^*,\bdeta^*,\bnu^*)^{\otimes2}\bigr].
\]
\end{proof}

\begin{theorem}[Asymptotic normality of logistic A-learning nudge]\label{thm:logistic_A_asymptotic}
Suppose that the outcome $Y\in\{0,1\}$ follows a canonical logistic
model conditional on $(A,\bX)$,
\[
\bbP(Y=1\mid A=a,\bX=\bx) \;=\; \mu\{\eta(a,\bx)\},
\qquad
\mu(u) = \frac{e^u}{1+e^u},
\]
and let $M(y,\eta) = -y\eta + \log(1+e^\eta)$ denote the logistic
negative log-likelihood.
Let $t_\delta(a,\bx)$ be represented in the linear basis
$t_\delta(a,\bx) = \btheta^\top b(a,\bx)$, and let $\btheta_A^*$ denote
the population minimizer of the logistic A-learning risk (for the nudge)
under the duplicated-arm construction, i.e.\ the unique solution of
\[
\Psi_A(\btheta,\bdeta^*) := 
\bbE\bigl[\psi_A(Z;\btheta,\bdeta^*)\bigr] = \bzero,
\]
where {$Z=(Y,A,\bX,\widetilde\Lambda)$ with $\widetilde\Lambda=\Lambda+\tfrac12\in\{0,1\}$ the recoded arm indicator of the main text,} and
\[
\psi_A(Z;\btheta,\bdeta)
=
\{\widetilde\Lambda - \pi_\Lambda(A,\bX;\bdeta)\}\,
\bigl\{
  Y - \mu\bigl( \{\widetilde\Lambda-\pi_\Lambda(A,\bX;\bdeta)\}\, \btheta^\top b(A,\bX) \bigr)
\bigr\}\,
b(A,\bX).
\]

Assume that conditions {\normalfont (C1)--(C3)} hold for the propensity model
$\pi_\Lambda(a,\bx;\bdeta)$, that the parameter $\btheta_A^*$ lies in the
interior of a compact parameter space, that
$\bbE\|b(A,\bX)\|^2<\infty$, and that the matrix
\begin{align*}
    &\bD_{\log}
:=
-\frac{\partial}{\partial\btheta^\top}
\Psi_A(\btheta,\bdeta^*)
\Big|_{\btheta=\btheta_A^*}
\\ &=
\bbE\Bigl[
  \{\Lambda-\pi_\Lambda(A,\bX)\}^2
  \,\mu'\bigl(\{\Lambda-\pi_\Lambda(A,\bX)\}\btheta_A^{*\top}b(A,\bX)\bigr)\,
  b(A,\bX)b(A,\bX)^\top
\Bigr]
\end{align*}
is nonsingular, where $\mu'(u)=\mu(u)\{1-\mu(u)\}$.
Let $\widehat{\bdeta}_n$ be an estimator of $\bdeta^*$ satisfying {\normalfont (C1)}, and let $\widehat{\btheta}_{A,\log}$ solve the empirical A-learning estimating equation
\[
\Psi_{A,n}(\btheta,\widehat{\bdeta}_n)
:=
\bbP_n\bigl[\psi_A(Z;\btheta,\widehat{\bdeta}_n)\bigr]
=
\bzero,
\]
where $\bbP_n$ denotes the empirical measure.
Then, as $n\to\infty$,
\[
\sqrt{n}\bigl(\widehat{\btheta}_{A,\log} - \btheta_A^*\bigr)
\;\xrightarrow{d}\;
N\bigl(\bzero,\,
  \bD_{\log}^{-1}\,\bSigma_{A,\log}(\btheta_A^*;\bdeta^*)\,\bD_{\log}^{-1}
\bigr),
\]
where
\[
\bSigma_{A,\log}(\btheta_A^*;\bdeta^*)
:=
\bbE\Bigl[
  \bigl\{
    \psi_A(Z;\btheta_A^*,\bdeta^*)
    - \bB_{\log}\,J_{\bdeta}^{-1} U_{\bdeta}(\bO;\bdeta^*)
  \bigr\}^{\otimes 2}
\Bigr],
\]
and
\[
\bB_{\log}
:=
\frac{\partial}{\partial\bdeta^\top}
\Psi_A(\btheta_A^*,\bdeta)
\Big|_{\bdeta=\bdeta^*}
\]
is the Jacobian of the population estimating equation with respect to
$\bdeta$. Here $U_{\bdeta}(\bO;\bdeta)$ and $J_{\bdeta}$ are as in
condition {\normalfont (C1)}, and for a vector or matrix $\bV$ we write
$\bV^{\otimes 2}=\bV\bV^\top$.

If, in addition, $\widehat{\bdeta}_n$ is an efficient estimator of
$\bdeta^*$ with asymptotic variance $\bV_{\bdeta}$, then
\[
\bSigma_{A,\log}(\btheta_A^*;\bdeta^*)
=
\widetilde{\bSigma}_{A,\log}(\btheta_A^*;\bdeta^*)
-
\bB_{\log}\,\bV_{\bdeta}\,\bB_{\log}^\top,
\]
where
\[
\widetilde{\bSigma}_{A,\log}(\btheta_A^*;\bdeta^*)
:=
\bbE\bigl[
  \psi_A(Z;\btheta_A^*,\bdeta^*)^{\otimes 2}
\bigr]
\]
is the covariance of the logistic A-learning influence function when the true
$\pi_\Lambda(A,\bX)$ is known.
\end{theorem}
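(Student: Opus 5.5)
This is a Z-estimation argument with an estimated nuisance parameter, following the template of the proof of Theorem~1. The only new feature is that the score is nonlinear in $\btheta$, so $\widehat\btheta_{A,\log}$ has no closed form and consistency must come first.

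\emph{Step 1 (consistency).} Consider the criterion $\btheta\mapsto\bbP_n M\bigl(Y,\{\widetilde\Lambda-\pi_\Lambda(\cdot;\widehat\bdeta_n)\}\btheta^\top b\bigr)$ on the duplicated sample; its gradient is $-\Psi_{A,n}(\btheta,\widehat\bdeta_n)$. Because $\pi_\Lambda\in[0,1]$, $|\mu'|\le 1/4$ and $\bbE\|b\|^2<\infty$, the class $\{\psi_A(\cdot;\btheta,\bdeta)\}$ over the compact $\btheta$-space times $\mathcal H_{\bdeta}^*$ is Lipschitz in its parameters with square-integrable envelope. It is therefore Glivenko--Cantelli and Donsker (Example~3.2.12 of \citet{van1996weak}, together with (C2)--(C3)). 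Combined with $\widehat\bdeta_n\to\bdeta^*$ from (C1), this gives $\sup_{\btheta}\|\Psi_{A,n}(\btheta,\widehat\bdeta_n)-\Psi_A(\btheta,\bdeta^*)\|=o_p(1)$. Two further facts identify the zero: the criterion is convex in $\btheta$, since $b$ is convex, and $\btheta_A^*$ is an isolated zero of $\Psi_A(\cdot,\bdeta^*)$, since $\bD_{\log}$ is nonsingular. A standard argmin/Z-estimator consistency theorem then yields $\widehat\btheta_{A,\log}\to_p\btheta_A^*$.

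\emph{Step 2 (linearization).} Start from $\bzero=\Psi_{A,n}(\widehat\btheta,\widehat\bdeta_n)$ and write it as
\[
\bigl\{\Psi_{A,n}(\widehat\btheta,\widehat\bdeta_n)-\Psi_A(\widehat\btheta,\widehat\bdeta_n)\bigr\}-\bigl\{\Psi_{A,n}(\btheta_A^*,\bdeta^*)-\Psi_A(\btheta_A^*,\bdeta^*)\bigr\}+\Psi_{A,n}(\btheta_A^*,\bdeta^*)+\Psi_A(\widehat\btheta,\widehat\bdeta_n),
\]
using $\Psi_A(\btheta_A^*,\bdeta^*)=\bzero$. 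By the Donsker property and $L_2$-continuity of $\psi_A$ in $(\btheta,\bdeta)$, the first two braces together are $o_p(n^{-1/2})$ (asymptotic equicontinuity). For the last term, expand $\Psi_A(\widehat\btheta,\widehat\bdeta_n)$ to first order around $(\btheta_A^*,\bdeta^*)$; this is permitted because (C3) gives bounded, continuous derivatives of $\pi_\Lambda$ and $\mu$ is smooth with bounded derivatives, so differentiation under $\bbE$ is dominated. The expansion is
\[
-\bD_{\log}(\widehat\btheta-\btheta_A^*)+\bB_{\log}(\widehat\bdeta_n-\bdeta^*)+o_p(\|\widehat\btheta-\btheta_A^*\|+n^{-1/2}).
\]
Now substitute (C1), $\sqrt n(\widehat\bdeta_n-\bdeta^*)=-J_{\bdeta}^{-1}\bbG_nU_{\bdeta}+o_p(1)$. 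This first shows $\sqrt n$-consistency of $\widehat\btheta$, and then gives
\[
\sqrt n(\widehat\btheta_{A,\log}-\btheta_A^*)=\bD_{\log}^{-1}\bbG_n\{\psi_A(Z;\btheta_A^*,\bdeta^*)-\bB_{\log}J_{\bdeta}^{-1}U_{\bdeta}(\bO;\bdeta^*)\}+o_p(1).
\]
The multivariate CLT then gives the stated sandwich limit. One bookkeeping point: $\psi_A$ is defined on a duplicated row $Z$, but both rows of unit $i$ come from the single observation $\bO_i$. The empirical average and CLT must therefore be applied to the per-unit sum over $\lambda\in\{0,1\}$, as in Theorem~1, so that the i.i.d.\ structure is preserved.

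\emph{Step 3 (efficient $\widehat\bdeta_n$).} Follow the Remark after Theorem~1. Let $\widetilde\btheta$ denote the estimator computed with the true $\bdeta^*$. The expansion above gives $\sqrt n(\widehat\btheta-\btheta_A^*)=\sqrt n(\widetilde\btheta-\btheta_A^*)+\bD_{\log}^{-1}\bB_{\log}\sqrt n(\widehat\bdeta_n-\bdeta^*)+o_p(1)$. When $\widehat\bdeta_n$ is efficient, \citet{pierce1982asymptotic} gives the covariance reduction $\widetilde\bSigma_{A,\log}-\bB_{\log}\bV_{\bdeta}\bB_{\log}^\top$.

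The main obstacle is Step~1 together with the equicontinuity claim. Unlike Theorem~1, there is no closed form, so uniform convergence and identifiability must be checked by hand. The step I would verify carefully is the Lipschitz envelope for $(\btheta,\bdeta)\mapsto\psi_A$. I would bound it by $\|b\|^2(1+\|\btheta\|)\sup\|\dot\pi_\Lambda\|$, which is integrable given compactness of the $\btheta$-space and $\bbE\|b\|^2<\infty$.
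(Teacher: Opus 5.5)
Your proposal is correct and follows the same two-step Z-estimation argument as the paper: linearize the estimating equation, substitute the expansion in (C1) for $\widehat\bdeta_n$ to obtain the correction $-\bB_{\log}J_{\bdeta}^{-1}U_{\bdeta}$, apply the CLT and Slutsky, and invoke \citet{pierce1982asymptotic} for the efficient case. Your variant (first proving consistency, then expanding the population map with asymptotic equicontinuity, and summing the two duplicated rows per unit) actually supplies steps that the paper's direct Taylor expansion of $\Psi_{A,n}$ takes for granted.

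One small correction concerns the step you flagged. The Lipschitz-in-parameter Donsker criterion needs a \emph{square}-integrable Lipschitz constant. Your $O(\|b\|^2)$ bound therefore requires $\bbE\{\|b(A,\bX)\|^4+\|b(A-\delta,\bX)\|^4\}<\infty$, which the blanket moment assumptions at the start of Section~\ref{sec:theory} supply. The condition $\bbE\|b\|^2<\infty$ alone gives only the Glivenko--Cantelli part.
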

\begin{proof}[Proof of Theorem~\ref{thm:logistic_A_asymptotic}]
Recall that the logistic A--learning estimator $\widehat{\btheta}_{A,\log}$ is defined
as the solution to the empirical estimating equation
\[
\Psi_{A,n}(\btheta,\widehat{\bdeta}_n)
:=
\bbP_n\bigl[
  \psi_A(Z;\btheta,\widehat{\bdeta}_n)
\bigr]
=
\bzero,
\]
where {$Z=(Y,A,\bX,\widetilde\Lambda)$}, $\bbP_n$ is the empirical measure, and
\[
\psi_A(Z;\btheta,\bdeta)
=
\{\widetilde\Lambda - \pi_\Lambda(A,\bX;\bdeta)\}\,
\bigl\{
  Y - \mu\bigl( \{\widetilde\Lambda-\pi_\Lambda(A,\bX;\bdeta)\}\,\btheta^\top b(A,\bX) \bigr)
\bigr\}\,
b(A,\bX),
\]
with $\mu(u) = e^u/(1+e^u)$ the logistic mean function. The population
counterpart is
\[
\Psi_A(\btheta,\bdeta)
:=
\bbE\bigl[\psi_A(Z;\btheta,\bdeta)\bigr],
\]
and, by definition, $\btheta_A^*$ satisfies
$\Psi_A(\btheta_A^*,\bdeta^*) = \bzero$.

We first establish the asymptotic linear representation. Consider the
map
\[
(\btheta,\bdeta)
\;\longmapsto\;
\Psi_{A,n}(\btheta,\bdeta)
=
\bbP_n\bigl[\psi_A(Z;\btheta,\bdeta)\bigr].
\]
By the conditions (C1)--(C3), the class
$\{\psi_A(\cdot;\btheta,\bdeta):(\btheta,\bdeta)\in\calT\times\calH^*\}$ is
Donsker, where $\calT$ is a compact neighborhood of $\btheta_A^*$ and
$\calH^*$ is a neighborhood of $\bdeta^*$. Moreover, $\psi_A$ is twice
continuously differentiable in $(\btheta,\bdeta)$ on
$\calT\times\calH^*$ with bounded derivatives, because $\mu(\cdot)$ is smooth and $\pi_\Lambda(a,\bx;\bdeta)$ is twice continuously differentiable with respect to $\bdeta$ by assumption. Thus, we can apply
a first–order Taylor expansion of $\Psi_{A,n}$ around $(\btheta_A^*,\bdeta^*)$.

Since $\widehat{\btheta}_{A,\log}$ solves
$\Psi_{A,n}(\widehat{\btheta}_{A,\log},\widehat{\bdeta}_n) = \bzero$, we
have
\begin{align*}
\bzero
&=
\Psi_{A,n}(\widehat{\btheta}_{A,\log},\widehat{\bdeta}_n)
\\
&=
\Psi_{A,n}(\btheta_A^*,\bdeta^*)
+ \dot{\Psi}_{A,n,\btheta}(\btheta_A^*,\bdeta^*)\,
  (\widehat{\btheta}_{A,\log}-\btheta_A^*)
+ \dot{\Psi}_{A,n,\bdeta}(\btheta_A^*,\bdeta^*)\,
  (\widehat{\bdeta}_n-\bdeta^*)
+ r_n,
\end{align*}
where
\[
\dot{\Psi}_{A,n,\btheta}(\btheta_A^*,\bdeta^*)
=
\frac{\partial}{\partial\btheta^\top}
\Psi_{A,n}(\btheta,\bdeta)
\Big|_{(\btheta,\bdeta)=(\btheta_A^*,\bdeta^*)},
\qquad
\dot{\Psi}_{A,n,\bdeta}(\btheta_A^*,\bdeta^*)
=
\frac{\partial}{\partial\bdeta^\top}
\Psi_{A,n}(\btheta,\bdeta)
\Big|_{(\btheta,\bdeta)=(\btheta_A^*,\bdeta^*)},
\]
and the remainder $r_n$ satisfies
$\|r_n\| = o_p(n^{-1/2})$ by the differentiability and Donsker
conditions (see, e.g., Theorem 5.23 of \cite{vandervaart1998asymptotic}). Rearranging terms and multiplying by $\sqrt{n}$ yields
\begin{align}
\sqrt{n}(\widehat{\btheta}_{A,\log}-\btheta_A^*)
&=
-\,\dot{\Psi}_{A,n,\btheta}(\btheta_A^*,\bdeta^*)^{-1}
\bigl\{
  \sqrt{n}\,\Psi_{A,n}(\btheta_A^*,\bdeta^*)
  +
  \dot{\Psi}_{A,n,\bdeta}(\btheta_A^*,\bdeta^*)\,
  \sqrt{n}(\widehat{\bdeta}_n-\bdeta^*)
\bigr\}
+ o_p(1).
\label{eq:logA-main-expansion}
\end{align}
We now identify the limits of each ingredient. First, note that
\[
\Psi_{A,n}(\btheta_A^*,\bdeta^*)
=
\bbP_n\bigl[\psi_A(Z;\btheta_A^*,\bdeta^*)\bigr]
=
P\bigl[\psi_A(Z;\btheta_A^*,\bdeta^*)\bigr]
+
\bbG_n\bigl[\psi_A(Z;\btheta_A^*,\bdeta^*)\bigr],
\]
where $P$ denotes the true distribution and $\bbG_n$ the empirical
process. Since $P[\psi_A(Z;\btheta_A^*,\bdeta^*)] = \bzero$ by the
definition of $\btheta_A^*$, we obtain
\[
\sqrt{n}\,\Psi_{A,n}(\btheta_A^*,\bdeta^*)
=
\bbG_n\bigl[\psi_A(Z;\btheta_A^*,\bdeta^*)\bigr]
=
O_p(1)
\]
and, by the central limit theorem for Donsker classes,
\[
\bbG_n\bigl[\psi_A(Z;\btheta_A^*,\bdeta^*)\bigr]
\xrightarrow{d}
N\bigl(\bzero,\,
  \widetilde{\bSigma}_{A,\log}(\btheta_A^*;\bdeta^*)
\bigr),
\]
where
\[
\widetilde{\bSigma}_{A,\log}(\btheta_A^*;\bdeta^*)
:=
\bbE\bigl[
  \psi_A(Z;\btheta_A^*,\bdeta^*)^{\otimes 2}
\bigr].
\]
Next, by (C1),
{\[
\sqrt{n}(\widehat{\bdeta}_n-\bdeta^*)
=
-\,J_{\bdeta}^{-1}
\bbG_n\bigl[U_{\bdeta}(\bO;\bdeta^*)\bigr]
+ o_p(1),
\]
with asymptotic variance $\bV_{\bdeta}$ and influence function
\(-J_{\bdeta}^{-1}U_{\bdeta}(\bO;\bdeta^*)\).} For the derivatives, observe that $\dot{\Psi}_{A,n,\btheta}(\btheta_A^*,\bdeta^*)$ is
\[
\bbP_n\Bigl[
  -\{\widetilde\Lambda-\pi_\Lambda(A,\bX;\bdeta^*)\}^2
   \mu'\bigl(\{\widetilde\Lambda-\pi_\Lambda(A,\bX;\bdeta^*)\}
              \btheta_A^{*\top} b(A,\bX)\bigr)\,
   b(A,\bX)b(A,\bX)^\top
\Bigr]
+ o_p(1),
\]
where $\mu'(u)=\mu(u)\{1-\mu(u)\}$. By the law of large numbers,
\[
\dot{\Psi}_{A,n,\btheta}(\btheta_A^*,\bdeta^*)
\;\xrightarrow{p}\;
-\bD_{\log},
\]
with
\[
\bD_{\log}
=
\bbE\Bigl[
  \{\widetilde\Lambda-\pi_\Lambda(A,\bX;\bdeta^*)\}^2
  \mu'\bigl(\{\widetilde\Lambda-\pi_\Lambda(A,\bX;\bdeta^*)\}
            \btheta_A^{*\top} b(A,\bX)\bigr)\,
  b(A,\bX)b(A,\bX)^\top
\Bigr],
\]
which is nonsingular by assumption. Similarly,
\[
\dot{\Psi}_{A,n,\bdeta}(\btheta_A^*,\bdeta^*)
\;\xrightarrow{p}\;
\bB_{\log},
\]
where
\[
\bB_{\log}
:=
\frac{\partial}{\partial\bdeta^\top}
\Psi_A(\btheta_A^*,\bdeta)
\Big|_{\bdeta=\bdeta^*}
=
\bbE\left[
  \frac{\partial}{\partial\bdeta^\top}\psi_A(Z;\btheta_A^*,\bdeta)
  \Big|_{\bdeta=\bdeta^*}
\right].
\]
Substituting these limits and the expansion for
$\sqrt{n}(\widehat{\bdeta}_n-\bdeta^*)$ into
\eqref{eq:logA-main-expansion}, and applying Slutsky's theorem, we obtain
\begin{align*}
\sqrt{n}(\widehat{\btheta}_{A,\log}-\btheta_A^*)
&=
\bD_{\log}^{-1}
\bbG_n\Bigl[
  \psi_A(Z;\btheta_A^*,\bdeta^*)
  - \bB_{\log}\,J_{\bdeta}^{-1}
    U_{\bdeta}(\bO;\bdeta^*)
\Bigr]
+ o_p(1).
\end{align*}
Therefore,
\[
\sqrt{n}(\widehat{\btheta}_{A,\log}-\btheta_A^*)
\xrightarrow{d}
N\bigl(\bzero,\,
  \bD_{\log}^{-1}\,\bSigma_{A,\log}(\btheta_A^*;\bdeta^*)\,\bD_{\log}^{-1}
\bigr),
\]
with
\[
\bSigma_{A,\log}(\btheta_A^*;\bdeta^*)
=
\bbE\Bigl[
  \bigl\{
    \psi_A(Z;\btheta_A^*,\bdeta^*)
    - \bB_{\log}\,J_{\bdeta}^{-1}
      U_{\bdeta}(\bO;\bdeta^*)
  \bigr\}^{\otimes 2}
\Bigr].
\]

Finally, when $\widehat{\bdeta}_n$ is efficient for $\bdeta^*$ with
asymptotic variance $\bV_{\bdeta}$, the usual variance reduction
argument for two-step estimators (see, e.g., \cite{pierce1982asymptotic})
implies that
\[
\bSigma_{A,\log}(\btheta_A^*;\bdeta^*)
=
\widetilde{\bSigma}_{A,\log}(\btheta_A^*;\bdeta^*)
-
\bB_{\log}\,\bV_{\bdeta}\,\bB_{\log}^\top,
\]
where
$\widetilde{\bSigma}_{A,\log}(\btheta_A^*;\bdeta^*)
:=\bbE[\psi_A(Z;\btheta_A^*,\bdeta^*)^{\otimes 2}]$
is the covariance when $\pi_\Lambda(A,\bX)$ is known. This completes the
proof.
\end{proof}

{The following theorem, relocated from the main paper to conserve space, gives the asymptotic normality result for the logistic direct CMTP A-learning estimator; it is referenced from Section~4 of the main paper.}

\begin{theorem}[Asymptotic normality of logistic direct CMTP A-learning]
\label{thm:logistic_direct_cmtp_asymptotic}
Let $Y\in\{0,1\}$ and write $\mu(v)=\operatorname{expit}(v)$ and
$\mu'(v)=\mu(v)\{1-\mu(v)\}$.
Under the duplicated-arm construction, subject $\bO_i$ contributes the two
rows $\bZ_i^\lambda=(A_i-\lambda\delta,\bX_i)$ with arm indicator
$\widetilde\Lambda_i^\lambda=\lambda\in\{0,1\}$; write
$\pi_i^\lambda(\bdeta):=\pi_\Lambda(\bZ_i^\lambda;\bdeta)$ and
$g_i^\lambda(\bdeta):=\lambda-\pi_i^\lambda(\bdeta)$.
Let $\bb_i:=\bb(\bX_i)$ denote the column vector of basis functions in $\bX$
only, used for the direct CMTP working model $f(\bx)=\btheta^\top\bb(\bx)$.

Let $\btheta_{C,\log}^*$ denote the unique population minimizer of the
$\pi_\Lambda$-stabilized logistic direct CMTP A-learning loss
\[
\ell_{A,\mathrm{CMTP}}^{\mathrm{NLL}}(\btheta)
:=
\bbE\!\left[
\sum_{\lambda\in\{0,1\}}
  \frac{1}{\pi_i^\lambda(\bdeta^*)}
  \left\{
    -Y_i\,g_i^\lambda(\bdeta^*)\,\bb_i^\top\btheta
    +b\!\left(g_i^\lambda(\bdeta^*)\,\bb_i^\top\btheta\right)
  \right\}
\right],
\qquad b(v)=\log(1+e^{v}),
\]
i.e.\ the solution to the population estimating equation
\begin{equation}\label{eq:logCMTP_pop_score}
\Psi_{C,\log}(\btheta,\bdeta^*)
:=
\bbE\!\left[
\sum_{\lambda\in\{0,1\}}
  \frac{g_i^\lambda(\bdeta^*)}{\pi_i^\lambda(\bdeta^*)}\,
  \bb_i
  \left\{
    Y_i-\mu\!\left(g_i^\lambda(\bdeta^*)\,\bb_i^\top\btheta\right)
  \right\}
\right]
=
\bzero.
\end{equation}
Define the bread matrix
\[
\bD_{C,\log}
:=
\bbE\!\left[
\sum_{\lambda\in\{0,1\}}
  \frac{\{g_i^\lambda(\bdeta^*)\}^2}{\pi_i^\lambda(\bdeta^*)}
  \,\mu'\!\left(g_i^\lambda(\bdeta^*)\,\bb_i^\top\btheta_{C,\log}^*\right)
  \bb_i\bb_i^\top
\right],
\]
and the influence function
\[
\varphi_{C,\log}(\bO_i;\btheta_{C,\log}^*,\bdeta^*)
:=
\sum_{\lambda\in\{0,1\}}
\frac{g_i^\lambda(\bdeta^*)}{\pi_i^\lambda(\bdeta^*)}\,
\bb_i
\left\{
  Y_i-\mu\!\left(g_i^\lambda(\bdeta^*)\,\bb_i^\top\btheta_{C,\log}^*\right)
\right\}.
\]
Assume conditions \emph{(C1)--(C3)} for the propensity model
$\pi_\Lambda(a,\bx;\bdeta)$, that $\btheta_{C,\log}^*$ lies in the interior
of a compact parameter space, that $\bbE\|\bb(\bX)\|^2<\infty$,
that $\min\{\pi_i^0(\bdeta^*),\pi_i^1(\bdeta^*)\}\ge c_0>0$ almost surely
for some constant $c_0$, and that $\bD_{C,\log}$ is nonsingular.

Let $\widehat\btheta_{C,\log}$ solve the empirical estimating equation
\[
\Psi_{C,\log,n}(\btheta,\widehat\bdeta_n)
:=
\bbP_n\bigl[\psi_{C,\log}(\bO_i;\btheta,\widehat\bdeta_n)\bigr]
=
\bzero.
\]
Then, as $n\to\infty$,
\[
\sqrt{n}\,\bigl(\widehat\btheta_{C,\log} - \btheta_{C,\log}^*\bigr)
\;\xrightarrow{d}\;
N\!\Bigl(\bzero,\;
\bD_{C,\log}^{-1}\,
\bSigma_{C,\log}(\btheta_{C,\log}^*;\bdeta^*)\,
\bD_{C,\log}^{-1}
\Bigr),
\]
where
\begin{align*}
    &\bSigma_{C,\log}(\btheta_{C,\log}^*;\bdeta^*)
:=
\bbE\!\left[
\Bigl\{
  \varphi_{C,\log}(\bO_i;\btheta_{C,\log}^*,\bdeta^*)
  -\bB_{C,\log}\,J_{\bdeta}^{-1}\,U_{\bdeta}(\bO_i;\bdeta^*)
\Bigr\}^{\otimes 2}
\right],
\\
&\bB_{C,\log}
:=
\frac{\partial}{\partial\bdeta^\top}
\Psi_{C,\log}(\btheta_{C,\log}^*,\bdeta)
\Bigg|_{\bdeta=\bdeta^*},
\end{align*}
with $J_{\bdeta}$ and $U_{\bdeta}(\bO_i;\bdeta)$ as in condition
\emph{(C1)}, and $\bV^{\otimes 2}=\bV\bV^\top$. If, in addition,
$\widehat\bdeta_n$ is an efficient estimator of $\bdeta^*$ with asymptotic
variance $\bV_{\bdeta}$, then
$\bSigma_{C,\log}(\btheta_{C,\log}^*;\bdeta^*)
=\widetilde\bSigma_{C,\log}(\btheta_{C,\log}^*;\bdeta^*)
-\bB_{C,\log}\,\bV_{\bdeta}\,\bB_{C,\log}^\top$,
where
$\widetilde\bSigma_{C,\log}(\btheta_{C,\log}^*;\bdeta^*)
:=\bbE\bigl[\varphi_{C,\log}(\bO_i;\btheta_{C,\log}^*,\bdeta^*)^{\otimes 2}\bigr]$
is the variance when $\pi_\Lambda$ is known.
\end{theorem}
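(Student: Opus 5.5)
The argument is a standard two-step Z-estimation argument, following the proof of Theorem~\ref{thm:logistic_A_asymptotic} almost line by line but with the per-subject score summed over the two duplicated rows. Write $\psi_{C,\log}(\bO_i;\btheta,\bdeta)$ for the summand inside \eqref{eq:logCMTP_pop_score}, so that $\varphi_{C,\log}=\psi_{C,\log}(\cdot;\btheta_{C,\log}^*,\bdeta^*)$.

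\emph{Existence and consistency.} The loss $\ell_{A,\mathrm{CMTP}}^{\mathrm{NLL}}$ is convex in $\btheta$, because $b$ is convex and the weights $1/\pi_i^\lambda\le 1/c_0$ are positive. The minimizer $\btheta_{C,\log}^*$ is interior and unique, and $\bD_{C,\log}$ is nonsingular, so the Hessian is positive definite near $\btheta_{C,\log}^*$.

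On a compact $\calT\times\calH^*$ the class $\{\psi_{C,\log}(\cdot;\btheta,\bdeta)\}$ is Glivenko--Cantelli and Donsker. This holds because $\bdeta\mapsto g_i^\lambda(\bdeta)/\pi_i^\lambda(\bdeta)$ is a Lipschitz transformation of $\pi_\Lambda(\cdot;\bdeta)$ (using $\pi_\Lambda\ge c_0$ and (C2)--(C3)), and $\mu$ is bounded and $1/4$-Lipschitz. The envelope is a multiple of $\|\bb(\bX)\|(1+|Y|)/c_0$, which is square integrable since $Y\in\{0,1\}$ and $\bbE\|\bb\|^2<\infty$. Uniform convergence of $\Psi_{C,\log,n}(\cdot,\widehat\bdeta_n)$ to $\Psi_{C,\log}(\cdot,\bdeta^*)$, together with $\widehat\bdeta_n\to\bdeta^*$ from (C1) and the identifiability above, gives $\widehat\btheta_{C,\log}\to\btheta_{C,\log}^*$ in probability (van der Vaart, Theorem~5.9). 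The empirical loss is also convex, so existence of a root with probability tending to one follows from convexity arguments.

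\emph{Expansion.} I will Taylor-expand $0=\Psi_{C,\log,n}(\widehat\btheta_{C,\log},\widehat\bdeta_n)$ around $(\btheta_{C,\log}^*,\bdeta^*)$. The Donsker property plus $L_2$-continuity of $\psi_{C,\log}$ in the parameters give stochastic equicontinuity:
\[
\bbG_n\{\psi_{C,\log}(\cdot;\widehat\btheta_{C,\log},\widehat\bdeta_n)-\varphi_{C,\log}\}=o_p(1).
\]
The problem therefore reduces to a Taylor expansion of the population map $\Psi_{C,\log}$, which is differentiable because $\mu'$ is bounded and $\pi_\Lambda$ is $C^2$ in $\bdeta$ with bounded derivatives. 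Differentiating in $\btheta$ gives $-\bD_{C,\log}$, and differentiating in $\bdeta$ gives $\bB_{C,\log}$. This yields
\[
\sqrt n(\widehat\btheta_{C,\log}-\btheta_{C,\log}^*)=\bD_{C,\log}^{-1}\bigl\{\bbG_n\varphi_{C,\log}+\bB_{C,\log}\sqrt n(\widehat\bdeta_n-\bdeta^*)\bigr\}+o_p(1).
\]
Substituting (C1) gives the influence function $\bD_{C,\log}^{-1}\{\varphi_{C,\log}-\bB_{C,\log}J_{\bdeta}^{-1}U_{\bdeta}\}$, and the multivariate CLT gives the stated sandwich form.

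\emph{Efficient propensity estimator.} The efficient-propensity statement follows exactly as in Step~4 of the proof of Theorem~\ref{thm:direct_cmtp_aug_alearning_asymptotic}. One writes the estimator as the known-$\pi_\Lambda$ estimator minus $\bD_{C,\log}^{-1}\bB_{C,\log}\sqrt n(\widehat\bdeta_n-\bdeta^*)$ and invokes \citet{pierce1982asymptotic}. The required joint normality and the orthogonality of $\varphi_{C,\log}$ to the efficient score residual come from the efficiency of $\widehat\bdeta_n$.

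\emph{Main obstacle.} The step I expect to be hardest is making stochastic equicontinuity rigorous when $\pi_\Lambda$ appears both inside $\mu(\cdot)$ and in the $1/\pi$ weight. Unlike the squared-loss case, the score is nonlinear in $\btheta$, so the closed-form decomposition used for Theorem~\ref{thm:direct_cmtp_aug_alearning_asymptotic} is unavailable. I would handle this by bounding $\|\psi_{C,\log}(\cdot;\btheta,\bdeta)-\psi_{C,\log}(\cdot;\btheta',\bdeta')\|$ by $C\|\bb(\bX)\|(1+\|\bb(\bX)\|)\{\|\btheta-\btheta'\|+\|\bdeta-\bdeta'\|\}/c_0^2$. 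This makes the class Lipschitz in a finite-dimensional parameter, and van der Vaart--Wellner Example~3.2.12 then applies. That bound requires a fourth moment of $\bb$, or bounded $\bb$, which I would add or absorb into the compactness and envelope conditions if needed.
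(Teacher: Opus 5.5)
Your skeleton is the same as the paper's proof. Both use a two-step Z-estimation argument on the per-subject summed score, linearize in $(\btheta,\bdeta)$ to get the bread $-\bD_{C,\log}$ and the nuisance Jacobian $\bB_{C,\log}$, substitute the (C1) expansion, apply a joint CLT for $(\bbG_n\varphi_{C,\log},\bbG_n U_{\bdeta})$, and invoke \citet{pierce1982asymptotic} for the efficient case. You skip the paper's Step~1, which rewrites the population score as $\bbE[\bb(\bX)\{\tau_\delta(\bX)-\chi(\bX;\btheta)\}]$. That step only interprets $\btheta^*_{C,\log}$ and is not used in the asymptotics, so nothing is lost.

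The real difference is the linearization device. The paper Taylor-expands the empirical estimating function $\Psi_{C,\log,n}$ itself, so it only needs the empirical Jacobian to converge. You instead use stochastic equicontinuity of $\bbG_n$ over the score class and expand the population map. Your route has one genuine advantage: you prove consistency of $\widehat\btheta_{C,\log}$ (uniform convergence, well-separatedness from strict convexity, a root via convexity), which the paper's expansion silently presupposes. Its cost is the one you identify. Equicontinuity needs the score class to be Donsker, and your Lipschitz constant is of order $\|\bb(\bX)\|+\|\bb(\bX)\|^2$, so the bracketing argument needs $\bbE\|\bb(\bX)\|^4<\infty$. As written, you therefore prove the theorem only under a hypothesis stronger than the stated $\bbE\|\bb(\bX)\|^2<\infty$.

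The extra moment is avoidable, because $\psi_{C,\log}$ is continuously differentiable in $(\btheta,\bdeta)$ for every observation. Replace your equicontinuity step as follows:
\begin{itemize}
\item Write $\bzero=\Psi_{C,\log,n}(\btheta^*_{C,\log},\bdeta^*)$ plus the empirical Jacobian, evaluated row by row at intermediate points, times $(\widehat\btheta_{C,\log}-\btheta^*_{C,\log},\widehat\bdeta_n-\bdeta^*)$.
\item By (C3), there is a compact neighborhood of $(\btheta^*_{C,\log},\bdeta^*)$ on which $\pi_\Lambda\ge c_0/2$. There the Jacobian entries are continuous in the parameters for each observation and dominated by $C\{\|\bb(\bX)\|+\|\bb(\bX)\|^2\}$, which is integrable under second moments.
\item The Jacobian class is therefore Glivenko--Cantelli by Example~19.8 of \citet{vandervaart1998asymptotic}. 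Combined with your consistency result, this sends the intermediate Jacobian to $(-\bD_{C,\log},\bB_{C,\log})$ in probability.
\item The only CLT needed is at the fixed function $\varphi_{C,\log}$, and $\|\varphi_{C,\log}\|\le 2\|\bb(\bX)\|/c_0$.
\end{itemize}
With this substitution you get the theorem under exactly its stated assumptions. Equicontinuity is the right tool for non-smooth scores, but the smoothness of the logistic score makes it unnecessary here.
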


\begin{proof}[Proof of Theorem~\ref{thm:logistic_direct_cmtp_asymptotic}]
The proof follows the M-estimation strategy used in
Theorem~\ref{thm:logistic_A_asymptotic}, with two structural
differences: (i) the basis functions are $\bb(\bX)$ rather than
$\bb(A,\bX)$, so the score does not depend on $A$ beyond what enters
through the centered arm scores $g_i^\lambda$; and (ii) the $\pi_\Lambda^{-1}$ stabilization
factor introduces an additional layer in the propensity-derivative
calculation.

\medskip\noindent
\emph{Step 1: Identification and regularity.}
Write $\pi^\lambda:=\pi_i^\lambda(\bdeta^*)$, $g^\lambda:=g_i^\lambda(\bdeta^*)$,
and $\eta_C^\lambda(\btheta):=g^\lambda\,\bb_i^\top\btheta$.
Consider a generic duplicated row with $(A_\Lambda,\bX)=(a,\bx)$ and write
$\pi^*:=\pi_\Lambda(\bx,a;\bdeta^*)$. Conditional on $(A_\Lambda,\bX)=(a,\bx)$,
the arm indicator satisfies $\widetilde\Lambda=1$ with probability $\pi^*$,
and, by consistency and Assumption~3(A$'$) as in
Proposition~2,
$\bbE[Y\mid \widetilde\Lambda=1,A_\Lambda=a,\bX=\bx]=m_0(a+\delta,\bx)$ and
$\bbE[Y\mid \widetilde\Lambda=0,A_\Lambda=a,\bX=\bx]=m_0(a,\bx)$.
Hence the row-level score
$s(\btheta):=\{(\widetilde\Lambda-\pi^*)/\pi^*\}\,\bb(\bx)
\{Y-\mu((\widetilde\Lambda-\pi^*)\,\bb(\bx)^\top\btheta)\}$ satisfies
\begin{align*}
&\bbE\bigl[s(\btheta)\mid A_\Lambda=a,\bX=\bx\bigr]\\
&=
\{1-\pi^*\}\,\bb(\bx)
\Bigl[
  \bigl\{m_0(a+\delta,\bx)-m_0(a,\bx)\bigr\}
  -\bigl\{\mu\bigl((1-\pi^*)\,\bb(\bx)^\top\btheta\bigr)
          -\mu\bigl(-\pi^*\,\bb(\bx)^\top\btheta\bigr)\bigr\}
\Bigr].
\end{align*}
Averaging over the duplicated-arm mixture law of $A_\Lambda$ given $\bX=\bx$,
with conditional density
$f_{A_\Lambda\mid\bX}(a\mid\bx)
=\tfrac12\{g_{A\mid\bX}(a\mid\bx)+g_{A\mid\bX}(a+\delta\mid\bx)\}$,
and using the identity
$\{1-\pi_\Lambda(\bx,a)\}\,f_{A_\Lambda\mid\bX}(a\mid\bx)
=\tfrac12\,g_{A\mid\bX}(a\mid\bx)$, the per-subject score (which sums the
two rows and thus equals twice the per-row mixture average in expectation)
satisfies
\begin{align*}
& \Psi_{C,\log}(\btheta,\bdeta^*)
=
\bbE\Bigl[
\bb(\bX)\,\bigl\{\tau_\delta(\bX)-\chi(\bX;\btheta)\bigr\}
\Bigr],
\\
& \chi(\bx;\btheta)
:=
\bbE\Bigl[
\mu\bigl(\{1-\pi_\Lambda(\bx,A)\}\,\bb(\bx)^\top\btheta\bigr)
-\mu\bigl(-\pi_\Lambda(\bx,A)\,\bb(\bx)^\top\btheta\bigr)
\,\Big|\,\bX=\bx
\Bigr],
\end{align*}
where the inner expectation is over the natural conditional law of $A$ given
$\bX=\bx$ and $\tau_\delta(\bx)=\bbE[m_0(A+\delta,\bx)-m_0(A,\bx)\mid\bX=\bx]$
by Proposition~2. Thus
\eqref{eq:logCMTP_pop_score} defines a link-scale projection of the CMTP:
if $\chi(\bx;\btheta_0)=\tau_\delta(\bx)$ for some $\btheta_0$ and
$P_{\bX}$-almost every $\bx$, then $\btheta_{C,\log}^*=\btheta_0$; this is
the NLL analogue of Proposition~5 and parallels
Proposition~6. This also establishes the derivations of the direct CMTP
losses for the logistic case promised in Section~\ref{direct_NLL_CMTP}. Uniqueness follows from strict convexity of
the stabilized loss, whose Hessian at any $\btheta$ is
$\bbE[\sum_{\lambda}\{g_i^\lambda\}^2\{\pi_i^\lambda\}^{-1}
\mu'(\eta_C^\lambda(\btheta))\,\bb_i\bb_i^\top]\succ 0$. In particular, the
bread of the M-estimator at $\btheta_{C,\log}^*$ is
\[
\bD_{C,\log}
=
-\frac{\partial}{\partial\btheta^\top}\Psi_{C,\log}(\btheta,\bdeta^*)
\bigg|_{\btheta=\btheta_{C,\log}^*}
=
\bbE\!\left[
\sum_{\lambda\in\{0,1\}}
  \frac{\{g_i^\lambda\}^2}{\pi_i^\lambda}\,
  \mu'\!\bigl(\eta_C^\lambda(\btheta_{C,\log}^*)\bigr)\,
  \bb_i\bb_i^\top
\right],
\]
which is nonsingular by assumption.

\medskip\noindent
\emph{Step 2: Asymptotic linear representation.}
Since $\widehat\btheta_{C,\log}$ solves
$\Psi_{C,\log,n}(\widehat\btheta_{C,\log},\widehat\bdeta_n)=\bzero$,
a first-order Taylor expansion around $(\btheta_{C,\log}^*,\bdeta^*)$ gives
\begin{align}
\bzero
&=
\Psi_{C,\log,n}(\btheta_{C,\log}^*,\bdeta^*)
+
\dot\Psi_{C,\log,n,\btheta}(\btheta_{C,\log}^*,\bdeta^*)
\bigl(\widehat\btheta_{C,\log}-\btheta_{C,\log}^*\bigr) \nonumber
\\& \qquad+
\dot\Psi_{C,\log,n,\bdeta}(\btheta_{C,\log}^*,\bdeta^*)
\bigl(\widehat\bdeta_n-\bdeta^*\bigr)
+ r_n,
\label{eq:logCMTP_Taylor}
\end{align}
where $\|r_n\|=o_p(n^{-1/2})$ by the twice-continuous
differentiability of $\mu$, the compactness of the parameter space,
and the Donsker property of the relevant function classes
(see below). Rearranging and multiplying by $\sqrt{n}$,
\begin{align}\label{eq:logCMTP_linear_rep}
&\sqrt{n}\,\bigl(\widehat\btheta_{C,\log}-\btheta_{C,\log}^*\bigr) \nonumber
\\ &=
-\dot\Psi_{C,\log,n,\btheta}(\btheta_{C,\log}^*,\bdeta^*)^{-1}
\Bigl\{
  \sqrt{n}\,\Psi_{C,\log,n}(\btheta_{C,\log}^*,\bdeta^*)
  +
  \dot\Psi_{C,\log,n,\bdeta}(\btheta_{C,\log}^*,\bdeta^*)
  \sqrt{n}(\widehat\bdeta_n-\bdeta^*)
\Bigr\}
+ o_p(1).
\end{align}

\medskip\noindent
\emph{Step 3: Limits of the derivative terms.}
For the $\btheta$-derivative,
\[
\dot\Psi_{C,\log,n,\btheta}(\btheta_{C,\log}^*,\bdeta^*)
=
-\bbP_n\!\left[
\sum_{\lambda\in\{0,1\}}
  \frac{\{g_i^\lambda\}^2}{\pi_i^\lambda}
  \mu'\!\bigl(\eta_C^\lambda(\btheta_{C,\log}^*)\bigr)
  \bb_i\bb_i^\top
\right]
\;\xrightarrow{p}\;
-\bD_{C,\log}.
\]
This convergence follows from the law of large numbers, using
$\bbE[\sum_\lambda\bb_i\bb_i^\top\{g_i^\lambda\}^2/\pi_i^\lambda\,\mu'(\eta_C^\lambda)] < \infty$
(implied by $\pi_i^\lambda\ge c_0>0$, compactness of the parameter space, and $\mu'\le 1/4$).
Since $\bD_{C,\log}$ is nonsingular, $\dot\Psi_{C,\log,n,\btheta}^{-1}$
is well-defined with probability approaching 1.

For the $\bdeta$-derivative, let $\dot\pi_i^\lambda := \partial\pi_\Lambda(\bZ_i^\lambda;\bdeta)/\partial\bdeta
\big|_{\bdeta=\bdeta^*}$. Using
$\partial g_i^\lambda/\partial\bdeta = -\dot\pi_i^\lambda$ and
$\partial\{g_i^\lambda/\pi_i^\lambda\}/\partial\bdeta
= -\lambda\,\dot\pi_i^\lambda/(\pi_i^\lambda)^2$
(note $g_i^\lambda/\pi_i^\lambda=\lambda/\pi_i^\lambda-1$), the chain rule
gives, for each component $\ell$ of $\bdeta$,
\begin{align*}
&\frac{\partial}{\partial\bdeta_\ell}
\left\{
  \frac{g_i^\lambda}{\pi_i^\lambda}\,\bb_i
  \bigl(Y_i-\mu(\eta_C^\lambda(\btheta_{C,\log}^*))\bigr)
\right\}\\
&=
-\frac{\lambda\,\dot\pi_{i,\ell}^\lambda}{(\pi_i^\lambda)^2}\,
\bb_i\bigl(Y_i-\mu(\eta_C^\lambda(\btheta_{C,\log}^*))\bigr)
+
\frac{g_i^\lambda}{\pi_i^\lambda}\,
\mu'\!\bigl(\eta_C^\lambda(\btheta_{C,\log}^*)\bigr)\,
\dot\pi_{i,\ell}^\lambda\;
\bb_i\,\bb_i^\top\btheta_{C,\log}^*,
\end{align*}
and the first term vanishes for $\lambda=0$ since $g_i^0/\pi_i^0\equiv-1$.
By the Donsker property of $\{\pi_\Lambda(\cdot;\bdeta):\bdeta\in\calH^*\}$
and its derivative (Lipschitz in $\bdeta$ by (C3), hence Donsker by
Example 19.7 of \citet{vandervaart1998asymptotic}), together with the smoothness of
$\mu$ and the moment conditions, the empirical derivative converges
in probability to its population counterpart:
\[
\dot\Psi_{C,\log,n,\bdeta}(\btheta_{C,\log}^*,\bdeta^*)
\;\xrightarrow{p}\;
\bB_{C,\log}
:=
\frac{\partial}{\partial\bdeta^\top}
\Psi_{C,\log}(\btheta_{C,\log}^*,\bdeta)\Big|_{\bdeta=\bdeta^*}.
\]

\medskip\noindent
\emph{Step 4: Empirical process term.}
Since $P[\varphi_{C,\log}(\bO_i;\btheta_{C,\log}^*,\bdeta^*)]=\bzero$ by definition
of $\btheta_{C,\log}^*$, we have
\[
\sqrt{n}\,\Psi_{C,\log,n}(\btheta_{C,\log}^*,\bdeta^*)
=
\bbG_n\bigl[\varphi_{C,\log}(\bO_i;\btheta_{C,\log}^*,\bdeta^*)\bigr]
=
O_p(1).
\]
The class $\{\varphi_{C,\log}(\cdot;\btheta,\bdeta):\btheta\in\calT,
\bdeta\in\calH^*\}$ is Donsker: it is a product of
$\pi^{-1}$-bounded, Lipschitz functions of $\bdeta$ (by (C3) and
$\pi_i^\lambda\ge c_0$), the bounded function $Y-\mu(g\,\bb^\top\btheta)$
(bounded because $\mu\in(0,1)$), and the fixed $\bb(\bX)$; note that
$\bbG_n$ acts on the per-subject summed score
$\varphi_{C,\log}(\bO_i;\cdot)$, which is i.i.d.\ across subjects and
correctly accounts for the fact that a subject's two duplicated rows share $Y_i$.
By the central limit theorem for Donsker classes,
\[
\bbG_n\bigl[\varphi_{C,\log}(\bO_i;\btheta_{C,\log}^*,\bdeta^*)\bigr]
\;\xrightarrow{d}\;
N\!\bigl(\bzero,\,\widetilde\bSigma_{C,\log}(\btheta_{C,\log}^*;\bdeta^*)\bigr),
\]
where
$\widetilde\bSigma_{C,\log}
:=\bbE[\varphi_{C,\log}^{\otimes 2}]$.

\medskip\noindent
\emph{Step 5: Propensity estimation term.}
By condition (C1), the propensity estimator satisfies
\[
\sqrt{n}(\widehat\bdeta_n-\bdeta^*)
=
-\,J_{\bdeta}^{-1}
\bbG_n\bigl[U_{\bdeta}(\bO_i;\bdeta^*)\bigr]
+ o_p(1).
\]

\medskip\noindent
\emph{Step 6: Assembling the influence function.}
Substituting Steps 3--5 into \eqref{eq:logCMTP_linear_rep} and applying
Slutsky's theorem,
\begin{align*}
\sqrt{n}\,\bigl(\widehat\btheta_{C,\log}-\btheta_{C,\log}^*\bigr)
&=
\bD_{C,\log}^{-1}
\bbG_n\!\left[
  \varphi_{C,\log}(\bO_i;\btheta_{C,\log}^*,\bdeta^*)
  - \bB_{C,\log}\,J_{\bdeta}^{-1}
    U_{\bdeta}(\bO_i;\bdeta^*)
\right]
+ o_p(1).
\end{align*}
By the multivariate central limit theorem applied to the joint
empirical process $(\bbG_n\varphi_{C,\log},\,\bbG_n U)$,
\[
\sqrt{n}\,\bigl(\widehat\btheta_{C,\log}-\btheta_{C,\log}^*\bigr)
\;\xrightarrow{d}\;
N\!\Bigl(\bzero,\;
\bD_{C,\log}^{-1}\,
\bSigma_{C,\log}(\btheta_{C,\log}^*;\bdeta^*)\,
\bD_{C,\log}^{-1}
\Bigr),
\]
with
\[
\bSigma_{C,\log}
=
\bbE\!\left[
\Bigl\{
  \varphi_{C,\log}(\bO_i;\btheta_{C,\log}^*,\bdeta^*)
  - \bB_{C,\log}\,J_{\bdeta}^{-1}U_{\bdeta}(\bO_i;\bdeta^*)
\Bigr\}^{\otimes 2}
\right].
\]

\medskip\noindent
\emph{Step 7: Efficient propensity estimator.}
If $\widehat\bdeta_n$ is efficient for $\bdeta^*$ with asymptotic
variance $\bV_{\bdeta}$, and is jointly asymptotically normal with
$\bbG_n\varphi_{C,\log}$, then by the variance-reduction argument of
\citet{pierce1982asymptotic}, the covariance matrix reduces to
\[
\bSigma_{C,\log}(\btheta_{C,\log}^*;\bdeta^*)
=
\widetilde\bSigma_{C,\log}(\btheta_{C,\log}^*;\bdeta^*)
-
\bB_{C,\log}\,\bV_{\bdeta}\,\bB_{C,\log}^\top,
\]
where $\widetilde\bSigma_{C,\log}
:=\bbE[\varphi_{C,\log}^{\otimes 2}]$
is the variance when $\pi_\Lambda(A,\bX)$ is known.
This completes the proof.
\end{proof}

\vspace*{-5 mm}
\setlength{\bibsep}{0pt plus 0.3ex}

\onehalfspacing
{\small
\bibliography{Bibliography}
}

\end{document}